\newtheorem{thm}{Theorem}[section]
\newtheorem{prop}[thm]{Proposition}
\theoremstyle{definition}
\newtheorem{defn}[thm]{Definition}
\newenvironment{Abstract}
{\begin{center}\textbf{Abstract}%
 \end{center} \small \it \begin{quote}}
{\end{quote}}
\def%%%%%%%%%%%%%%%%%%%%%%%
\def\addsymbol #1: #2#3{$#1$ \> \parbox{4.6in}{#2 \dotfill \pageref{#3}}\\}
\newcommand{\dd}{{\rm d}}
\newcommand{\Lag}{\mathcal{L}}
\newcommand{\mS}{\mathcal{S}}
\newcommand{\mT}{\mathcal{T}}
\newcommand{\mH}{\mathcal{H}}
\newcommand{\GB}{\mathcal{G}}
\newcommand{\be}{\begin{equation}}
\newcommand{\ee}{\end{equation}}
\newcommand{\bea}{\begin{eqnarray}}
\newcommand{\eea}{\end{eqnarray}}
\newcommand{\nablab}{\mathring{\nabla}}
\newcommand{\Rb}{\mathring{R}}
\newcommand{\Gb}{\mathring{G}}
\renewcommand{\bf}[1]{{\textbf{#1}}}
\newcommand{\mU}{\ensuremath\mathcal{U}}
\newcommand{\mpl}{M_{\rm Pl}}
\newcommand{\gt}{\tilde{g}}
\newcommand{\Rt}{\tilde{R}}
\begin{document}
% TODO 
% acronyms
\acrodef{LVQ}{Learning Vector Quantization}
\acrodef{OLVQ1}{optimized learning-rate LVQ}

\pagestyle{fancyplain}
\pagenumbering{roman}

\thispagestyle{empty}

\begin{center}

\pagecolor{black}

\vspace*{1cm}

{\color{white}{\LARGE \textbf{New effective theories of gravitation and their phenomenological consequences}}}

\vspace*{2cm}

\newlength{\offsetpage}
\setlength{\offsetpage}{3.5cm}
\newenvironment{widepage}{\begin{adjustwidth}{-\offsetpage}{-\offsetpage}%
    \addtolength{\textwidth}{2\offsetpage}}%
{\end{adjustwidth}}

\begin{widepage}

%\vspace*{13cm}
\includegraphics[width=1\textwidth]{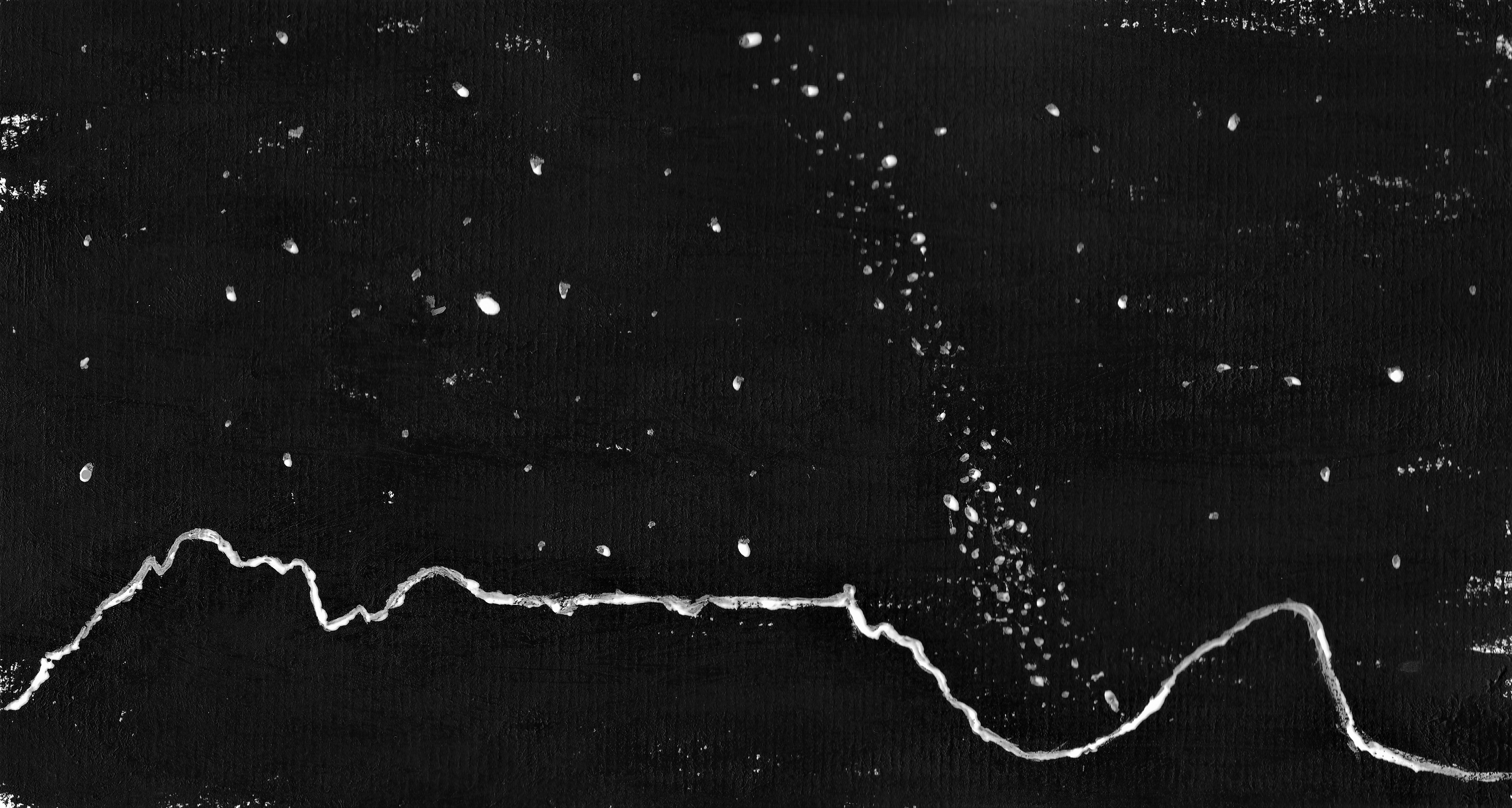}

\end{widepage}

\vspace*{2cm}

{\color{white}{\large \bf{Francisco Jos\'e Maldonado Torralba}}}

\end{center}

\clearpage

\pagecolor{white}

\thispagestyle{empty}
\noindent {\bf Book cover:} Night sky over Table Mountain, F.J. Maldonado Torralba.

\vfill

\noindent 
% ISBN: 
%\\
%\\
%Published by \emph{Wellsee} - www.website.nl - {\small Groningen}
% \\ \\
%\\\\%[0.5cm]
%supported by the Netherlands Organisation for Scientific Research (NWO)\\
%under project number ...\\
% \begin{wrapfigure}{r}{0.5\textwidth}
%\parbox{0.35\textwidth}{\hfill}
%\parbox[r]{0.65\textwidth}{%\frame{
% \includegraphics{2d_NWO_LogoBasis_Zw}
%\includegraphics{2b_NWO_LogoBasis_CMYK}%}
%}
% \end{wrapfigure}

\cleardoublepage

\thispagestyle{empty}

\parbox[t][0.99\textheight][s]{1\textwidth}{
\centering
\includegraphics[scale=0.16]{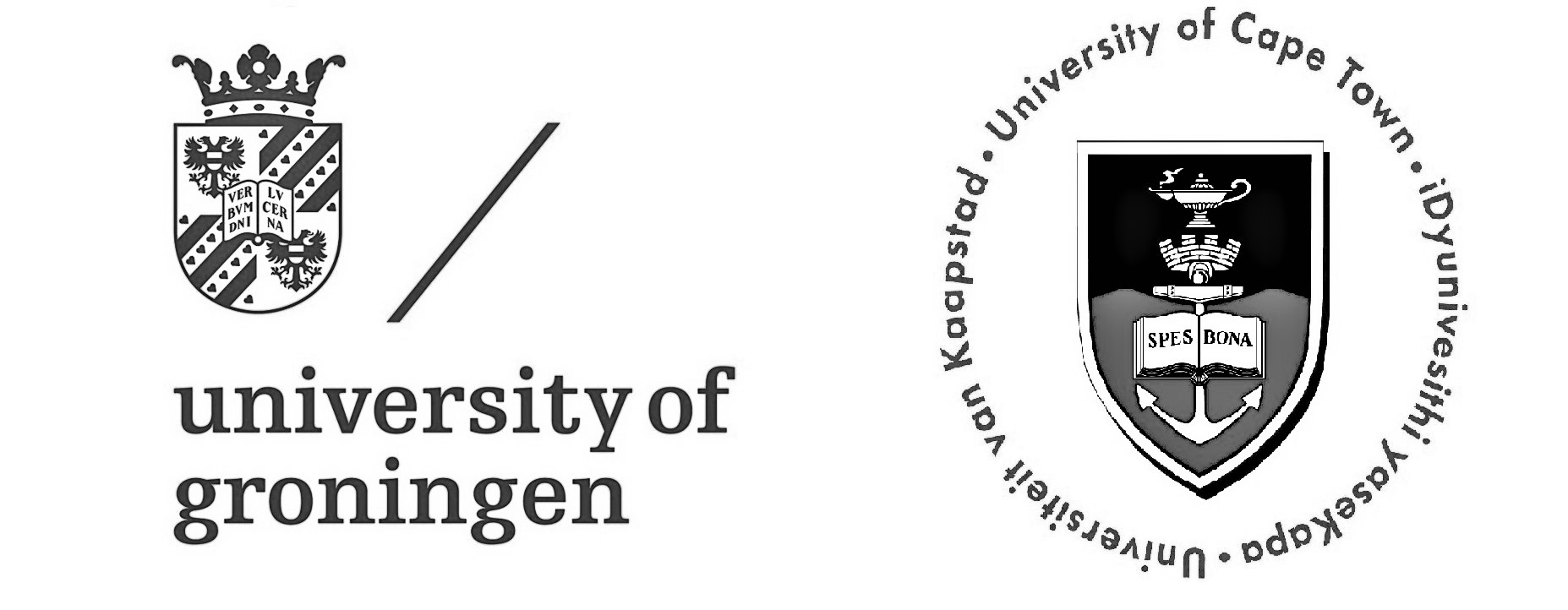}

\bigskip

\huge

\textbf{New effective theories of\\ gravitation and their\\ phenomenological consequences}

\bigskip

\Large

\textbf{PhD thesis}

\large

\bigskip
to obtain the degree of PhD at the\\
University of Groningen\\
on the authority of the\\
Rector Magnificus Prof.\ C.~Wijmenga,\\
and in accordance with\\
the decision by the College of Deans\\
\bigskip
and\\
\bigskip
to obtain the degree of PhD at the\\
University of Cape Town\\
on the authority of the\\
Vice-Chancellor Prof. M. Phakeng\\
and in accordance with\\
the decision by the Doctoral Degrees Board\\
\bigskip
This thesis will be defended in public on\\
\medskip
Tuesday 17 November 2020 at 11.00 hours\\
\medskip
by\\
\medskip
\textbf{Francisco Jos\'e Maldonado Torralba}\\
\medskip
born on 12 January 1993\\
in Sevilla, Spain
}

\newpage

\thispagestyle{empty}

\begin{flushleft}

\noindent

\textbf{Supervisors}\\
$\,$Prof.\ A. Mazumdar\\
$\,$Prof.\ A. de la Cruz Dombriz\\
\vspace{1cm}
\textbf{Assessment Committee}\\ 
$\,$Prof.\ L. Heisenberg\\
$\,$Prof.\ C. Kiefer\\
$\,$Prof.\ D. Roest\\
$\,$Prof.\ P. Dunsby

\vfill

\noindent
%ISBN: 
Copyright \textcopyright $\,$2020 Francisco Jos\'e Maldonado Torralba

\end{flushleft}

%\include{thesis_dedication} 
% FOR LATER

%% create the table of contents
\cleardoublepage
%\clearpage
\begin{Abstract}

The objective of this Thesis is to explore Poincaré Gauge theories of gravity and expose some contributions to this field, which are detailed below. Moreover, a novel ultraviolet non-local extension of this theory shall be provided, and it will be shown that it can be ghost- and singularity-free at the linear level.

First, we introduce some fundamentals of differential geometry, base of any gravitational theory. We then establish that the affine structure and the metric of the spacetime are not generally related, and that there is no physical reason to impose a certain affine connection to the gravitational theory. We review the importance of gauge symmetries in Physics and construct the quadratic Lagrangian of Poincar\'e Gauge gravity by requiring that the gravitational theory must be invariant under local Poincar\'e transformations. We study the stability of the quadratic Poincar\'e Gauge Lagrangian, and prove that only the two scalar degrees of freedom (one scalar and one pseudo-scalar) can propagate without introducing pathologies. We provide extensive details on the scalar, pseudo-scalar, and bi-scalar theories. Moreover, we suggest how to extend the quadratic Poincar\'e Gauge Lagrangian so that more modes can propagate safely.

We then proceed to explore some interesting phenomenology of Poincar\'e Gauge theories. Herein, we calculate how fermionic particles move in spacetimes endowed with a non-symmetric connection at first order in the WKB approximation. Afterwards, we use this result in a particular black-hole solution of Poincar\'e Gauge gravity, showing that measurable differences between the trajectories of a fermion and a boson can be observed. Motivated by this fact, we studied the singularity theorems in theories with torsion, to see if this non-geodesical behaviour can lead to the avoidance of singularities. Nevertheless, we prove that this is not possible provided that the conditions for the appearance of black holes of any co-dimension are met. In order to see which kind Black Hole solutions we can expect in Poincar\'e Gauge theories, we study Birkhoff and no-hair theorems under physically relevant conditions. 

Finally, we propose an ultraviolet extension of Poincar\'e Gauge theories by introducing non-local (infinite derivatives) terms into the action, which can ameliorate the singular behaviour at large energies. We find solutions of this theory at the linear level, and prove that such solutions are ghost- and singularity-free. We also find new features that are not present in metric Infinite Derivative Gravity. 
\end{Abstract}
\lhead[]{\fancyplain{}{\rightmark}}
\chead[\fancyplain{}{}]{\fancyplain{}{}}
\rhead[\fancyplain{}{\leftmark}]{\fancyplain{}{}}
%\rhead[\fancyplain{}{}]{\fancyplain{}{}}
%\lhead[\fancyplain{}{}]{\fancyplain{}{}}
\tableofcontents
\addcontentsline{toc}{chapter}{Publications}
\chapter*{Publication list}
\pagestyle{empty}

\vspace{-0.6cm} 

This is the list of publications I have written during the course of my PhD. The names of the authors in each article are in alphabetical order.

\paragraph{P1}\label{P1} J.~A.~R.~Cembranos, J.~Gigante Valcarcel and F.~J.~MALDONADO TORRALBA\\
\phantom{aaaa}\emph{Singularities and n-dimensional black holes in torsion theories}\\
\phantom{aaaa}JCAP {\bf 1704} 021 (2017)\\
\phantom{aaaa}arXiv:1609.07814

\paragraph{P2}\label{P2} Á.~de la Cruz-Dombriz and F.~J.~MALDONADO TORRALBA\\
\phantom{aaaa}\emph{Birkhoff's theorem for stable torsion theories}\\
\phantom{aaaa}JCAP {\bf 1903} 002 (2019)\\
\phantom{aaaa}arXiv:1811.11021

\paragraph{P3}\label{P3} J.~A.~R.~Cembranos, J.~Gigante Valcarcel and F.~J.~MALDONADO TORRALBA\\
\phantom{aaaa}\emph{Fermion dynamics in torsion theories}\\
\phantom{aaaa}JCAP {\bf 1904} 039 (2019)\\
\phantom{aaaa}arXiv:1805.09577

\paragraph{P4}\label{P4} Á.~de la Cruz-Dombriz, F.~J.~MALDONADO TORRALBA and A.~Mazumdar\\
\phantom{aaaa}\emph{Nonsingular and ghost-free infinite derivative gravity with torsion}\\
\phantom{aaaa}Phys.\ Rev.\ D {\bf 99} no.10, 104021 (2019)\\
\phantom{aaaa}arXiv:1812.04037

\paragraph{P5}\label{P5} J.~A.~R.~Cembranos, J.~Gigante Valcarcel and F.~J.~MALDONADO TORRALBA\\
\phantom{aaaa}\emph{Non-Geodesic Incompleteness in Poincaré Gauge Gravity}\\
\phantom{aaaa}Entropy {\bf 21} no.3, 280 (2019)\\
\phantom{aaaa}arXiv:1901.09899

\paragraph{P6}\label{P6} J.~Beltr\'an Jiménez and F.~J.~MALDONADO TORRALBA\\
\phantom{aaaa}\emph{Revisiting the Stability of Quadratic Poincar\'e Gauge Gravity}\\
\phantom{aaaa}Eur. Phys. J. C 80 7, 611 (2020)\\
\phantom{aaaa}arXiv:1910.07506

\paragraph{P7}\label{P7} Á.~de la Cruz-Dombriz, F.~J.~MALDONADO TORRALBA and A.~Mazumdar\\
\phantom{aaaa}\emph{Ghost-free higher-order theories of gravity with torsion}\\
\phantom{aaaa}Submitted, 2020\\
\phantom{aaaa}arXiv:1911.08846

%%  add the acknowledgements to the table of contents
\cleardoublepage
\addcontentsline{toc}{chapter}{Acknowledgements}
\chapter*{Acknowledgements}
\pagestyle{empty}

\vspace{-1cm} 

\begin{flushright}
Francisco Jos\'e Maldonado Torralba\\ Sevilla\\ \today\\
\end{flushright}

\vspace*{0.5cm}

First of all, I would like to thank both my supervisors, Dr. \'Alvaro de la Cruz Dombriz and Prof. Anupam Mazumdar, for giving me the opportunity to do this Dual PhD program and work at the Cosmology and Gravity group of the University of Cape Town and the Van Swinderen Institute at the University of Groningen. Their help and support have played an important role in the development of this Thesis. I am also grateful to them for encouraging me to travel and present this work at different international conferences and seminars in Spain, Norway, Netherlands, France, Czech Republic, and South Africa.\\

I would like to thank the people from both the University of Cape Town and the University of Groningen. They have provided me with a very comfortable and inspirational place of work, and the discussions of research, projects, and life in general, has influenced the outcome of this work. I would like to express my profound gratitude to my collaborators Dr. Jorge Gigante Valcarcel, Prof. José Alberto Ruiz Cembranos, and Dr. Jose Beltr\'an Jim\'enez. I have learnt a lot from our discussions, no matter the subject, and it has always been a pleasure to work with you.\\

I would like to thank also the financial support of National Reasearch Foundation of South Africa Grants No.120390, Reference: BSFP190416431035, and No.120396, Reference: CSRP190405427545, and No 101775, Reference: SFH150727131568. I would like to acknowledge the financial support from the NASSP Programme - UCT node. Also, the PhD research was funded by the Netherlands Organization for Scientific Research (NWO) grant number 680-91-119. Moreover, during the PhD I had also the opportunity to perform research visits at various institutions. For that, I would like to thank the financial support from the Erasmus+ KA107 Alliance4Universities programme to do a research stay at the Universidad Aut\'onoma de Madrid, hosted by Prof. Juan Garc\'ia-Bellido Capdevila. I would like to acknowledge the financial support from the Erasmus+ programme to do a research visit at Radboud University, hosted by Dr. David Nichols. I would like to thank the financial support of the Norwegian Centre for International Cooperation in Education to do a research visit at the University of Oslo, hosted by Prof. David Mota. Finally, I would like to acknowledge the financial support of the Universidad de Salamanca, to do a research visit hosted by Dr. Jose Beltr\'an Jim\'enez.\\

I want to also thank the examiners for the comments and suggestions, which have led to an improvement of the Thesis.\\

Of course, there are many people who have inspired, supported, and helped me during this PhD period whom I would like to acknowledge. \\

First of all, as it cannot be otherwise, I would like to thank my parents because it is only through their constant support, comprehension and love, that I am the person I am today. \emph{Gracias, porque me lo hab\'eis dado todo}. Also, I have been lucky enough to share all my concious life with a sister that enlightens everything in her path,  including this thesis. \\

Moreover, I am glad to be surrounded by a wonderful family of aunts, uncles and cousins, which have always been there giving me the best \emph{Gracias, sab\'eis lo importantes que sois para m\'i}.\\

I must also thank Alberto, who is like another member of my family. You have played a very important role in most of my cheriest and unforgettable memories. Thank you for everything brother.\\

A PhD is quite an ardous journey, and sometimes it can overcome you, but it has given me the opportunity to meet very special people along the way.\\

Idoia, probably nothing that I can say would make justice to what has meant knowing you. Nonetheless, let me use this lines to thank you for your constant presence, despite the distance, your essential support, and in general for the way you make my life better. I could not wish for a better companion.\\

Miguel, I could not have imagined I would have met someone like you during this time. Everywhere in the world, you have made me feel like I was at home, and I am deeply grateful for that.\\

Alberto Valenciano, I want to thank you because you have made my stay in Cape Town a lot funnier with your sense of humour. \\

Finally, I would like to thank the people at the first floor of the Mathematics building at UCT, with whom I have shared laughs and created awesome memories.

\cleardoublepage

\pagestyle{empty}

\vspace*{4cm} 

\begin{flushright}
\emph{In the beginning when the world was young\\ there were a great many thoughts\\ but no such thing as a truth}\\
$\,$\\
Sherwood Anderson\\
\emph{Winesburg, Ohio}\\ 
New York: B.W. Huebsch (1919)
\end{flushright}

\cleardoublepage

\addcontentsline{toc}{chapter}{Acronyms and conventions}
\chapter*{Acronyms and conventions}
\pagestyle{empty}

\vspace{-1.5cm} 

\section*{List of Acronyms}

\begin{acronym}
\acro{BH}{Black Hole}
\acro{FLRW}{Friedmann-Lema\^itre-Robertson-Walker}
\acro{IDG}{Infinite Derivative Gravity}
\acro{IR}{Infrared}
\acro{GR}{General Relativity}
\acro{PG}{Poincar\'e Gauge}
\acro{PGT}{Poincar\'e Gauge Theory}
\acro{STEGR}{Symmetric Teleparallel Gravity}
\acro{SM}{Standard Model}
\acro{TEGR}{Teleparallel Gravity}
\acro{UV}{Ultraviolet}
\acro{WKB}{Wentzel-Kramers-Brillouin}
\end{acronym}

\section*{Conventions and Notations}

In this Thesis we shall consider the \emph{mostly plus} metric signature $\left(-++\,+\right)$. Furthermore, unless specified, we shall work in natural units $c=\hbar=G=1$. Sometimes these constants shall be written explicitly for clarity purposes. We will use the index 0 to refer to the temporal component and the rest of the indices 1,2,3 for the spatial ones. Greek indices will denote the spacetime coordinates and Latin indices will indicate the tangent space coordinates. Moreover, the Einstein summation convention shall apply.

For the expressions containing symmetric or antisymmetric terms we shall use the usual parentheses and brackets in the indices, that are defined as follows:
\begin{equation}
A_{\left(\mu\nu\right)}:=\frac{1}{2}\left(A_{\mu\nu}+A_{\nu\mu}\right)\quad{\rm and}\quad A_{\left[\mu\nu\right]}:=\frac{1}{2}\left(A_{\mu\nu}-A_{\nu\mu}\right).
\nonumber
\end{equation}

The conventions for the affine connection $\Gamma$ and curvature of the of the spacetime are given in the following. The covariant derivative $\nabla$ of a tensor shall be computed as
\begin{eqnarray}
\nabla_{\rho}A^{\mu_{1}...\mu_{k}}\,_{\nu_{1}...\nu_{l}}&=&\partial_{\rho}A^{\mu_{1}...\mu_{k}}\,_{\nu_{1}...\nu_{l}}+\sum_{i=1}^{k}\Gamma^{\mu_{i}}\,_{\rho d}A^{\mu_{1}...d...\mu_{k}}\,_{\nu_{1}...\nu_{l}}
\nonumber
\\
&&-\sum_{i=1}^{l}\Gamma^{d}\,_{\rho\nu_{i}}A^{\mu_{1}...\mu_{k}}\,_{\nu_{1}...d...\nu_{l}}.
\nonumber
\end{eqnarray}
Moreover, the D'Alambertian operator will be defined as $\Box=g^{\mu\nu}\nabla_{\mu}\nabla_{\nu}$. 

The expression in coordinates of a general affine connection, $\Gamma^{\rho}\,_{\mu\nu}$, can be decomposed into three terms as follows
\begin{equation}
\Gamma^{\rho}\,_{\mu\nu}=\mathring{\Gamma}^{\rho}\,_{\mu\nu}+K^{\rho}\,_{\mu\nu}+L^{\rho}\,_{\mu\nu},
\nonumber
\end{equation}
where
\begin{itemize}

\item $\mathring{\Gamma}^{\rho}\,_{\mu\nu}$ are the Christoffel symbols of the \emph{Levi-Civita connection}, related with the metric tensor $g_{\mu\nu}$ as
\begin{equation}
\mathring{\Gamma}^{\rho}\,_{\mu\nu}=\frac{1}{2}g^{\rho\sigma}\left(\partial_{\mu}g_{\nu\sigma}+\partial_{\nu}g_{\sigma\mu}-\partial_{\sigma}g_{\mu\nu}\right).
\nonumber
\end{equation}

\item $K^{\rho}\,_{\mu\nu}$ is the \emph{contorsion tensor}, which is defined as
\begin{equation}
K^\rho{}_{\mu\nu}=\frac12 \Big(T^\rho{}_{\mu\nu}+T_{\mu}{}^\rho{}_\nu+T_\nu{}^\rho{}_\mu\Big),
\nonumber
\end{equation}
where $T^\rho{}_{\mu\nu}$ is the antisymmetric part of the connection, known as the \emph{torsion tensor}:
\begin{equation}
T^{\rho}\,_{\mu\nu}=\Gamma^{\rho}\,_{\mu\nu}-\Gamma^{\rho}\,_{\nu\mu}.
\nonumber
\end{equation}

\item $L^{\rho}\,_{\mu\nu}$ is the \emph{disformation tensor}, defined as follows
\begin{equation}
L^\rho{}_{\mu\nu}=\frac12 \Big(M^\rho{}_{\mu\nu}-M_{\mu}{}^\rho{}_\nu-M_\nu{}^\rho{}_\mu\Big),
\nonumber
\end{equation}
where $M^\rho{}_{\mu\nu}$ is \emph{non-metricity} of the connection, given as
\begin{equation}
M_{\rho\mu\nu}=\nabla_{\rho}g_{\mu\nu}.
\nonumber
\end{equation}

\end{itemize}

The expressions of the curvature tensors in terms of the affine connection shall follow Wald's convention, namely:

\begin{itemize}
\item {\emph{Riemann tensor}}
\begin{equation}
R_{\mu\nu\rho}\,^{\sigma}=\partial_{\nu}\Gamma^{\sigma}\,_{\mu\rho}-\partial_{\mu}\Gamma^{\sigma}\,_{\nu\rho}+\Gamma^{\alpha}\,_{\mu\rho}\Gamma^{\sigma}\,_{\alpha\nu}-\Gamma^{\alpha}\,_{\nu\rho}\Gamma^{\sigma}\,_{\alpha\mu}.
\nonumber
\end{equation}
\item {\emph{Ricci tensor}}
\begin{equation}
R_{\mu\nu}=R_{\mu\rho\nu}\,^{\rho}.
\nonumber
\end{equation}
\item {\emph{Scalar curvature}} or {\emph{Ricci scalar}}
\begin{equation}
R=g^{\mu\nu}R_{\mu\nu}.
\nonumber
\end{equation}
\end{itemize}

The metric that has only diagonal components different from zero, given by $(-1,1,1,1)$, is known as the \emph{Minkowski metric}, and is usually denoted $\eta_{\mu\nu}$.
%\begin{thesymbols}
%\listofsymbols
%\end{thesymbols}

%\addcontentsline{toc}{chapter}{List of figures}
%\listoffigures
% \addcontentsline{toc}{chapter}{List of tables}
% \listoftables

%\clearpage
\cleardoublepage

%%% Customisation of the header and footer for the chapters
\pagestyle{headings}
%------------------------------
\newcommand{\publ}{}

\pagestyle{fancyplain}
\renewcommand{\sectionmark}[1]{\markright{\it \thesection.\ #1}}
\renewcommand{\chaptermark}[1]{\markboth{
       \it \thechapter.\ #1}{}}
\lhead[\thepage]{\fancyplain{}{\rightmark}}%\publ
\chead[\fancyplain{}{}]{\fancyplain{}{}}
\rhead[\fancyplain{}{\leftmark}]{\fancyplain{}{\thepage}}
\lfoot[]{}
\cfoot[]{}
\rfoot[]{}
%------------------------------

%%  include the LaTeX files containing the text for each chapters 
%   create the appendix and include it 
\pagenumbering{arabic}

% \noappendicestocpagenum
% \addappheadtotoc
\renewcommand{\publ}{}

% \index{Classification!LVQ}

\chapter{Introduction}

\PARstart{A}ll natural sciences share a common feature, their truth is derived from empirical observation. That is why it is always exciting to find phenomena that we cannot explain with our current models of nature. When this happens, two lines of thought can be considered. Either there is something that we have not observed yet that is affecting that strange measure, or the current theory is wrong, since it is no longer validated by experimentation.

A great example of this fact occurred in the middle of the nineteenth century, as a consequence of the study of the motion of the known planets of the Solar System made by Urbain Le Verrier \cite{le1858theorie}. During the development of that study, he realised a strange behaviour in the motion of Uranus, which could be explained by the presence of an unknown planet. He predicted its mass and position and sent it to the German astronomer Johann Galle \cite{Leverrier:1910}, who observed the planet which we now denote as Neptune the same evening the letter from Le Verrier arrived \cite{galle1846account}.

During his study of the Solar System, Le Verrier also measured an anomaly in the orbit of Mercury: its perihelium precesses 38'' per century \cite{leverrier1859compte}. This observation could not be described using Newton's law of gravitation with the known planets. Inspired by the success of the Neptune discovery, Le Verrier proposed a new planet, Vulcan, that would be placed between Mercury and the Sun, and which would be able to explain the precession. On publication of this research, Lescarbault, an amateur astronomer, announced that he had already observed such a planet transiting the Sun. The discovery was supported by many members of the scientific community, and other astronomers also reported sightings of this object, so Vulcan became the new planet of the Solar System.

Nevertheless, many of the observations that were used to prove the existence of the planet turned out to be false or mistaken. Also, the astronomer Simon Newcomb confirmed the precession of the perihelium of Mercury measured by Le Verrier, and found a slightly larger value, 43'' per century \cite{newcomb1882discussion}. Moreover, in this and subsequent works he gave strong arguments to discard all the proposed hypotheses of additional matter between Mercury and the Sun \cite{newcomb1895elements}. This was the time to open Pandora's box by allowing modifications of Newton's law.

This gave rise to multiple theories that claimed to correctly predict the anomaly. Probably the most famous was Asaph Hall's proposal \cite{hall1894suggestion}, which consisted in modifying the inverse square law in Newton's Equation as $F_{g}=Cr^{-\alpha}$, where $\alpha$ is a constant that could be tuned to explain the precession, obtaining a value of $\alpha=2.00000016$. Unfortunately, this small change made that the motion of the other inner planets and the Moon could not be explained satisfactorily. 

Other physicists, such as Tisserand, Weber, Z\"olner, L\'evy, and Ritz, tried to applied laws inspired in Electromagnetism to obtain the correct value for Mercury's precession \cite{tisserand1890mouvement,zollner1872natur,levy1890application,warburton1946advance}. Unfortunately, they either gave incorrect values or they could be refuted by other physical observations.

It was not until 1915, with the development of the General Theory of Relativity by Albert Einstein \cite{einstein1915allgemeinen,Einstein:1915ca,Einstein:1916vd}, that a succesful explanation compatible with the current experimental data was found \cite{Einstein:1915bz}. This is quite a remarkable and beautiful theory, as we shall explore in Section \ref{2.1}. For a nice review on the history of the Mercury problem and the development of General Relativity we refer the reader to \cite{ROSEVEARE1979165}.

Einstein's General Theory of Relativity (GR) is based on the fact that the effects of a homogeneous gravitational field are indistinguishable from uniformly accelerated motion, which is known as the \emph{Weak Equivalence Principle}. Or, which is equivalent, that any gravitational field can be canceled out locally by inertial forces. As an example, if we were inside a plane that starts to free fall into the ground we would feel no gravitational effect at all. This means that the gravitational fields need to have the same structure as inertial forces. The way that Einstein thought to take this into account was to propose that the spacetime that we live in is actually curved, and that the effect of gravity would be a consequence of this curvature. An enlightening thought experiment (or \emph{gedanken}) is to imagine two planes going from different points in the Equator to the South Pole. Therefore it is clear that at some point they would see each other getting closer, as if an attractive force was acting between the two. Indeed, as we know, this is an effect due exclusively to the curvature of the Earth. Still, it feels like a real force and if one of the two pilots does not accelerate in another direction a fatal accident would occur. We shall give more details on GR and the structure of gravitational theories in Section \ref{2.1}.\\

An analogous situation to the Neptune and Mercury problems is occurring at this moment in the field of gravitational physics, where there are at least two cosmological and astrophysical phenomena that cannot be explained within the conceptual formalism of GR and the matter content of the Standard Model of Particles (SM). On the one hand, we have the observation of the accelerated expansion of the Universe by the Supernova Cosmology Project and the High-Z Supernova Search Team \cite{Riess:1998cb} (also confirmed by later measures such as Baryon Acoustic Oscillations \cite{Eisenstein:2005su} and the Cosmic Microwave Background \cite{Boughn:2003yz}). On the other hand, the rotational curves of galaxies that have been measured do not fit the predictions of GR with baryonic matter \cite{Sahni:2004ai}. 

The usual way of solving this problem is to assume that GR is the correct theory to describe gravitation, and that we need to add new forms of energy and matter to be in agreement with experiment. In the case of the accelerated expansion, an exotic form of energy is introduced to the field equations in the form of a cosmological constant $\Lambda$, that produces a repulsive gravitational {\it force} capable of explaining the current measurements \cite{Peebles:2002gy}. Elseways, in the case of the rotational curves, a new type of non-baryonic matter is introduced. This new form of matter, usually known as cold dark matter, has the property that it only interacts weakly and gravitationally, and its velocity is much lower than the speed of light. This is why our current cosmological model is indeed known as $\Lambda$ Cold Dark Matter model ($\Lambda$CDM) \cite{weinberg2008cosmology}.

Nevertheless, the previous approach suffers from some important shortcomings. First of all, the expected theoretical value of the cosmological constant exceeds the observations by 120 orders of magnitude \cite{Weinberg:1988cp,Martin:2012bt}, which is by far the worst prediction in the history of Physics. With respect to the introduction of Dark Matter, it is not clear yet which may be suitable and detectable candidates, since all the attempts so far have just found constraints on the possible mass and other properties of the proposed particles. There are no direct and conclusive evidences of these weakly interacting particles, we just know there are some proposals such that their effect is compatible with the current measures \cite{Bertone:2004pz,Ackermann:2013yva,Khachatryan:2014rra,Buckley:2017ijx}. Although, since they are indirect measures, we do not know if these effects are due to the Dark Matter or other astrophysical phenomena. 

Moreover, even if we consider that $\Lambda$CDM may be a good description of the large-scale Universe, there exists a tension between local and late-time measurements of the Hubble parameter, which accounts for the rate of expansion of the Universe \cite{Riess:2020sih}.

And last but not least, the singularities present in GR indicate the limited range of validity of the theory, which is a purely classical theory that does not take into account any quantum effect \cite{Senovilla:2018aav}. 

Due to these disadvantages, other approaches have been proposed, as in the case when modifications of Newton's gravity were considered. They are based on modifying the GR action, commonly known as the \emph{Einstein-Hilbert action}, which is given by
\begin{equation}
S_{{\rm {GR}}}=\int {\rm d}^{4}x\sqrt{-g}\left(\frac{1}{16\pi G}\mathring{R}+\mathcal{L}_{M}\right),
\end{equation}
where $\mathring{R}$ is the Ricci scalar in terms of the Levi-Civita connection, $g$ is the determinant of the metric tensor, $G$ is the gravitational constant, and $\mathcal{L}_{M}$ accounts for the Lagrangian of the matter fields present in the system.\\
Then, one can think of a straightforward modification consisting on changing the scalar of curvature by an arbitrary function of it, $f(\mathring{R})$\footnote{From now on we will refer to these theories as $f(R)$ theories, in order to follow the usual convention in the literature.}. Indeed, these are the well-known and widely studied $f(R)$ theories of gravity \cite{Sotiriou:2008rp,DeFelice:2010aj,Capozziello:2011et}. One can show that, with the correct choice of the function, one can obtain the accelerated expansion of the Universe without dark energy \cite{delaCruzDombriz:2006fj,Amendola:2006we,Cognola:2007zu,Hu:2007nk}. Nevertheless, there is not any global function that can fit all the current data without introducing new sources of matter and energy. In fact, it can be proven that $f(R)$ theories, independently of the chosen function, are equivalent to GR with an extra scalar field \cite{Barrow:1988xh,Nunez:2004ji}. This extra degree of freedom in the theory is the one allowing to predict the cosmological acceleration. 

As a matter of fact, all the Lorentz invariant four-dimensional local extensions of the Einstein-Hilbert action introduce new degrees of freedom into the theory. This is due to the Lovelock theorem, which proves that from a local gravitational action which contains only second derivatives of a single four-dimensional spacetime metric, the only possible equations of motion are the well-known Einstein field equations \cite{Lovelock:1971yv}. This indicates that if we modify GR we need to violate one or more of the assumptions in Lovelock theorem.\\
Accordingly, the theories that break the Lovelock assumptions by considering more fields apart from the metric, can be classified depending on the nature of the extra fields that they add to the theory, \emph{i.e.} scalar, vector, or tensor fields \cite{Capozziello:2007ec,Capozziello:2011et,Clifton:2011jh,Berti:2015itd,baojiu2019modified,Heisenberg:2018vsk}:
\begin{itemize}

\item \emph{Scalar-tensor theories}: these are some of the most studied and best established modified theories in the literature. In 1974 Horndeski introduced in his famous article the most general Lorentz and diffeomorphism invariant scalar-tensor theory with second order equations of motion \cite{Horndeski:1974wa}. The latter condition is considered in order to avoid instabilities, but actually one can consider having higher derivatives in the equations of motion without incurring in a pathological behaviour. These are known as \emph{beyond Horndeski} theories \cite{Zumalacarregui:2013pma,Gleyzes:2014dya,Gleyzes:2014qga,Horndeski:2016bku}. One can even go beyond these theories and allow the propagation of 3 stable degrees of freedom, obtaining the so-called \emph{DHOST} theories \cite{Langlois:2015cwa,Achour:2016rkg}.\\
Paradigmatic examples of scalar-tensor theories, which are particular cases of the already mentioned, include $f(R)$ theories \cite{Sotiriou:2008rp}, generalised Brans-Dicke theories \cite{Bergmann:1968ve}, Galileons \cite{Nicolis:2008in}, and the \emph{Fab Four} \cite{Charmousis:2011bf}.

\item \emph{Vector-tensor theories}: in this case one has to differentiate between massive and massless vector fields. On the one hand, for the massless case, only one non-minimal coupling to the curvature is allowed in order to maintain second order equations of motion \cite{Horndeski:1976gi,Jimenez:2013qsa}. On the other hand, when the field is massive, the most general theory with second order field equations becomes more complex and can be found in \cite{Heisenberg:2014rta,Jimenez:2016isa}. Again in this case, the condition of having second order equations of motions can be relaxed if we still make sure that no extra degrees of freedom propagate, which would induce instabilities. Following these reasoning one arrives at the \emph{beyond generalised Proca} theories \cite{Heisenberg:2016eld}.\\
Some other examples of vector-tensor theories are Einstein-aether \cite{Jacobson:2008aj} and Horava-Lifschitz gravity \cite{Sotiriou:2010wn}.

\item \emph{Tensor-tensor theories}: in this kind of theories is more complicated to perform a stability analysis to construct the most general stable action. We will just outline that the most relevant modified theories that fall under this classification are massive gravity \cite{deRham:2010ik} and bimetric gravity \cite{rosen1978bimetric}.

\end{itemize}

Of course, there will be theories that propagate different kinds of degrees of freedom at the same time. This is the case for example of Moffat's scalar-tensor-vector gravity theory \cite{Moffat:2005si}, which propagates a scalar and a vector. Moreover, it is the case of Poincar\'e Gauge Gravity, which introduces two scalars, two vectors, and two tensor fields. This particular theory is the object of most of this thesis research, and we shall introduce it and motivate it in Section \ref{2.2}. Also, we shall analyse the stability of its propagating modes in Section \ref{2.3}, and study some of its interesting phenomenology in Chapter \ref{3}.\\

One can wonder about what happens if we break the locality assumption of the Lovelock's theorem, which we know it will lead to modifications of Einstein's theory. A non-local Lagrangian can be constructed using non-polynomial differential operators, such as
\begin{equation}
\mathcal{L}=\mathcal{L}\left(...,\frac{1}{\Box}\pi,\rm{ln}\left(\frac{\Box}{M_{S}^{2}}\right),\rm{e}^{\frac{\Box}{M_{S}^{2}}},...\right),
\end{equation}
where $\pi$ can be any kind of tensorial field (which clearly includes scalars and vectors), $\Box$ is the d'Alembertian operator, $M_S$ is the mass scale at which non-local effects manifest, and the non-polynomial operators contain infinite-order covariant derivatives, which is not the case when considering polynomial operators. The fact that the action contains infinite derivatives implies that the theory is non-local, meaning that a measure at a certain point can be affected by what it is occurring at other points of the spacetime at the same time. This will be shown in Section \ref{4.1}. 

The authors in \cite{Biswas:2005qr} started to use these kinds of non-polynomial functions of the d'Alembertian to construct an Ultra-Violet extension of GR. Moreover, they showed that the non-locality can potentially ameliorate the singularities present in GR and in local modifications of gravity. Indeed, within this theory, established in its general form in \cite{Biswas:2011ar}, exact non-singular bouncing solutions and black holes that are regular at the linear level have been found \cite{Biswas:2010zk,Biswas:2012bp,Koshelev:2018rau,Frolov:2015bta,Buoninfante:2019swn}. We will give more detail in Section \ref{4.1}.

Inspired by this approach, in Section \ref{4.2} we shall propose a new theory that is an ultraviolet completion of Poincar\'e Gauge Gravity. We will also prove that one can find ghost and singularity free solutions of this theory in Section \ref{4.3}.\\

Independently of how we modify Einstein's theory, there are several important aspects that we need to take into account. First of all, we need to make sure that the theory does not exhibit instabilities, which will render it as unphysical. Also, it needs to be compatible with the current experimental measures \cite{Berti:2015itd}. For instance, the detection of the gravitational wave GW170817 and its electromagnetic counterpart from a binary neutron star system \cite{TheLIGOScientific:2017qsa,GBM:2017lvd}, allowed us to ruled out many modified gravity theories \cite{Ezquiaga:2017ekz,Baker:2017hug,Creminelli:2017sry,Sakstein:2017xjx}. Lastly, from a theoretical perspective, we have to explain where the extra fields that we are introducing, and the breaking of some of the Lovelock's theorem assumptions, come from. Otherwise, we will just be parametrising our ignorance. 

Indeed, in Poincar\'e Gauge Gravity the extra degrees of freedom appear naturally when considering a gravitational gauge theory of the Poincar\'e group. On the other hand, the infinite derivative functions that we use to make a non-local extension of Poincar\'e Gauge theories are inspired in string theory models, and can effectively take into account quantum effects. 

In the following section we shall give an outline of the content of the thesis.

\section{Scope of the thesis}

The objective of this thesis is to review the Poincar\'e Gauge theory of gravity and expose some novel results we have obtained in this field. Moreover, a novel ultraviolet non-local extension of this theory shall be provided, and it will be shown that it can be ghost and singularity free at the linear level. For this purpose, the thesis has been structured as follows.

\paragraph{Chapter 2} We first shall explain the foundations of any gravitational theory and show how there is no physical relation between the metric and affine structure of the space-time. Then, we shall introduce the Poincar\'e Gauge theory of gravity and motivate its use. We shall analyse its stability in a general background, obtaining that only two scalar degrees of freedom can propagate safely. Moreover, we will comment on how it is possible to extend the Lagrangian to overcome the instabilities of the vector sector.

\paragraph{Chapter 3} We will show how fermionic particles follow non-geodesical trajectories in theories with a non-symmetric connection, and work out an explicit example. Also, we shall study the possible avoidance of singularities for fermionic and bosonic particles in Poincar\'e Gauge Gravity. Finally, we shall explore what kind of black-hole solutions we can expect in Poincar\'e Gauge Gravity by exploring the Birkhoff and no-hair theorems in different physically relevant scenarios.

\paragraph{Chapter 4} We shall propose a novel non-local extension of Poincar\'e Gauge Gravity based on the introduction of infinite derivatives in the gravitational action. We shall show how this theory can be made ghost and singularity free at the linear limit.

\paragraph{Chapter 5} We summarise the main results and discuss the possible outlook.

\paragraph{Appendices} In Appendix \ref{ap:1} we give the components for the acceleration of an electron moving in a particular solution of Poincar\'e Gauge Gravity. In Appendix \ref{ap:2} we expand the differents terms that appear in the action of the non-local theory. 

In Appendix \ref{ap:3} we show the explicit form of the functions that compose the linearised action of the same theory. 

In Appendix \ref{ap:4} we calculate the local limit of the infinite derivative theory and find the conditions to recover Poincar\'e Gauge Gravity.

\renewcommand{\publ}{}

% \index{Classification!LVQ}

\chapter{Poincar\'e Gauge Theories of Gravity}

\label{2}

\PARstart{B}oth its solid mathematical structure and experimental confirmation renders the theory of General Relativity (GR) one of the most successful theories in Physics~\cite{Wald:1984rg,Will:2005va}. As a matter of fact, some phenomena that were predicted by the theory over a hundred years ago, such as gravitational waves~\cite{Abbott:2016blz}, have been measured for the first time in our days. Nevertheless, as we commented in the Introduction, GR suffers from some important shortcomings that need to be addressed. One of them is that the introduction of fermionic matter in the energy-momentum tensor appearing on GR field equations may be cumbersome, since new formalisms would be required~\cite{Hehl:1994ue}. \\
This issue can be solved by introducing a gauge approach facilitating a better understanding of gravitational theories. This was done by Sciama and Kibble in~\cite{sciama1962analogy} and~\cite{Kibble:1961ba} respectively, where the idea of a Poincar\'e gauge (PG) formalism for gravitational theories was first introduced. Following this description one finds that the space-time connection must be metric compatible, albeit not necessarily symmetric. Therefore, a non-vanishing torsion field $T_{\,\,\nu\rho}^{\mu}$ emerges as a consequence of the non-symmetric character of the connection. For an extensive review of the torsion gravitational theories {\it c.f.}~\cite{Gauge,Gauge2}. \\
An interesting fact about these theories is that they appear naturally as gauge theories of the Poincar\'e Group, rendering their formalism analogous to the one used in the Standard Model of Particles, and hence making them good candidates to explore the quantisation of gravity.\\

This chapter is divided as follows. In Section \ref{2.1} we introduce the basic theoretical structure of any gravitational theory, making emphasis on the affine connection. Then in Section \ref{2.2} we will obtain the theory that appears when gauging with respect to the Poincar\'e group. This theory is usually known as Poincar\'e Gauge Gravity. Finally in Section \ref{2.3} we shall study the stability of this theory using a background independent approach, based on the publication {\bf{\nameref{P6}}}.

\clearpage

\section{Non-Riemannian spacetimes}
\label{2.1}
As we already mentioned in the Introduction, the equivalence principle is one of the theoretical keys to understand the way modern gravitational theories are formulated. The fact that the gravitational effects can be removed for a certain observer just by a change of coordinates reminds of a well-known property of differential geometry: for every point of a geodesic curve one can construct a set of coordinates, the so-called {\it{normal coordinates}}, for which the components of the connection in that basis and at that point are zero, hence removing locally the effect of curvature. Therefore, the surroundings of every point ``look like" $\mathbb{R}^{\rm{n}}$ but globally the system possesses quite different properties.\\
That is why Gravity can be described resorting to a differential geometry approach by considering a manifold, to be referred to as the spacetime, where the free-falling observers are assumed to follow geodesics and the gravitational effects are encoded in the global properties of the manifold. Let us establish these ideas more specifically by reminding a few concepts of differential geometry \cite{Wald:1984rg,Hawking:1973uf,sanchez2003introduccion}.

\begin{defn}[Manifold]
\label{def:manifold}
A $C^r$ {\it n-dimensional manifold} $\mathcal{M}$ is a set $\mathcal{M}$ together with a $C^r$ {\it atlas} $\left\{ \mathcal{U}_{\alpha},\psi_{\alpha}\right\} $, {\it i.e.} a collection of charts $\left(\mathcal{U}_{\alpha},\psi_{\alpha}\right)$ where the $\mathcal{U}_{\alpha}$ are subsets of $\mathcal{M}$ and the $\psi_{\alpha}$ are one-to-one maps of the corresponding $\mathcal{U}_{\alpha}$ to open sets in $\mathbb{R}^{\rm{n}}$ such that:\\
(1) The $\mathcal{U}_{\alpha}$ covers $\mathcal{M}$, which means that $\mathcal{M}=\underset{\alpha}{\cup}\;\mathcal{U}_{\alpha}$\\
(2) If $\mathcal{U}_{\alpha}\cap\mathcal{U}_{\beta}$ is non-empty, then the map
\[
\psi_{\alpha}\circ\psi_{\beta}^{-1}:\;\psi_{\beta}\left(\mathcal{U}_{\alpha}\cap\mathcal{U}_{\beta}\right)\longrightarrow\psi_{\alpha}\left(\mathcal{U}_{\alpha}\cap\mathcal{U}_{\beta}\right)
\]
\phantom{(2)} is a $C^r$ map of an open subset of $\mathbb{R}^{\rm{n}}$ to an open subset of $\mathbb{R}^{\rm{n}}$ (see Figure \ref{fig:1}).\\

\end{defn}

\begin{figure}
\centering
\includegraphics[width=0.88\linewidth]{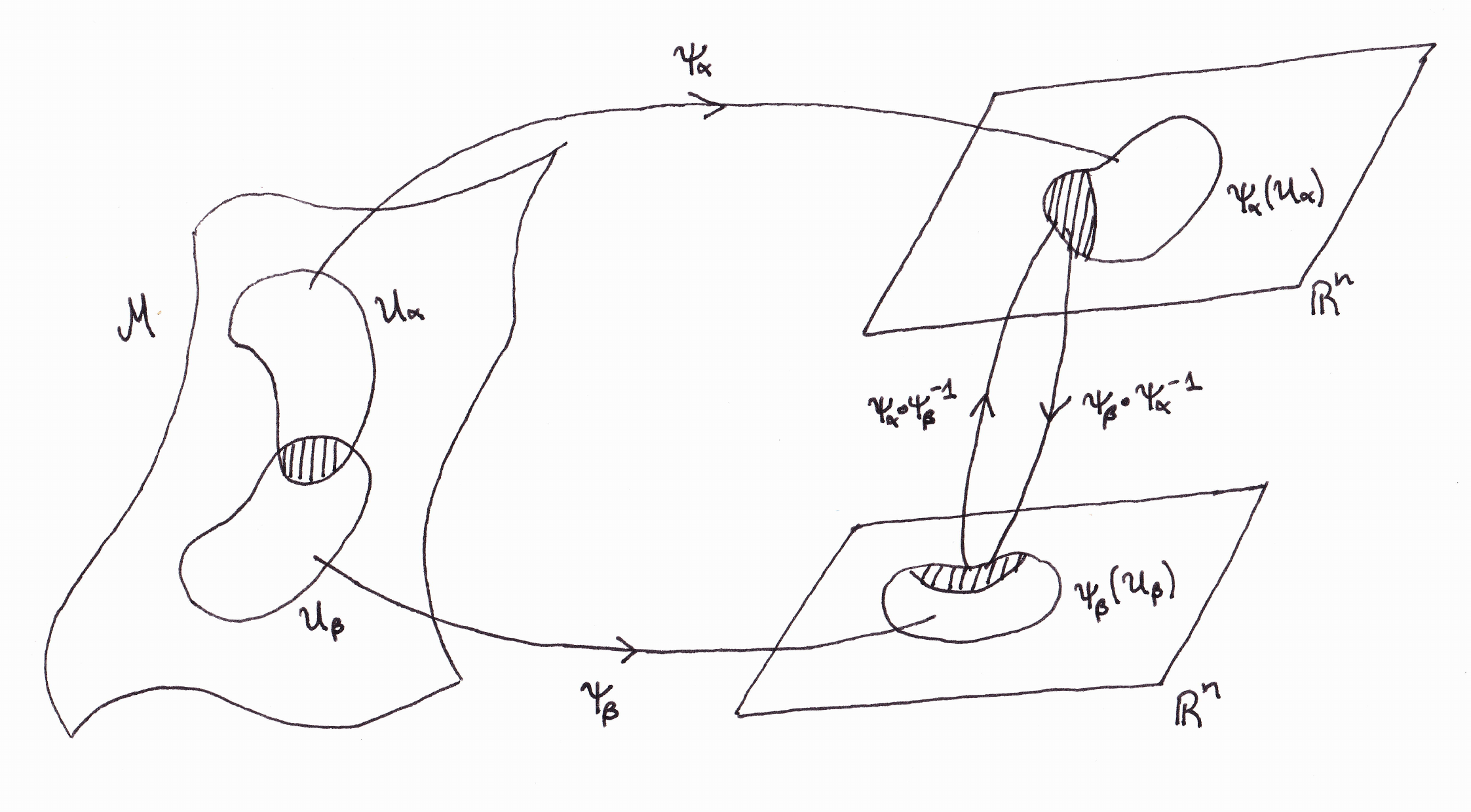}
\caption{Relation between two intersecting charts in a manifold, according to Definition \ref{def:manifold}.}
\label{fig:1}
\end{figure}

If all the possible charts compatible with the condition (2) are included, the atlas $\left\{ \mathcal{U}_{\alpha},\psi_{\alpha}\right\} $ is known as {\emph{maximal}}. From now on we will assume that that is the case. Moreover, an atlas $\left\{ \mathcal{U}_{\alpha},\psi_{\alpha}\right\} $ is known as {\emph{locally finite}} if every point $p\in\mathcal{M}$ has an open neighbourhood which only intersects a finite number of the sets $\mathcal{U}_{\alpha}$.

Nevertheless, a manifold is still a very general structure, and we need to impose more conditions in order to represent a physical system: 
\begin{enumerate}

\item We shall requiere that the manifold satisfies the \emph{Hausdorff} separation axiom: if $p,q$ are two distinct points in $\mathcal{M}$, then there exists disjoint open sets $\mathcal{U}$, $\mathcal{V}$ in $\mathcal{M}$ such that $p\in\mathcal{U}$ and $q\in\mathcal{V}$.

\item It must be \emph{paracompact}, meaning that for every atlas $\left\{ \mathcal{U}_{\alpha},\psi_{\alpha}\right\} $ there exists a locally finite atlas $\left\{ \mathcal{V}_{\beta},\varphi_{\beta}\right\} $ with each $\mathcal{V}_{\beta}$ contained in some $\mathcal{U}_{\alpha}$.

\item We will impose that it is \emph{connected}, \emph{i.e.} the manifold cannot be divided into two disjoints open sets.

\end{enumerate}

These three conditions are required because, if the manifold meets them, they imply that the manifold has a countable basis, which means that there is a countable collection of open sets such that any open set can be expressed as the union of members of this collection \cite{kobayashi1963foundations}.\\

Moreover, in every physical system we need to define $i)$ a vector structure and $ii)$ a way of measuring distances, so in the following we shall explain how these concepts are introduced in differential geometry.\\
We will define a tangent vector to a point as an equivalence class of curves that pass through that point. To concretise this definition let us consider a manifold $\mathcal{M}$ of dimension $n$ and fix a point $p\in \mathcal{M}$. Let
\[
\mathcal{C}_{p}=\left\{ \gamma:\left]-\epsilon_{\gamma},\epsilon_{\gamma}\right[\longrightarrow\mathcal{M}\,;\;\epsilon_{\gamma}>0,\gamma\left(0\right)=p,\gamma\,{\rm differentiable}\right\} 
\]
be the set of differentiable curves contained in $\mathcal{M}$ that pass through $p$. We shall establish a class of equivalence in $\mathcal{C}_{p}$ as follows: two curves $\gamma,\rho\in\mathcal{C}_{p}$ will be {\emph{equivalent}} $\gamma\sim\rho$ if, for some coordinate neighbourhood $\left(\mathcal{U},\psi=\left(q_{1},...,q_{n}\right)\right)$ of $p$, they verify that $\left.\frac{d}{dt}\right|_{t=0}\left(\psi\circ\gamma\right)\left(t\right)=\left.\frac{d}{dt}\right|_{t=0}\left(\psi\circ\rho\right)\left(t\right)$ (let us note that $\psi\left(\gamma\left(0\right)\right)=\psi\left(\rho\left(0\right)\right)=\psi\left(p\right)$). This means that they are equivalent if the tangent vector in $\mathbb{R}^{n}$ of those curves coincides\footnote{It is important to stress that this definition is completely independent of the coordinate neighbourhood that is chosen.} (see Figure \ref{fig:2}). \\

\begin{figure}
\centering
\includegraphics[width=0.88\linewidth]{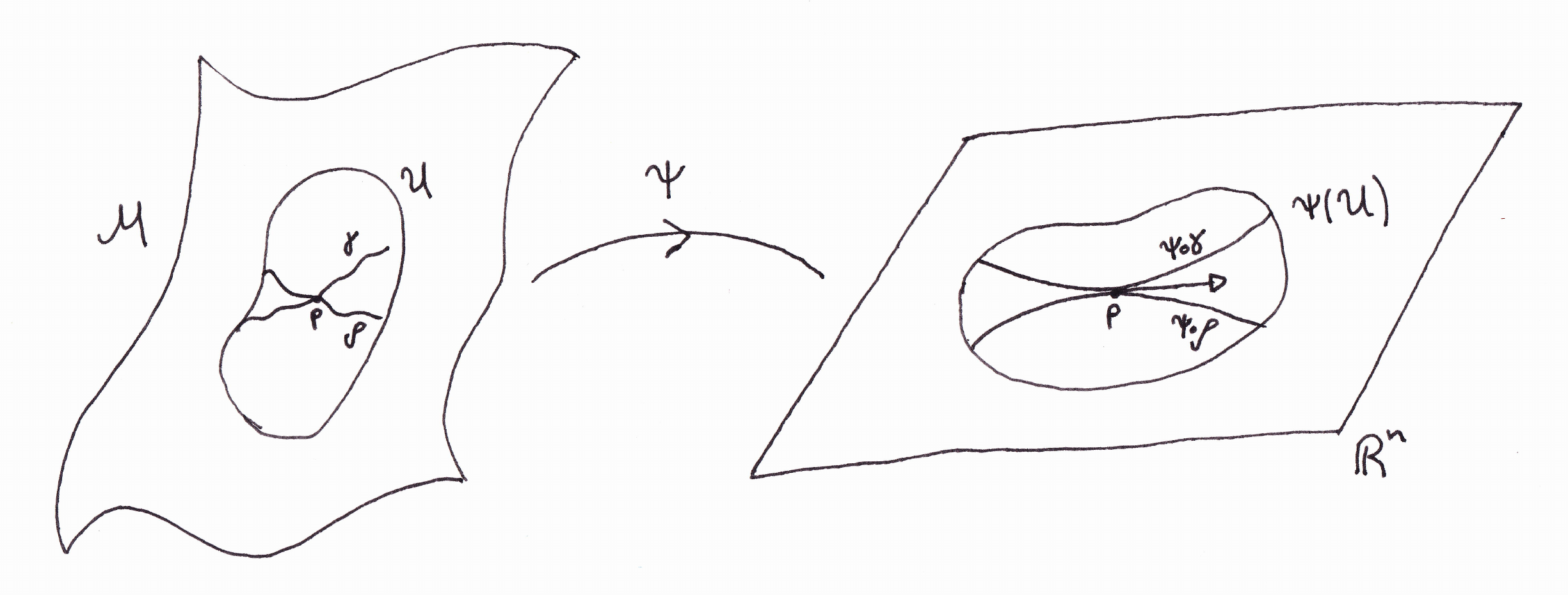}
\caption{Graphic example showing two curves that have the same tangent vector in $\mathbb{R}^{n}$.}
\label{fig:2}
\end{figure}

We shall define a tangent vector as
\begin{defn}[Tangent vector, by equivalent classes]
We will call tangent vector to $\mathcal{M}$ in $p$ to each of the equivalent classes defined by $\sim$ in $\mathcal{C}_{p}$.
\end{defn}
There is also a more abstract, although equivalent, definition of a tangent vector, that may be more appealing to physicists. Namely
\begin{defn}[Tangent vector, by coordinates]
A tangent vector to $\mathcal{M}$ in $p$ is a map that to every coordinate neighbourhood $\left(\mathcal{U},\psi=\left(q_{1},...,q_{n}\right)\right)$ of $p$ it assigns an element $\left(a^{1},...,a^{n}\right)\in\mathbb{R}^{n}$, in such a way that given another coordinate neighbourhood $\left(\tilde{\mathcal{U}},\tilde{\psi}=\left(\tilde{q}_{1},...,\tilde{q}_{n}\right)\right)$ the new assigned element $\left(\tilde{a}^{1},...,\tilde{a}^{n}\right)\in\mathbb{R}^{n}$ verifies
\begin{equation}
\label{vectortransformation}
\tilde{a}^{i}=\sum_{j=1}^{n}\frac{\partial\tilde{q}^{i}}{\partial q^{j}}\left(p\right)a^{j}\quad\forall i\in\left\{ 1,...,n\right\} ,
\end{equation}
that is known as the vector transformation law.
\end{defn}
This definition is the formalisation of the physical idea that a vector is a $n$ component object such that it assigns to every coordinate system an element in $\mathbb{R}^{n}$ that ``transforms as a vector''.\\
Consequently, we will define the vector structure as 
\begin{defn}[Tangent space] We will define the tangent space to $\mathcal{M}$ in $p$, $V_{p}$ , as the set of all tangent vectors to $\mathcal{M}$ in $p$.
\label{tangentspace}
\end{defn}
The dimension of the tangent space $V_{p}$ is the same one as the manifold $\mathcal{M}$. Having this vector structure allows us to introduce the concept of tensors, that will represent the physical quantities in a gravitational theory. \\
First, let us recall that for every real vector space $V\left(\mathbb{R}\right)$ one can define the {\emph{dual vector space}} $V^{\ast}\left(\mathbb{R}\right)$ as
\[
V^{\ast}\left(\mathbb{R}\right):=\left\{ v^{\ast}:V\longrightarrow\mathbb{R}\;/\;v^{\ast}\:{\rm linear}\right\} ,
\]
where the elements of the dual $v^{\ast}\in V^{\ast}\left(\mathbb{R}\right)$ are known as {\emph{linear forms}} or {\emph{dual vectors}}.\\ 
In particular, for every tangent space $V_{p}$ there is a dual tangent space $V^{\ast}_{p}$, which will be of the same dimension.

We now have the necessary concepts to introduce tensors. 
\begin{defn}[Tensor]
Let $V\left(\mathbb{R}\right)$ be a real vector space of finite dimension. A tensor $T$ of type $\left(k,l\right)$ over $V\left(\mathbb{R}\right)$ is a map
\[
T:\left(V^{\ast}\right)^{k}\times V^{l}\longrightarrow\mathbb{R},
\]
that is multilinear, {\emph{i.e.}} linear in each of its $k+l$ variables.
\end{defn}
One important operation that one can perform with tensors is the so-called {\emph{outer product}}: given a tensor $T$ of type $\left(k,l\right)$ and another tensor $T'$ of type $\left(k',l'\right)$, one can construct a new tensor of type $\left(k+k',l+l'\right)$, the outer product $T\otimes T'$, which is described by the following rule. Let us have $k+k'$ dual vectors $\left\{ v^{1^{\ast}},...,v^{\left(k+k'\right)^{\ast}}\right\} $ and $l+l'$ vectors $\left\{ w_{1},...,w_{l+l'}\right\} $. Then we shall define $T\otimes T'$ acting on these vectors to be the product of $T\left(v^{1^{\ast}},...,v^{k^{\ast}},w_{1},...,w_{l}\right)$ and $T'\left(v^{\left(k+1\right)^{\ast}},...,v^{\left(k+k'\right)^{\ast}},w_{l+1},...,w_{l+l'}\right)$.\\
One can show that every tensor of type $\left(k,l\right)$ can be expressed as a sum of the outer product of simple tensors, namely
\begin{equation}
\label{tensorcoordinates}
T=\sum_{\mu_{1},...,\nu_{l}=1}^{n}T^{\mu_{1}...\mu_{k}}\,_{\nu_{1}...\nu_{l}}v_{\mu_{1}}\otimes...\otimes v^{\nu_{l}}.
\end{equation}
The basis expansion coefficients, $T^{\mu_{1}...\mu_{k}}\,_{\nu_{1}...\nu_{l}}$, are known as the components of the tensor $T$ with respect to the basis $\left\{v_{\mu}\right\}$ of the vector space. From this Section onwards we will work directly with the components of the tensors under certain basis. \\

At this stage, we still need an element endowed to the manifold that allow us to define distances. In $\mathbb{R}^{n}$ distances are measured by the {\emph{scalar product}}, which is a scalar linear map acting on two vectors and meeting certain properties. Does this ring a bell? Indeed, this concept can be generalised to a tensor of type $\left(0,2\right)$, known as the {\emph{metric tensor}}, as follows
\begin{defn}[Metric]
A metric tensor $g$ at a point $p\in\mathcal{M}$ is a $\left(0,2\right)$ tensor that meets the following properties:
\begin{itemize}
\item It is symmetric, meaning that for all $v_{1},v_{2}\in V_{p}$ we have $g\left(v_{1},v_{2}\right)=g\left(v_{2},v_{1}\right)$.
\item It is non-degenerate, which implies that the only case in which we have $g\left(v,v_{1}\right)=0$ for all $v\in V_{p}$ is when $v_{1}=0$.
\end{itemize}
\end{defn}
In a coordinate system it can be expanded as
\begin{equation}
{\rm d}s^{2}=g_{\mu\nu}{\rm d}x^{\mu}{\rm d}x^{\nu},
\end{equation} 
where the Einstein summation convention applies (as from now on), and the outer product sign has been ommited.\\
From the Sylvester theorem \cite{sylvester1852xix} we have that for any metric $g$ one can always find an {\emph{orthonormal basis}} $\left\{ v^{1},...,v^{n}\right\} $ of $V_{p}$, such that $g\left(v_{\mu},v_{\nu}\right)=0$ if $\mu\neq\nu$ and $g\left(v_{\mu},v_{\mu}\right)=\pm 1$. The number of $+$ and $-$ signs occurring is independent of the chosen orthonormal basis (which is not unique), and it is known as the {\emph{signature}} of the metric. If the signature of the metric is $+...+$ it is called {\emph{Riemannian}}, while if the signature is $-+...+$ is known as {\emph{Lorentzian}}. \\

At this time, we are ready to define what we understand by a physical spacetime
\begin{defn}[Spacetime]
\label{spacetime}
A spacetime manifold is a pair $\left(\mathcal{M},g\right)$, in which $\mathcal{M}$ is a connected Hausdorff $C^{\infty}$ n-dimensional manifold, and $g$ a Lorentzian metric on $\mathcal{M}$. 
\end{defn}
As we pointed out earlier, gravitational effects are a consequence of the global properties of the spacetime. More specifically, we shall identify the observers that are only affected by gravity with those following geodesics of the spacetime, {\emph{i.e.}} the trajectories that maximise the length $L$ of a curve $\gamma$ between two points $p=\gamma\left(a\right),q=\gamma\left(b\right)$, where the length is calculated by integrating the tangent vector $\gamma'\left(t\right)$ along the curve $\gamma$
\begin{equation}
\label{length}
L=\int_{a}^{b}\left(\left|g\left(\gamma'\left(t\right),\gamma'\left(t\right)\right)\right|\right)^{\frac{1}{2}}{\rm d}t.
\end{equation}
Therefore, in light of \eqref{length}, one can see that the global aspects of the spacetime are encoded in the metric tensor.\\
The reader might wonder about the fact that we have not still talked about one of the most essential aspects of any physical theory. Indeed, since we want to describe the dynamics of Nature we need to know how to perform variations in the manifold, {\emph{i.e.}} how does one define derivatives.\\
For that, we need to provide an {\emph{affine structure}} to the manifold, which will allow us to differenciate {\emph{vector fields}} (and consequently tensor fields as well). These are maps assigning a vector to every point in the manifold, and the set of all the vector fields on $\mathcal{M}$ is denoted as $\mathfrak{X}(Q)$. With this in mind we define
\begin{defn}[Affine connection]
An {\emph{affine connection}} on $\mathcal{M}$ is a map $\nabla$,
\begin{eqnarray}
\nabla:\mathfrak{X}\left(\mathcal{M}\right)\times\mathfrak{X}\left(\mathcal{M}\right)&\longrightarrow&\mathfrak{X}\left(\mathcal{M}\right)
\nonumber
\\
\left(X,Y\right)&\longmapsto&\nabla_{X}Y
\nonumber
\end{eqnarray}
that meets the following conditions
\begin{enumerate}
\item It is $\mathbb{R}$-linear with respect to the second variable, that is,
\[
\nabla_{X}\left(aY+b\overline{Y}\right)=a\nabla_{X}Y+b\nabla_{X}\overline{Y},\quad\forall a,b\in\mathbb{R},\quad\forall X,Y,\overline{Y}\in\mathfrak{X}\left(\mathcal{M}\right).
\]
\item It verifies the Leibniz rule with respect to the second variable
\[
\nabla_{X}\left(fY\right)=X\left(f\right)Y+f\nabla_{X}Y,\quad\forall f\in C^{\infty}\left(\mathcal{M}\right),\quad\forall X,Y\in\mathfrak{X}\left(\mathcal{M}\right).
\]
\item It is $\mathbb{R}$-linear with respect to the first variable,
\[
\nabla_{aX+b\overline{X}}\left(Y\right)=a\nabla_{X}Y+b\nabla_{\overline{X}}Y,\quad\forall a,b\in\mathbb{R}\quad\forall X,\overline{X},Y\in\mathfrak{X}\left(\mathcal{M}\right).
\]
\item It is $C^{\infty}\left(\mathcal{M}\right)$-linear with respect to the first variable,
\[
\nabla_{fX}\left(Y\right)=f\nabla_{X}Y,\quad\forall f\in C^{\infty}\left(\mathcal{M}\right)\quad\forall X,Y\in\mathfrak{X}\left(\mathcal{M}\right).
\]
\end{enumerate}
The pair $\left(\mathcal{M},\nabla\right)$ is known as {\emph{affine manifold}}.
\end{defn}   

It is important to see how we can express the connection in a certain coordinate basis. Let $\left(\mathcal{M},\nabla\right)$ be an affine manifold and $\left(\mathcal{U},q^{1},...,q^{n}\right)$ a coordinate neighbourhood of $\mathcal{M}$. It is known from standard differential geometry that $\left(\partial_{1}\equiv\frac{\partial}{\partial q^{1}},...,\partial_{n}\equiv\frac{\partial}{\partial q^{n}}\right)$ is a basis of the vector fields over $\mathcal{M}$ \cite{o1983semi}. Since $\nabla_{\partial_{\mu}}\partial_{\nu}$ is also a vector field, we can express it at every point as a linear combination of coordinates fields $\left(\partial_{1},...,\partial_{n}\right)$. Consequently, there are $n^{3}$ differentiable functions on $\mathcal{U}$ such that
\begin{equation}
\label{Christoffel}
\nabla_{\partial_{\mu}}\partial_{\nu}=\sum_{\rho=1}^{n}\Gamma^{\rho}\,_{\mu\nu}\partial_{\rho}.
\end{equation}
The functions $\Gamma^{\rho}\,_{\mu\nu}$, with $\mu,\nu,\rho\in\left\{ 1,...,n\right\} $, given in \eqref{Christoffel} are denoted as the {\emph{Christoffel symbols}} of $\nabla$ in the coordinates $\left(\partial_{1},...,\partial_{n}\right)$.\\
This coordinate expression of the connection allows us to define a derivative on the tensor fields (which of course includes vector fields) that transforms properly, meaning that the derivative of a tensor shall be another tensor. This is known as the {\emph{covariant derivative}}:
\begin{defn}[Covariant derivative, in coordinates]
The covariant derivative $\nabla_{\rho}$ of a tensor field of type $\left(k,l\right)$ $T^{\mu_{1}...\mu_{k}}\,_{\nu_{1}...\nu_{l}}$ is another tensor of type $\left(k,l+1\right)$, $\nabla_{\rho}T^{\mu_{1}...\mu_{k}}\,_{\nu_{1}...\nu_{l}}$, that can be written in coordinates as
\begin{eqnarray}
\label{ChristoffelCoord}
\nabla_{\rho}T^{\mu_{1}...\mu_{k}}\,_{\nu_{1}...\nu_{l}}&=&\partial_{\rho}T^{\mu_{1}...\mu_{k}}\,_{\nu_{1}...\nu_{l}}+\sum_{i=1}^{k}\Gamma^{\mu_{i}}\,_{\rho d}T^{\mu_{1}...d...\mu_{k}}\,_{\nu_{1}...\nu_{l}}
\nonumber
\\
&&-\sum_{i=1}^{l}\Gamma^{d}\,_{\rho\nu_{i}}T^{\mu_{1}...\mu_{k}}\,_{\nu_{1}...d...\nu_{l}}.
\end{eqnarray}
\end{defn}

At this step we need to point out a very crucial fact. {\bf{The affine structure of a spacetime is not unique}}. Let us explain this in more detail.\\
A gravitational theory is a physical theory that relates the energy and matter content of a system with the global structure of the spacetime describing such a system. From the Definition \ref{spacetime} and the subsequent discussion we know that such a structure is given by the metric tensor. Hence, the field equations of the gravitational theory will be {\bf{dynamical}} equations that have the energy and matter content as input quantities and the metric tensor and the affine connection\footnote{The latter appearing due to the dynamical character of the field equations.} as the unknowns that we want to solve.\\
The metric tensor is the one that defines the structure of spacetime and the connection is going to tell us how to take derivatives, so the latter will of course affect the field equations. In general, these two quantities are independent, but there is a special choice of connection that relates them, known as the {\emph{Levi-Civita}} connection. The existence of such a connection is sometimes referred in mathematics literature as the ``miracle'' of Lorentzian geometry, since it proves that every pair $\left(\mathcal{M},g\right)$ can be understood as an affine manifold.  
\begin{thm}
Let $\left(\mathcal{M},g\right)$ be a n-dimensional Lorentzian manifold. Then there exists an unique connection $\mathring{\nabla}$, with Christoffel symbols $\mathring{\Gamma}^{\rho}\,_{\mu\nu}$, that verifies the following properties in all coordinate systems:
\begin{enumerate}
\item It is symmetric, that is
\begin{equation}
\label{LCsymmetry}
\mathring{\Gamma}^{\rho}\,_{\mu\nu}=\mathring{\Gamma}^{\rho}\,_{\nu\mu}\quad \mu,\nu,\rho\in\left\{ 1,...,n\right\}.
\end{equation}
\item It is metric compatible, which means that
\begin{equation}
\label{LCmetricity}
\mathring{\nabla}_{\mu}g_{\nu\rho}=0\quad \mu,\nu,\rho\in\left\{ 1,...,n\right\}.
\end{equation}
\end{enumerate}
\end{thm}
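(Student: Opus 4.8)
The plan is to prove this classical result (the fundamental theorem of (pseudo-)Riemannian geometry) by the standard Koszul-formula argument, which simultaneously delivers uniqueness and existence. First I would establish uniqueness: assume $\mathring{\nabla}$ is a connection satisfying symmetry \eqref{LCsymmetry} and metric compatibility \eqref{LCmetricity}. Write out $\mathring{\nabla}_\mu g_{\nu\rho}=0$ in coordinates using \eqref{ChristoffelCoord}, namely $\partial_\mu g_{\nu\rho}=\mathring{\Gamma}^\sigma{}_{\mu\nu}g_{\sigma\rho}+\mathring{\Gamma}^\sigma{}_{\mu\rho}g_{\nu\sigma}$. Then cyclically permute the indices $\mu,\nu,\rho$ to get two more such equations, and take the combination (first plus second minus third). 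Using symmetry $\mathring{\Gamma}^\sigma{}_{\mu\nu}=\mathring{\Gamma}^\sigma{}_{\nu\mu}$ to cancel the appropriate terms, one solves for $\mathring{\Gamma}^\sigma{}_{\mu\nu}g_{\sigma\rho}$ and then contracts with $g^{\rho\lambda}$ (which exists because $g$ is non-degenerate) to obtain precisely
\begin{equation}
\mathring{\Gamma}^{\rho}\,_{\mu\nu}=\frac{1}{2}g^{\rho\sigma}\left(\partial_{\mu}g_{\nu\sigma}+\partial_{\nu}g_{\sigma\mu}-\partial_{\sigma}g_{\mu\nu}\right).
\nonumber
\end{equation}
Since the Christoffel symbols are forced to be this explicit expression in every coordinate chart, the connection is unique.

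For existence, I would run the computation in reverse: define $\mathring{\nabla}$ in each coordinate neighbourhood by declaring its Christoffel symbols to be the right-hand side of the displayed formula, via \eqref{Christoffel}. One must then check three things. First, that this is well-defined — i.e. that the local definitions patch together into a genuine affine connection on $\mathcal{M}$; this follows because the objects so defined obey the correct (inhomogeneous) transformation law under change of coordinates, which is a direct though tedious check using the chain rule on $\partial g$, or alternatively one verifies the abstract axioms (1)--(4) of the affine connection definition are coordinate-independent consequences. Second, that $\mathring{\nabla}$ is symmetric: this is immediate since the formula is manifestly invariant under swapping $\mu\leftrightarrow\nu$. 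Third, that $\mathring{\nabla}$ is metric compatible: substitute the explicit $\mathring{\Gamma}$ into the coordinate expression for $\mathring{\nabla}_\mu g_{\nu\rho}$ and verify the terms cancel — this is exactly the algebra of the uniqueness step read backwards.

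The main obstacle — really the only non-formal point — is the well-definedness of the global connection, that is, confirming the patching/transformation-law compatibility between overlapping charts; everything else is bookkeeping with the index permutation trick. Depending on how much rigour is wanted, I would either carry out the transformation-law verification explicitly (showing $\tilde\Gamma^\rho{}_{\mu\nu} = \frac{\partial\tilde q^\rho}{\partial q^a}\frac{\partial q^b}{\partial\tilde q^\mu}\frac{\partial q^c}{\partial\tilde q^\nu}\Gamma^a{}_{bc} + \frac{\partial\tilde q^\rho}{\partial q^a}\frac{\partial^2 q^a}{\partial\tilde q^\mu\partial\tilde q^\nu}$), or invoke the standard fact that a family of such symbols transforming this way defines a connection. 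Since the statement as given only asserts existence and uniqueness of $\mathring{\nabla}$ with the two listed properties and does not ask for the explicit Koszul formula, the cleanest exposition is: derive the formula from the hypotheses (uniqueness), then observe the formula satisfies the hypotheses (existence), relegating the transformation-law check to a remark or citing \cite{o1983semi} or \cite{Wald:1984rg}.
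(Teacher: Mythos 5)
Your proposal is correct, but note that the paper does not actually prove this theorem: it states it as the classical ``miracle'' of Lorentzian geometry, defers detailed proofs to the differential-geometry references (\emph{e.g.} \cite{o1983semi}, \cite{Wald:1984rg}), and only afterwards records the explicit formula \eqref{LCwithg} as a property of the resulting connection. Your Koszul-style argument is the standard and complete route: writing \eqref{LCmetricity} in coordinates via \eqref{ChristoffelCoord}, cyclically permuting, and using \eqref{LCsymmetry} to cancel terms forces $\mathring{\Gamma}^{\rho}{}_{\mu\nu}g_{\rho\sigma}$ to equal $\tfrac12(\partial_\mu g_{\nu\sigma}+\partial_\nu g_{\sigma\mu}-\partial_\sigma g_{\mu\nu})$, and non-degeneracy of $g$ then pins down the symbols uniquely — recovering precisely the paper's Eq.~\eqref{LCwithg} — while existence follows by defining the symbols through that formula via \eqref{Christoffel} and running the algebra backwards. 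You correctly isolate the one non-formal point, namely that the locally defined symbols obey the inhomogeneous connection transformation law so that the chart-by-chart definitions patch into a global affine connection; either the explicit chain-rule verification you sketch or a citation suffices there. The only cosmetic caution is to keep the paper's index conventions (the first lower index of $\Gamma^{\rho}{}_{\mu\nu}$ is the differentiation index in \eqref{ChristoffelCoord}), which your cancellation pattern already respects.
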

Moreover, based on the properties \eqref{LCsymmetry} and \eqref{LCmetricity}, one can define two tensors that account for ``the lack of symmetry'' \eqref{LCsymmetry} and ``the lack of metricity'' \eqref{LCmetricity} of the connection.
\begin{defn}[Torsion tensor]
Let $\nabla$ be a connection with Christoffel symbols $\Gamma^{\rho}\,_{\mu\nu}$. Then, the torsion tensor $T^{\rho}\,_{\mu\nu}$ is defined as the antisymmetric part of the connection, namely
\begin{equation}
\label{TORSION}
T^{\rho}\,_{\mu\nu}=\Gamma^{\rho}\,_{\mu\nu}-\Gamma^{\rho}\,_{\nu\mu}.
\end{equation}
\end{defn}
\begin{defn}[Non-metricity tensor]
Let $\nabla$ be a connection with Christoffel symbols $\Gamma^{\rho}\,_{\mu\nu}$. Then, the non-metricity tensor $M_{\mu\nu\rho}$ is defined as
\begin{equation}
\label{NONMETRICITY}
M_{\mu\nu\rho}=\nabla_{\mu}g_{\nu\rho}.
\end{equation}
\end{defn}
Hence one can introduce the following
\begin{defn}[Levi-Civita connection]
The connection $\mathring{\nabla}$ with Christoffel symbols $\mathring{\Gamma}^{\rho}\,_{\mu\nu}$, which has null torsion and non-metricity, is known as the Levi-Civita connection.
\end{defn}
This connection has very interesting properties. First of all, it is uniquely related to the metric tensor as
\begin{equation}
\label{LCwithg}
\mathring{\Gamma}^{\rho}\,_{\mu\nu}=\frac{1}{2}g^{\rho\sigma}\left(\partial_{\mu}g_{\nu\sigma}+\partial_{\nu}g_{\sigma\mu}-\partial_{\sigma}g_{\mu\nu}\right),
\end{equation}
Also, from the properties of the Levi-Civita connection, \eqref{LCsymmetry} and \eqref{LCmetricity}, one can prove (for a detailed proof see \cite{o1983semi}) the following
\begin{thm}
Let $p$ and $q$ be two points in the spacetime $\left(\mathcal{M},g\right)$, and let $\gamma$ be a curve that joints this two points, with a tangent vector $v^{\mu}$ that is parallely transported along itself in terms of the Levi-Civita connection, i.e.
\begin{equation}
\label{geodesiceq}
v^{\mu}\mathring{\nabla}_{\mu}v^{\nu}=0.
\end{equation}
Then, $\gamma$ is also the curve that extremise the length between the two points, which we have defined earlier as a geodesic.
\end{thm}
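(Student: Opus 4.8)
The plan is to show that the first variation of the length functional \eqref{length} vanishes along any curve $\gamma$ solving the auto-parallel equation \eqref{geodesiceq}, so that $\gamma$ is a critical point of $L$ among curves joining $p$ and $q$; this is precisely what ``$\gamma$ extremises the length'' means. Because \eqref{length} is invariant under reparametrisations, I would first exploit \eqref{geodesiceq} to fix a convenient parametrisation: contracting $v^{\mu}\mathring{\nabla}_{\mu}v^{\nu}=0$ with $g_{\nu\sigma}v^{\sigma}$ and using metric compatibility \eqref{LCmetricity} gives $\frac{{\rm d}}{{\rm d}t}g\!\left(\gamma'(t),\gamma'(t)\right)=2\,g_{\nu\sigma}v^{\sigma}\,v^{\mu}\mathring{\nabla}_{\mu}v^{\nu}=0$, so $\varepsilon:=g(\gamma',\gamma')$ is constant along $\gamma$. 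Assuming $\gamma$ is non-null, $\varepsilon\neq0$, the integrand in \eqref{length} equals the constant $|\varepsilon|^{1/2}$, i.e. $t$ is an affine parameter proportional to arc length, and $g(\gamma_{s}',\gamma_{s}')$ keeps a fixed sign for all nearby curves by continuity on the compact interval $[a,b]$ — this is what makes the square root harmless to differentiate.

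Next I would take a smooth one-parameter family $\gamma_{s}:[a,b]\to\mathcal{M}$ with $\gamma_{0}=\gamma$, $\gamma_{s}(a)=p$, $\gamma_{s}(b)=q$, and set $V:=\left.\partial_{s}\gamma_{s}\right|_{s=0}$, which vanishes at $t=a,b$. Differentiating under the integral sign and using metric compatibility to move $\mathring{\nabla}_{s}$ through the metric,
\[
\left.\frac{{\rm d}}{{\rm d}s}\right|_{s=0}L[\gamma_{s}]=\int_{a}^{b}\frac{g\!\left(\left.\mathring{\nabla}_{s}\partial_{t}\gamma_{s}\right|_{s=0},\,\gamma'\right)}{\left|g(\gamma',\gamma')\right|^{1/2}}\,{\rm d}t .
\]
Here I would use the key consequence of torsion-freeness \eqref{LCsymmetry}: the symmetry lemma $\mathring{\nabla}_{s}\partial_{t}\gamma_{s}=\mathring{\nabla}_{t}\partial_{s}\gamma_{s}$ for the covariant derivative along the map $(s,t)\mapsto\gamma_{s}(t)$, so the numerator at $s=0$ becomes $g(\mathring{\nabla}_{t}V,\gamma')$. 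Integrating by parts in $t$ — again invoking \eqref{LCmetricity} — the boundary term is $|\varepsilon|^{-1/2}\,g(V,\gamma')\big|_{a}^{b}$, which vanishes since $V=0$ at the endpoints, and the constant $|\varepsilon|^{1/2}$ pulls out, leaving
\[
\left.\frac{{\rm d}}{{\rm d}s}\right|_{s=0}L[\gamma_{s}]=-\frac{1}{|\varepsilon|^{1/2}}\int_{a}^{b}g\!\left(V,\,\mathring{\nabla}_{\gamma'}\gamma'\right){\rm d}t ,
\]
where $\mathring{\nabla}_{\gamma'}\gamma'$ is the covariant acceleration, with components $v^{\mu}\mathring{\nabla}_{\mu}v^{\nu}$.

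The hypothesis \eqref{geodesiceq} says exactly that $\mathring{\nabla}_{\gamma'}\gamma'=0$, so the integrand vanishes identically and the first variation is zero for every admissible $V$; hence $\gamma$ extremises $L$ between $p$ and $q$. (For a timelike $\gamma$ one can moreover check, via the second variation or by passing to normal coordinates, that the extremum is a local \emph{maximum} of the proper time, in agreement with the wording preceding \eqref{length}.) I expect the main obstacle to be the bookkeeping around the square-root integrand together with the reparametrisation freedom of $L$: the differentiation and the integration by parts are only clean once \eqref{geodesiceq} has been used to guarantee that $g(\gamma',\gamma')$ is a \emph{nonzero constant} on $\gamma_{0}$, which is why that normalisation must be established first. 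An alternative that avoids the square root entirely is to run the same variation for the energy functional $E[\gamma]=\tfrac12\int_{a}^{b}g(\gamma',\gamma')\,{\rm d}t$, whose first variation is directly $-\int_{a}^{b}g(V,\mathring{\nabla}_{\gamma'}\gamma')\,{\rm d}t$, and then note that for affinely parametrised curves the critical points of $E$ and of $L$ coincide.
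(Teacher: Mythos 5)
Your proposal is correct and is the standard first-variation-of-arc-length argument: the paper itself does not spell out a proof but defers to O'Neill \cite{o1983semi}, and that reference proceeds exactly as you do — fix an affine parametrisation using metric compatibility, apply the symmetry lemma $\mathring{\nabla}_{s}\partial_{t}\gamma_{s}=\mathring{\nabla}_{t}\partial_{s}\gamma_{s}$ (which is where torsion-freeness enters), and integrate by parts to reduce the first variation to $g\left(V,\mathring{\nabla}_{\gamma'}\gamma'\right)$. The only cosmetic point is the sign coming from differentiating $\left|g(\gamma',\gamma')\right|^{1/2}$ when $\varepsilon<0$, which does not affect the vanishing of the first variation; your caveat about non-null curves is well taken and is implicit in the paper's statement.
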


As we have seen, from a mathematical point of view it makes sense to stick to the Levi-Civita connection, due to its properties. Nevertheless, {\bf{there is not any physical reason to assume that this is the affine structure preferred by Nature}}. To illustrate this, we will give in the following three subsections a very enlightening example. We shall briefly sketch three gravitational theories, with different affine structures, and prove that they are equivalent, in the sense that the field equations are the same. These theories are GR, Teleparallel Gravity (TEGR), and Symmetric Teleparallel Gravity (STEGR), which are sometimes referred to as the {\emph{Geometrical Trinity of Gravity}} \cite{BeltranJimenez:2019tjy}.

\subsection{General Relativity}

The theory of GR, first introduced by Albert Einstein in 1916 \cite{Einstein:1916vd}, is the currently accepted gravitational theory. This theory is formulated in terms of the usual {\emph{curvature tensors}} of the spacetime. They are defined in terms of the connection of the spacetime as follows\footnote{Throughout the thesis we shall use Wald's convention \cite{Wald:1984rg}.}:
\begin{itemize}
\item {\emph{Riemann tensor}}
\begin{equation}
\label{Riemanntensor}
R_{\mu\nu\rho}\,^{\sigma}=\partial_{\nu}\Gamma^{\sigma}\,_{\mu\rho}-\partial_{\mu}\Gamma^{\sigma}\,_{\nu\rho}+\Gamma^{\alpha}\,_{\mu\rho}\Gamma^{\sigma}\,_{\alpha\nu}-\Gamma^{\alpha}\,_{\nu\rho}\Gamma^{\sigma}\,_{\alpha\mu}.
\end{equation}
\item {\emph{Ricci tensor}}
\begin{equation}
\label{Riccitensor}
R_{\mu\nu}=R_{\mu\rho\nu}\,^{\rho}.
\end{equation}
\item {\emph{Scalar curvature}} or {\emph{Ricci scalar}}
\begin{equation}
\label{Ricciscalar}
R=g^{\mu\nu}R_{\mu\nu}.
\end{equation}
\end{itemize}
In the case of GR {\bf{the affine structure is the Levi-Civita one}}. Then the curvature tensors shall be denoted as $\mathring{R}_{\mu\nu\rho}\,^{\sigma}$, $\mathring{R}_{\mu\nu}$, and $\mathring{R}$.\\

As every physical theory, GR can be constructed from an action, which is a functional that gives the field equations when extremising with respect to the independent variables. More specifically, the GR action can be written as
\begin{equation}
\label{GRaction}
S_{{\rm {GR}}}=\int {\rm d}^{4}x\sqrt{-g}\left(\frac{1}{16\pi G}\mathring{R}+\mathcal{L}_{M}\right),
\end{equation}
where $g$ is the determinant of the metric tensor, $G$ is the gravitational constant, and $\mathcal{L}_{M}$ accounts for the energy and matter content of the system.\\
Then, we shall obtain the field equations in the following by performing variations with respect to the metric tensor $g_{\mu\nu}$ and finding the extremising condition, that is $\frac{\delta S_{{\rm {GR}}}}{\delta g^{\mu\nu}}=0$. Let us study each of the terms separately. Firstly, for the curvature part we have
\begin{eqnarray}
&&\delta\left(\sqrt{-g}\mathring{R}\right)=\delta\left(\sqrt{-g}g^{\mu\nu}\mathring{R}_{\mu\nu}\right)=\sqrt{-g}\left(\delta\mathring{R}_{\mu\nu}\right)g^{\mu\nu}+\sqrt{-g}\mathring{R}_{\mu\nu}\delta g^{\mu\nu}
\nonumber
\\
&&+\mathring{R}\delta\left(\sqrt{-g}\right).
\end{eqnarray}
It is a known result that\footnote{See Chapter 7 of \cite{Wald:1984rg} for details.}
\begin{equation}
g^{\mu\nu}\delta\mathring{R}_{\mu\nu}=\mathring{\nabla}^{\mu}\left[\mathring{\nabla}^{\nu}\left(\delta g_{\mu\nu}\right)-g^{\rho\sigma}\mathring{\nabla}_{\mu}\left(\delta g_{\rho\sigma}\right)\right].
\end{equation}
Moreover, it is easy to check that
\begin{equation}
\delta(\sqrt{-g})=\frac{1}{2}\sqrt{-g}g^{\mu\nu}\delta g_{\mu\nu}=-\frac{1}{2}\sqrt{-g}g_{\mu\nu}\delta g^{\mu\nu}.
\end{equation}
Therefore we have
\begin{eqnarray}
\label{variationsGR}
\delta S_{{\rm {GR}}}&=&\frac{1}{16\pi G}\left\{ \int {\rm d}^{4}x\sqrt{-g}\mathring{\nabla}^{\mu}\left[\mathring{\nabla}^{\nu}\left(\delta g_{\mu\nu}\right)-g^{\rho\sigma}\mathring{\nabla}_{\mu}\left(\delta g_{\rho\sigma}\right)\right]\right.
\nonumber
\\
&&\left.+\int {\rm d}^{4}x\sqrt{-g}\left(\mathring{R}_{\mu\nu}-\frac{1}{2}g_{\mu\nu}\mathring{R}\right)\delta g^{\mu\nu}\right\}
\nonumber
\\
&&+\int {\rm d}^{4}x\sqrt{-g}\left(\frac{\delta\mathcal{L}_{M}}{\delta g^{\mu\nu}}-\frac{1}{2}g_{\mu\nu}\mathcal{L}_{M}\right)\delta g^{\mu\nu}.
\end{eqnarray}
The first term of \eqref{variationsGR} is clearly the integral of a divergence. By the Stokes theorem we know that this integral just contributes a boundary term.  When the variations of the metric $\delta g_{\mu\nu}$ and its derivatives vanish in the boundary, as we shall require, that integral also vanish. Hence, the variation with respect to the metric tensor is
\begin{equation}
\frac{\delta S_{{\rm {GR}}}}{\delta g^{\mu\nu}}=\frac{1}{16\pi G}\int {\rm d}^{4}x\sqrt{-g}\left[\mathring{R}_{\mu\nu}-\frac{1}{2}g_{\mu\nu}\mathring{R}+16\pi G\left(\frac{\delta\mathcal{L}_{M}}{\delta g^{\mu\nu}}-\frac{1}{2}g_{\mu\nu}\mathcal{L}_{M}\right)\right].
\end{equation}
Then, the extremising condition $\frac{\delta S_{{\rm {GR}}}}{\delta g^{\mu\nu}}=0$ gives the so-called Einstein field equations 
\begin{equation}
\mathring{R}_{\mu\nu}-\frac{1}{2}g_{\mu\nu}\mathring{R}=8\pi G T_{\mu\nu},
\end{equation}
where 
\begin{equation}
\label{EnergyMomentum}
T_{\mu\nu}:=\frac{-2}{\sqrt{-g}}\frac{\delta\left(\sqrt{-g}\mathcal{L}_{M}\right)}{\delta g^{\mu\nu}}=-2\frac{\delta\mathcal{L}_{M}}{\delta g^{\mu\nu}}+g_{\mu\nu}\mathcal{L}_{M}
\end{equation}
is known as the {\emph{energy-momentum tensor}}.\\

The reader might be wondering why the action \eqref{GRaction} is chosen, and not another curvature invariant, such as for instance $\mathring{R}_{\mu\nu}\mathring{R}^{\mu\nu}$. It is because this choice leads to the simplest theory {\bf{that is endowed with the Levi-Civita connection}}, which is able to explain the basic features of classical gravity. The addition of other curvature invariants in the action may result in higher-order field equations, which, as we shall explain in the Section \ref{2.3}, usually lead to unstable solutions.\\ 
In the following subsections we will show that we can find theories that have a different affine structure, and have the same field equations of GR.

\subsection{Teleparallel Gravity}

As we shall see below, Teleparallel Gravity (TEGR) is an equivalent theory to GR, first proposed just one year after Einstein's article by the usually unrecognised G. Hessenberg \cite{hessenberg1917vektorielle} (for more details on this theory see \cite{DeAndrade:2000sf,Aldrovandi:2013wha}). In the following years the concept of teleparallelism was studied and structured by Cartan, Weitzenb\"ock, and Einstein \cite{cartan1922generalisation,Weitzenbock:1923efa,cartan1979letters}. \\
{\bf{The affine structure of TEGR is the Weitzenb\"ock connection}} $\hat{\Gamma}$. This is the unique connection that has null curvature and null non-metricity, while having non-zero torsion. As every non-symmetric and metric compatible connection, it can be related with the Levi-Civita connection as
\begin{equation}
\label{LCWeitzenbock}
\hat{\Gamma}^{\alpha}\,_{\mu\nu}=\mathring{\Gamma}^{\alpha}\,_{\mu\nu}+K^{\alpha}\,_{\mu\nu},
\end{equation}
where
\begin{equation}
\label{contorsion}
K^\alpha{}_{\mu\nu}=\frac12 \Big(T^\alpha{}_{\mu\nu}+T_{\mu}{}^\alpha{}_\nu+T_\nu{}^\alpha{}_\mu\Big)
\end{equation}
is the {\emph{contortion tensor}}.\\

The gravitational part of the action of this theory is 
\begin{equation}
\label{TEGRaction}
S_{{\rm {TEGR}}}=\frac{1}{16\pi G}\int {\rm d}^{4}x\sqrt{-g}\,\mathbb{T},
\end{equation}
where we have omitted the matter part and $\mathbb{T}$ is denoted as the {\emph{torsion scalar}}, which is defined as 
\begin{equation}
\mathbb{T}\equiv\frac{1}{4}T_{\mu\nu\rho}T^{\mu\nu\rho}+\frac{1}{2}T_{\mu\nu\rho}T^{\nu\mu\rho}-T^{\mu}\,_{\mu\rho}T_{\nu}\,^{\nu\rho}.
\end{equation}
The equivalence between TEGR and GR can be proved by making use of the definition of the torsion tensor in \eqref{TORSION}, the relation \eqref{LCWeitzenbock} and having in mind that $\hat{R}_{\mu\nu\rho}\,^{\sigma}=0$. Taking these three relations into account in the action \eqref{TEGRaction} gives us the following result
\begin{equation}
S_{{\rm {TEGR}}}=S_{{\rm {GR}}}+2\int {\rm d}^{4}x\sqrt{-g}\,\mathring{\nabla}_{\rho}T_{\nu}\,^{\nu\rho}.
\end{equation}
Hence, since the actions differ by the integral of a total derivative only, the field equations of TEGR and GR will be the same \cite{Aldrovandi:2013wha}.

\subsection{Symmetric Teleparallel Gravity}

Symmetric Teleparallel Gravity (STEGR), also dubbed Coincident General Relativity, was introduced by Nester and Yo in 1999 \cite{Nester:1998mp}. It was recently revisited and given more insight by Beltr\'an, Heisenberg and Koivisto in \cite{BeltranJimenez:2017tkd}.\\
{\bf{The affine structure of STEGR is the one that has null curvature and torsion}}, $\tilde{\Gamma}$. As every symmetric and non-metric connection, it can be related with the Levi-Civita connection as
\begin{equation}
\label{LCnonmetric}
\tilde{\Gamma}^{\alpha}\,_{\mu\nu}=\mathring{\Gamma}^{\alpha}\,_{\mu\nu}+L^{\alpha}\,_{\mu\nu},
\end{equation}
where
\begin{equation}
\label{disformation}
L^\alpha{}_{\mu\nu}=\frac12 \Big(M^\alpha{}_{\mu\nu}-M_{\mu}{}^\alpha{}_\nu-M_\nu{}^\alpha{}_\mu\Big)
\end{equation}
is the {\emph{disformation tensor}}.\\

The gravitational part of the action of STEGR is given by
\begin{equation}
S_{{\rm {STEGR}}}=\frac{1}{16\pi G}\int {\rm d}^{4}x\sqrt{-g}\,\mathbb{Q},
\end{equation}
where $\mathbb{Q}$ is known as the {\emph{non-metric scalar}}, and it is written as
\begin{equation}
\mathbb{Q}\equiv-\frac{1}{4}M_{\mu\nu\rho}M^{\mu\nu\rho}+\frac{1}{2}M_{\mu\nu\rho}M^{\nu\mu\rho}+\frac{1}{4}M_{\mu\nu}\,^{\nu}M^{\mu\rho}\,_{\rho}-\frac{1}{2}M_{\mu\nu}\,^{\nu}M_{\rho}\,^{\rho\mu}.
\end{equation}
Using expressions \eqref{LCnonmetric} and \eqref{disformation}, while taking into account that $\tilde{R}_{\mu\nu\rho}\,^{\sigma}=0$, we can find the following relation
\begin{equation}
\mathbb{Q}=\mathring{R}+\mathring{\nabla}_{\mu}\left(M^{\mu\rho}\,_{\rho}-M_{\rho}\,^{\rho\mu}\right).
\end{equation} 
Therefore STEGR and GR actions will differ by the integral of a total derivative only, so these theories are equivalent. Moreover, the authors in \cite{BeltranJimenez:2017tkd} found that under a certain coordinate basis, known as the {\emph{coincident gauge}}, the connection can be trivialised, {\emph{i.e.}} $\left.L^{\alpha}\,_{\mu\nu}\right|_{{\rm c.g.}}=-\mathring{\Gamma}^{\alpha}\,_{\mu\nu}\rightarrow\tilde{\Gamma}^{\alpha}\,_{\mu\nu}=0$. With this choice one can show that STEGR action would be equivalent to GR without the boundary term (see the first term of Eq.\eqref{variationsGR}). Hence, in this scenario the variational principle can be performed without assuming any conditions on the boundary. \\

Finally, despite the general belief, the two teleparallel theories that we have introduced, TEGR and STEGR, are not the unique theories with a connection different from Levi-Civita that are equivalent to GR. In fact, as the authors in \cite{Jimenez:2019ghw} found the most general theory that one can construct with quadratic terms in both torsion and non-metricity that is equivalent to GR, from which TEGR and STEGR are special cases.\\
The fact that we can find equivalent theories with different affine structure supports the statement that there is not any physical reason to assume that Levi-Civita is the preferred affine structure of the spacetime.
%In the next section we shall introduce a modified gravity theory which has a non-symmetrical connection.

\section{Poincar\'e Gauge Gravity}
\label{2.2}
One of the greatest successes of twentieth century physics has been the ability to describe the laws of Nature in terms of symmetries \cite{Mills:1989wj}. The first person that worked in this aspect was the renowned mathematician Emmy Noether \cite{dick1981emmy}. In 1918 she published an article containing what we now know as the {\emph{Noether's theorem}} \cite{Noether:1918zz}. The theorem states that {\emph{for every symmetry of nature there is a corresponding conservation law, and for every conservation law there is a symmetry}}\footnote{Its original form is more technical and complicated, therefore we have simplified here its formulation while preserving enough generality for our purposes.}. This statement has profound consequences in our understanding of the Universe. For example, everyone can recall from high school the famous sentence about the energy of a system: ``energy is not created or lost, it is only transformed from one form to another''. Thanks to the Noether's theorem this is not a mantra anymore, there is a reason why the energy is a conserved quantity: it is because the laws of nature are invariant under time translations, {\emph{i.e.}} the laws governing the universe are the same now as at any other time. 

The next step was done by Hermann Weyl also in 1918, while trying to obtain Electromagnetism as the manifestation of a local symmetry \cite{Weyl:1918pdp}. More specifically, he wanted to relate the conservation of electric charge with a local invariance with respect to the change of scale, or as he called it, {\emph{gauge invariance}}\footnote{Although this expression was initially referred to a scale invariance, now it is used for any requirement of a local symmetry.}.\\ 
After Einstein found some flaws in Weyl's paper, the idea was abandoned until 1927, when Fritz London realised that the symmetry associated with electric charge conservation was a {\emph{phase}} invariance, {\emph{i.e.}} the invariance under a local arbitrary change in the complex phase of the wavefunction.\\
Thirty years later, in 1954, Yang and Mills applied this local symmetry principle to the invariance under isotopic spin rotation \cite{Yang:1954ek}, opening the door for describing the fundamental interactions by their internal symmetries. 

All of the above can be summarised in the so-called {\emph{gauge principle}}, which is represented in Fig. \ref{gaugeprinciple}. First, as we introduced, there is the Noether's theorem, which states that for every conservation law there is an associated symmetry and vice versa; second, there is the fact that requiring a local symmetry leads to an underlying so-called {\emph{gauge field theory}}; and finally, we find that the gauge field theory determined in this way necessarily includes interactions between the gauge field and the conserved quantity with which we started. 

Thus we have that for every conservation law there is a complete theory of a gauge field for which the given conserved quantity is the source. The only restriction is that the conservation law be associated with a continuous symmetry (this would exclude, for example, parity, which is associated with a discrete reflection symmetry). The resulting theory has just one free parameter, the interaction strength. Nevertheless, one can increase the number of free parameters by considering more than one symmetry and/or more than one field. For instance, one could impose gauge invariance under special unitary group of degree $n$, $\mathrm{SU}\left(n\right)$, in a field $A$ and under the unitary group of degree $m$, $\mathrm{U}\left(m\right)$, in a field $B$, hence obtaining a gauge theory locally invariant under $\mathrm{SU}\left(n\right)_{A}\times\mathrm{U}\left(m\right)_{B}$. This theory will have two field strengths, one for $A$ and $B$ respectively. Actually, in what we currently believe is a reliable picture of fundamental subatomic particles and its interactions, there are two separate gauge theories: the Glashow-Weinberg-Salam theory for electromagnetic and weak interactions \cite{Weinberg:1979pi}, the colour gauge theory for strong interactions \cite{Gross:1973id}. These two theories, together with the spectrum of elementary particles associated with them, make up what is now referred to as the {\emph{Standard Model of Particles}} \cite{Tanabashi:2018oca}, which is a gauge field theory invariant under $\mathrm{SU}(3)\times\mathrm{SU}(2)\times\mathrm{U}(1)$.\\

\begin{figure}
\begin{center}
\begin{tikzpicture}[->,>=stealth',shorten >=1pt,auto,node distance=5cm,
                    semithick]

  \node[state] (A)                  [text width=1.9cm,style={align=center}]  {Conserved quantity};
  \node[state]         (B) [above right of=A,text width=1.9cm,style={align=center}] {Symmetry};
  \node[state]         (C) [below right of=B,text width=1.9cm,style={align=center}] {Gauge field};

  \path (A) edge              node [sloped,anchor=center, above, text width=2.0cm,style={align=center}]{Noether's theorem} (B)
            edge              node {$\,$} (C)
        (B) edge              node [sloped,anchor=center, above, text width=2.0cm,style={align=center}]{Local symmetry} (C)
	      edge		node {$\,$} (A)
	(C) edge              node {Interaction} (A)
	      edge		node {$\,$} (B);
\end{tikzpicture}
\end{center}
\caption{Sketch of the reasoning behind the gauge principle based on \cite{Mills:1989wj}.}
\label{gaugeprinciple}
\end{figure}
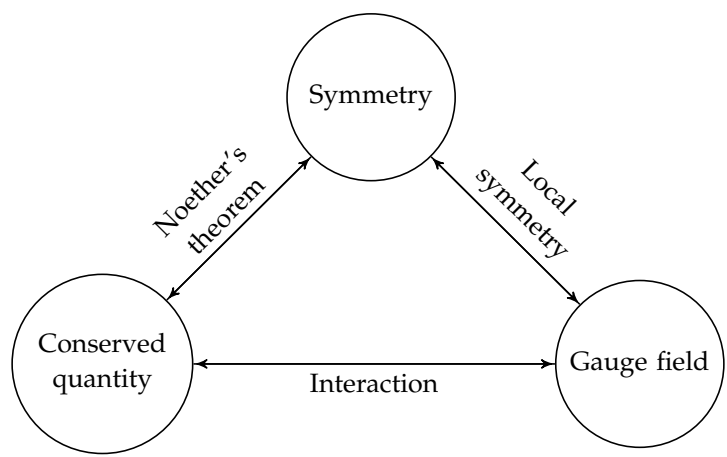

\subsection{Gauge theory of translations}

Since the gauge principle has been so successful in the description of the subatomic interactions, one can also wonder if it could be useful for describing gravity. Indeed, GR can be formulated from a gauge field perspective. If one thinks about it, it is very intuitive, because one of the main principles in any physical theory is that ``the equations of physics are invariant when we make coordinate displacements'' \cite{Feynman:1996kb}. This clearly suggests that the group of spacetime translations in 4 dimensions $T\left(4\right)$ is an ideal candidate for applying the gauge principle. Let us elaborate on this idea.

The generators of the gauge transformations need to be defined in a vector space at every given point. In the previous section, we already introduced this vectorial structure as the tangent space (see Definition \ref{tangentspace}), that we shall explore in more detail in the following.\\ 
Let $p$ be a point of the 4-dimensional spacetime $\left(\mathcal{M},g\right)$, and let $V_{p}$ be the 4-dimensional vector space at that point. As we saw, a coordinate neighbourhood of $p$, $\left(\mathcal{U},x^{\mu}\right)$, induces a natural basis on $V_{p}$ (and on $V^{\ast}_{p}$), $\left\{ \partial_{\mu}\equiv\frac{\partial}{\partial x^{\mu}}\right\} $ (and $\left\{{\rm d}x^{\mu}\right\} $ respectively). Moreover, since a spacetime is a Lorentzian manifold, the tangent space at any point shall be isomorphic to the Minkowski space, which means that we can always find a basis of that particular tangent space, $\left\{ h_{a}\right\}$ (and $\left\{ h^{a}\right\}$ for the dual), for which the metric $g$ expressed in this basis will have the values of the Minkowski metric\footnote{See Acronyms and Conventions Section at the beginning of the manuscript for details.} $\eta$ \cite{Aldrovandi:1996ke}. Please note that we have chosen to use greek indices for the natural basis and the latin indeces for the ``proper'' basis of the tangent space, as it is customary.\\
Given the natural basis of the tangent space, we can always express the other basis as a linear combination of it, in particular
\begin{equation}
h_{a}=h_{a}\,^{\mu}\partial_{\mu},\quad\quad h^{a}=h^{a}\,_{\mu}dx^{\mu},
\end{equation}  
with
\begin{equation}
h^{b}\left(h_{a}\right)=h^{b}\,_{\mu}h_{a}\,^{\nu}dx^{\mu}\left(\partial_{\nu}\right)=\delta_{a}^{b}.
\end{equation}
The fact that the metric expressed in this basis is the Minkowski one means that
\begin{equation}
g\left(h_{a},h_{b}\right)=h_{a}\,^{\mu}h_{b}\,^{\nu}g\left(\partial_{\mu},\partial_{\nu}\right)=h_{a}\,^{\mu}h_{b}\,^{\nu}g_{\mu\nu}=\eta_{ab}.
\end{equation}
Consequently, we have that
\begin{equation}
g_{\mu\nu}=h^{a}\,_{\mu}h^{b}\,_{\nu}\eta_{ab}.
\end{equation}
The coefficients $h^{a}\,_{\mu}$ of the expansion of the basis $\left\{ h_{a}\right\}$ in terms of the natural basis are known as {\emph{tetrads}} or {\emph{virbein}}, and they relate the proper coordinates of each of the tangent spaces at any point with the natural coordinates induced by the spacetime ones. \\
Therefore, since the gauge transformations are defined locally, they will apply on the local basis, which in this case we have denoted as $\left\{ h_{a}\right\}$.\\

Before going any further, we will briefly review the mathematical procedures of the gauge principle, as explained in \cite{Aldrovandi:2013wha,Hehl:2019csx}. First of all, let us consider an action for a certain matter field $\psi$
\begin{equation}
\label{actiongauge}
S=\int{\rm d}^{4}x\mathcal{L}\left(\psi,\partial_{a}\psi\right),
\end{equation}   
and the transformation of this field under a m-parameter global symmetry group $G$
\begin{equation}
\psi\left(x\right)\rightarrow\psi'\left(x\right)=\psi\left(x\right)+\delta\psi\left(x\right),\quad\delta\psi\left(x\right)=\varepsilon^{B}\left(x\right)T_{B}\psi\left(x\right),
\end{equation}
where $\varepsilon^{B}\left(x\right)$, with $B=\left\{1,...,m\right\}$, are the m parameters of the group, which remain constant in $x$ since it is a global symmetry. The $T_{B}$ are known as the transformation generators, which satisfy the following commutation relation
\begin{equation}
\left[T_{B},T_{C}\right]=f^{A}\,_{BC}T_{A},
\end{equation}
where the $f^{A}\,_{BC}$ are the structure constants of the group's Lie algebra.\\
We will assume that the action is invariant under $G$, that is $\delta S=0$. Hence, according to the Noether's theorem there would be the following conservation law 
\begin{equation}
\partial_{i}J_{A}^{i}=0,\quad J_{A}^{i}:=T_{A}\psi\frac{\partial\mathcal{L}}{\partial\left(\partial_{i}\psi\right)},
\end{equation}
where $J_{A}^{i}$ is known as the {\emph{Noether current}}.\\
In order to apply the gauge principle, we shall impose that the transformation is local, which means that we relax the condition on the parameters of the group $\varepsilon^{B}\left(x\right)$, and allow them to take different values along $x$. Nevertheless, this would imply that the action \eqref{actiongauge} is no longer invariant under this local transformation. In order to achieve invariance again, we need to add a compensating {\emph{gauge field}} $A^{B}\,_{a}$ via the {\emph{minimal coupling prescription}}
\begin{equation}
\mathcal{L}\left(\psi,\partial_{a}\psi\right)\longrightarrow\mathcal{L}\left(\psi,D_{a}\psi\right),
\end{equation}
where the partial derivative has been replaced by a covariant one $D_{a}$ (does this ring a bell?), that is defined as follows  
\begin{equation}
D_{a}\psi\left(x\right)=\partial_{a}\psi\left(x\right)-A^{B}\,_{a}T_{B}\psi\left(x\right).
\end{equation}
Therefore the invariance is recovered because the field transforms as
\begin{equation}
\delta A^{C}\,_{a}=-\left[\partial_{a}\varepsilon^{C}\left(x\right)+f^{C}\,_{BD}A^{B}\,_{a}\varepsilon^{D}\left(x\right)\right].
\end{equation}
The gauge field can be promoted to a true dynamical variable of the system by adding its corresponding kinetic term $\mathcal{K}$ to the Lagrangian density. Of course, such a term needs to be gauge invariant, so that the whole action remains so. This is assured by constructing the kinetic term using the {\emph{gauge field strength}} as
\begin{equation}
\label{fstrength}
F^{A}\,_{ij}=\partial_{i}A^{A}\,_{j}-\partial_{j}A^{A}\,_{i}+f^{A}\,_{BC}A^{B}\,_{i}A^{C}\,_{j}.
\end{equation}
Then, the subsequent gauge action after we have applied the gauge principle would be
\begin{equation}
S_{{\rm gauge}}=\int{\rm d}^{4}x\left[\mathcal{L}\left(\psi,D_{a}\psi\right)+\mathcal{K}\left(F^{A}\,_{ij}\right)\right].
\end{equation}

Now we are ready to explore what happens when the gauge procedure is applied to the group of spacetime translations $T\left(4\right)$. We shall explain this based on \cite{DeAndrade:2000sf,Aldrovandi:2013wha}.\\
As it is known, the infinitesimal change under a local spacetime translation is given in the proper coordinates of the tangent space as
\begin{equation}
\delta\psi\left(x\right)=\varepsilon^{a}\left(x\right)P_{a}\psi\left(x\right),
\end{equation} 
with $P_{a}\equiv\frac{\partial}{\partial x^{a}}$ being the translation generators, which have the following commutation relations
\begin{equation}
\left[P_{a},P_{b}\right]=0.
\end{equation}
This local transformation induces a gauge field $B^{a}\,_{\mu}$, such as the covariant derivative $h_{\mu}$ is given by
\begin{eqnarray}
h_{\mu}\psi\left(x\right)&=&\partial_{\mu}\psi\left(x\right)+B^{a}\,_{\mu}P_{a}\psi\left(x\right)=\left(\partial_{\mu}x^{a}\right)\partial_{a}\psi\left(x\right)+B^{a}\,_{\mu}\partial_{a}\psi\left(x\right)
\nonumber
\\
&=&h^{a}\,_{\mu}\partial_{a}\psi\left(x\right),
\end{eqnarray}
where $h^{a}\,_{\mu}$ is a tetrad field defined as
\begin{equation}
h^{a}\,_{\mu}=\partial_{\mu}x^{a}+B^{a}\,_{\mu}.
\end{equation}
Since the structure constants of the group of translations are zero, the field strength of $B^{a}\,_{\mu}$ is given by
\begin{equation}
F^{a}\,_{\mu\nu}=\partial_{\mu}B^{a}\,_{\nu}-\partial_{\nu}B^{a}\,_{\mu}=\partial_{\mu}h^{a}\,_{\nu}-\partial_{\nu}h^{a}\,_{\mu}.
\end{equation}
Also, given a tetrad field, one can construct the following connection
\begin{equation}
\hat{\Gamma}^{\rho}\,_{\mu\nu}=h_{a}\,^{\rho}\partial_{\nu}h^{a}\,_{\mu},
\end{equation}
which is actually the Weitzenb\"ock connection that we introduced in the previous section. Then, it is easy to check that the field strength of translations is just the torsion of this connection written in the spacetime coordinates
\begin{equation}
T^{\rho}\,_{\mu\nu}=h_{a}\,^{\rho}F^{a}\,_{\nu\mu}.
\end{equation}
Therefore, as it is usual in gauge theories, we shall construct the action of the theory with quadratic terms in the field strength, which in this case is the torsion tensor of the Weitzenb\"ock connection, obtaining
\begin{equation}
\label{actionT4}
S_{T\left(4\right)}=\int {\rm d}^{4}x\sqrt{-g}\left(a_{1}T_{\mu\nu\rho}T^{\mu\nu\rho}+a_{2}T_{\mu\nu\rho}T^{\nu\mu\rho}+a_{3}T^{\mu}\,_{\mu\rho}T_{\nu}\,^{\nu\rho}\right).
\end{equation} 
Finally, should at this point the action \eqref{actionT4} be required to be invariant under local Lorentz transformacions, the coefficients $a_i$ would need to get fixed to $a_{1}=\frac{1}{4}$, $a_{2}=\frac{1}{2}$, and $a_{3}=-1$. Hence we find that
\begin{equation}
S_{T\left(4\right)}\left(a_{1}=\frac{1}{4};a_{2}=\frac{1}{2};a_{3}=-1\right)=S_{{\rm TEGR}}.
\end{equation}
As we know, TEGR is an equivalent theory to GR, and we have just proved that it can be obtained as a gauge theory. 

\subsection{Gauge theory of the Poincar\'e group}

\label{GaugePG}

The important question now is whether the group of spacetime translations is the adequate group to gauge in order to obtain the gravitational theory. To answer this question we shall rely on experiment, in particular on the Colella-Overhauser-Werner (COW) experiment \cite{Colella:1975dq}, and its more precise reproductions \cite{Kasevich:1991zz,Asenbaum:2017rwf,Overstreet:2017gdp}.  \\
This kind of experiments consist of a neutron (which is a half-spin particle) beam that is split into two beams which travel in different gravitational potentials. Later on the two beams are reunited and an interferometric picture is observed due to their relative phase shift. Therefore, such an inference pattern proves that there is an interaction between the internal spin of particles and the gravitational field. This suggests that the test particle for gravity should not be the ``Newton's apple'', but instead a particle with mass $m$ and spin $s$ should be used.\\
On the other hand, from Wigner's work \cite{Wigner:1939cj}, we know that a quantum system can be identified by its mass and spin in Minkowski spacetime, which is invariant under global Poincar\'e transformations. Therefore the Poincar\'e group $T\left(4\right)\times SO\left(1,3\right)$, which is formed by the homogeneous Lorentz group $SO\left(1,3\right)$ and the spacetime translations $T\left(4\right)$, seems the natural choice to apply the gauge principle. This was thought by Sciama \cite{sciama1962analogy} and Kibble \cite{Kibble:1961ba}, and later on formalised by Hayashi \cite{Hayashi:1968hc} and Hehl {\emph{et al.}} \cite{Hehl:1976kj}. We shall apply the gauge procedure on the Poincar\'e group in the following.\\

The infinitesimal change under a global Poincar\'e transformation is given in the proper coordinates of the tangent space as
\begin{equation}
\delta\psi\left(x\right)=\varepsilon^{a}\partial_{a}\psi\left(x\right)+\varepsilon^{ab}S_{ab}\psi\left(x\right),
\label{poincare:transformation}
\end{equation}
where $\varepsilon^{ab}$ are the six parameters of the Lorentz group, and $S_{ab}$ its generators, which along with the generators of translations $\partial_{a}$ follow the known commutation relations of the Poincar\'e group, namely
\begin{eqnarray}
&&\left[S_{ab},S_{cd}\right]=\frac{1}{2}\left(\eta_{ac}S_{bd}+\eta_{db}S_{ac}-\eta_{bc}S_{ad}-\eta_{ad}S_{bc}\right),
\nonumber
\\
&&\left[S_{ab},\partial_{c}\right]=\frac{1}{2}\left(\eta_{ac}\partial_{b}-\eta_{bc}\partial_{a}\right),
\\
&&\left[\partial_{a},\partial_{b}\right]=0.
\nonumber
\end{eqnarray}
Now, in order to apply the gauge principle we need to consider how this transformation \eqref{poincare:transformation} behaves locally, which means that the parameters of the transformation can vary along the spacetime. As we have explained in the previous subsection, an action which is invariant under the global transformation might not be under the local one. Consequently, as we also have done for the $T(4)$ case, we shall construct a covariant derivative that will have the induced gauge fields associated to the translations $B^{a}\,_{\mu}$ and the Lorentz group $\omega^{ab}$, which will compensate the fact that these transformations are not the same at every point. In this case, the gauge covariant derivative $D_{\mu}$ is given by \cite{Blagojevic:2003cg}
\begin{equation}
D_{\mu}\psi\left(x\right)=\partial_{\mu}\psi\left(x\right)+B^{a}\,_{\mu}\partial_{a}\psi\left(x\right)-\omega^{ab}\,_{\mu}S_{ab}\psi\left(x\right)=h^{a}\,_{\mu}D_{a}\psi\left(x\right),
\end{equation}
where
\begin{equation}
D_{a}:=\partial_{a}+\omega^{bc}\,_{a}(x)S_{cb},
\end{equation}
which is actually the generator of the {\bf{local}} rotation-free translations in the local Poincar\'e group.\\
The field strengths of the gauge potentials shall be denoted $F^{a}\,_{\mu\nu}$ for the translations and $H^{ab}\,_{\mu\nu}$ for the Lorentz transformations respectively. Both of them can be obtained from the commutator of the gauge covariant derivative as follows\footnote{One can easily check that expression \eqref{fstrength2} is compatible with the definition \eqref{fstrength} for one gauge field.}
\begin{equation}
\label{fstrength2}
\left[D_{\mu},D_{\nu}\right]\psi\left(x\right)=F^{a}\,_{\mu\nu}D_{a}\psi\left(x\right)+H^{ab}\,_{\mu\nu}S_{ab}\psi\left(x\right),
\end{equation}
where
\begin{equation}
F^{a}\,_{\mu\nu}=\partial_{\mu}h^{a}\,_{\nu}-\partial_{\nu}h^{a}\,_{\mu}+\omega^{ab}\,_{\mu}h_{b\nu}-\omega^{ab}\,_{\nu}h_{b\mu},
\end{equation}
and
\begin{equation}
H^{ab}\,_{\mu\nu}=\partial_{\mu}\omega^{ab}\,_{\nu}-\partial_{\nu}\omega^{ab}\,_{\mu}+\omega^{ac}\,_{\nu}\omega^{b}\,_{c\mu}-\omega^{ac}\,_{\mu}\omega^{b}\,_{c\nu}.
\end{equation}
At this point, we can interpret what the field strenghts are in terms of the spacetime quantities. Let us notice that one can build a connection in terms of $h^{a}\,_{\mu}$ and $\omega^{ab}\,_{\mu}$ as
\begin{equation}
\label{connectionpg}
\Gamma^{\rho}\,_{\mu\nu}=h_{a}\,^{\rho}\partial_{\mu}h^{a}\,_{\nu}+h_{a}\,^{\rho}h_{b\nu}\omega^{ab}\,_{\mu}.
\end{equation}
Then we can see the field strength of the translations $F$ as the torsion tensor of this connection in spacetime coordinates, and the field strength of the Lorentz transformations $H$ as the Riemmann tensor of this connection expressed in spacetime coordinates. Namely
\begin{equation}
T^{\rho}\,_{\mu\nu}=h_{a}\,^{\rho}F^{a}\,_{\mu\nu},
\end{equation}
and
\begin{equation}
R_{\mu\nu\rho}\,^{\sigma}=h_{a}\,^{\rho}h_{b\rho}H^{ab}\,_{\mu\nu}.
\end{equation}
It is important to stress that in this case the connection in \eqref{connectionpg} is not fixed by the tetrad structure, therefore it is independent of the metric of the spacetime.

Finally, to build the action of the Poincar\'e gauge theory we shall use the invariants of the field strengths up to quadratic order
\begin{eqnarray}
\label{actionPGT}
S_{{\rm PG}}=&&\int {\rm d}^{4}x\sqrt{-g}\left(a_{0}R+a_{1}T_{\mu\nu\rho}T^{\mu\nu\rho}+a_{2}T_{\mu\nu\rho}T^{\nu\rho\mu}+a_{3}T_{\mu}T^{\mu}+b_{1}R^{2}\right.
\nonumber
\\
&&+b_{2}R_{\mu\nu\rho\sigma}R^{\mu\nu\rho\sigma}+b_{3}R_{\mu\nu\rho\sigma}R^{\rho\sigma\mu\nu}+b_{4}R_{\mu\nu\rho\sigma}R^{\mu\rho\nu\sigma}+b_{5}R_{\mu\nu}R^{\mu\nu}
\nonumber
\\
&&+\left.b_{6}R_{\mu\nu}R^{\nu\mu}\right),
\end{eqnarray}
where the $a_i$ and $b_i$ are the constants parameters of the theory.\\
This theory has clearly more degrees of freedom than GR, due to the quadratic curvature and torsion terms. To analyse those introduced by the torsion tensor, it is customary to decompose the torsion in three terms \cite{Shapiro:2001rz}
\begin{equation}
\label{decomposition}
\begin{cases}
\textrm{Trace vector:}\,\,T_{\mu}=T^{\nu}\,_{\mu\nu},\\
\,\\
\textrm{Axial vector:}\,\,S_{\mu}=\varepsilon_{\mu\nu\rho\sigma}T^{\nu\rho\sigma},\\
\,\\
\textrm{Tensor}\,\,q^{\rho}\,_{\mu\nu},\,\,{\rm such\,that}\;\;q_{\,\,\mu\nu}^{\nu}=0\,\,\textrm{and}\,\,\varepsilon_{\mu\nu\rho\sigma}q^{\nu\rho\sigma}=0,
\end{cases}
\end{equation}
such that
\begin{equation}
\label{decomposition2}
T^{\rho}\,_{\mu\nu}=\frac{1}{3}\left(T_{\mu}\delta_{\nu}^{\rho}-T_{\nu}\delta_{\mu}^{\rho}\right)+\frac{1}{6}\varepsilon^{\rho}\,_{\mu\nu\sigma}S^{\sigma}+q^{\rho}\,_{\mu\nu}.
\end{equation}
These three pieces are irreducible under the Lorentz group as real representations and correspond to $(\frac12,\frac12)$, $(\frac12,\frac12)$ and $(\frac32,\frac12)\oplus(\frac12,\frac32)$ respectively \cite{Hayashi:1979wj}. This decomposition turns out to be very useful, thanks to the fact that the three terms in Eq.\eqref{decomposition} propagate different dynamical off-shell degrees of freedom. Hence, it is better to study them separately, as we shall corroborate in some sections of this thesis, compared to all the torsion contribution at the same time.\\

Poincar\'e Gauge gravity has been widely studied since the 1970s, and we shall present some of its interesting phenomenology in the third chapter of this thesis. Nonetheless, one needs to first unveil the field content and analyse its stable/unstable nature, since this is one of the important questions for the viability of the theories
with a crucial impact on the phenomenology. We will study this aspect in the next section.

\section{Stability of Poincar\'e Gauge gravity}
\label{2.3}
As we have already mentioned at the beginning of Subsection \ref{GaugePG}, the fields that are present in a theory that is locally invariant under Poincar\'e transformations can be uniquely classified in terms of two quantities, their masses and spins. For example, GR (and its equivalent formulations) can be seen as a massless spin-2 particle \cite{Heisenberg:2018vsk}, commonly known as {\emph{graviton}}.\\
In the case of the PG gravitational theory \eqref{actionPGT}, apart from the usual graviton, there are two massive spin-2, two massive spin-1 and two spin-0 fields \cite{Hayashi:1980qp}. Already in \cite{Sezgin:1979zf,Hayashi:1980qp,Sezgin:1981xs} it was shown that, in a Minkowski spacetime, all of these fields cannot propagate simultaneously without incurring in some pathological behaviour. In particular, it was proven that the absence of ghosts and tachyons instabilities, which we shall explain in the next subsection, restrains the spectrum to contain at most three propagating components, along with other restrictions on the parameters of the theory. Later on, the authors in \cite{Yo:1999ex,Yo:2001sy} performed a more complete Hamiltonian analysis of PG theories (see also a more recent analysis in \cite{Blagojevic:2018dpz}), where they find that the introduction of non-linearities would impose further constraints. Moreover, they showed that the only modes that could propagate were two scalars with different parity. We will arrive at the same conclusion following a different path in the next subsections. But first, it is important to study what it means to have an unstable theory, which we will explain in the following subsection.

\subsection{Instabilities}

Here we shall explain what we understand by instabilities and what causes them. Roughly speaking, an unstable theory is one where a perturbation in one of the variables produces an unboundly increase in its absolute value (see Figure \ref{fig:stable} for an illustrative example). 

\begin{figure}
\centering
\includegraphics[width=0.88\linewidth]{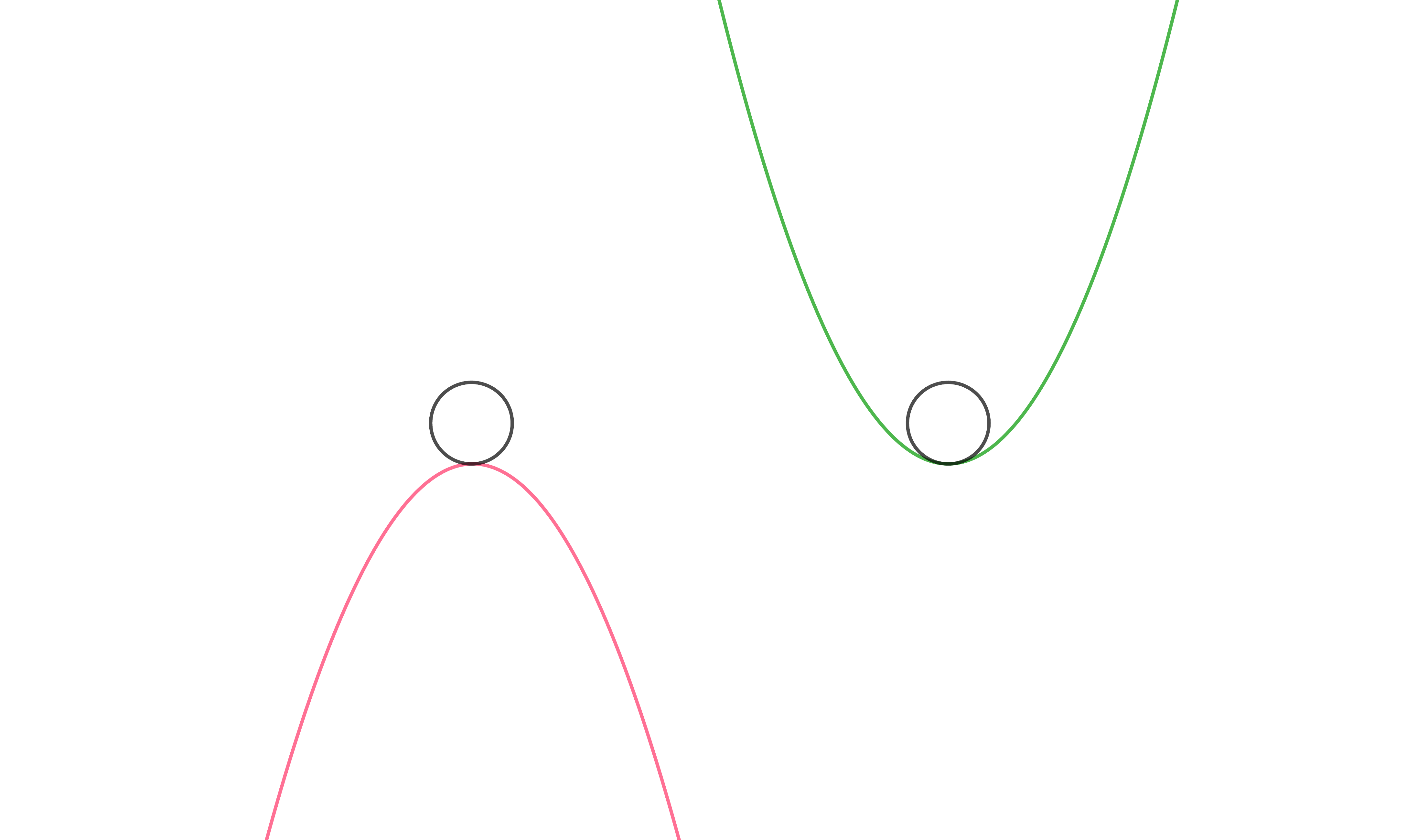}
\caption{Graphic example showing an unstable (left) and a stable (right) configuration. It is clear that even the slightest perturbation in the left one will make the height of the ball to decrease unboundly.}
\label{fig:stable}
\end{figure}

There are different situations where that kind of behaviour occurs. In the following we shall enumerate and explain the most important types of instabilities.

\subsubsection{Ghosts}

A field is known as a {\emph{ghost}} if its kinetic term has the wrong sign, hence the particle associated with this field would have negative kinetic energy (for a recent review on this subject see \cite{Sbisa:2014pzo}). This instability is related with the momentum of the particles, in the sense that this is the variable that would experience an unboundly increase in its absolute value. We can see this intuitively with a very simple example. Let us imagine a collision between two protons in a theory that introduces a ghost. Then, since we have a particle with negative kinetic energy that can be part of the products of this collision, there is no limit to the momentum of the non-ghost particles that result from the collision. This is because we can always compensate the excess in energy by increasing the absolute value of the kinetic energy of the produced ghost, so that the conservation of energy holds. More specifically, this means that the volume available in the momentum space is infinite. \\
In order to give a more rigorous insight, let us consider the following Lagrangian density for a ghost scalar field $\psi$ and a non-ghost scalar field $\phi$\footnote{We have chosen scalar fields for simplicity, but the results are generalisable to vector and tensor fields.} in Minkowski spacetime
\begin{equation}
\label{ghost:example}
\mathcal{L}=\frac{1}{2}\partial_{\mu}\phi\partial^{\mu}\phi+\frac{1}{2}m_{\phi}^{2}\phi^{2}-\frac{1}{2}\partial_{\mu}\psi\partial^{\mu}\psi-\frac{1}{2}m_{\psi}^{2}\psi^{2}-V_{\mathrm{int}}(\phi,\psi),
\end{equation}
where the interaction potential $V_{\mathrm{int}}$ does not contain derivative interaction terms, is analytic in both $\psi$ and $\phi$, and the configuration $\psi=\phi=0$ is a local minimum of the potential.\\
Performing a Legendre transformation with respect to $\dot{\psi}$ and $\dot{\phi}$ we obtain the Hamiltonian density, which is a measure of the total energy of the system
\begin{equation}
\mathcal{H}=-\frac{1}{2}\left[\dot{\phi}^{2}+(\vec{\nabla}\phi)^{2}\right]-\frac{1}{2}m_{\phi}^{2}\phi^{2}+\frac{1}{2}\left[\dot{\psi}^{2}+(\vec{\nabla}\psi)^{2}\right]+\frac{1}{2}m_{\psi}^{2}\psi^{2}+V_{\mathrm{int}}(\phi,\psi).
\end{equation}
For each field we can make a Fourier decomposition as
\begin{equation}
\label{fourier}
\phi(\vec{x},t)=\int_{\mathbb{R}^{3}}\frac{{\rm d}^{3}p}{(2\pi)^{3}}\phi_{\vec{p}}(t){\rm e}^{i\vec{p}\cdot\vec{x}},
\end{equation}
where $\vec{p}$ is the 3-momentum of the modes $\phi_{\vec{p}}$.\\
Now, it is clear that even if $V_{\text {int }} \neq 0$ it remains true that configuration $\phi(\vec{x}, t)=\psi(\vec{x}, t)=0$ is a solution of the equations of motion, since those values extremise the Lagrangian density. Therefore, if we set the system in the state $\phi\left(\vec{x}, t_{0}\right)=\psi\left(\vec{x}, t_{0}\right)=0$ at an initial time $t_{0},$ the fields $\phi$ and $\psi$ remain in the ``vacuum'' configuration forever. In order to check what happens if we slightly perturb the vacuum configuration, we must take into account that the configuration introduced in \eqref{fourier}, where the ghost and the ordinary field are plane waves of vanishing total momentum, which do not have zero energy due to the presence of the interaction terms. However, since derivative interactions are absent, choosing the amplitudes of the plane waves to be small enough would enable us to construct configurations with energy as close to zero as we want, without constraining the wavevector of each plane wave whose magnitude can be arbitrarily big. Therefore, for every value of the energy $E \geq 0$ there exists an infinite number of excited configurations and the volume of momentum space available for each (ordinary/ghost) sector is infinite. For entropy reasons, the decay towards these excited states is extremely favoured, and we conclude that the system is unstable against small oscillations.\\

Nevertheless, in most cases it is more difficult to identify the kinetic term of a certain field than in the previous example \eqref{ghost:example}, specially if higher derivatives are involved in the action. The ghosts that appearing as a consequence of having higher derivatives in the field equations are known as Ostrogradski instabilities, since they are predicted by his famous stability theorem \cite{Ostrogradsky:1850fid} (see \cite{Woodard:2006nt} for a recent study). It mainly states that there is a linear instability in the Hamiltonians associated with Lagrangians which depend upon more than one time derivative in such a way that the dependence cannot be eliminated by partial integration.\\

Another relevant ghost that can appear in modified gravity theories is the Boulware-Deser ghost \cite{Boulware:1973my}, which is a scalar ghost that may be present when considering massive spin-2 fields, like in massive gravity \cite{Hassan:2011hr}.

\subsubsection{Tachyons}

Tachyon instability is related with the field having an imaginary mass, which means that we have the wrong sign for the mass term \cite{Heisenberg:2018vsk}. Indeed we can see this by studying the Lagrangian density of a tachyonic free scalar field $\phi$ in Minkowski spacetime
\begin{equation}
\mathcal{L}=\frac{1}{2}\partial_{\mu}\phi\partial^{\mu}\phi-\frac{1}{2}m^{2}\phi^{2}.
\end{equation}
Then the field equation would be
\begin{equation}
\left(\Box+m^{2}\right)\phi=0,
\end{equation}
where $\Box=\eta_{\mu\nu}\partial^{\nu}\partial^{\nu}$. Clearly one needs an imaginary mass in order to recover the correct sign of the mass in the Klein-Gordon equation of free scalar fields.\\
For a general potential the condition of having tachyonic stabilities is that the second derivative of the potential is negative, which means that the potential is at a local maximum instead of a local minimum, hence producing an unstable situation analogous to the red graph of example in Figure \ref{fig:stable}. 

\subsubsection{Laplacian instabilities}

Laplacian instabilities can be present in certain configurations when performing small perturbations around the background, which, depending on the parameters of the considered theory, may grow unboundly. In order to understand this kind of instability let us consider the perturbations of a scalar field $\delta\phi$ around an arbitrary background configuration $\bar{\phi}$, up to quadratic order\footnote{Again for simplicity we have considered the scalar case, but these results are easily generalisable to vector and tensor perturbations.} \cite{Heisenberg:2018vsk}
\begin{equation}
\delta\mathcal{L}=\frac{1}{2}\mathcal{F}\left(-\delta\dot{\phi}^{2}+c_{s}^{2}\nabla\delta\phi^{2}\right)+\frac{1}{2}m^{2}\delta\phi^{2},
\end{equation}
where $\mathcal{F}$, $m$ and $c_{s}^{2}$ depend on the specific background configuration. Then, as known from the differential equations literature, the field equations that these perturbations follow are unstable if $c_{s}^{2}<0$. Then, to avoid Laplacian instabilities we shall impose that $c_{s}^{2}\ge 0$, meaning that the so-called {\emph{scalar propagation speed}} $c_{s}$ is a real number.

\subsubsection{Strong coupling}

This instability is slightly different from the previous ones. The problem appears when some of the extra degrees of freedom present in the theory do not propagate in certain backgrounds. We can explain why this is an issue by taking into account the illustrative example in Figure \ref{fig:stable}. In a stable situation, if we make a slight perturbation from the equilibrium of the system we shall be able to recover the initial state. Therefore the aforementioned backgrounds would be problematic because if we make a perturbation over that background, those degrees of freedom are going to propagate. Hence, it would be impossible to recover the initial situation since in the background that degree of freedom does not propagate. \\
This is a known problem in Ho\v{r}ava gravity \cite{Charmousis:2009tc}, massive gravity \cite{Deffayet:2005ys}, and $f(T)$ theories \cite{Jimenez:2020ofm}. Also, it could be a potential problem in Infinite Derivative Gravity, as we shall see in Chapter \ref{4}.
\,\\

Let us note that all the mentioned instability issues are purely classical, and that there can be also problems associated to quantum corrections of the considered theory. The main problem is due to the fact that the coefficients of the theory, which are tuned in order to explain the experimental data, can be affected by an strong renormalisation under these quantum corrections. If the coefficients get detuned within the scale of validity of the theory, then the theory is render as quantum unstable. A clear example such instability is GR with a cosmological constant \cite{Weinberg:1988cp,Martin:2012bt}. Also, paradigmatic examples of quantum corrections studies include scalar-tensor theories \cite{Luty:2003vm,Nicolis:2004qq,deRham:2012ew,Koyama:2013paa,Brouzakis:2013lla,Heisenberg:2014raa,Goon:2016ihr,Heisenberg:2020cyi}, massive gravity \cite{deRham:2013qqa}, and generalised Proca theories \cite{Heisenberg:2020jtr}.\\  

Now we are ready to study the stability of PG gravity. We shall summarise the results of {\bf{\nameref{P6}}}, where the stable modes of propagation of the action \eqref{actionPGT} were found. We shall unveil the presence of pathological terms in a background-independent approach just by looking at the interactions of the different torsion components.\\ 
In order to avoid ghosts already for the graviton when the torsion is set to zero, we will impose the recovery of the Gauss-Bonnet term in the limit of vanishing torsion. In $d=4$ dimensions this allows one to use the topological nature of the Gauss-Bonnet term to remove one of the parameters. More explicitly, we have
\begin{eqnarray}
\label{eq:PGactionzeroT}
\mathcal{L}_{\rm PG}\big\vert_{T=0}=a_{0}\mathring{R}+\left(b_{2}+b_{3}+\frac{b_{4}}{2}\right)\mathring{R}_{\mu\nu\rho\sigma}\mathring{R}^{\mu\nu\rho\sigma}+\left(b_{5}+b_{6}\right)\mathring{R}_{\mu\nu}\mathring{R}^{\mu\nu}+b_{1}\mathring{R}^{2},
\end{eqnarray}
so the Gauss-Bonnet term for the quadratic sector is recovered upon requiring
\begin{equation}
\label{con1}
b_{5}=-4b_{1}-b_{6},\;\;\;b_{4}=2(b_{1}-b_{2}-b_{3}),
\end{equation}
that we will assume throughout the rest of this chapter unless otherwise stated. The parameter $b_1$ plays the role of the coupling constant for this Gauss-Bonnet term. In $d=4$ dimensions this parameter is irrelevant, but it is important for $d>4$.\\
In the following subsection we shall show that imposing stability in the torsion vector modes reduces drastically the parameter space of PG gravity.

\subsection{Ghosts in the vector sector}

In $d=4$ dimensions, a vector field $A_{\mu}$ has four components: one temporal $A_{0}$, and three spatial $A_{i}$, with $i=1,2,3$. However, they cannot propagate at the same time without introducing a ghost degree of freedom (d.o.f.). In particular for any theory describing a massive vector, like the ones present in the PG action, we must require the following conditions in order to avoid ghosts \cite{Heisenberg:2014rta,Heisenberg:2018vsk}:
\begin{itemize}
\item {\bf{The equations of motion must be of second order}}. As we explained in the previous subsection, this is because the Ostrogradski theorem predicts ghosts for higher-order equations of motion.

\item {\bf{The temporal component of the vector field $A_0$ should not be dynamical}}. This is required because if this degree of freedom propagates, its kinetic term would be of opposite sign of the one of the spatial components, hence being a ghost. Therefore, the massive vector under this ghost-free condition would only propagate three degrees of freedom, which is exactly the ones that the massive spin-1 representation of the Lorentz group can propagate. 
\end{itemize}
Following these prescriptions, in this subsection we shall constrain the parameter space of PG gravity by imposing stability in the two massive spin-1 fields that are part of the particle spectrum of this theory.\\
In order to do so, we look at the vector sector containing the trace $T_\mu$ and the axial component $S_\mu$ of the torsion, while neglecting the pure tensor part $q^{\rho}\,_{\mu\nu}$ for the moment. Plugging the decompositions \eqref{LCWeitzenbock} and \eqref{decomposition2} into the PG Lagrangian \eqref{actionPGT} we obtain
\bea
\Lag_{\rm v}&=&-\frac29\big(\kappa-\beta\big)\mT_{\mu\nu}\mT^{\mu\nu}+\frac{1}{72}\big(\kappa-2\beta\big)\mS_{\mu\nu}\mS^{\mu\nu}+\frac12m_T^2T^2+\frac12m_S^2S^2+\frac{\beta}{81} S^2 T^2\nonumber\\
&&+\frac{4\beta-9b_2}{81}\Big[(S_\mu T^\mu)^2+3S^\mu S^\nu \nablab_\mu T_\nu\Big]+\frac{\beta}{54} S^2\nablab_\mu T^\mu+\frac{\beta-3b_2}{9}S^\mu T^\nu\nablab_\mu S_\nu \nonumber\\
&&+\frac{\beta-3b_2}{12} (\nablab_\mu S^\mu)^2+\frac{\beta}{36}\Big(2\Gb^{\mu\nu}S_\mu S_\nu+\Rb S^2\Big),
\label{eq:PGaction2}
\eea
where $\mT_{\mu\nu}=2\partial_{[\mu} T_{\nu]}$ and $\mS_{\mu\nu}=2\partial_{[\mu} S_{\nu]}$ are the field strengths of the trace and axial vectors respectively and we have defined
\bea
\kappa&=&4b_1+b_6\, ,\\
\beta&=&b_1+b_2-b_3\, ,\\
m_T^2&=&-\frac23\big(2a_0-2a_1+a_2-3a_3\big)\, ,\\
m_S^2&=&\frac{1}{12}\big(a_0-4a_1-4a_2\big).
\eea
In order to arrive at the final expression \eqref{eq:PGaction2} we have used the Bianchi identities to eliminate terms containing $\Rb_{\mu\nu\rho\sigma}\epsilon^{\alpha\nu\rho\sigma}$ and express $\Rb_{\mu\nu\rho\sigma}\Rb^{\mu\rho\nu\sigma}=\frac12\Rb_{\mu\nu\rho\sigma}\Rb^{\mu\nu\rho\sigma}$. We have also dropped the Gauss-Bonnet invariant of the Levi-Civita connection and the total derivative $\varepsilon_{\mu\nu\alpha\beta} \mS^{\mu\nu}\mT^{\alpha\beta}$. Moreover, we have made a few integrations by parts and used the commutator of covariant derivatives. Let us point out that the parameter $b_1$ does not play any role and can be freely fixed since it simply corresponds to the irrelevant Gauss-Bonnet coupling constant.

The Lagrangian \eqref{eq:PGaction2} has some interesting characteristics. Indeed, if we look at the pure trace sector $T_\mu$, we see that it does not contain non-minimal couplings. This is an accidental property in four dimensions. To show this fact more explicitly, we shall give the Lagrangian for the pure trace sector in an arbitrary dimension $d\ge 4$
\bea
&&\Lag^{d}_{T}=-\frac{d-2}{(d-1)^2}\left(\frac{d-2}{2}\kappa-\beta\right)\mT_{\mu\nu}\mT^{\mu\nu}+\frac12m_T^2(d)T^2\nonumber\\
&&+b_1\frac{(d-4)(d-3)(d-2)}{(d-1)^3}\left[\Big(T^4-4T^2\nablab_\mu T^\mu\Big)+4\frac{d-1}{d-2}\Gb_{\mu\nu} T^\mu T^\nu\right],
\label{eq:Td}
\eea
with
\be
m_T^2(d)=\frac{2}{1-d}\Big[(d-2)a_0-2a_1+a_2+(1-d)a_3\Big].
\ee
Indeed, all the interactions trivialise\footnote{Notice that $b_1$ is the coupling constant of the Gauss-Bonnet term also for arbitrary dimension $d$, so the trace interactions only contribute if the Gauss-Bonnet is also present, which is dynamical for $d>4$.} in $d=4$ dimensions. It is remarkable however that in \eqref{eq:Td} the non-gauge-invariant derivative interaction $T^2\nablab_\mu T^\mu$ is of the vector-Galileon type, and the non-minimal coupling is only to the Einstein tensor, which is precisely one of the very few ghost-free couplings to the curvature for a vector field (see \emph{e.g.} \cite{Jimenez:2016isa}).  The obtained result agrees with the findings in \cite{Jimenez:2015fva,Jimenez:2016opp} where a general connection determined by a vector field that generates both torsion and non-metricity was considered. 

Let us now return our attention to the full vector Lagrangian \eqref{eq:PGaction2}. Unlike the torsion trace, the axial component $S_\mu$ shows very worrisome terms that appear in the three following ways:
\begin{itemize}
\item  The perhaps most evidently pathological term is $(\nablab_\mu S^\mu)^2$ that introduces a ghostly d.o.f. associated to the temporal component $S_0$, because it clearly makes this temporal component propagate. We shall get rid of it by imposing $\beta=3b_2$. This constraint has already been found in the literature in order to guarantee a stable spectrum on Minkowski.

\item The non-minimal couplings to the curvature are also known to lead to ghostly d.o.f.'s \cite{Himmetoglu:2008zp,Jimenez:2008sq,ArmendarizPicon:2009ai,Himmetoglu:2009qi}. The presence of these instabilities shows in the metric field equations where again the temporal component of the vector will enter with second derivatives, hence revealing its problematic dynamics. As mentioned above, an exception is the coupling to the Einstein tensor that avoids generating second time derivatives of the temporal component due to its divergenceless property. For this reason we have explicitly separated the non-minimal coupling to the Einstein tensor in \eqref{eq:PGaction2}. It is therefore clear that we need to impose the additional constraint $\beta=0$ to guarantee the absence ghosts, which, in combination with the above condition $\beta=3b_2$, results in $\beta=b_2=0$.

\item Furthermore, there are other interactions in \eqref{eq:PGaction2} with a generically pathological character schematically given by $S^2\nabla T$ and $ST\nabla S$. Although these may look like safe vector Galileon-like interactions, actually the fact that they contain both sectors makes them dangerous. This can be better understood by introducing St\"uckelberg fields and taking an appropriate decoupling limit, so we effectively have $T_\mu\rightarrow\partial_\mu T$ and $S_\mu\rightarrow\partial_\mu S$ with $T$ and $S$ the scalar and pseudo-scalar St\"uckelbergs. The interactions in this limit become of the form $(\partial S)^2\partial^2 T$ and $\partial T\partial S\partial^2T$ that, unlike the pure Galileon interactions, generically give rise to higher-order equations of motion and, therefore, Ostrogradski instabilities. Nevertheless, we can see that the avoidance of this pathological behaviour does not introduce new constraints on the parameters, since the coefficients in front of them in \eqref{eq:PGaction2} are already zero if we take into account the two previous stability considerations. 

\end{itemize}

The extra constraint $\beta=0$ conforms the crucial obstruction for stable PGTs. This new constraint genuinely originates from the quadratic curvature interactions in the PGT Lagrangian. Such interactions in the Lagrangian induce the non-minimal couplings between the axial sector and the graviton, as well as the problematic non-gauge-invariant derivative interactions. Also, {\bf{this constraint cannot be obtained from a perturbative analysis on a Minkowski background}} because, in that case, these interactions will only enter at cubic and higher orders so that the linear analysis is completely oblivious to it.

We can see that the stability conditions not only remove the obvious pathological interactions mentioned before, but they actually eliminate all the interactions and only leave the free quadratic part
\bea
\Lag_{\rm v}\big\vert_{b_2,\beta=0}&=&-\frac{2}{9}\kappa\mT_{\mu\nu}\mT^{\mu\nu}+\frac12m^2_T T^2+\frac{1}{72}\kappa\mS_{\mu\nu}\mS^{\mu\nu}+\frac12m^2_S S^2
\label{eq:PGaction4}
\eea
where we see that the kinetic terms for $T_\mu$ and $S_\mu$  have the same normalisation but with opposite signs, hence leading to the unavoidable presence of a ghost. Therefore, the only stable possibility is to exactly cancel both kinetic terms. Consequently, the entire vector sector becomes non-dynamical.\\

Now that we have shown that the vector sector must trivialise in stable PGTs, we can return to the full torsion scenario by including the pure tensor sector $q^\rho{}_{\mu\nu}$. Instead of using the general decomposition \eqref{decomposition2}, it is more convenient to work with the torsion directly for our purpose here. We can perform the post-Riemannian decomposition for the theories with a stable vector sector to obtain
\begin{equation}
\label{Stablenonprop}
\mathcal{L}_{\rm stable}=a_{0}\mathring{R}+b_{1}\GB+a_{1}T_{\mu\nu\rho}T^{\mu\nu\rho}+a_{2}T_{\mu\nu\rho}T^{\nu\rho\mu}+a_{3}T_\mu T^\mu.
\end{equation}
The first term is just the usual Einstein-Hilbert Lagrangian, modulated by $a_0$, while the second term corresponds to the topological Gauss-Bonnet invariant for a connection with torsion, so we can safely drop it in four dimensions and, consequently, the first two terms in the above expression simply describe GR. The rest of the expression clearly shows the non-dynamical nature of the full torsion so that having a stable vector sector also eliminates the dynamics for the tensor component, therefore making the full connection an auxiliary field. We can then integrate the connection out and, similarly to the Einstein-Cartan theories, the resulting effect will be the generation of interactions for fermions that couple to the axial part of the connection. From an Effective Field Theory perspective, the effect will simply be a shift in the corresponding parameters of those interactions with no observable physical effect whatsoever.\\

\subsubsection{Explicit cosmological example}

At this point we think it is interesting to work out a specific example, since that will help us show how the ghosts appear and rederive the same conclusions in a concrete simplified situation. The study of particular situations is important to guarantee the absence of hidden constraints that could secretly render the theory stable even if the Lagrangian contains dangerous-looking operators, like the ones present in the axial vector sector. In this respect, we need to bear in mind that worrisome terms can be generated from perfectly healthy interactions via field redefinitions (see \emph{e.g.} the related discussion in \cite{Jimenez:2019hpl}), or that the coupling between the propagating d.o.f. could ameliorate the ghostly behaviour presumed by the presence of higher-order derivatives \cite{Motohashi:2016ftl,Klein:2016aiq,deRham:2016wji}. Therefore, we must make sure that the terms arising in the quadratic PGTs do not correspond to some obscure formulation of well-behaved theories. There is no obvious reason to expect any such mechanism at work for PGTs and in fact we shall demonstrate that if we do not impose the constraints obtained above the temporal component of the axial vector propagates, even in a very simple setup.

In order to prove the dynamical nature of the ghost mode $S_0$, we will consider a homogeneous vector sector, meaning that it only depends on time, in a cosmological background described by the flat FLRW metric\footnote{In fact, we could have sticked to a Minkowski background. We have prefered however to use a general cosmological background to not trivialise any interaction in \eqref{eq:PGaction2} and to explicitly show the irrelevant role of the curvature for our analysis.}
\be
\dd s^2=a^2(t)\big(-\dd t^2+\dd\vec{x}^2\big).
\label{eq:FLRW}
\ee
The tensor sector is kept trivial so that we only have to care about the vector components. It is straightforward to see from \eqref{eq:PGaction2} that $T_0$ is always an auxiliary field, since it does not exhibit any dinamics. To calculate its equation of motion we shall assume that the spatial part of the vectors is aligned with the $z$ direction, that is $T_{\mu}=\left(T_{0},0,0,T_{z}\right)$ and $S_{\mu}=\left(S_{0},0,0,S_{z}\right)$. Having this in mind we find that $\frac{\delta S}{\delta T_{0}}=0$ implies
\bea
&&\Big[-27m_T^2a^2+2(\beta-3b_2)S_0+\frac23\beta S_z^2\Big]T_0+\frac23(9b_2-4\beta)S_0\vec{S}\cdot\vec{T}\nonumber\\
&&+6(3b_2-\beta)HS_0^2-2\beta H\vec{S}^2+\frac{3}{2}(3b_2-2\beta)(S_0^2)'+\frac\beta2(\vec{S}^2)'=0,
\eea
where the prime represents the derivative with respect to time and $H=\frac{a'\left(t\right)}{a\left(t\right)}$ is the so-called Hubble parameter. We can then solve for $T_0$ in the previous expression and integrate it out from the action. After performing a few integrations by parts, we can compute the corresponding Hessian from the resulting Lagrangian, which is defined as follows
\be
\mathcal{H}_{ij}=\frac{\delta \mS_{\rm B}}{\delta \dot{X}_i\delta\dot{X}_j}\,.
\ee
The Hessian of a system is quite important since it allows us to determine the presence of constraints. In particular, if the determinant of the Hessian matrix is different from zero there will not be additional constraints between the variables that have been chosen to calculate the Hessian. In our case, the variables will be $\vec{X}=(S_0,T_z,S_z)$, and the resulting Hessian is
\bea
\mathcal{H}_{ij}=
\left(
\begin{array}{ccc}
  \lambda_1& \tilde{\lambda} &  0 \\
  \tilde{\lambda}& \lambda_2  &   0\\
 0 &0   &   \frac89(\kappa-\beta)
\end{array}
\right)\,,
\eea
where we have defined
\bea
\lambda_1&=&\frac{\beta-3b_2}{6}+\frac{(3b_2-2\beta)^2 S_0^2}{81 m_T^2 a^2 + 6 (3 b_2 - \beta) S_0^2 - 2 \beta S_z^2},\nonumber\\
\lambda_2&=&\frac{1}{18}\left[\beta-\kappa+\frac{81 m_T^2 a^2 + 6 (3 b_2 - \beta) S_0^2 }{81 m_T^2 a^2 + 6 (3 b_2 - \beta) S_0^2 - 2 \beta S_z^2}\right],\\
\tilde{\lambda}&=&\frac13 \frac{(3b_2-2\beta)\beta }{81 m_T^2 a^2 + 6 (3 b_2 - \beta) S_0^2 - 2 \beta S_z^2}S_0S_z\,.\nonumber
\eea
In order to ensure the presence of constraints, so that $S_0$ is not an independent propagating d.o.f., we need to solve the equation $\det \mathcal{H}_{ij}=0$ for arbitrary values of the fields. By solving this equation we recover the conditions $\beta=b_2=0$ and the Hessian reduces to 
\bea
\mathcal{H}_{ij}=
\left(
\begin{array}{ccc}
  0&0 &  0 \\
  0& -\frac{1}{18}\kappa &   0\\
 0 &0   &   \frac89\kappa
\end{array}
\right)\nonumber
\eea
that is trivially degenerate and ensures a non-propagating $S_0$. Moreover, we also see the ghostly nature of either $T_\mu$ or $S_\mu$ since the non-vanishing eigenvalues have opposite signs. These results indeed confirm the conclusions reached above from the study of the vector sector stability.

\subsection{Constructing stable Poincar\'e Gauge theories}

The precedent subsection has been devoted to showing the presence of ghosts in general quadratic PGTs. Although this is a drawback for generic theories, we will now show how to avoid the presence of the discussed instabilities by following different routes. In particular, we will show specific classes of ghost-free theories and how to stabilise the vector sector in the general PGT by adding suitable operators of the same dimensionality as those already present in the quadratic PGTs.

\subsubsection{$R^2$ theories}
Here we shall study the stability when we restrict the quadratic curvature sector to be exactly the square of the Ricci scalar of the full connection, \emph{i.e.} $R^2$. This theory will evidently have the $\Rb^{2}-$limit at vanishing torsion, but it avoids the ghostly interactions that originate from the other Riemann contractions as we show in the following. Therefore, we set the parameters of the PGT \eqref{actionPGT} to $b_2=b_3=b_4=b_5=b_6=0$ and $b_1\neq0$, so we will consider the particular PG Lagrangian
\begin{equation}
\mathcal{L}=a_{0}{R}+a_{1}T_{\mu\nu\rho}T^{\mu\nu\rho}
+a_{2}T_{\mu\nu\rho}T^{\nu\rho\mu}+a_{3}T_\mu T^\mu+b_{1}{R}^{2}.
\label{eq:PGactionR2}
\end{equation}
The matter content of this Lagrangian is the graviton plus a scalar field (which is the $0^{+}$ mode of the PG action). Its non-pathological behaviour was already found in \cite{Hecht:1996np,Yo:1999ex} by analysing its well-posedness and Hamiltonian structure. Our approach here will confirm these results by a different procedure and will give further insights. The idea is to rewrite the Lagrangian \eqref{eq:PGactionR2} in a way where we can see explicitly the additional scalar. As usual, we start by performing a Legendre transformation in order to recast the Lagrangian above in the more convenient form
\be
\Lag=a_{0}\varphi+b_1\varphi^2+\chi(R-\varphi) +\frac12m_T^2 T^2+\frac12m_S^2S^2,
\ee
where we have introduced the non-dynamical fields $\chi$ and $\varphi$ and we have neglected the pure tensor sector $q^\alpha{}_{\mu\nu}$ for the moment, although we will come back to its relevance later. Using the field equation for $\chi$ we can recover the original Lagrangian, while the equation for $\varphi$ yields
\be
\varphi=\frac{\chi-a_0}{2b_1}
\ee
that gives $\varphi$ as a function of $\chi$. We can now use the post-Riemannian expansion of the Ricci scalar, given by
\be
R=\Rb+\frac{1}{24}S^2-\frac{2}{3}T^2+2\nablab_\mu T^\mu,
\label{eq:Ricciscalar}
\ee
in order to express the Lagrangian in the following suitable form
\bea
\Lag=\chi\left(\Rb+\frac{1}{24}S^2-\frac{2}{3}T^2+2\nablab_\mu T^\mu\right)-\frac{\big(\chi-a_0\big)^2}{4b_1}+\frac12m_T^2T^2+\frac12 m_S^2S^2.
\label{eq:PGactionR22}
\eea
The equation for the axial part imposes $S_\mu=0$, while the trace part yields
\be
T_\mu=\frac{2\partial_\mu\chi}{m_T^2-\frac43 \chi}
\ee
which indeed shows that $T_\mu$ can only propagate a scalar\footnote{An analogous result was obtained in \cite{Ozkan:2015iva} by considering $f(R)$ theories where the Ricci scalar is replaced by $R\rightarrow R+A^2+\beta \nablab_\mu A^\mu$ with $A_\mu$ a vector field and in \cite{Jimenez:2015fva} within the context of geometries with vector distortion.} since the trace vector can be expressed as $T_\mu=\partial_\mu\tilde{\chi}$ with
\be
\tilde{\chi}=-\frac32\log\Big\vert3m_T^2-4\chi\Big\vert.
\ee
The theory is then equivalently described by the action
\be
\mS=\int\dd^4x\sqrt{-g}\left[\chi\Rb-\frac{2(\partial\chi)^2}{m_T^2-\frac{4}{3}\chi}-\frac{\big(\chi-a_0\big)^2}{4b_1}\right]
\label{eq:BDPGT}
\ee
which reduces to a simple scalar-tensor theory of a generalised Brans-Dicke type with a field-dependent Brans-Dicke parameter
\be
\omega_{\rm BD}(\chi)=\frac{2\chi}{m_T^2-\frac43\chi}\,.
\ee
The previous reasoning can be extended to arbitrary $f(R)$ extensions of PGTs, the only difference with respect to \eqref{eq:BDPGT} being the specific form of the potential for $\chi$. An interesting feature of the resulting Lagrangian is the singular character of the massless limit $m_T^2\rightarrow0$ that gives $\omega_{\rm BD}(m_T^2\rightarrow 0)=-3/2$, exactly the value that makes the scalar field non-dynamical. This is also the case for the Palatini formulation of $f(R)$ theories where the scalar is non-dynamical (see {\emph{e.g.}} \cite{Olmo:2011uz} and references therein). For any other value of the mass, the scalar field is fully dynamical. We can see this in detail by performing the following conformal transformation $\gt_{\mu\nu}=\frac{2\chi}{\mpl^2}g_{\mu\nu}$, that brings the action \eqref{eq:BDPGT} into the Einstein frame 
 \begin{align}
\mS=\int\dd^4x\sqrt{-\gt}&\left[a_0\Rt-\frac{3m_T^2a_0}{2\chi^2(m_T^2-\frac43\chi)}(\partial\chi)^2-\frac{a_0}{4  b_1}\left(1-\frac{a_0}{\chi}\right)^2\right].
\label{eq:R2Einstein}
 \end{align}
In this frame it becomes apparent that the scalar $\chi$ loses its kinetic term for $m_T^2=0$. This feature can be related to the breaking of a certain conformal symmetry by the mass term. If we perform a conformal transformation of the metric together with a projective transformation of the torsion\footnote{The torsion transformation is $T^\alpha{}_{\mu\nu}\rightarrow T^\alpha{}_{\mu\nu}-2\delta^\alpha_{[\mu}\partial_{\nu]}\Omega$ that gives the transformation for the vector trace quoted in the main text, while the axial and pure tensor pieces remain invariant. See e.g. \cite{Obukhov:1982zn,HelayelNeto:1999tm,Shapiro:2001rz} for interesting discussions on conformal transformations involving torsion.} given by
\be
 g_{\mu\nu}\rightarrow {\rm e}^{2\Omega}g_{\mu\nu},\quad  T_\mu\rightarrow T_\mu+3\partial_\mu\Omega,
 \ee
with $\Omega$ being an arbitrary function, we have that the Ricci scalar transforms as $R\rightarrow e^{-2\Omega}R$. Consequently, we have that the only term in the Lagrangian \eqref{eq:PGactionR22} that is not invariant under the above transformations, supplemented with $\chi\rightarrow e^{-2\Omega}\chi$, is the mass term\footnote{Actually, the potential for $\chi$ also breaks the conformal invariance, but since it does not affect the dynamical nature of $\chi$ we can neglect it for this discussion.}. Thus, for $m_T^2=0$, the fact that the torsion is given in terms of the gradient of $\chi$  together with the discussed symmetry allows to completely remove the kinetic terms for $\chi$ by means of a conformal transformation. The mass however breaks this symmetry and, consequently, we recover the dynamical scalar described by \eqref{eq:R2Einstein}. Furthermore, the mass $m_T^2$ also determines the region of ghost freedom for the theory. If $m_T^2>0$ we have an upper bound for the scalar field that must satisfy $\chi<\frac34m_T^2$ in order to avoid the region where it becomes a ghost. On the other hand, if $m_T^2<0$, the scalar field is confined to the region $\chi>\frac34m_T^2$. For the potential to be bounded from below we only need to have $b_1>0$. These conditions have been summarised in Table \ref{table}.
 
It is worth noticing that the absence of ghosts in the $R^2$-theories is due to the removal of the Maxwell kinetic terms for the vector sector, hence avoiding its propagation. By inspection of the Ricci scalar \eqref{eq:Ricciscalar} we see that only the trace $T_\mu$ enters with derivatives and only through the divergence $\nablab_\mu T^\mu$. As it is well-known this is precisely the dual of the usual Maxwell-like kinetic term for the dual 3-form field so the theory can be associated to a massive 3-form which propagates one dof\footnote{See e.g. \cite{Germani:2009iq, Koivisto:2009sd,Koivisto:2009ew} for some cosmological applications of 3-forms.}. This dof can be identified with the scalar that we have found. Just like the $U(1)$ gauge symmetry of the Maxwell terms is crucial for the stability of vector theories, the derivative term $\nablab_\mu T^\mu$ has the symmetry $T^\mu\rightarrow T^\mu+\epsilon^{\mu\nu\rho\sigma}\partial_{\nu} \theta_{\rho\sigma}$ for an arbitrary $\theta_{\rho\sigma}$ that plays a crucial role for guaranteeing the stability of the theories. Of course, this symmetry is inherited from the gauge symmetry of the dual 3-form.

Let us finally highlight that the inclusion of the tensor sector $q^\rho{}_{\mu\nu}$ does not change the final result because one can check that, similarly to the axial part, it only enters as an auxiliary field whose equation of motion imposes $q^\alpha{}_{\mu\nu}=0$. To see this more clearly, we can give the full post-Riemannian expansion of the Ricci scalar including the tensor piece
\be
R=\Rb+\frac{1}{24}S^2-\frac{2}{3}T^2+2\nablab_\mu T^\mu+\frac12q_{\mu\nu\rho}q^{\mu\nu\rho}.
\ee
It is clear then that the contribution of the tensor part to the Lagrangian \eqref{eq:PGactionR22} gives rise to the equation of motion $\chi q_{\mu\nu\rho}=0$ which, for $\chi\neq0$, trivialises the tensor component. The same will apply to theories described by an arbitrary function $f(R)$ so one can safely neglect the tensor sector for those theories as well.

\subsubsection{Holst square theories}
We have just seen how to obtain a non-trivial quadratic PG theory that propagates an extra scalar, and how this can be ultimately related to the absence of Maxwell-like terms for the vector sector. We can then ask whether there is some non-trivial healthy theory described by \eqref{eq:PGaction2} where the scalar is associated to the axial vector rather than to the trace. The answer is indeed affirmative, and in order to prove such a result we simply need to impose the vanishing of the Maxwell kinetic terms that results in the following conditions
\be
\kappa=0\quad {\text{and}}\quad \beta=0.
\ee
Imposing these conditions, performing a few integrations by parts and dropping the Gauss-Bonnet term, the Lagrangian then reads
\bea
\Lag_{\rm Holst}&=&a_0\Rb+\frac12m_T^2 T^2+\frac12m_S^2S^2
\nonumber
\\
&&+\alpha\left[(\nablab_\mu S^\mu)^2-\frac43 S_\mu T^\mu\nablab_\nu S^\nu+\frac49(S_\mu T^\mu)^2\right], 
\label{eq:Holst1}
\eea
with $\alpha\equiv-\frac{b_2}{4}$. It is clear that we now obtain the same structure as in the $R^2$ case but now for the axial part. This is not an accidental property, and it can be derived from the relation of the resulting Lagrangian with the Holst term\footnote{Although this term is commonly known as the Holst term, due to the research article of Soren Holst in 1995 \cite{Holst:1995pc}, in the context of torsion gravity it was first introduced by R. Hojman {\emph{et. al.}} in 1980 \cite{Hojman:1980kv}.} \cite{Hojman:1980kv,Holst:1995pc} that is given by $\mathcal{H}\equiv\epsilon^{\mu\nu\rho\sigma} R_{\mu\nu\rho\sigma}$ and whose post-Riemannian expansion is
\be
\mathcal{H}=\frac23 S_\mu T^\mu-\nablab_\mu S^\mu
\label{eq:HoldstPR}
\ee
where we have used that $\epsilon^{\mu\nu\rho\sigma} \Rb_{\mu\nu\rho\sigma}=0$ by virtue of the Bianchi identities. Thus, it is obvious that the Lagrangian can be expressed as
\be
\Lag_{\rm Holst}=a_0\Rb+\frac12m_T^2 T^2+\frac12m_S^2S^2
+\alpha \mathcal{H}^2 .
\label{eq:Holst2}
\ee
This particular PG theory was identified in \cite{Hecht:1996np} as an example of a theory with dynamical torsion described by a scalar with a well-posed initial value problem. We will understand the nature of this scalar by following an analogous aproach to the $R^2$ theories. For that purpose, we first introduce an auxiliary field $\phi$ to rewrite \eqref{eq:Holst2} as
\bea
\Lag_{\rm Holst}=a_0\Rb+\frac12m_T^2 T^2+\frac12m_S^2S^2-\alpha\phi^2+2\alpha\phi\epsilon^{\mu\nu\rho\sigma} R_{\mu\nu\rho\sigma}.
\label{eq:Holst3}
\eea
We see that the resulting equivalent Lagrangian corresponds to the addition of a Holst term where the Barbero-Immirzi parameter acts as a pseudo-scalar field. As we shall show now, this pseudo-scalar is dynamical and corresponds to the $0^-$ mode in the PG Lagrangian identified in \cite{Hecht:1996np}. The massless theory with $m_ T^2=m_S^2=0$ and without the $\phi^2$ potential has been considered in extensions of GR inspired by Loop Quantum Gravity \cite{Taveras:2008yf,Calcagni:2009xz}. At this moment, we can introduce the post-Riemannian expansion \eqref{eq:HoldstPR} into the Lagrangian, obtaining
\bea
\Lag_{\rm Holst}=a_0\Rb+\frac12m_T^2 T^2+\frac12m_S^2S^2-\alpha\phi^2+2\alpha\phi\left(\frac23 S_\mu T^\mu-\nablab_\mu S^\mu\right).
\label{eq:Holst4}
\eea
The correspondent equations for $S^\mu$ and $T^\mu$ are 
\bea
m_S^2S_\mu+\frac{4\alpha\phi}{3}T_\mu+2\alpha\partial_\mu\phi=0,\label{eq:S}\\
m_T^2T_\mu+\frac{4\alpha\phi}{3}S_\mu=0,
\eea
respectively. For $m_T^2\neq0$\footnote{The singular value $m_T^2=0$ leads to uninteresting theories where all the dynamics is lost so we will not consider it any further here. The same conclusion was reached in \cite{Hecht:1996np}.} we can algebraically solve these equations as
\bea
T_\mu&=&-\frac{4\alpha\phi}{3m_T^2}S_\mu,\\
S_\mu&=&-\frac{2\alpha\partial_\mu\phi}{m_S^2-\left(\frac{4\alpha\phi}{3m_T}\right)^2},\label{eq:solS}
\eea
that we can plug into the Lagrangian to finally obtain
\be
\Lag_{\rm Holst}=a_0\Rb-\frac{2\alpha^2}{m_S^2-\left(\frac{4\alpha\phi}{3m_T}\right)^2}(\partial\phi)^2-\alpha\phi^2.
\label{eq:Holst4}
\ee
This equivalent formulation of the theory where all the auxiliary fields have been integrated out explicitly shows the presence of a propagating pseudo-scalar field. The parity invariance of the original Lagrangian translates into a $\mathbb{Z}_2$ symmetry in the pseudo-scalar sector. The obtained result is also valid for theories described by an arbitrary function of the Holst term, where the effect of considering different functions leads to different potentials for the pseudo-scalar $\phi$. \\
Moreover, we can see how including the pure tensor part $q^\rho{}_{\mu\nu}$ into the picture does not change the conclusions because it contributes to the Holst term as
\be
\mathcal{H}=\frac23 S_\mu T^\mu-\nablab_\mu S^\mu+\frac12\epsilon_{\alpha\beta\mu\nu}q_\lambda{}^{\alpha\beta}q^{\lambda\mu\nu}.
\ee
This shows that $q_{\rho\mu\nu}$ only enters as an auxiliary field whose equation of motion trivialises it, as it occurs for the $R^2$ case.

At this moment, let us point out how the appearance of a (pseudo-)scalar could have been expected by using the relation of the Holst term with the Nieh-Yan topological invariant $\mathcal{N}$, that is given by
\be
\mathcal{N}\equiv\epsilon^{\mu\nu\rho\sigma}\Big(R_{\mu\nu\rho\sigma}-\frac12 T^\alpha{}_{\mu\nu}T_{\alpha\rho\sigma}\Big).
\ee
In a Riemann-Cartan spacetime it is easy to show that this term is nothing but the total derivative $\mathcal{N}=-\nablab_\mu S^\mu$. The remarkable property of this invariant is that it is linear in the curvature so its square must belong to the class of parity preserving quadratic PG theories, even though $\mathcal{N}$ itself breaks parity. Then, as it happens with other invariants like the Gauss-Bonnet one, including a general non-linear dependence on the invariant is expected to give rise to dynamical scalar modes. In standard Riemannian geometries for example, the inclusion of an arbitrary function of the Gauss-Bonnet invariant results in a highly non-trivial scalar field with Horndeski interactions \cite{Kobayashi:2011nu}.

The stability constraints on the parameters can now be obtained very easily. From \eqref{eq:Holst4} we can realise that $\alpha$ must be positive to avoid having an unbounded potential from below. On the other hand, the condition to prevent $\phi$ from being a ghost depends on the signs of $m_S^2$ and  $m_T^2$,  which are not defined by any stability condition so far. We can distinguish the following possibilities:
\begin{itemize}
\item $m_S^2>0$: We then need to have $1-\left(\frac{4\alpha\phi}{3m_Tm_S}\right)^2>0$. For $m_T^2<0$ this is always satisfied, while for $m_T^2>0$ there is an upper bound for the value of the field given by $\vert\phi\vert<\vert\frac{3m_Sm_T}{4\alpha}\vert$.

\item $m_S^2<0$: The ghost-freedom condition is now $1-\left(\frac{4\alpha\phi}{3m_Tm_S}\right)^2<0$, which can never be fulfilled if $m_T^2>0$. If $m_T^2<0$ we instead have the lower bound $\vert\phi\vert>\vert\frac{3m_Sm_T}{4\alpha}\vert$.
\end{itemize}

For a better visualisation we have summarised these ghost-free conditions in Table \ref{table}. We can gain a better intuition on the dynamics of the pseudo-scalar by canonically normalising it. For that purpose we introduce a new field $\hat{\phi}$ defined by
\be
\hat{\phi}=\frac{2\alpha}{\sqrt{m_S^2}}\int\frac{\dd\phi}{\sqrt{1-\left(\frac{4\alpha\phi}{3m_Tm_S}\right)^2}}.
\label{eq:canfield}
\ee
Again, depending on the sign of $m_S^2$ we have two cases
\begin{itemize}

\item For $m_S^2>0$ we obtain
\be
\phi(\hat{\phi})=\frac{3m_Tm_S}{4\alpha}\sin\left(\frac{2\hat{\phi}}{3m_T}\right),
\label{eq:phican1}
\ee
in terms of which the Lagrangian for the pseudo-scalar reads
\be
\Lag_{\hat{\phi}}\vert_{m_S^2>0}=-\frac12(\partial\hat{\phi})^2-V(\hat{\phi}),
\ee
with $V(\hat{\phi})=\alpha\phi^2(\hat{\phi})$. We can see how the shape of the potential crucially depends on the sign of $m_T^2$. If $m_T^2>0$ we have the following oscillatory potential
\be
V(\hat{\phi})=\frac{9 m_T^2m_S^2}{16\alpha}\sin^2\left(\frac{2\hat{\phi}}{3 m_T}\right),\quad\quad m_S^2>0, \;m_T^2>0,
\ee
which has a discrete symmetry $\hat{\phi}\rightarrow\hat{\phi}+\frac32nm_T\pi$ with $n\in\mathbb{Z}$ arising from the original upper bound of $\phi$. Notice that the field redefinition \eqref{eq:phican1} guarantees the ghost-free condition $\vert\phi\vert\leq\vert\frac{3m_Tm_S}{4\alpha}\vert$.\\
For $m_T^2<0$ the potential takes instead the form
\be
V(\hat{\phi})=\frac{9\vert m_T^2\vert m_S^2}{16\alpha}\sinh^2\left(\frac{2\hat{\phi}}{3\vert m_T\vert} \right),\quad\quad m_S^2>0, \;m_T^2<0.
\ee

\item On the other hand, for $m_S^2<0$, we neccesarily need to have $m_T^2<0$ to avoid ghosts, and the integral \eqref{eq:canfield} gives
\be
\phi=\pm\frac{3\vert m_T m_S\vert}{4\alpha}\cosh\left(\frac{4\hat{\phi}}{3\vert m_T\vert}\right),
\ee
where we have fixed the integration constant so that the origin of $\hat{\phi}$ corresponds to the lower bound for $\vert\phi\vert$. The Lagrangian for the canonically normalised field is given by
\be
\Lag_{\hat{\phi}}\vert_{m_S^2<0}=-\frac12(\partial\hat{\phi})^2-\frac{9 m_T^2m_S^2}{16\alpha}\cosh^2\left(\frac{2\hat{\phi}}{3 \vert m_T\vert}\right),\quad\quad m_S^2<0,\; m_T^2<0.
\ee
\end{itemize}
In all cases, it is straightforward to analyse the corresponding solutions by simply looking at the shape of the corresponding potential. In particular, we see that the small-field regime gives an approximate quadratic potential so, provided the mass is sufficiently large\footnote{By large we of course mean relative to the Hubble parameter in the late time universe so that the field can undergo multiple oscillations around the minimum in a Hubble time. This typically requires masses around $m\sim10^{-22}$ eV so they actually represent ultra-light particles from a particle physics perspective.}, the coherent oscillations of the pseudo-scalar can give rise to dark matter \cite{Turner:1983he,PhysRevLett.64.1084,Cembranos:2015oya,Hui:2016ltb} as the misalignment mechanism for axions \cite{Marsh:2015xka} or the Fuzzy Dark Matter models \cite{Hu:2000ke}. A similar mechanism was explored in \cite{Cembranos:2008gj} within pure $R^2$ gravity. On the other hand, it is also possible to generate large-field inflationary scenarios or dark energy models if the field slowly rolls down the potential at field values sufficiently far from the minimum.

One important difference with respect to the $R^2$ theories discussed earlier is that here we have obtained the Lagrangian for the pseudo-scalar already in the Einstein frame, while this was only achieved after performing a conformal transformation to disentangle the scalar field from the Einstein-Hilbert term for the $R^2$ theories. Hence, while the scalar couples directly to matter in the Einstein frame through a conformal metric for the $R^2$ theories, the pseudo-scalar field of the Holst square theories does not. This could be useful for dark matter and/or dark energy models because they could easily evade local gravity constraints. Actually, the obtained effective potential for the pseudo-scalar field allows for both accelerating cosmologies (that could be used for dark energy or inflation) and dark matter dominated universes. 

Nevertheless, we have to take into account that Dirac fermions do couple to the axial part of the connection (see {\emph{e.g.}} \cite{Hammond:2002rm,Shapiro:2001rz}). A consequence of this kind of coupling coupling is that we would expect to have the dual of the hypermomentum $\Delta_\mu=\delta\mS/\delta S^\mu$ entering on the r.h.s. of \eqref{eq:S}. This means that the solutions for $S_\mu$ and $T_\mu$ in \eqref{eq:solS} should include $\Delta_\mu$ so that the final Lagrangian \eqref{eq:Holst4} features the coupling between the pseudo-scalar $\phi$ and Dirac fermions. Since $\Delta_\mu$ in the equations can be simply generated by the replacement $2\alpha\partial_\mu\phi\rightarrow 2\alpha\partial_\mu\phi+\Delta_\mu$ in \eqref{eq:S}, the explicit computation of the interactions including the axial coupling to the fermions can be easily obtained by making the corresponding replacement in \eqref{eq:Holst4}, namely
\be
\Lag_{\rm Holst}=a_0\Rb-\frac{(2\alpha\partial_\mu\phi+\Delta_\mu)^2}{m_S^2-\left(\frac{4\alpha\phi}{3m_T}\right)^2}-\alpha\phi^2.
\label{eq:Holst5}
\ee
We then obtain the usual four-point fermion interactions given by $\Delta^2$ that are also generated in {\emph{e.g.}} Einstein-Cartan gravity plus a derivative coupling of the pseudo-scalar to the axial current $\Delta_\mu$ carried by the fermions. Interestingly, this derivative coupling can yield to an effective mass for the fermion\footnote{Let us recall that the axial current for a fermion $\psi$ has the form $\Delta_\mu\propto \bar{\psi}\gamma_5\gamma_\mu\psi$ so the derivative coupling indeed generates an effective mass.} that depends on the evolution of the pseudo-scalar. Therefore, it is worth noting the possibility that this scenario offers for a natural framework to have dark energy and/or dark matter interacting with neutrinos that could result in some interesting phenomenologies for their cosmological evolution. On the other hand, these couplings could also give rise to natural reheating mechanisms within inflationary models.

\renewcommand{\arraystretch}{1.5}
\begin{table}\centering
\begin{tabular}{ccc||c||c||c||c||c||c||c||c||c||ccc}
\hline 
 & Scalar $\chi$& \multicolumn{11}{c}{} & \multicolumn{2}{c}{Pseudo-scalar $\phi$}\tabularnewline
\cline{2-2} \cline{14-15} 
 & $b_{1}>0$ & \multicolumn{11}{c}{} & $m_{S}^{2}>0$ & $m_{S}^{2}<0$\tabularnewline
\cline{2-15} 
$m_{T}^{2}>0$ & $\chi<\frac{3}{4}m_{T}^{2}$ & \multicolumn{11}{c}{} & $|\phi|<\left|\frac{3m_{S}m_{T}}{4\alpha}\right|$ & {\color{red}Ghost}\tabularnewline
$m_{T}^{2}<0$ & $\chi>\frac{3}{4}m_{T}^{2}$ & \multicolumn{11}{c}{} & {\color{green}Healthy} & $|\phi|>\left|\frac{3m_{S}m_{T}}{4\alpha}\right|$\tabularnewline
\hline 
\end{tabular}
\caption{This table summarises the conditions to avoid ghosts for the scalar and the pseudo-scalar fields.}
\label{table}
\end{table}

\renewcommand{\arraystretch}{1}

\subsubsection{The general healthy bi-scalar theory}

For completeness, we shall analyse the theory that propagates simultaneously both the scalar and pseudo-scalar fields obtained above. It should be clear that the corresponding theory will be described by the Lagrangian
\be
\Lag=a_{0}R+\frac12m_T^2T^2+\frac12m_S^2S^2+b_1R^2+\alpha\mathcal{H}^2.
\ee
The matter content of this Lagrangian is indeed the graviton plus the $0^+$ and $0^-$ modes present in the PG action. We will proceed analogously to the previous cases, that is by introducing auxiliary fields, but we shall overlook the unnecessary details. The transformed Lagrangian in the post-Riemannian expansion can then be written as
\bea
\Lag&=&\mU(\chi,\phi)+\chi\Rb+\frac12M_T^2(\chi)T^2+\frac12M_S^2(\chi)S^2+\frac43\alpha\phi S_\mu T^\mu
\nonumber
\\
&&-2T^\mu\partial_\mu\chi+2\alpha S^\mu\partial_\mu\phi\,,
\label{eq:bisint}
\eea
where we have defined
\bea
\mU(\chi,\phi)=-\frac{\big(\chi-a_0\big)^2}{4b_1}-\alpha\phi^2,\quad M_T^2=m_T^2-\frac43\chi\quad\text{and}\quad 
M_S^2=m_S^2+\frac{1}{12}\chi.
\label{biscalarfunctions}
\eea
We can rewrite the Lagrangian \eqref{eq:bisint} in a more compact and useful way by using matrices as
\be
\Lag=\mU(\chi,\phi)+\chi\Rb+\frac12\vec{Z}^t\hat{M}\vec{Z}+\vec{Z}^t\cdot\vec{\Phi}
\label{eq:bisint2}
\ee
with $\vec{Z}^t=(T_\mu,S_\mu)$, $\vec{\Phi}^t=(-2\partial_\mu\chi,2\alpha\partial_\mu\phi)$ and
\bea
\hat{M}=
\left(
\begin{array}{cc}
  M_T^2(\chi)&\frac43\alpha\phi  \\
  \frac43\alpha\phi&M_S^2(\chi)  \\
\end{array}
\right).
\eea
The equations for $S^\mu$ and $T^\mu$ can then be expressed as
\be
\hat{M}\vec{Z}=-\vec{\Phi}\;\Longrightarrow\;\vec{Z}=-\hat{M}^{-1}\vec{\Phi},
\ee
where the inverse of $\hat{M}$ given by
\bea
\hat{M}^{-1}=\frac{1}{M_S^2(\chi)M_T^2(\chi)-\left(\frac43\alpha\phi\right)^2}
\left(
\begin{array}{cc}
  M_S^2(\chi)&-\frac43\alpha\phi  \\
  -\frac43\alpha\phi&M_T^2(\chi)  \\
\end{array}
\right).
\eea
By plugging this solution into the Lagrangian \eqref{eq:bisint2} we finally obtain
\be
\Lag=\mU(\chi,\phi)+\chi\Rb-\frac12\vec{\Phi}^t\hat{M}^{-1}\vec{\Phi}.
\ee
It is quite clear that, as we have already pointed out, the theory describes two propagating scalars. At this stage, we can undo the above compact form of the Lagrangian to make everything more explicit
\be
\Lag=\chi\Rb+6\frac{3M_S^2(\chi)(\partial\chi)^2+3\alpha^2M_T^2(\chi)(\partial\phi)^2-8\alpha^2\phi\partial_\mu\phi\partial^\mu\chi}{(4\alpha\phi)^2-9M_S^2(\chi)M_T^2(\chi)}+\mU(\chi,\phi).
\label{eq:biscalar3}
\ee
It is easy to see that, as expected, this Lagrangian reduces to \eqref{eq:BDPGT} for $\phi=0$ and to \eqref{eq:Holst4} for $\chi=0$ (except for the Einstein-Hilbert term that should be added). Of course, the general discussions for the $R^2$ and Holst square theories also apply to the present case. We observe in \eqref{eq:biscalar3} that the scalar $\chi$ exhibits a non-minimal coupling that can be removed by means of the same conformal transformation as before $\gt_{\mu\nu}=\frac{\chi}{a_0}g_{\mu\nu}$. After performing this transformation to the Einstein frame the Lagrangian \eqref{eq:biscalar3} reads
\bea
\Lag&=&a_0\tilde{R}-\left[1-\frac{12M_S^2(\chi)}{(4\alpha\phi)^2-9M_S^2(\chi)M_T^2(\chi)}\right](\partial\chi)^2\nonumber\\
&&+\frac{6a_0}{\chi}\frac{3\alpha^2M_T^2(\chi)(\partial\phi)^2-8\alpha^2\phi\partial_\mu\phi\partial^\mu\chi}{(4\alpha\phi)^2-9M_S^2(\chi)M_T^2(\chi)}+\left(\frac{a_0}{\chi}\right)^2\mU(\chi,\phi).
\label{eq:biscalar2}
\eea
Once again, the conformal transformation will couple $\chi$ directly to matter through the conformal metric, while the pseudo-scalar $\phi$ couples only to the axial fermionic current given by the dual of the corresponding hypermomentum. The same reasoning used to obtain \eqref{eq:Holst5} applies here, so this axial coupling eventually generates couplings achievable via the replacement $2\alpha\partial_\mu\phi\rightarrow 2\alpha\partial_\mu\phi+\Delta_\mu$ in \eqref{eq:biscalar2}. Notice that additional couplings between $\chi$ and fermions will be generated by this mechanism. The resulting Lagrangian \eqref{eq:biscalar2} resembles a two dimensional non-linear sigma model \cite{GellMann:1960np}, with the following target space metric
\bea
h_{ij}(\chi,\phi)=\frac{2\mpl^2}{\chi}
\left(
\begin{array}{cc}
\frac{3}{4}+ \frac{1}{\chi}(\hat{M}^{-1})_{11} &\alpha(\hat{M}^{-1})_{12}  \\
  \alpha(\hat{M}^{-1})_{12}&\alpha^2(\hat{M}^{-1})_{22}  \\
\end{array}
\right).
\eea
Nevertheless, this resemblance is only formal at this point due to the pseudo-scalar nature of $\phi$. The ghost-free conditions are obtained by imposing the positivity of the eigenvalues of this metric, whose expressions are more involved in this case because of the couplings between both scalars. A much simpler condition can be obtained by computing the determinant
\be
\det h_{ij}=\frac{4a_{0}^2\alpha^2}{\chi^3}\frac{3M_T^2(\chi)+4\chi}{M_S^2(\chi)M_T^2(\chi)-\left(\frac43\alpha\phi\right)^2}>0,
\ee
which clearly is a necessary condition to guarantee ghost-freedom, although it is not sufficient. Moreover, having $\det h_{ab}=0$ will determine the degenerate cases where the phase space is reduced. This happens trivially for $\alpha=0$, that corresponds to the pure $R^2$ theory. The pure Holst square limit is more complicated to obtain because the conformal transformation becomes singular for $\chi=0$. We shall not explore further the general bi-scalar theory, although it should be clear that such theories will contain a much richer structure due to its enlarged phase space.\\

We will end our discussion by explicitly showing how our results can be extended to theories described by a general function of $R$ and $\mathcal{H}$. For that, let us then consider the following Lagrangian
\be
\Lag=F(R,\mH,T,S,q),
\ee
where $F$ is some arbitrary scalar function. In addition, for the sake of generality, we have allowed an arbitrary dependence on the torsion invariants as well. The Lagrangian can be recast as
\be
\Lag=F(\tilde{\chi},\tilde{\phi},T,S,q)+\chi \big(R-\tilde{\chi}\big)+\phi\big(\mH-\tilde{\phi}\big)
\ee
where we have introduced a set of auxiliary fields, following the same reasoning as in all the previous cases. The equations for $\tilde{\chi}$ and $\tilde{\phi}$ allow to express these fields in terms of the rest of fields. Therefore, we can write
\bea
\Lag&=&\mU(\chi,\phi,T,S,q)+\chi \left(\Rb+\frac{1}{24}S^2-\frac{2}{3}T^2+2\nablab_\mu T^\mu+\frac12q_{\mu\nu\rho}q^{\mu\nu\rho}\right)\nonumber\\
&&+\phi\left(\frac23 S_\mu T^\mu-\nablab_\mu S^\mu+\frac12\epsilon_{\alpha\beta\mu\nu}q_\lambda{}^{\alpha\beta}q^{\lambda\mu\nu}\right),
\label{eq:biscalargen}
\eea
where the potential $\mU$ already includes the effects of integrating out $\tilde{\chi}$ and $\tilde{\phi}$. Again, we see that the pure tensor sector only enters as an auxiliary field so we can also integrate it out to finally express the Lagrangian \eqref{eq:biscalargen} as
\be
\Lag=\tilde{\mU}(\chi,\phi,T,S)+\chi\Rb-2T^\mu\partial_\mu\chi+S^\mu\partial_\mu\phi
\label{eq:biscalargen2}
\ee
where $\tilde{\mU}$ contains all the terms without derivatives. This Lagrangian resembles \eqref{eq:bisint} with the only difference that the non-derivative terms are different. We can then proceed analogously by integrating out the vector sector $T_\mu$ and $S_\mu$ by solving their equations of motion
\bea
\frac{\partial\tilde{\mU}}{\partial T^\mu}-2\partial_\mu\chi=0\, ,\\
\frac{\partial\tilde{\mU}}{\partial S^\mu}+\partial_\mu\phi=0\, ,
\eea
that will give $T_\mu=T_\mu(\chi,\phi,\partial\chi,\partial\phi)$ and $S_\mu=S_\mu(\chi,\phi,\partial\chi,\partial\phi)$. By plugging these solutions back in the Lagrangian \eqref{eq:biscalargen2} we finally arrive at the explicit bi-scalar theory, but now with more involved interactions that will depend on the specific function $F$ describing the Lagrangian. If we include couplings to fermions, we can use the same trick as before to take such inclusion into account.

\subsubsection{Adding dimension 4 operators}

We have studied how to constrain the parameters in order to remove the ghosts of the quadratic PGTs. We shall at this point discuss how to avoid the ghosts by extending the Lagrangian in a suitable form. For this purpose, it is worth noting that the constructed quadratic theory \eqref{actionPGT} contains up to dimension 4 torsion terms that come from the curvature squared terms. It would then seem natural to include all the operators up to that dimensionality. For instance, since the Riemann squared terms generate quartic interactions for the torsion, there seems not to be a reason why they should not be included from the construction of the theory, apart from following the usual Yang-Mills approach. If we do allow for all the operators up to dimension four, there are many additional torsion terms that one could add. Particularly, we can include the operators $\mT_{\mu\nu} \mT^{\mu\nu}$ and $\mS_{\mu\nu} \mS^{\mu\nu}$ modulated by arbitrary coefficients. With the addition of these terms, it is trivial to see that the unavoidable ghostly nature of the vector sector concluded above by removing dangerous non-minimal couplings is resolved. Moreover, since these are just standard Maxwell terms, they will tackle the ghosts issue without introducing new potentially pathological interactions for the vector sector and affecting the pure tensor sector.

Once the presence of arbitrary dimension 4 operators is allowed, we can also include other phenomenologically interesting interactions. In particular, we can add non-minimal couplings that do not spoil the stabilisation achieved by including the already mentioned Maxwell terms. For instance, we can introduce interactions that mix the curvature and the torsion. Generically, these interactions will be pathological. However, there is a class of operators that gives rise to non-pathological non-minimal couplings for the vector sector. That is the case of $G_{\mu\nu} T^\mu T^\nu$, which generates the following couplings in the post-Riemannian expansion
\be
\Lag\supset \Gb _{\mu\nu} T^\mu T^\nu-T^2\nablab_\mu T^\mu+\frac13 T^4-\frac{1}{144}S^2 T^2-\frac{1}{72} (S_\mu T^\mu)^2.
\ee
Such a class of Lagrangians includes the non-minimal coupling to the Einstein tensor and a vector-Galileon term for the vector trace. Nevertheless, one would need to take into account that, if the tensor piece is included, some other worrisome terms will also enter which could potentially jeopardise the stability of the vector sector. 

\section{Chapter conclusions and outlook}

In this section we shall expose the main results of this chapter and outline the possible applications. Within this chapter we have first introduced the mathematical foundations of any gravitational theory, and showed with explicit examples that there is no physical reason to assume a priori that the Levi-Civita connection is the affine structure of the spacetime.

In the second section we have explained how one can construct physical theories by imposing the invariance of the action under local symmetries, which is known as the gauge procedure. Using such a procedure we have constructed the gravitational gauge theory of the translation group, which is TEGR, and the gauge theory of the Poincaré group, PG gravity.
 
Finally, in section \ref{2.3}, we have studied the stability of the latter theory, showing that only the two scalar modes present in the general PG Lagrangian can propagate safely. We then give details of the possible PG theories that can be considered using those two scalars. We also comment that by introducing dimension 4 torsion terms in the Lagrangian could allow the propagation of the vector modes without introducing any pathological behaviour.\\

Based on the previous results it will be of interest to study the cosmological and astrophysical solutions of the stable PG theories. As a matter of fact, in the next chapter we will study, among other things, the possible black-hole solutions of such stable scenarios. Moreover, another application of the findings of the current chapter can be seen in chapter \ref{4}, where the local limit of the proposed non-local theory contains suitable 4 dimensional torsion terms in the Lagrangian, hence allowing the stable propagation of the vector modes

\renewcommand{\publ}{}

\chapter{Phenomenology of Poincar\'e Gauge Theories}

\label{3}

\PARstart{S}ince the inception of Poincar\'e Gauge gravity, the different attempts to extend the properties and theorems of GR have been quite an active field. Paradigmatic examples include the study of singularities~\cite{Stewart:1973ux,Trautman:1973wy,Cembranos:2016xqx,Cembranos:2019mcb,delaCruz-Dombriz:2018aal}, the Birkhoff theorem \cite{Neville:1979fk,Rauch:1981tva,delaCruz-Dombriz:2018vzn}, existence of exact solutions \cite{Bakler:1984cq,Obukhov:1987tz,Blagojevic:2015zma,Cembranos:2016gdt,Obukhov:2019fti,Ziaie:2019dmq}, cosmological models \cite{Kerlick:1975tr,Yo:2006qs,Shie:2008ms,Chen:2009at,Baekler:2010fr,Ho:2015ulu}, the motion of particles~\cite{Hehl:2013qga,Cembranos:2018ipn} and, as we have already studied, the analysis of their stability~\cite{Sezgin:1979zf,Sezgin:1981xs,Chern:1992um,Yo:1999ex,Yo:2001sy,Vasilev:2017twr,Jimenez:2019qjc}. In this chapter we will present our results in different aspects of the aforementioned phenomenology, and shall be structured as follows.\\

In the previous chapter we mentioned that the axial vector of the torsion couples to the internal spin of fermions, {\emph{i.e.}} half-spin particles. This clearly induces a non-geodesical behaviour, which we shall calculate in Section \ref{3.1}, based on the work {\bf{\nameref{P3}}}. The fact that the fermions do not follow geodesics makes us think that they could escape somehow from the spacetime singularities, since the classical singularity theorems are formulated in terms of null and timelike geodesics. In Section \ref{3.2}, based on results presented in {\bf{\nameref{P1}}} and {\bf{\nameref{P5}}},  we shall show how such a scenario is not possible for spacetimes with a black hole regions of any dimension. Consequently, in Section \ref{3.3} we will study what kind of black-hole solutions we can expect in PG gravity by exploring the Birkhoff and no-hair theorems in different scenarios. This latter section is based on {\bf{\nameref{P2}}}. 

\clearpage

\section{Fermion dynamics}
\label{3.1}

Due to the coupling of the axial vector part of the torsion $S^{\mu}$ with the internal spin of fermions, it is clear that these particles would move along timelike curves that are not geodesics. While there is consensus on this fact, there is still an ongoing debate on which is the actual trajectory that they follow. Here we shall outline the most relevant ones (for a comprehensive review {\emph{cf.}} \cite{Hehl:2013qga}):
\begin{itemize}
\item In 1971, Ponomariev~\cite{Ponomarev:1971zz} proposed that the test particles will move along autoparallels (curves in which the velocity is parallel transported along itself with the total connection). Although there was no reason given, surprisingly this has been a recurrent proposal in the subsequent literature~\cite{Kleinert:1998cz, Mao:2006bb}. 
\item Hehl~\cite{HEHL1971225}, also in 1971, obtained the equation of motion using the energy-momentum conservation law, in the single-point approximation, {\emph{i.e.}} only taking into account first order terms. He also pointed out that torsion could be measured by using half-spin particles.
\item In 1981, Audretsch~\cite{Audretsch:1981xn} analysed the movement of a Dirac electron in a spacetime with torsion. He employed the WKB approximation, and obtained the same results as Rumpf had obtained two years earlier via an unconventional quantum mechanical approach~\cite{bergmann2012cosmology}. It was with this article that the coupling between spin and torsion was understood. 
\item In 1991, Nomura, Shirafuji and Hayashi~\cite{nomura1991spinning} computed the equations of motion by the application of the Mathisson-Papapetrou method to expand the energy-momentum conservation law. They obtained the equations at first order, which are the ones that Hehl had already calculated, but also made the second order approximation, finding the same spin precession as Audretsch.
\end{itemize}

In the following, we shall focus on Audretsch's approach, since it is the only procedure that takes into account the quantum mechanical nature of fermions.

\subsection{WKB approximation}

In this subsection we will outline the work of Audretsch in \cite{Audretsch:1981xn}, where the precession of spin and the trajectories of fermionic particles in theories with torsion were calculated. In order to do so, we shall start with the Dirac field equation of a fermionic field minimally coupled to torsion
\begin{equation}
\label{eq:7}
i\hbar\left(\gamma^{\mu}\mathring{\nabla}_{\mu}\Psi+\frac{1}{4}K_{\left[\mu\nu\rho\right]}\gamma^{\mu}\gamma^{\nu}\gamma^{\rho}\Psi\right)-m\Psi=0,
\end{equation}
where $K_{\mu\nu\rho}$ is the contortion tensor that we introduced in the previous chapter, and the $\gamma^{\alpha}$ are the modified gamma matrices, related to the standard ones via the vierbein
\begin{equation}
\gamma^{\alpha}=e^{\alpha}\,_{a}\gamma^{a},
\end{equation}
and $\Psi$ is a general spinor state. It is clearly observed that the contribution of the torsion to the Dirac equation is proportional to the totally antisymmetric part of the torsion tensor, {\emph{i.e.}} the axial vector $S^{\mu}$. Consequently, using the contortion expression \eqref{contorsion} and the torsion decomposition \eqref{decomposition2}, we can rewrite the Dirac equation \eqref{eq:7} as
\begin{equation}
\label{diracaxial}
i\hbar\left(\gamma^{\mu}\mathring{\nabla}_{\mu}\Psi+\frac{1}{24}\varepsilon_{\mu\nu\rho\sigma}S^{\sigma}\gamma^{\mu}\gamma^{\nu}\gamma^{\rho}\Psi\right)-m\Psi=0.
\end{equation}
This implies that a torsion field with vanishing antisymmetric component, which is commonly known as an \emph{inert torsion} field, will not couple to the fermions. \\
Given the fact that there is no analytical solution to Equation~\eqref{diracaxial}, we need to make approximations in order to solve it. As it is usual in Quantum Mechanics, we can make use of the WKB expansion to obtain simpler versions of this equation. Following this procedure, we can expand the general spinor as
\begin{equation}
\label{eq:8}
\Psi\left(x\right)={\rm e}^{i\frac{G\left(x\right)}{\hbar}}(-i\hbar)^{n}a_{n}\left(x\right),
\end{equation}
where we have used the Einstein sum convention (with $n$ going from zero to infinity). Moreover, we have assumed that $G\left(x\right)$ is real and $a_{n}\left(x\right)$ are spinors.\\
As any approximation, it has a limited range of validity. In this case, we can use it as long as $\mathring{R}^{-1}\gg \lambda_{B}$, where $\lambda_{B}$ is the de Broglie wavelength of the particle. This specific inequality expresses the fact that we cannot apply the mentioned approximation in presence of strong gravitational fields and that we cannot consider highly relativistic particles.

By plugging the WKB expansion \eqref{eq:8} into the Dirac equation \eqref{diracaxial} we can obtain the expressions for the zero and first order in $\hbar$, namely
\begin{equation}
\label{eq:9}
\left(\gamma^{\mu}\mathring{\nabla}_{\mu}G+m\right)a_{0}\left(x\right)=0,
\end{equation}
and
\begin{equation}
\label{eq:10}
\left(\gamma^{\mu}\mathring{\nabla}_{\mu}G+m\right)a_{1}\left(x\right)=-\gamma^{\mu}\mathring{\nabla}_{\mu}a_{0}-\frac{1}{24}\varepsilon_{\mu\nu\rho\sigma}S^{\sigma}\gamma^{\mu}\gamma^{\nu}\gamma^{\rho}a_{0}.
\end{equation}
We shall assume that the four-momentum $p^{\mu}$ of the fermions is orthogonal to the surfaces of constant $G\left(x\right)$, and introduce it as
\begin{equation}
\label{eq:11}
p_{\mu}=-\mathring{\nabla}_{\mu}G=-\partial_{\mu}G.
\end{equation}
At this point, the reader might be thinking that the choice in \eqref{eq:11} is an arbitrary decision. In fact, it is not, as we shall prove in the following. It is clear that at the lowest order of $\hbar$, where the internal spin of particles does not play any role, the fermions should move following geodesics. Then, the Dirac equation at order zero \eqref{eq:9} needs to be the equation of a geodesic \eqref{geodesiceq}. Indeed, if we make use of \eqref{eq:11} in \eqref{eq:9} we find
\begin{equation}
\left(\gamma^{\mu}p_{\mu}-m\right)a_{0}\left(x\right)=0,
\end{equation}
which clearly implies that
\begin{equation}
\label{HamiltonJacobi}
p_{\mu}p^{\mu}=m^{2},
\end{equation} 
commonly known as the \emph{Hamilton-Jacobi equation}. Acordingly, the four-velocity $u^{\mu}$ of the fermion at this order will be given by
\begin{equation}
u_{\mu}=\frac{1}{m}p_{\mu}=-\frac{1}{m}\partial_{\mu}G,
\end{equation}
and upon the use of Equation \eqref{HamiltonJacobi} we obtain the property
\begin{equation}
u_{\mu}u^{\mu}=1.
\end{equation}
Then, taking into account the two previous expressions we can arrive at
\begin{eqnarray}
u^{\mu}\mathring{\nabla}_{\mu}u_{\nu}&=&-\frac{1}{m}u^{\mu}\mathring{\nabla}_{\mu}\mathring{\nabla}_{\nu}G=-\frac{1}{m}u^{\mu}\mathring{\nabla}_{\nu}\mathring{\nabla}_{\mu}G=u^{\mu}\mathring{\nabla}_{\nu}u_{\mu}
\nonumber
\\
&=&\mathring{\nabla}_{\nu}\left(u^{\mu}u_{\mu}\right)-u^{\mu}\mathring{\nabla}_{\nu}u_{\mu}=-u^{\mu}\mathring{\nabla}_{\nu}u_{\mu}.
\end{eqnarray}
Finally, since the equation is equal to something and its negative it means the necessarily we have that
\begin{equation}
u^{\mu}\mathring{\nabla}_{\mu}u_{\nu}=0.
\end{equation}
which it is precisely the geodesic equation.

Now that we have clarified the behaviour of the lowest order it is time to explore the first order in $\hbar$, where the coupling of the axial vector of the torsion with the fermion plays a crucial role. For the explicit calculations we shall refer the reader to~\cite{Audretsch:1981xn}. Here we will just give the definitions and obtain the main results. In order to see how the internal spin of the fermion precesses along its trajectory we have considered the spin density tensor to be defined as
\begin{equation}
\label{eq:12}
S^{\mu\nu}=\frac{\overline{\Psi}\sigma^{\mu\nu}\Psi}{\overline{\Psi}\Psi},
\end{equation}
where the $\sigma^{\mu\nu}$ are the modified spin matrices, given by
\begin{equation}
\sigma^{\alpha\beta}=\frac{i}{2}\left[\gamma^{\alpha},\,\gamma^{\beta}\right].
\end{equation}
Then, we can obtain the spin vector from this density
\begin{equation}
\label{eq:13}
s^{\mu}=\frac{1}{2}\varepsilon^{\mu\nu\alpha\beta}u_{\nu}S_{\alpha\beta},
\end{equation}
Using the WKB expansion, we can write the lowest order of the spin vector as
\begin{equation}
\label{eq:14}
s_{0}^{\mu}=\overline{b}_{0}\gamma^{5}\gamma^{\mu}b_{0},
\end{equation}
where $b_{0}$ is the $a_{0}$ spinor but normalised.

With the previous definitions, we can then compute the evolution of the spin vector
\begin{equation}
\label{eq:15}
u^{\nu}\mathring{\nabla}_{\nu}s_{0}^{\mu}=\frac{1}{2}\varepsilon^{\mu\nu\rho\sigma}S_{\sigma}s_{0\,\rho}u_{\nu}.
\end{equation}
On the other hand, the calculation of the acceleration of the particle, {\emph{i.e.}} the deviation from geodesical movement, comes from the splitting of the Dirac current via the Gordon decomposition and from the identification of the velocity with the normalised convection current. Then, it can be shown that the non-geodesical behaviour is governed by the following expression for the four-acceleration
\begin{equation}
\label{eq:16}
a_{\mu}=v^{\varepsilon}\mathring{\nabla}_{\varepsilon}v_{\mu}=\frac{\hbar}{4m_{e}}\widetilde{R}_{\mu\nu\alpha\beta}\overline{b}_{0}\sigma^{\alpha\beta}b_{0}v^{\nu},
\end{equation}
where $\widetilde{R}_{\mu\nu\alpha\beta}$ refers to the intrinsic part of the Riemann tensor associated with the totally antisymmetric component of the torsion tensor:
\begin{equation}
\label{eq:17}
\widetilde{\Gamma}^{\lambda}\,_{\mu\nu}=\mathring{\Gamma}^{\lambda}\,_{\mu\nu}-\frac{1}{2}\varepsilon^{\lambda}\,_{\mu\nu\sigma}S^{\sigma}.
\end{equation}
Unlike most of the literature presented at the beginning of the section, the expression \eqref{eq:16} does not have an explicit contortion term coupled to the spin density tensor, hence all the torsion information is encrypted into the mentioned part of the Riemann tensor. Finally, it is worth noting that the standard case of GR is naturally recovered for inert torsion, as expected. \\

\subsection{Explicit workout example}
\label{3.1.2}

In this subsection we shall provide an example of the non-geodesical behaviour of an electron around a Reissner-Nordstr\"om black-hole sourced by torsion instead of the usual electromagnetic charge. Our results will show that having a strong torsion field would make the difference between the electron trajectory and the geodesics evident, despite being modulated by $\hbar$. 

The mentioned Reissner-Nordstr\"{o}m solution comes from the following PG gravity vacuum action~\cite{Cembranos:2016gdt,Cembranos:2017pcs}:
\begin{eqnarray}
S=\frac{1}{16\pi}\int {\rm d}^{4}x\sqrt{-g}\left[-\mathring{R}+\frac{d_{1}}{2}R_{\lambda\rho\mu\nu}R^{\mu\nu\lambda\rho}-\frac{d_{1}}{4}R_{\lambda\rho\mu\nu}R^{\lambda\rho\mu\nu} \right.
\nonumber
\\
\left.-\frac{d_{1}}{2}R_{\lambda\rho\mu\nu}R^{\lambda\mu\rho\nu}+d_{1}R_{\mu\nu}\left(R^{\mu\nu}-R^{\nu\mu}\right)\right],
\end{eqnarray}
with the exact metric of the solution given by
\begin{equation}
{\rm d}s^{2}=f\left(r\right){\rm d}t^{2}-\frac{1}{f\left(r\right)}{\rm d}r^{2}-r^{2}\left({\rm d}\theta^{2}+\sin^{2}\theta {\rm d}\varphi^{2}\right),
\end{equation}
where
\begin{equation}
f\left(r\right)=1-\frac{2m}{r}+\frac{d_{1}\kappa^{2}}{r^{2}},
\end{equation}
where $m$ is the BH mass and $\kappa$ is a scalar charge sourced by torsion. For the rest of this subsection we shall consider $d_{1}=1$, which simplifies the computations while not compromising the generality of the result.\\
In order to calculate the non-geodesical behaviour of the electron we need to have the values of the non-vanishing torsion components, namely
\begin{equation}
\begin{cases}
T_{tr}^{\,\,\,\,\,t}=a(r)=\frac{\dot{f}\left(r\right)}{2f\left(r\right)},\\
\,\\
T_{tr}^{\,\,\,\,\,r}=b(r)=\frac{\dot{f}\left(r\right)}{2},\\
\,\\
T_{t\theta_{i}}^{\,\,\,\,\,\theta_{i}}=c(r)=\frac{f\left(r\right)}{2r},\\
\,\\
T_{r\theta_{i}}^{\,\,\,\,\,\theta_{i}}=g(r)=-\frac{1}{2r},\\
\,\\
T_{t\theta_{i}}^{\,\,\,\,\,\theta_{j}}=e^{a\theta_{j}}e_{\,\,\theta_{i}}^{b}\varepsilon_{ab}d\left(r\right)=e^{a\theta_{j}}e_{\,\,\theta_{i}}^{b}\varepsilon_{ab}\frac{\kappa}{r},\\
\,\\
T_{r\theta_{i}}^{\,\,\,\,\,\theta_{j}}=e^{a\theta_{j}}e_{\,\,\theta_{i}}^{b}\varepsilon_{ab}h\left(r\right)=-e^{a\theta_{j}}e_{\,\,\theta_{i}}^{b}\varepsilon_{ab}\frac{\kappa}{rf\left(r\right)},
\end{cases}
\end{equation}
where $i,j=1,2$ with $i\neq j$, and we have made the identification $\left\{ \theta_{1},\,\theta_{2}\right\} =\left\{ \theta,\,\varphi\right\}$. Moreover, $\varepsilon_{ab}$ is the Levi-Civita symbol for 2 dimensions, and the dot $\dot{\,}$ represents the derivative with respect to the radial coordinate $r$.\\
Now, with the components of the metric and the torsion tensors, we can calculate the modified connection and therefore the Riemann tensor of Equation~\eqref{eq:16}, in order to obtain the acceleration. Moreover, we know that the $b_{0}$ and $\overline{b}_{0}$ are the lowest order in $\hbar$ of the general spinor state $\Psi$. Hence, we can use the fact that the most general form of a positive energy solution of the Dirac equation for $b_{0}$ and $\overline{b}_{0}$ is~\cite{Alsing:2009px}
\begin{equation}
b_{0}=\left(\begin{array}{c}
\cos\left(\frac{\alpha}{2}\right)\\
{\rm e}^{i\beta}\sin\left(\frac{\alpha}{2}\right)\\
0\\
0
\end{array}\right)\,,\,\,\,\,\,\,\,\overline{b}_{0}=\left(\begin{array}{cccc}
\cos\left(\frac{\alpha}{2}\right), & {\rm e}^{-i\beta}\sin\left(\frac{\alpha}{2}\right), & 0, & 0\end{array}\right)\,,
\end{equation}
where the angles $\alpha$ and $\beta$ give the direction of the spin of the particle at the lowest order in $\hbar$
\begin{equation}
\overrightarrow{n}=\left(\begin{array}{ccc}
\sin\left(\alpha\right)\cos\left(\beta\right), & \sin\left(\alpha\right)\sin\left(\beta\right), & \cos\left(\alpha\right)\end{array}\right).
\end{equation}

Before calculating the acceleration, let us use this form of the spinor to calculate the corresponding spin vector. Using Equation~\eqref{eq:14} we have
\begin{eqnarray}
\label{eqn:30}
&&\left(s_0\right)^{\mu}=\left(\begin{array}{c}
0\\
\,\\
-\sin\left(\alpha\right)\cos\left(\beta\right)\sqrt{f\left(r\right)}\\
\,\\
-\frac{\sin\left(\alpha\right)\sin\left(\beta\right)}{r}\\
\,\\
-\frac{\cos\left(\alpha\right)\csc\left(\theta\right)}{r}
\end{array}\right)\,,
\nonumber
\\
&&\,
\\
&&\left(s_0\right)_{\mu}=\left(\begin{array}{cccc}
0, & \frac{\sin\left(\alpha\right)\cos\left(\beta\right)}{\sqrt{f\left(r\right)}}, & r\sin\left(\alpha\right)\sin\left(\beta\right), & r\sin\left(\theta\right)\cos\left(\alpha\right)\end{array}\right).\nonumber
\end{eqnarray} 
We are now ready to calculate the acceleration components for the Reissner-Nordstr\"{o}m solution, which can be found in Appendix \ref{ap:1}.
It is worth mentioning that the only components of the torsion tensor which contribute to the acceleration are those related to the functions $d(r)$ and $h(r)$, which are precisely the ones that contribute to the axial vector of the torsion. This is important as a consistency check, because if we set the $\kappa$ constant to zero, the torsion tensor is inert, since the axial vector is zero, as it is expected.

The expressions for the four-acceleration components are complex and it is difficult to understand their behaviour intuitively. 
In this sense, it is interesting to study two relevant cases that simplify their interpretation:
\begin{itemize}
\item Low values of $\kappa$:\\
If we consider a realistic physical implementation of this solution, in order to avoid naked singularities, we expect small values of the parameter $\xi=\frac{\kappa}{m^{2}}$. Indeed, $\xi$ is the dimensionless parameter controlling the contribution
of the torsion tensor in the four-acceleration. Therefore, we can see that it is a good approximation to consider only up to first order in an expansion of the acceleration in terms of $\xi$. These results can be found in the Appendix~\ref{ap:2}.

\item Asymptotic behaviour:\\
It is interesting to study what occurs in the asymptotic limit $r\rightarrow\infty$, in order to observe which is the leading term of the corresponding limit and compare its strength with other effects on the particle. We calculate the following:
\begin{eqnarray}
\underset{r\rightarrow\infty}{\rm lim}a^{t}&\simeq&\frac{m^{2}\xi\hbar}{2m_{e}r}\left(\sin(\alpha)\sin(\beta)\theta'(s)+\sin(\theta)\cos(\alpha)\varphi'(s)\right),
\label{at}
\\
\label{ar}
\underset{r\rightarrow\infty}{\rm lim}a^{r}&\simeq&\frac{m^{2}\xi\hbar}{2m_{e}r}\left(\sin(\alpha)\sin(\beta)\theta'(s)+\sin(\theta)\cos(\alpha)\varphi'(s)\right),
\\
\underset{r\rightarrow\infty}{\rm lim}a^{\theta}&\simeq&\frac{m\hbar}{2m_{e}r^{3}}\left[-m\xi r'(s)\left(\sin(\alpha)\sin(\beta)+m^{2}\xi\cos(\alpha)\right)\right.
\nonumber
\\
&+&m\xi t'(s)\left(\sin(\alpha)\sin(\beta)+m^{2}\xi\cos(\alpha)\right)
\nonumber
\\
&-&\left.2\sin(\alpha)\cos(\beta)\sin(\theta)\varphi'(s)\right],
\\
\underset{r\rightarrow\infty}{\rm lim}a^{\varphi}&\simeq&\frac{m\hbar\csc(\theta)}{2m_{e}r^{3}}\left[m\xi r'(s)\left(m^{2}\xi\sin(\alpha)\sin(\beta)-\cos(\alpha)\right)\right.
\nonumber
\\
&+&m\xi t'(s)\left(\cos(\alpha)-m^{2}\xi\sin(\alpha)\sin(\beta)\right)
\nonumber
\\
&+&\left.2\sin(\alpha)\cos(\beta)\theta'(s)\right].
\end{eqnarray}

Where we have used the viability condition \eqref{via}, because as we shall see, it is a necessary condition for the semiclassical approximation. \\
We can observe in \eqref{at} and \eqref{ar} that the time and radial components follow a $r^{-1}$ pattern, while the angular components follow a $r^{-3}$ behaviour. Hence, in the first two components the torsion effect goes asymptotically to zero at a lower rate than the strength provided by the conventional gravitational field. Conversely, in the angular ones, it approaches zero at a higher rate.
\end{itemize} 

Going back to the general form of the four-acceleration components, given in Appendix \ref{ap:1}, it is interesting to analyse the two components of the acceleration that are non-zero in GR, $a^{\theta}$ and $a^{\varphi}$, in order to adquire a deeper understanding. They read
\begin{equation}
\label{plana1}
a^{\theta}|_{\kappa=0}=\frac{m\hbar\sin(\theta)}{2m_{e}r^{3}\sqrt{1-\frac{2m}{r}}}\left[\left(s_{0}\right)^{\varphi}r'(s)+2\left(s_{0}\right)^{r}\varphi'(s)\right],
\end{equation}
and
\begin{equation}
\label{plana2}
a^{\varphi}|_{\kappa=0}=\frac{m\hbar\csc(\theta)}{2m_{e}r^{3}\sqrt{1-\frac{2m}{r}}}\left[\left(s_{0}\right)^{\theta}r'(s)+2\left(s_{0}\right)^{r}\theta'(s)\right],
\end{equation}
where we have used the expression of the spin vector at the lowest order of $\hbar$~\eqref{eqn:30} to simplify the equations. As we can see, the two previous expressions are quite alike, and they can be made equal by establishing the substitutions $\sin(\theta)\leftrightarrow \csc(\theta)$, and $\varphi\leftrightarrow \theta$.
For both of the expression \eqref{plana1} and \eqref{plana2} we observe that the spin-gravity coupling acts as a {\it cross-product force} (\emph{e.g.} the magnetic force), in the sense that the acceleration is perpendicular to the direction of the velocity and the spin vector. \\
Now, to measure the torsion contribution in the acceleration we shall compare the acceleration for $\kappa=0$ and for arbitrary values of $\kappa$. In order to do so, we shall define a new dimensionless parameter as the fraction between the acceleration for a finite value of $\kappa$ and the one for $\kappa =0$
\begin{equation}
B^{\mu}(\kappa)=\frac{a^{\mu}}{a^{\mu}|_{\kappa=0}}.
\end{equation}
As we have stated before, the viability condition \eqref{via} implies that 
\begin{equation}
\cos(\alpha)\theta'(s)-\sin(\alpha)\sin(\beta)\sin(\theta)\varphi'(s)=0, 
\end{equation}
so $a^{t}|_{\kappa=0}$ and $a^{r}|_{\kappa=0}$ would vanish identically. This means that the $B^{\mu}$ parameter can only be defined for the angular coordinates.\\
Let us explore two explicit examples: 
\begin{itemize}

\item Example 1: we consider a BH of 24 solar masses and a particle located near the external event horizon in the $\theta=\pi/2$ plane, at a radial distance of $2m+\varepsilon$, where $\varepsilon=m/10$. The position in $\varphi$ is irrelevant because the acceleration does not depend on this coordinate. We assume that the particle has radial velocity equal to 0.8, and that the direction of the spin is in the $\varphi$ direction. The rest of the velocity components are zero except for $v^{t}=(8.8 \kappa +0.3)^{-1/2}$. It is clear from \eqref{plana1} and \eqref{plana2} that we can only calculate the relative acceleration in the $\theta$ direction, $B_{\theta}$.

\item Example 2: in this case the BH mass and the position of the particle are the same. The velocity is in the $\theta$ direction, and has the same modulus as before. Again, the rest of the components are zero except for 
$v^{t}=1.3 (8.8 \kappa +0.3)^{-1/2}$. The spin has only a radial component, therefore the acceleration would be in the $\varphi$ direction. Consequently, we can only calculate $B_{\varphi}$.

\end{itemize}

Both of them are shown in Figure~\ref{fig:1}, where we represent different components of $B^{\mu}$ in function of $\kappa$ for these two cases.

\begin{figure}
\centering
\includegraphics[width=1\linewidth]{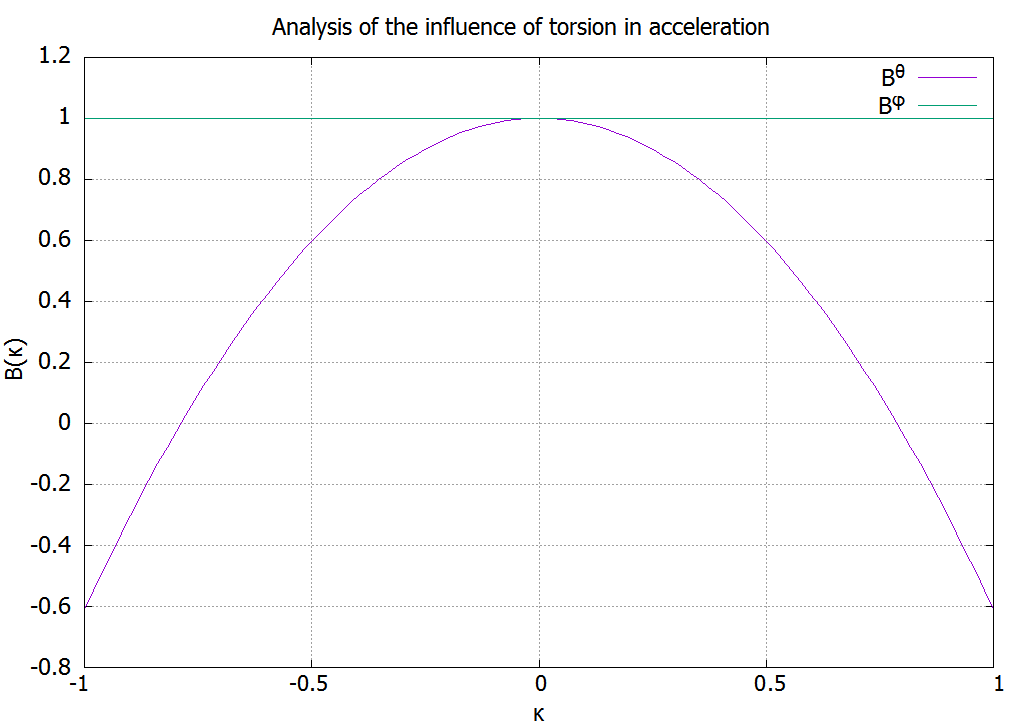}
\label{fig:sfig2}
\caption{We have considered a BH of 24 solar masses and a particle located near the external event horizon in the $\theta=\pi/2$ plane, at a radial distance of $2m+\varepsilon$, where $\varepsilon=m/10$. The position in $\varphi$ is irrelevant because the acceleration does not depend on this coordinate. The $B_{\theta}$ line represents the Example 1, where we assume that the particle has radial velocity equal to 0.8, and that the direction of the spin is in the $\varphi$ direction. The rest of the velocity components are zero except for $v^{t}=(8.8 \kappa +0.3)^{-1/2}$. The $B_{\varphi}$ line represents Example 2, where the velocity is in the $\theta$ direction, and has the same modulus as before. Again, the rest of the components are zero except for 
$v^{t}=1.3 (8.8 \kappa +0.3)^{-1/2}$.}
\label{fig:1}
\end{figure}

It is worthwhile to stress that there is nothing in the form of the metric or in the underlying theory that stops us from taking negative values of $\kappa$, in contrast with the usual electromagnetic version of the solution. 
We can observe that as we take higher absolute values for $\kappa$ we find that the acceleration caused by the spacetime torsion opposite sign with respect to the one produced by the gravitational coupling, reaching significant differences for large $\kappa$. This is expected since we have chosen a large value for the coupling between spin and torsion. \\

At this time, we go one step further and calculate the trajectory of the particle, using Equation~\eqref{eq:16} and having in mind the spinor evolution equation~\eqref{eq:15}, which can be rewritten as
\begin{equation}
v^{\mu}\tilde{\nabla}_{\mu}b_{0}=0.
\end{equation}
For the exact Reissner-Nordstr\"{o}m geometry sourced by torsion, we find some interesting features.
First, in order to maintain the semiclassical approximation and the positive energy associated with the spinor, 
two conditions must be fulfilled. On one hand we have
\begin{equation}
\dot{f} \left(r\right)\ll Lf\left(r\right),
\label{fcondition}
\end{equation}
where $L=3.3\cdot 10^{-8}\,\,{\rm m}^{-1}$, so that the derivative of $f\left(r\right)$ is at least two orders of magnitude below the value of $f\left(r\right)$ in the units we are using.\\
On the other hand, the second condition is
\begin{equation}
\label{via}
\left(\overline{b}_{0}\sigma^{r\beta}b_{0}\right)v_{\beta}=0.
\end{equation}
The first one, \eqref{fcondition}, is a consequence of the method that we are applying: if both curvature and torsion are strong then the interaction is also strong, and the WKB approximation fails. This is a purely metric condition, since it comes from the Levi-Civita part of the Riemann tensor, so it will be the same for all the spherically symmetric solutions. The second one, \eqref{via}, is the radial component of the so-called \emph{Pirani condition} \cite{Pirani:1956tn}.
In order to obtain the trajectory, we have solved the equations \eqref{eq:16} and \eqref{eq:15} numerically for a BH with 24 solar masses and $\kappa=10$, with the electron located outside the external event horizon in the $\theta=\pi/2$ plane. Furthermore, we have assumed that the electron has radial velocity of 0.9 and initial spin aligned in the $\varphi$ direction. The results obtained are given in Figure~\ref{fig:3}.\\

\begin{figure}[htpb]
\begin{subfigure}{1\textwidth}
\centering
\includegraphics[width=0.9\linewidth]{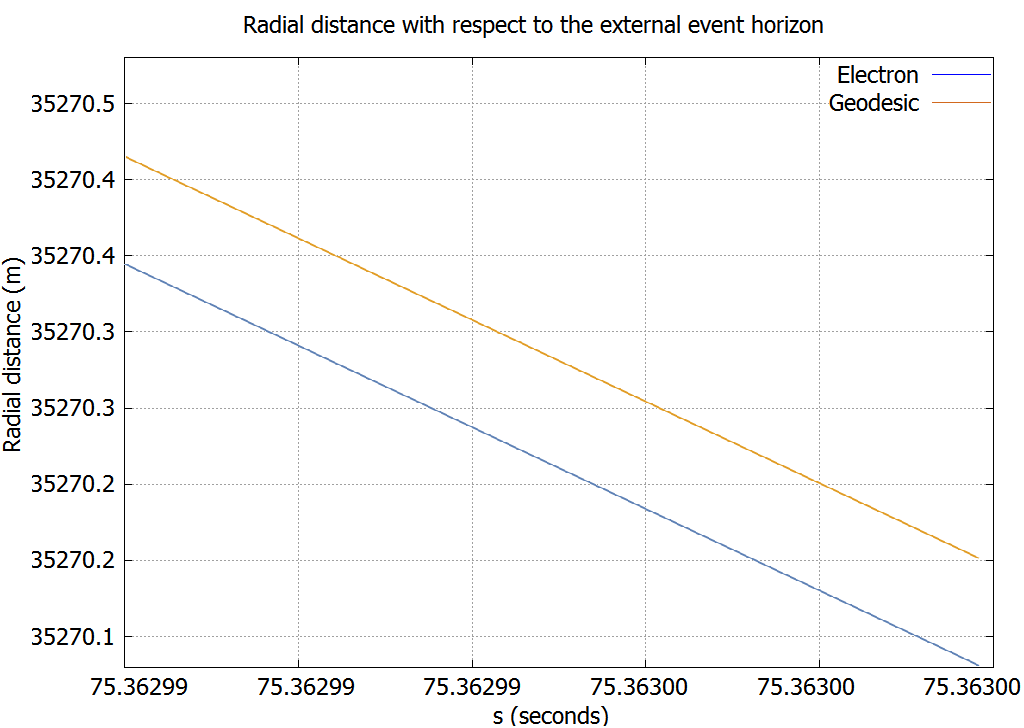}
\caption{Trajectory at 35 km of the event horizon.}
\end{subfigure}\\
\begin{subfigure}{1\textwidth}
\centering
\includegraphics[width=0.9\linewidth]{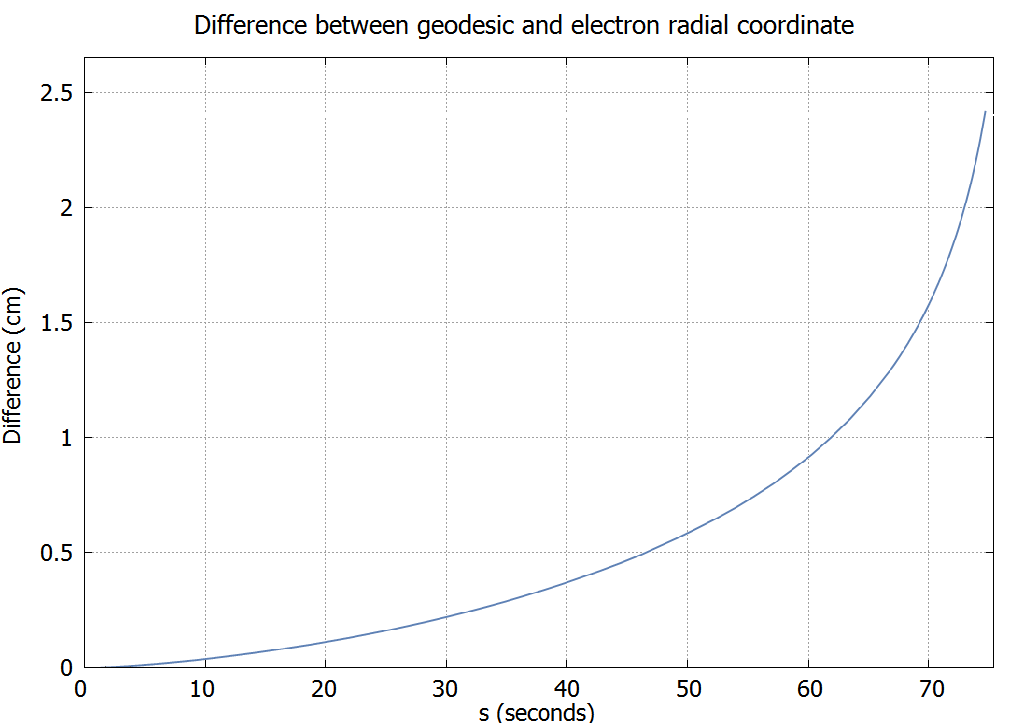}
\caption{Difference of the position of the two particles.}
\end{subfigure}
\caption{For this numerical computation we have considered a BH with 24 solar masses and $\kappa=10$, with the electron located outside the external event horizon in the $\theta=\pi/2$ plane. We have assumed an electron with radial velocity of 0.9 and initial spin aligned in the $\varphi$ direction.}
\label{fig:3}
\end{figure}

It is worthwhile to stress that any difference from geodesical behaviour in the radial coordinate is an exclusive consequence of the torsion-spin coupling, with no presence of geometric terms provided by GR, in virtue of the dependence on $\kappa$ existing in the corresponding acceleration component. Indeed, it is possible to have situations under which the geodesic curves and the trajectories of fermionic particles are distanced due to this effect, even by starting at the same point. Nevertheless, it is not strong enough to avoid their entrance to the BH region, hence they present a singular behaviour.

In the next section we shall study if there exists the possibility for some particles of avoiding the spacetime singularities present in GR in theories endowed with a non-symmetric connection.

\section{Singularities}
\label{3.2}

\emph{Can spin avert singularities?} This is a question that has been under study since Stewart and Hajicek proposed that the introduction of a torsion field, which one of its sources are half-spin particles, would lead to the avoidance of singularities in the spacetime~\cite{Stewart:1973ux}. Before exploring the different answers to this question, let us briefly review the concept of singularity in GR.

In a physical theory, a singularity is usually known as a ``place" where some of the quantities used in the description of a dynamical system diverge. As an example, we can find this situation evaluating the Coulomb potential $V=K\frac{q}{r}$ at the point $r=0$. This kind of behaviour appears because the theory is either invalid in the considered region or we have assumed a simplification. In particular, in the previous example the singularity arises due to the fact that we are considering the charged particle as point-like and omitting the quantum effects.

Following this potential definition, in GR one might expect to observe singularities when certain components of the tensors describing the spacetime geometry diverge. This would mean that curvature is higher than $\frac{1}{l_{p}^{2}}$, where $l_{p}$ is the Planck length, so we need to have into account the quantum effects, which are not considered in this theory. However, there are situations where this behaviour is given as a result of the election of the coordinate system. This is the case of the ``singularity" at $r=2M$ in the Schwarzschild solution. For this reason, another criterion, proposed by Penrose~\cite{Penrose:1964wq}, is used to define a spacetime singularity: geodesic incompleteness. The physical interpretation of this criterion is the existence of free falling observers that either appear or disappear out of nothing. This is clearly ``strange" enough to consider it as a sufficient condition to ensure the occurrence of singularities.

In general, all singularity theorems follow the same pattern, made it explicit by Senovilla in~\cite{Senovilla:2018aav}:
\begin{thm} 
(Pattern singularity ``theorem"). If the spacetime satisfies:\\
1) A condition on the curvature tensor. \\
2) A causality condition. \\
3) An appropriate initial and/or boundary condition. \\
Then, there are null or timelike inextensible incomplete geodesics. 
\end{thm}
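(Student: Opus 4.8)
The plan is to follow the classical Raychaudhuri-focusing argument that underlies every concrete singularity theorem (Penrose, Hawking, Hawking--Penrose) and show that the three hypotheses conspire to contradict geodesic completeness. The starting point is the Raychaudhuri equation for the expansion $\theta$ of the congruence of timelike (or null) geodesics emanating orthogonally from the initial/boundary set supplied by hypothesis~(3):
\begin{equation}
\frac{\dd\theta}{\dd\tau}=-\frac{\theta^{2}}{n-1}-\sigma_{\mu\nu}\sigma^{\mu\nu}+\omega_{\mu\nu}\omega^{\mu\nu}-R_{\mu\nu}v^{\mu}v^{\nu},
\nonumber
\end{equation}
where $\sigma_{\mu\nu}$ is the shear, $\omega_{\mu\nu}$ the vorticity (which vanishes for a hypersurface-orthogonal congruence), and $v^{\mu}$ the tangent. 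Hypothesis~(1) in its sharpest form is the convergence condition $R_{\mu\nu}v^{\mu}v^{\nu}\ge 0$ along causal geodesics, which through the field equations of the gravitational theory at hand translates into an energy condition on the matter; it makes the right-hand side manifestly non-positive. A standard comparison-ODE estimate then shows that if $\theta$ is negative somewhere --- exactly what hypothesis~(3) delivers: a trapped surface forces $\theta<0$ on both orthogonal null families, a suitable compact Cauchy surface forces $\theta<0$, a focal point of a point does likewise --- then $\theta\to-\infty$ within a bounded affine distance, i.e.\ a conjugate or focal point must form, \emph{unless} the geodesic is incomplete and simply runs out before that happens.

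Next I would invoke the variational characterisation of maximising causal geodesics: a causal geodesic segment ceases to be locally length-maximising once it passes its first conjugate (or focal) point. Hypothesis~(2), the causality condition --- global hyperbolicity, or the weaker chronology and strong-causality assumptions used in the Hawking--Penrose setting together with the existence of the relevant achronal set --- is precisely what guarantees that between the appropriate pair of sets a maximising causal geodesic does exist, and that the Cauchy development and horismos structure are well enough behaved to carry the argument. Combining the two ingredients yields the contradiction: the causality hypothesis produces an inextensible maximising causal geodesic whose length is bounded below by a definite quantity fixed by the geometry of the boundary set, while the focusing argument forbids any \emph{complete} geodesic from that set to maximise past the focal distance. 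Hence the geodesic cannot be complete; reading the contrapositive, the three hypotheses imply the existence of at least one inextensible, incomplete timelike or null geodesic.

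The main obstacle --- and the reason the statement is phrased as a ``pattern'' rather than a single clean theorem --- is that the bridge between the abstract hypotheses and the focusing contradiction is not uniform: each concrete instantiation requires a tailored packaging of (2) and (3) (trapped surface plus non-compact Cauchy surface for Penrose; compact spacelike Cauchy surface with $\theta<0$ for Hawking; the mixed set-up for Hawking--Penrose), and the delicate technical input --- existence and regularity of maximising geodesics, semicontinuity of Lorentzian distance under limits, the structure of Cauchy horizons --- lives entirely in Lorentzian causal theory rather than in the Raychaudhuri estimate itself. In the torsionful context of this thesis there is the additional subtlety that the convergence condition must either be imposed on the full non-Riemannian geometry or recast in terms of its Levi-Civita part, since it is the Levi-Civita geodesics that the pattern theorem constrains; making that translation careful --- and checking which energy-momentum-type source actually controls $\mathring{R}_{\mu\nu}v^{\mu}v^{\nu}$ once torsion is integrated out --- is exactly the work to be carried out in the sections that follow.
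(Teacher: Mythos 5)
The paper offers no proof of this statement: it is quoted verbatim as Senovilla's schematic ``pattern'' for singularity theorems, with deliberately unspecified hypotheses (note the scare quotes around ``theorem''), and the thesis immediately moves on to the concrete Proposition and Theorem whose proofs are deferred to the publication P1. Your outline --- Raychaudhuri focusing under the curvature condition, the trapped/boundary set forcing $\theta<0$, and the causality hypothesis supplying a maximising causal geodesic whose existence contradicts completeness --- is the correct and standard argument underlying every concrete instantiation of the pattern, and you rightly observe that it cannot be turned into a single uniform proof because each instantiation packages hypotheses (2) and (3) differently. Your closing remark about recasting the convergence condition in terms of the Levi-Civita part in the torsionful setting is also exactly the point the thesis exploits later (fermions follow non-geodesic timelike curves, so the theorem is reformulated via black-hole regions rather than geodesic incompleteness). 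In short: there is nothing to check your proof against, but as a reconstruction of the standard machinery it is sound.
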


In order to answer if the introduction of new degrees of freedom in modified theories can avoid the appearance of singularities, one can study if the standard conditions exposed in this theorem may change in such a theory with respect to GR. In this sense, in the following we aim at showing, based in {\bf{\nameref{P1}}}, that \emph{in a strongly asymptotically predictable spacetime the conditions for having a singular trajectory for any massive particle in theories with torsion are the same as in GR}. 

\subsection{The singularity theorem}

In order to review the singularity theorem in {\bf{\nameref{P1}}}, we need to introduce some additional definitions. We know intuitively that the existence of incomplete null geodesics usually leads to the appearance of BHs, the latter may be understood as regions of the spacetime beyond which an inside observer cannot escape. This applies to all particles following any timelike and null curves, not just geodesics. This is known as the \emph{cosmic censorship conjecture}, which was introduced by Penrose in 1969. It basically states that singularities cannot be \emph{naked}, meaning that they cannot be seen by an outside observer. However, how can this concept be expressed mathematically? The answer lies in the concept of \emph{conformal compactification}, which is defined as~\cite{Frauendiener:2000mk}
\begin{defn}
Let $\left(M,g\right)$ and $\left(\tilde{M},\,\tilde{g}\right)$ be two spacetimes. Then $\left(\tilde{M},\,\tilde{g}\right)$ is said to be a conformal compactification of $M$ if and only if the following properties are met:
\begin{enumerate}
\item $\tilde{M}$ is an open submanifold of $M$ with smooth boundary $\partial\tilde{M}=\mathcal{J}$. This boundary is commonly referred as \emph{conformal infinity}.
\item There exists a smooth scalar field $\Omega$ on $\tilde{M}$, such that $\tilde{g}_{\mu\nu}=\Omega^{2}g_{\mu\nu}$ on $M$, and so that $\Omega=0$ and its gradient $d\Omega\neq 0$ on $\mathcal{J}$.
\end{enumerate}
If additionally, every null geodesic in $\tilde{M}$ acquires a future and a past endpoint on $\mathcal{J}$, the spacetime $\left(\tilde{M},\,\tilde{g}\right)$ is denoted \emph{asymptotically simple}. Moreover, if the Ricci tensor is zero in a neighbourhood of $\mathcal{J}$, the spacetime is said to be \emph{asymptotically empty}.
\end{defn} 
In a conformal compactification, $\mathcal{J}$ is composed of two null hypersurfaces, $\mathcal{J}^{+}$ and $\mathcal{J}^{-}$, known as \emph{future null infinity} and \emph{past null infinity}, respectively.

In order to establish the definition of a BH, we need to introduce two additional concepts, namely \cite{Wald:1984rg}
\begin{defn}
A spacetime $\left(M,g\right)$ is said to be \emph{asymptotically flat} if there is an asymptotically empty spacetime $\left(M',g'\right)$ and a neighbourhood $\mathcal{U}'$ of $\mathcal{J}'$, such that $\mathcal{U}'\cap M'$ is isometric to an open set $\mathcal{U}$ of $M$. 
\end{defn} 
\begin{defn}
Let $\left(M,g\right)$ be an asymptotically flat spacetime with conformal compactification $\left(\tilde{M},\,\tilde{g}\right)$. Then $M$  is called \emph{(future) strongly asymptotically predictable} if there is an open region $\tilde{V}\subset\tilde{M}$, with $\overline{J^{-}\left(\mathcal{J}^{+}\right)\cap M}\subset\tilde{V}$, such that $\tilde{V}$ is globally hyperbolic. 
\end{defn}
This definition does not require the condition of endpoints of the null geodesics, meaning that these types of spacetimes can be singular. Nevertheless, if a spacetime is asymptotically predictable, then the singularities are not naked, \emph{i.e.} they are not visible from $\mathcal{J}^{+}$.

At this time we are ready to present what we understand by a BH:
\begin{defn}
\label{BlackHole}
A strongly asymptotically predictable spacetime $\left(M,g\right)$ is said to contain a BH if $M$ is not contained in $J^{-}\left(\mathcal{J}^{+}\right)$. The BH region, $B$, is defined to be $B=M-J^{-}\left(\mathcal{J}^{+}\right)$ and its boundary, $\partial B$, is known as the \emph{event horizon}. 
\end{defn}
Intuitively, we think that a particle in a so-called \emph{closed trapped surface}\footnote{See \cite{Galloway:2010tx} or {\bf{\nameref{P1}}} for the definition of a closed trapped surface of arbitrary co-dimension. We have not included it in this thesis because it does not contribute to the main results and its introduction may be cumbersome.} cannot escape to $\mathcal{J}^{+}$, meaning that it is part of the BH region of the spacetime. Nevertheless, this is not true in general. In the next proposition, we establish the conditions that ensure the existence of BHs when we have a closed future trapped submanifold of arbitrary co-dimension
\begin{prop}
\label{prop:sin}
Let $\left(M,g\right)$ be a strongly asymptotically predictable spacetime of dimension $n$, and $\Sigma$ a closed future trapped submanifold of arbitrary co-dimension $m$ in $M$. If the \emph{curvature condition}\footnote{\'Idem.} holds along every future directed null geodesic emanating orthogonally from $\Sigma$, then $\Sigma$ cannot intersect $J^{-}\left(\mathcal{J}^{+}\right)$ (i.e. $\Sigma$ is in the BH region $B$ of $M$ \footnote{Analogously, it can be defined a past strongly asymptotically predictable spacetime, and then the proposition would predict the existence of white hole (WH) regions, $B=M-J^{+}\left(\mathcal{J}^{-}\right)$, which are regions where particles cannot enter, only exit.}).
\end{prop}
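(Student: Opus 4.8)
\medskip
\noindent\textbf{Proof proposal.} The plan is to run the classical Penrose--Wald argument showing that a trapped surface lies inside the black-hole region, adapted here to trapped submanifolds of arbitrary co-dimension, and to argue by contradiction: assuming $\Sigma\cap J^-(\mathcal{J}^+)\neq\emptyset$, one is forced to exhibit a null geodesic issuing orthogonally from $\Sigma$ that stays on the boundary $\dot J^+(\Sigma)$ all the way out to $\mathcal{J}^+$, which is impossible once the focusing produced by the curvature condition is taken into account. Throughout, the torsion plays no direct causal role: it enters only through the curvature condition, which I would first reduce to the Levi-Civita null convergence condition $\mathring R_{\mu\nu}k^\mu k^\nu\geq 0$ (checking that, under the physically motivated energy conditions, the extra torsion terms in the field equations do not spoil this), so that the remainder of the proof is purely causal-theoretic and essentially identical to the General Relativity case.

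First I would consider the congruence of future-directed null geodesics emanating orthogonally from $\Sigma$ and write the generalized Raychaudhuri equation for the associated null expansion $\theta$ (the trace of the null shape operator for a null normal $k$). For a co-dimension $m$ submanifold this has, schematically, the form $\frac{d\theta}{d\lambda}=-c\,\theta^2-\hat\sigma^2-\mathring R_{\mu\nu}k^\mu k^\nu$ with $c>0$ and $\hat\sigma^2\geq 0$ a generalized shear/normal-fundamental-form term, so that $\mathring R_{\mu\nu}k^\mu k^\nu\geq 0$ together with $\theta|_\Sigma<0$ (the trapped condition) drives $\theta\to-\infty$, i.e.\ yields a focal point of $\Sigma$, within an affine parameter controlled solely by $\theta|_\Sigma$. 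Since $\Sigma$ is \emph{compact}, $\theta|_\Sigma$ is bounded away from $0$, so there is a uniform affine bound $\lambda_0$ beyond which every orthogonal future null geodesic has acquired a focal point. Invoking the standard fact that a null geodesic enters $I^+(\Sigma)$ --- and hence leaves $\dot J^+(\Sigma)$ --- past its first focal point, it follows that $\dot J^+(\Sigma)$ is contained in the image under the normal null exponential map of the compact set consisting of the bundle of future null normal directions over $\Sigma$ times $[0,\lambda_0]$; hence $\dot J^+(\Sigma)$ is \emph{compact}.

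Next I would exploit strong asymptotic predictability. The closure of $J^-(\mathcal{J}^+)\cap M$ lies inside a globally hyperbolic region $\tilde V$, which both excludes causal pathologies and forces $\mathcal{J}^+$ to be connected. If $\Sigma$ met $J^-(\mathcal{J}^+)$, then $I^+(\Sigma)$ would contain part of $\mathcal{J}^+$, while, $\Sigma$ being compact, $\mathcal{J}^+$ also contains points outside $J^+(\Sigma)$; by connectedness $\mathcal{J}^+\cap\dot J^+(\Sigma)\neq\emptyset$. A point of this intersection lies on a null generator of $\dot J^+(\Sigma)$ which, traced to the past inside $\tilde V$, cannot be past-inextendible, since it would then have to leave the compact set $\dot J^+(\Sigma)$; therefore it has a past endpoint on $\Sigma$ and meets $\Sigma$ orthogonally. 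But this generator both reaches $\mathcal{J}^+$ --- hence has infinite affine length in the physical metric, by the asymptotic-flatness hypotheses built into the conformal completion --- and develops a focal point at affine parameter $\leq\lambda_0$, past which it has already entered $I^+(\Sigma)$ and is no longer on $\dot J^+(\Sigma)$. This contradiction gives $\Sigma\subset B=M-J^-(\mathcal{J}^+)$, and the white-hole statement of the footnote follows by reversing the time orientation.

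The hard part will be the first technical step: establishing the focusing inequality for trapped submanifolds of \emph{arbitrary} co-dimension in the presence of a non-symmetric connection. Concretely one must (i) make the curvature condition precise in the torsion theory and show it reduces to the metric null convergence condition under reasonable matter assumptions, and (ii) carry out the higher-codimension Raychaudhuri computation, controlling the generalized shear and normal fundamental form contributions so that the terms claimed to be non-negative really are. A smaller but genuine subtlety is the behaviour of the generators of $\dot J^+(\Sigma)$ in a neighbourhood of $\mathcal{J}^+$ and the affine-completeness claim, which must be extracted carefully from the definitions of asymptotically simple/empty spacetimes.
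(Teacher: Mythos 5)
Your proposal is correct and follows essentially the same route as the paper's proof (which the thesis defers to publication \textbf{\nameref{P1}}): the Galloway--Senovilla focusing result for closed trapped submanifolds of arbitrary co-dimension, giving a uniform affine bound on focal points and hence compactness of $\dot J^{+}\left(\Sigma\right)$, combined with the standard Wald-style contradiction that a generator of $\dot J^{+}\left(\Sigma\right)$ reaching $\mathcal{J}^{+}$ would have to be orthogonal to $\Sigma$, focal-point-free, and affinely complete. Your identification of the genuinely delicate steps (the higher-codimension Raychaudhuri inequality and the reduction of the torsionful curvature condition to a purely causal statement) matches where the technical work actually lies in that reference.
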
 
\begin{proof}
The proof can be found in {\bf{\nameref{P1}}}.
\end{proof}

The reader might be wondering how the latter proposition is related to the actual singular behaviour of the different kind of particles. \\
From the minimal coupling procedure, it follows that particles without internal spin, which as we have seen are represented by scalar fields, do not feel torsion due to the fact that the covariant derivative of a scalar field is just its partial derivative. Also, since it is impossible to perform the minimally coupling prescription for the Maxwell's field preserving the $U\left(1\right)$ gauge invariance, the Maxwell equations are the same than the ones present in GR. Therefore, they move following null extremal curves (\emph{i.e.} null geodesics), so that the causal structure is determined by the metric structure, just like in GR. This means that the usual test particles follow the geodesic curves provided by the Levi-Civita connection, which allow us to directly generalise the singularity theorems for this kind of trajectories. But, what happens when we consider fermions, which are coupled to the spacetime torsion?\\
All the analysis of the trajectories that follow these kinds of particles, that we have mentioned at the beginning of the previous section, have one thing in common. They all experiment a corrective factor with respect to geodesical behaviour of the form

\begin{equation}
\label{spineq}
a^{\mu}=v^{\rho}\mathring{\nabla}_{\rho}v_{\mu}=C\frac{\hbar}{m}f\left({R}_{\rho\sigma\lambda}\,^{\mu}S^{\rho\sigma}v^{\lambda}+K_{\sigma\rho}\,^{\mu}p^{\rho}v^{\sigma}\right),
\end{equation}
where $C$ is a constant, $f$ is some function, $m$ is the mass of the particle and $S^{\rho\sigma}$ describes the internal spin tensor, that it is related to the spin $s^{\mu}$ of the particle as we studied in the previous section.

It is clear from our analysis so far that massive fermionic particles do not follow timelike geodesics. Nevertheless, independently of how torsion affects these particles,  we know they will follow timelike curves and we assume that locally (in a normal neighbourhood of a point) nothing can be faster than light (null geodesics). Accordingly, it would be interesting to see under which circumstances we have incompleteness of non-geodesical timelike curves. In order to address this, we recall the definition of a n-dimensional BH and WH, that is Def. \ref{BlackHole}. From this definition, we conclude that if these kinds of structures exist in our spacetime, we would have timelike curves (not timelike geodesics exclusively) that do not have endpoints in the conformal infinity, since for the case of BHs the spacetime $M$ is not contained in $J^{-}\left(\mathcal{J}^{+}\right)$, while for WHs, $M$ is not contained in $J^{+}\left(\mathcal{J}^{-}\right)$. Considering these lines, we establish the following theorem
\begin{thm}
\label{thmsing}
Let $\left(M,g\right)$ be a strongly asymptotically predictable spacetime of dimension $n$ and $\Sigma$ a closed future trapped submanifold of arbitrary co-dimension $m$ in $M$. If the curvature condition holds along every future directed null geodesic emanating orthogonally from $\Sigma$, then some timelike curves in $M$ would not have endpoints in the conformal infinity, hence $M$ is a singular spacetime.
\end{thm}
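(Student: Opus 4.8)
The plan is to reduce Theorem \ref{thmsing} to Proposition \ref{prop:sin} together with an elementary causality argument that confines timelike worldlines to the black-hole region. By Definition \ref{BlackHole} the black-hole region is $B = M - J^{-}(\mathcal{J}^{+})$, and under the hypotheses of the theorem — $(M,g)$ strongly asymptotically predictable, $\Sigma$ a closed future trapped submanifold of co-dimension $m$, and the curvature condition holding along every future-directed null geodesic emanating orthogonally from $\Sigma$ — Proposition \ref{prop:sin} already guarantees that $\Sigma \cap J^{-}(\mathcal{J}^{+}) = \emptyset$, i.e.\ $\Sigma \subset B$.

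First I would establish a \emph{trapping lemma}: if $p \in B$ and $\gamma$ is any future-directed causal curve with $\gamma(0)=p$, then $\gamma \subset B$. Indeed, $J^{-}(\mathcal{J}^{+})$ is a past set, so if some $\gamma(s)\in J^{-}(\mathcal{J}^{+})$ we would get $p=\gamma(0)\in J^{-}(\gamma(s))\subset J^{-}(\mathcal{J}^{+})$, contradicting $p\in B$. In particular, every future-directed timelike curve through a point of $\Sigma$ — and, by the analysis of Section \ref{3.1}, the worldline of a fermion is precisely such a timelike, though non-geodesic, curve — remains inside $B$ for all future time, and, extended to be future-inextendible in $M$, it still lies in $B$.

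Next I would pass to the conformal completion $(\tilde{M},\tilde{g})$: a curve contained in $B$ never enters $J^{-}(\mathcal{J}^{+})$, hence never approaches $\mathcal{J}^{+}$, and — being confined to the interior black-hole region rather than to the asymptotic region — acquires no endpoint on the conformal boundary $\mathcal{J}$ (nor on $i^{0}$ or $i^{+}$). Thus there exist future-inextendible timelike curves in $M$ with no endpoint at conformal infinity. Since the Penrose-type focusing argument underlying Proposition \ref{prop:sin} simultaneously delivers an inextendible incomplete null geodesic, $M$ is a singular spacetime, and the fermionic worldlines are trapped by this singularity exactly as the null geodesics are; this is the precise sense in which the conditions for singular behaviour of massive fermionic particles coincide with those for null geodesics in GR. Running the past-oriented version of Proposition \ref{prop:sin} gives the analogous statement for white-hole regions.

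The main obstacle is the final step: making precise, for curves that are not geodesics, the sense in which ``no endpoint at conformal infinity'' entails that $M$ is singular. One must pin down exactly which notion of singularity is in play (Penrose geodesic incompleteness for the null generators, versus the statement that the timelike worldlines cannot be completed so as to reach $\mathcal{J}$), verify that the trapping lemma and the construction of Proposition \ref{prop:sin} are genuinely dimension-independent and valid for trapped submanifolds of arbitrary co-dimension $m$, and check that the causality conditions used here are compatible with those already built into strong asymptotic predictability. None of this requires new computation, but together these points are where the argument must be made watertight.
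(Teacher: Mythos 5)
Your argument is correct and is essentially the proof the paper intends: Proposition \ref{prop:sin} places $\Sigma$ in the black-hole region $B=M-J^{-}\left(\mathcal{J}^{+}\right)$, and since no point of $B$ lies in the causal past of $\mathcal{J}^{+}$, every future-directed timelike curve through $\Sigma$ fails to acquire an endpoint on conformal infinity — your trapping lemma just makes explicit what the paper leaves implicit. The obstacle you flag at the end is not one in the paper's framework, because there ``some timelike curves have no endpoint on conformal infinity'' is itself adopted as the operative criterion of singular behaviour for non-geodesic (fermionic) worldlines, argued to be more physically relevant than geodesic incompleteness.
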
 

One might ask if one of the aforementioned incomplete timelike curves may actually represent the trajectory of a spinning particle coupled to the torsion tensor. From Equation (\ref{spineq}), which represents the non-geodesical behaviour, we see that the only possible way that all the trajectories have endpoints in the conformal infinity is having infinite values for the curvature and torsion tensors near the event horizon, which in a physically plausible scenario is not possible. This is why we strongly believe that Theorem \ref{thmsing} is a more physically relevant theorem than the ones based on geodesic incompleteness for the singular behaviour of such particles, since it is strongly related to the actual trajectories of fermions in theories with torsion. \\

With this reasoning we have showed that in strongly asymptotically predictable spacetimes one cannot avoid the occurrence of singularities, even in the presence of torsion, provided that the conditions of Theorem \ref{thmsing} hold. On the opposite case, it is possible to find non-singular configurations assuming some of the conditions are violated \cite{Poplawski:2012ab,Lucat:2015rla}.

\section{Birkhoff theorem}
\label{3.3}

In the previous section we have seen how the existence of BHs is a good criteria for predicting the singular behaviour of all kinds of particles/fields. In this section we shall study what kind of BHs we might expect in PG gravity. Specifically, we will explore if the Birkhoff theorem~\cite{birkhoff1923boundary} is fulfilled. As widely known, such a theorem states that any spherically symmetric solution of the vacuum field equations must be static and asymptotically flat and therefore the only exterior vacuum solution, {\it i.e.}, the spacetime outside a spherical, non rotating, gravitating body, corresponds to the torsionless Schwarzschild spacetime.
This fundamental result in GR \cite{Hawking:1973uf} obviously deserves a deep analysis in every suitable modified gravity theory. Indeed, spherically symmetric vacuum solutions would describe the exterior spacetime around spherically symmetric stars (or BHs) and would help to a better understanding of the measurements coming from weak-field limit tests like the bending of light, the perihelion shift of planets, frame dragging experiments and the Newtonian and post-Newtonian limits of competing gravitational theories, as well as other measurements involving strong-gravity regimes, such as the recently discovered gravitational waves ({\it c.f.} \cite{Abbott:2016blz} and subsequent articles by LIGO/VIRGO collaborations and \cite{Barack:2018yly} for an extensive review of the roadmap of the subject).\\
Moreover, the obtention of vacuum spherically symmetric space-time solutions that are not Schwarzschild-like would provide us with some valuable information about the extra degrees of freedom that we are introducing with the considered modification. Specialised literature has therefore devoted an increasing interest to study the validity of the Birkhoff theorem in different classes of extended theories of gravity, see \emph{e.g.} \cite{Clifton:2006ug,Nzioki:2009av,delaCruzDombriz:2009et,Ferraro:2011ks,Hui:2012qt,Sotiriou:2013qea}.\\

Regarding PG theories, at the beginning of the 1980s some proofs of the Birkhoff theorem were developed for specific models, enriching the PG gravity literature~\cite{Neville:1979fk,Ramaswamy:1979zz,Rauch:1980qj,Bakler:1980is}. Moreover, two weakened versions of the Birkhoff theorem were proposed, either assuming asymptotic flatness of the solutions~\cite{Rauch:1981ua} or considering invariance under spatial reflections in addition to the spherical symmetry~\cite{Rauch:1980qj}.
The most relevant Birkhoff theorem proof was made by Nieh and Rauch in \cite{Rauch:1981tva}, where results from previous literature were summarised. There, authors found two general classes of PG theories in which the theorem holds. Nevertheless, such a remarkable piece of research did not clarify whether these are the only classes of PG theories for which the theorem holds.
Specifically the gravitational Lagrangians of theories considered in \cite{Rauch:1981tva} were either of the form
\begin{equation}
\label{Rauch:1}
\mathcal{L}_{1}=-\lambda R+\alpha R^{2},
\end{equation}
or
\begin{eqnarray}
\label{Rauch:2}
\mathcal{L}_{2}&=&-\lambda R+\frac{1}{12}\left(4\alpha+\beta+3\lambda\right)T_{\alpha\beta\gamma}T^{\alpha\beta\gamma}+\frac{1}{6}\left(-2\alpha+\beta-3\lambda\right)T_{\alpha\beta\gamma}T^{\beta\gamma\alpha}
\nonumber
\\
&+&\frac{1}{3}\left(-\alpha+2\gamma-3\lambda\right)T_{\,\,\beta\alpha}^{\beta}T_{\gamma}^{\,\,\gamma\alpha}\,.
\end{eqnarray}
It is worthwhile to mention that the analysis that we have made in Section \ref{2.3} allows us to prove straightforwardly that these two theories fulfill the Birkhoff theorem. Indeed, the Lagrangian $\mathcal{L}_{1}$ is the same as the one given in \eqref{eq:PGactionR2}, where the parameters have been chosen as $a_0=-\lambda$, $b_1=\alpha$, and $a_1=a_2=a_3=0$. We have shown that this is equivalent to GR plus a non-propagating scalar degree of freedom, since with this choice of parameters the term $m^2_T$, which multiplies the kinetic term in \eqref{eq:R2Einstein}, is zero. Hence, the Birkhoff theorem clearly applies. \\
On the other hand, we can see that the Lagrangian $\mathcal{L}_{2}$ is equal to the one in Eq. \ref{Stablenonprop}, where the coefficients have been taken as $a_0=-\lambda$, $b_1=0$, $a_1=\frac{1}{12}\left(4\alpha+\beta+3\lambda\right)$, $a_2=\frac{1}{6}\left(-2\alpha+\beta-3\lambda\right)$, and $a_3=\frac{1}{3}\left(-\alpha+2\gamma-3\lambda\right)$. As we explained, this theory does not have any observable physical difference with respect to GR, so the Birkhoff theorem is fulfilled.

Summarizing, in these two configurations the Birkhoff theorem holds because only the graviton propagates. In the next subsections we shall explore if there are more complex PG actions where the Birkhoff theorem holds. 

\subsection{Birkhoff theorem in stable configurations}

In Section \ref{2.3} we established that only the two scalar degrees of freedom present in the quadratic Poincar\'e Gauge gravity action \eqref{actionPGT} can propagate without introducing instabilities. The restricted action that describes these two scalars propagating at the same time is given in \eqref{eq:biscalar3} (or \eqref{eq:biscalar2} in the Einstein frame). Before going any further, we need to clarify a few concepts. In modified theories that introduce new degrees of freedom it is very easy to disprove the Birkhoff theorem in its more strict form. This is because the addition of propagating degrees of freedom into the field equations usually breaks the arguments that are used in GR to prove that an spherically symmetric solution in vacuum will be neccesarily static, which is the first statement of the Birkhoff theorem. Moreover, many of these modifications are introduced in order to reproduce the accelerating expansion of the Universe, hence mimicking a \emph{cosmological constant}. This implies that the de Sitter–Schwarzschild metric \cite{1950SRToh..34..160N} will be a solution of the vacuum field equations, which once again goes against the Birkhoff theorem.

Nevertheless, there exists a somewhat ``modification'' of the theorem, known as the \emph{no-hair theorem}\footnote{In fact, there is not a general rigorous proof of this theorem, so strictly speaking it should be denoted as a conjecture, as it is usual in the mathematical literature.}, that is quite relevant in any extension of GR. This theorem states that the external gravitational and electromagnetic fields of a stationary black hole (a black hole that has settled down into its ``final'' state) are determined uniquely by the BH mass $M$, charge $Q$, and intrinsic angular momentum $L$ \cite{Misner:1974qy}. If the extra degrees of freedom of the theory introduces new properties (``hair'') with respect to the BHs of GR, we will say that BHs are hairy in this theory. Actually, even if the BHs of the modified theory remain ``bald'', conceptually there might be physical differences between the definition of the three mentioned observables. A clear example is the BH solution given in Subsection \ref{3.1.2}, where the charge is not of electromagnetic nature, but it is sourced by torsion.

Going back to stable quadratic PG theories, we can easily establish that as long as we consider only the propagation of one of the two scalars, the BH solutions will not have hair. This is due to the result obtained by Sotiriou and Faraoni in \cite{Sotiriou:2011dz}, where they proved that generalised Brans-Dicke theories do not introduce any new observables in the BH solutions\footnote{Given that the solutions are asymptotically flat.}.\\
With respect to the bi-scalar theory, since there is an interaction term between the two scalars it is not straightforward to show if the previous result is going to hold. We shall study it in the following by looking at the field equations, which are obtained by performing variations of the action with respect to the metric and the two scalar modes. We will obtain these equations in the Einstein frame, so that there are no second derivatives of the scalar fields in the variation with respect to the metric $\tilde{g}$. They have the following form:

\begin{itemize}

\item Variations with respect to the metric (or \emph{Einstein Equations}):
\begin{equation}
\frac{\delta S}{\delta \tilde{g}^{\mu\nu}}=0\Longrightarrow\frac{\delta\mathcal{L}}{\delta \tilde{g}^{\mu\nu}}-\frac{1}{2}\tilde{g}_{\mu\nu}\mathcal{L}=0.
\end{equation}
Hence we have
\begin{eqnarray}
\label{biscalar:Einstein}
\tilde{G}_{\mu\nu}&=&\frac{1}{a_{0}}\left(1-\frac{12M_{S}^{2}\left(\chi\right)}{A\left(\chi,\phi\right)}\right)\left(\partial_{\mu}\chi\partial_{\nu}\chi-\frac{1}{2}\tilde{g}_{\mu\nu}\partial_{\rho}\chi\partial^{\rho}\chi\right)
\nonumber
\\
&&-\frac{18\alpha^{2}M_{T}^{2}\left(\chi\right)}{A\left(\chi,\phi\right)\chi}\left(\partial_{\mu}\phi\partial_{\nu}\phi-\frac{1}{2}\tilde{g}_{\mu\nu}\partial_{\rho}\phi\partial^{\rho}\phi\right)
\nonumber
\\
&&+\frac{48\alpha^{2}\phi}{A\left(\chi,\phi\right)\chi}\left(\partial_{\left(\mu\right.}\chi\partial_{\left.\nu\right)}\phi-\frac{1}{2}\tilde{g}_{\mu\nu}\partial_{\rho}\chi\partial^{\rho}\phi\right)
\nonumber
\\
&&+\frac{a_{0}}{2}\tilde{g}_{\mu\nu}\frac{\mathcal{U}\left(\chi,\phi\right)}{\chi^{2}},
\end{eqnarray}
where the explicit expressions of $\mathcal{U}\left(\chi,\phi\right)$, $M_{T}^{2}\left(\chi\right)$ and $M_{S}^{2}\left(\chi\right)$ are given in Eq. \eqref{biscalarfunctions}, and 
\begin{equation}
A\left(\chi,\phi\right)=(4\alpha\phi)^{2}-9M_{S}^{2}(\chi)M_{T}^{2}(\chi).
\end{equation}

\item Variations with respect to the scalar $\chi$:
\begin{equation}
\frac{\delta S}{\delta\chi}=0\Longrightarrow\frac{\partial\mathcal{L}}{\partial\chi}-\partial_{\mu}\left(\frac{\partial\mathcal{L}}{\partial\left(\partial_{\mu}\chi\right)}\right)=0.
\end{equation}
Therefore we obtain
\begin{eqnarray}
\label{biscalar:chieq}
&&\overline{\mathcal{U}}\left(\chi,\phi\right)+\left(A\left(\chi,\phi\right)-12M_{S}^{2}\left(\chi\right)\right)\Box\chi+\frac{\left(12M_{S}^{2}\left(\chi\right)\right)^{2}-\left(4\alpha\phi\right)^{2}}{A\left(\chi,\phi\right)}\partial_{\rho}\chi\partial^{\rho}\chi
\nonumber
\\
&&+\frac{48a_{0}\alpha^{2}\phi}{\chi}\Box\phi+\frac{3a_{0}\alpha^{2}\left[H_{1}\left(\chi\right)-96\alpha^{2}M_{T}^{2}\left(\chi\right)\phi^{2}-384\alpha^{2}\chi\phi^{2}\right]}{A\left(\chi,\phi\right)\chi^{2}}\partial_{\rho}\phi\partial^{\rho}\phi
\nonumber
\\
&&+\frac{768\alpha^{2}M_{S}^{2}\left(\chi\right)\phi}{A\left(\chi,\phi\right)}\partial_{\rho}\chi\partial^{\rho}\phi=0,
\end{eqnarray}
where
\begin{eqnarray}
&&\overline{\mathcal{U}}\left(\chi,\phi\right)=A\left(\chi,\phi\right)\left[\frac{a_{0}^{2}\left(a_{0}^{2}-a_{0}\chi+4b_{1}\alpha\phi^{2}\right)}{2b_{1}\chi^{3}}\right],
\\
&&H_{1}\left(\chi\right)=18M_{S}^{2}\left(\chi\right)M_{T}^{2}\left(\chi\right)\left(3M_{T}^{2}\left(\chi\right)-8\chi\right)+\frac{9}{2}\chi M_{T}^{2}\left(\chi\right).
\end{eqnarray}

\item Variations with respect to the pseudo-scalar $\phi$:
\begin{equation}
\frac{\delta S}{\delta\phi}=0\Longrightarrow\frac{\partial\mathcal{L}}{\partial\phi}-\partial_{\mu}\left(\frac{\partial\mathcal{L}}{\partial\left(\partial_{\mu}\phi\right)}\right)=0.
\end{equation}
Accordingly we get
\begin{eqnarray}
&&-\frac{2a_{0}^{2}\alpha\phi A\left(\chi,\phi\right)}{\chi}+48a_{0}\alpha^{2}\phi\Box\chi-\frac{8\alpha^{2}\phi\left(H_{2}\left(\chi\right)+96a_{0}\alpha^{2}\phi^{2}\right)}{A\left(\chi.\phi\right)\chi}\partial_{\rho}\chi\partial^{\rho}\chi
\nonumber
\\
&&-36a_{0}\alpha^{2}M_{T}^{2}\left(\chi\right)\Box\phi+\frac{\left(24\alpha^{2}\right)^{2}a_{0}M_{T}^{2}\left(\chi\right)\phi}{A\left(\chi,\phi\right)}\partial_{\rho}\phi\partial^{\rho}\phi
\nonumber
\\
&&-\frac{6a_{0}\alpha^{2}\left[H_{3}\left(\chi\right)-96\alpha^{2}M_{T}^{2}\left(\chi\right)\phi^{2}-128\alpha^{2}\chi\phi^{2}\right]}{A\left(\chi,\phi\right)\chi}\partial_{\rho}\chi\partial^{\rho}\phi=0,
\end{eqnarray}
where
\begin{eqnarray}
&&H_{2}\left(\chi\right)=48\chi^{2}M_{S}^{2}\left(\chi\right)-18a_{0}\left(12M_{T}^{2}\left(\chi\right)M_{S}^{2}\left(\chi\right)-16M_{S}^{2}\left(\chi\right)\chi\right.
\nonumber
\\
&&\left.\phantom{aaaaaaa}+M_{T}^{2}\left(\chi\right)\chi\right),
\\
&&H_{3}\left(\chi\right)=\left(3M_{T}^{2}\left(\chi\right)\right)^{2}\left(6M_{S}^{2}\left(\chi\right)+\frac{\chi}{2}\right).
\end{eqnarray}

\end{itemize}

Let us stress that to obtain the field equations we have not considered extra matter apart from the scalar and pseudo-scalar field since we are interested in vacuum solutions. Now, in order to shed some light on the Birkhoff theorem and the no-hair theorem in this theory we shall consider the following spherically symmetric and static four-dimensional metric
\begin{equation}
{\rm d}s^{2}=-\psi\left(r\right){\rm d}t^{2}+\frac{1}{\psi\left(r\right)}{\rm d}r^{2}+r^{2}\left({\rm d}\theta^{2}+{\rm sin}^{2}\theta\,{\rm d}\varphi^{2}\right).
\label{ansatz:simp}
\end{equation}
Although this is not the most general metric that meets the mentioned properties, the results that its study brings shall help us prove if the Birkhoff and no-hair theorems apply in the bi-scalar theory \eqref{eq:biscalar2}. The torsion scalar and pseudo-scalar shall also depend only on the radial component in order to maintain spherical symmetry and staticity. \\
Taking these precepts into account we can obtain the following expression by adding the $\left(t,t\right)$ and the $\left(\theta,\theta\right)$ Einstein Equations
\begin{equation}
r^{2}\psi''\left(r\right)=2\psi\left(r\right)-2.
\end{equation}
This equation has the solution
\begin{equation}
\psi\left(r\right)=1+\frac{C_{1}}{r}+C_{2}r^{2},
\label{dSS}
\end{equation}
which is the well-known de Sitter-Schwarzschild metric, with $C_1$ and $C_2$ arbitrary constants. This result is already telling us that the Birkoff's theorem does not apply in the bi-scalar theory, since we can find a spherically symmetric solution that is different from pure Schwarzschild. Moreover, we can check that the de Sitter term of the solution \eqref{dSS} is a direct consequence of having a potential different from zero. Indeed, if we impose that $\mathcal{U}\left(\chi,\phi\right)=0$, hence finding a relation between the two scalars, and insert \eqref{dSS} into the sum of the $\left(t,t\right)$ and $\left(r,r\right)$ Einstein Equations we find that $\chi\left(r\right)$, and consequently $\phi\left(r\right)$, both need to be a constant. Then, aplying this result into the $\left(t,t\right)$ Einstein equation we obtain that $C_2$ needs to be zero. This occurs because when the two scalars are constants the theory reduces to an $\mathring{R}+\mathring{R}^2$ theory, where the Birkhoff theorem holds \cite{delaCruzDombriz:2009et}.\\
Let us note that the potential term is also present when only one of the scalar modes propagates, and consequently, the metric \eqref{dSS} will also be a solution of the field equations. Hence, using the simplified ansatz \eqref{ansatz:simp} we have been able to prove that \bf{the only stable quadratic Poincar\'e Gauge theories that fulfill the Birkhoff theorem are the ones studied by Nieh and Rauch in the 1980s, \emph{i.e.} \eqref{Rauch:1} and \eqref{Rauch:2}}.\\

With respect to the no-hair theorem, we have already shown that if the potential is different from zero we can obtain a BH solution that breaks the theorem's conclusions. Therefore, in general, the no-hair theorem would not apply to the bi-scalar theory. Additionally, we have also seen that if we require the potential $\mathcal{U}$ to be zero then we can write one scalar in terms of the other one. If we take into account this result in the Lagrangian \eqref{eq:biscalar2}, we can clearly observe that such a Lagrangian will now describe a generalised Brans-Dicke theory, where we know that in the absence of a potential the no-hair theorem would apply \cite{Sotiriou:2011dz}.\\
Nevertheless, we still have one possibility to consider, which is the case of imposing that the BH solution needs to be asymptocally flat. This implies that the two scalars must behave in such a way that the potential tends to zero when $r$ goes to infinity. The search of this kind of solutions that may introduce hair apart from the cosmological one is beyond the scope of this thesis. Nevertheless, it is interesting to explore some aspects of these asymptotically flat solutions. In order to study this case we shall consider the most general static and spherically symmetric metric, namely\footnote{We have chosen the components of the metric such that $g_{rr}=-\frac{1}{g_{tt}}$, which implies that the angular part shall remain generic.}
\begin{equation}
{\rm d}s^{2}=-\psi\left(r\right){\rm d}t^{2}+\frac{1}{\psi\left(r\right)}{\rm d}r^{2}+\rho\left(r\right)^{2}\left({\rm d}\theta^{2}+{\rm sin}^{2}\theta\,{\rm d}\varphi^{2}\right).
\label{ansatz:gen}
\end{equation}
Then, by multiplying the $\left(t,t\right)$ Einstein Equation by $\psi\left(r\right)^2$ and subtracting the $\left(r,r\right)$ one we find
\begin{equation}
\frac{a_{0}}{8\chi^{2}}\mathcal{U}\left(\chi,\phi\right)=\frac{-1+\rho\left(r\right)\rho'\left(r\right)\psi\left(r\right)+\psi\left(r\right)\left(\rho'\left(r\right)^{2}+\rho\left(r\right)\rho''\left(r\right)\right)}{\rho\left(r\right)^{2}}.
\label{nohair:gen1}
\end{equation}
Also, adding the $\left(t,t\right)$ Einstein Equation multiplied by $\psi\left(r\right)$ to the $\left(\theta,\theta\right)$ one multiplied by $\rho\left(r\right)^2$ we obtain
\begin{equation}
\frac{-2+2\psi\left(r\right)\left(\rho'\left(r\right)^{2}+\rho\left(r\right)\rho''\left(r\right)\right)-\rho\left(r\right)\psi''\left(r\right)}{\rho\left(r\right)^{2}}=0.
\label{nohair:gen2}
\end{equation}
Now, multiplying \eqref{nohair:gen1} by 4 and substracting \eqref{nohair:gen1} we arrive at the following result
\begin{equation}
\tilde{R}=-\frac{a_{0}}{2\chi^{2}}\mathcal{U}\left(\chi,\phi\right),
\label{curvature:potential}
\end{equation}
which relates the scalar curvature with the potential of the bi-scalar theory. Here we can clearly see for the general case why a non-vanishing potential $\mathcal{U}$ at infinity makes that the solutions cannot be asymptotically flat, which is why one needs to impose that the potential tends to zero as the radial coordinate goes to infinity. \\
Moreover, from Eq. \eqref{curvature:potential} we can extract that all spherically symmetric BHs with null scalar curvature would not have hair induced by the bi-scalar theory. This is because if the left-hand-side of \eqref{curvature:potential} is zero we can express one of the scalars $\chi$ or $\phi$ in terms of the other one. We have already seen that the fact that the two scalars are related implies that the no-hair theorem holds, given that the spacetime is asymptotically flat, as it is the case.\\

Summarising, what we have obtained in this subsection are three very important results, in particular
\begin{enumerate}

\item The \bf{only} stable quadratic PG theories that fulfill the Birkhoff theorem are \eqref{Rauch:1} and \eqref{Rauch:2}. 

\item The no-hair theorem would hold for the two PG theories that describe the scalar and pseudo-scalar propagation separately, provided that the metric is asymptotically flat. 

\item In the bi-scalar theory the no-hair theorem would apply for spherically symmetric solutions with null scalar curvature.

\end{enumerate}
At this stage, one can also wonder what would happen if we consider PG theories that propagate ghostly degrees of freedom. Does the Birkhoff theorem hold then? Are the stability of the theory and the proof of the Birkhoff theorem related? We shall answer those questions in the following subsections.

\subsection{Instabilities and the Birkhoff theorem}

In this subsection we will explore the possible causal relation between the stability of the PG theories and the Birkhoff theorem. In particular, we shall study if there exists a logical connection between the two, \emph{i.e.} finding whether the consideration of a PG stable theory is a necessary and/or sufficient condition for the Birkhoff theorem to hold. From the analysis performed in the previous sections it is easy to elucidate this question by observing two particular scenarios. 

On the one hand, we know from Section \ref{2.3} that the general bi-scalar theory is stable under some parameter constraints, and also that the Birkhoff theorem does not hold due to the potential term. Therefore, we conclude that the stability of the theory is not a sufficient condition for the Birkhoff theorem to hold.

On the other hand, we have established that in the PG theory given by the Lagrangian \eqref{Rauch:1} the Birkhoff theorem is fulfilled for any choice of the parameters. Moreover, it is known that if $\frac{d^{2}}{dR^{2}}\left(\alpha R^{2}\right)=2\alpha<0$ then that theory suffers from a Dolgov-Kawasaki instability \cite{Dolgov:2003px,Seifert:2007fr}. Hence, the stability of the theory is not a necessary condition for the Birkhoff theorem to hold.

Accordingly, we can summarise the above discussion in the following simple logical inference
\begin{equation}
\rm{Stability\,\,conditions}\nLeftrightarrow\rm{Birkhoff\,\,theorem}
\end{equation}

Indeed, the fact that these two aspects are not related reveals a crucial statement: the fact that a theory is unstable does not mean that one cannot find particular stable solutions. This is why sometimes ghostly behaviour  in modified gravity theories goes unnoticed. In particular, regarding PG theories, there are many ``viable'' solutions that have been proposed in the literature which come from actions that propagate more than the two scalars, hence incurring in ghost instabilities. Therefore, such ``healthy'' solutions look like stable configurations, but the moment one performs perturbations up to a certain order these instabilities would be observed.

To illustrate the scenario described above in more detail we shall study an unstable theory and observe how the spherical symmetry of the spacetime does not allow us to see the ghosts. Such a statement shall be proved by showing that the Birkhoff theorem holds in the situations that we are going to consider. The calculations below are based on our work {\bf{\nameref{P2}}}\footnote{If the reader wants to explore the source of these subsections calculations, as presented in {\bf{\nameref{P2}}}, it would be noticed that in that work we regard the theory \eqref{unstable:PGlag} as stable. Later on, we proved that this is not true, as one can corroborate in Section \ref{2.3} of this Thesis. Of course, given this result, we could not include the results of {\bf{\nameref{P2}}} as a study of ``stable torsion theories''. Still, we strongly believe that it is a relevant study, since it can be used to illustrate how imposing a high symmetry of the solutions, \emph{e.g.} spherical symmetry or maximal symmetry (like in cosmological models), makes us blind to the ghosts, as long as we do not consider perturbation theory in those solutions. Moreover, it is a great example of the expertise acquired during the three years of PhD research, that allows us to treat this problem in a simpler and more efficient way.}.

Accordingly, let us consider the following theory
\begin{equation}
\label{unstable:PGlag}
\mathcal{L}=a_{0}\mathring{R}+a_{1}T_{\alpha\beta\gamma}T^{\alpha\beta\gamma}+a_{2}T_{\alpha\beta\gamma}T^{\beta\gamma\alpha}+a_{3}T_{\,\,\alpha\beta}^{\beta}T_{\gamma}^{\,\,\alpha\gamma}+2b_{1}\mathring{R}_{\left[\alpha\beta\right]}R^{\left[\alpha\beta\right]},
\end{equation}
Using the relation between a general connection with torsion and null non-metricity and the Levi-Civita connection \eqref{LCWeitzenbock}, and the expression of the contortion with respect to the torsion tensor \eqref{contorsion}, we find that \eqref{unstable:PGlag} can be rewritten as
\begin{equation}
\mathcal{L}=a_{0}\mathring{R}+T^{2}+T^{4}+\mathring{\nabla}T\mathring{\nabla}T+T^{2}\mathring{\nabla}T,
\label{unstable:PGlag2}
\end{equation}
where
\begin{equation}
T^{2}=\left(a_{1}+\frac{a_{0}}{4}\right)T_{\alpha\beta\gamma}T^{\alpha\beta\gamma}+\left(a_{2}-\frac{a_{0}}{2}\right)T_{\alpha\beta\gamma}T^{\beta\gamma\alpha}+\left(a_{3}-a_{0}\right)T_{\,\,\alpha\beta}^{\beta}T_{\gamma}^{\,\,\alpha\gamma},
\end{equation}
\begin{eqnarray}
&&T^{4}=\frac{b_{1}}{4}T_{\mu}\,^{\nu\rho}T^{\mu\sigma\lambda}T_{\sigma\lambda}\,^{\alpha}T_{\nu\rho\alpha}+\frac{b_{1}}{4}T^{\mu\sigma\lambda}T_{\sigma\lambda}\,^{\alpha}T_{\alpha}\,^{\nu\rho}T_{\nu\mu\rho}
\nonumber
\\
&&+b_{1}T_{\mu}T^{\mu\sigma\lambda}T_{\sigma}\,^{\nu\rho}T_{\nu\lambda\rho}+\frac{b_{1}}{2}T_{\mu}T_{\nu}T^{\mu\sigma\lambda}T^{\nu}\,_{\sigma\lambda},
\end{eqnarray}
\begin{eqnarray}
\mathring{\nabla}T\mathring{\nabla}T&=&-b_{1}\mathring{\nabla}_{\mu}T^{\nu}\mathring{\nabla}_{\nu}T^{\mu}+b_{1}\mathring{\nabla}_{\mu}T_{\nu}\mathring{\nabla}^{\mu}T^{\nu}+\frac{b_{1}}{2}\mathring{\nabla}_{\mu}T^{\mu\nu\rho}\mathring{\nabla}_{\sigma}T^{\sigma}\,_{\nu\rho}
\nonumber
\\
&&-2b_{1}\mathring{\nabla}^{\mu}T^{\nu}\mathring{\nabla}_{\rho}T^{\rho}\,_{\nu\mu},
\end{eqnarray}
\begin{eqnarray}
T^{2}\mathring{\nabla}T&=&b_{1}T^{\mu\nu\rho}T_{\nu\rho}\,^{\sigma}\left(2\mathring{\nabla}_{\left[\mu\right.}T_{\left.\sigma\right]}+\mathring{\nabla}_{\lambda}T^{\lambda}\,_{\mu\sigma}\right)
\nonumber
\\
&&+b_{1}T_{\mu}T^{\mu\nu\rho}\left(2\mathring{\nabla}_{\rho}T_{\nu}-\mathring{\nabla}_{\lambda}T^{\lambda}\,_{\nu\rho}\right).
\end{eqnarray}
In order to study the behaviour of this theory we first shall obtain the vacuum field equations. The variation with respect to the metric of the action with Lagrangian density \eqref{unstable:PGlag2} will provide us the Einstein Equations, and analogously the variation with respect to the torsion tensor will supply us with the so-called \emph{Cartan Equations}. Both of them are summarised in the following\footnote{Since the spacetime conventions are different in the Thesis with respect to {\bf{\nameref{P2}}}, the form of the field equations also differ. Also, the formalism presented here simplies the one used in {\bf{\nameref{P2}}}.}:
\begin{itemize}

\item {\bf{Cartan Equations}}:\\
Schematically, they can be written as 
\begin{equation}
\label{cartan1}
T+b_{1}T^{3}+b_{1}T\mathring{\nabla}T+b_{1}\mathring{\nabla}^{2}T=0,
\end{equation}
where the different terms are defined as 
\begin{equation}
T=\frac{1}{2}\left(a_{0}+4a_{1}\right)T^{\mu}\,_{\nu\rho}+\left(2a_{2}-a_{0}\right)T_{\left[\nu\rho\right]}\,^{\mu}+2\left(a_{0}-a_{3}\right)\delta_{\left[\nu\right.}^{\mu}T_{\left.\rho\right]},
\end{equation}
\begin{eqnarray}
T^3&=&-\frac{1}{2}T^{\mu\sigma\lambda}T_{\sigma\alpha\lambda}T_{\left[\nu\rho\right]}\,^{\alpha}+\frac{1}{2}T^{\sigma\mu}\,_{\left[\nu\right.}T_{\left.\rho\right]}\,^{\lambda\alpha}T_{\lambda\sigma\alpha}+\frac{1}{2}T_{\left[\nu\rho\right]}\,^{\alpha}T_{\alpha\sigma\lambda}T^{\sigma\mu\lambda}
\nonumber
\\
&&-\frac{1}{2}T^{\sigma\alpha}\,_{\left[\nu\right.}T^{\lambda\mu}\,_{\left.\rho\right]}T_{\lambda\sigma\alpha}-\delta_{\left[\nu\right.}^{\mu}T_{\left.\rho\right]}\,^{\sigma\lambda}T_{\sigma}\,^{\alpha\beta}T_{\alpha\lambda\beta}-T^{\mu}T_{\left[\nu\right.}\,^{\sigma\lambda}T_{\sigma\lambda\left|\rho\right]}
\nonumber
\\
&&+T^{\mu}T_{\sigma}T^{\sigma}\,_{\nu\rho}-T_{\left[\nu\rho\right]}\,^{\alpha}T^{\sigma\mu}\,_{\alpha}T_{\sigma}+T^{\sigma\mu}\,_{\left[\nu\right.}T^{\alpha}\,_{\left.\rho\right]\sigma}T_{\alpha}
\nonumber
\\
&&-\delta_{\left[\nu\right.}^{\mu}T_{\left.\rho\right]}\,^{\sigma\lambda}T^{\alpha}\,_{\sigma\lambda}T_{\alpha},
\end{eqnarray}
\begin{eqnarray}
T\mathring{\nabla}T&=&T^{\sigma\lambda}\,_{\left[\nu\right.}\mathring{\nabla}^{\mu}T_{\left.\rho\right]\sigma\lambda}-T_{\left[\nu\right.}\,^{\sigma\lambda}\mathring{\nabla}^{\mu}T_{\sigma\lambda\left|\rho\right]}+\mathring{\nabla}^{\mu}\left(T^{\sigma}T_{\sigma\nu\rho}\right)+T_{\left[\nu\rho\right]}\,^{\sigma}\mathring{\nabla}^{\mu}T_{\sigma}
\nonumber
\\
&&-2T^{\mu}\mathring{\nabla}_{\left[\nu\right.}T_{\left.\rho\right]}-T^{\sigma\mu}\,_{\left[\nu\right.}\mathring{\nabla}_{\left.\rho\right]}T_{\sigma}+\delta_{\left[\nu\right.}^{\mu}T^{\sigma\lambda\alpha}\mathring{\nabla}_{\sigma}T_{\lambda\alpha\left|\rho\right]}-T_{\left[\nu\rho\right]}\,^{\sigma}\mathring{\nabla}_{\sigma}T^{\mu}
\nonumber
\\
&&+T^{\sigma\mu}\,_{\left[\nu\right.}\mathring{\nabla}_{\sigma}T_{\left.\rho\right]}+T_{\left[\nu\rho\right]}\,^{\sigma}\mathring{\nabla}_{\lambda}T^{\lambda\mu}\,_{\sigma}-T^{\mu}\mathring{\nabla}_{\sigma}T^{\sigma}\,_{\nu\rho}-T^{\sigma\mu}\,_{\left[\nu\right.}\mathring{\nabla}_{\lambda}T^{\lambda}\,_{\left.\rho\right]\sigma}
\nonumber
\\
&&-2\delta_{\left[\nu\right.}^{\mu}T_{\left.\rho\right]}\,^{\sigma\lambda}\mathring{\nabla}_{\lambda}T_{\sigma}-2\delta_{\left[\nu\right.}^{\mu}T^{\sigma\lambda}\,_{\left.\rho\right]}\mathring{\nabla}_{\lambda}T_{\sigma}-\delta_{\left[\nu\right.}^{\mu}T^{\sigma\lambda\alpha}\mathring{\nabla}_{\alpha}T_{\left.\rho\right]\sigma\lambda }
\nonumber
\\
&&+\delta_{\left[\nu\right.}^{\mu}T^{\sigma}\mathring{\nabla}_{\alpha}T_{\sigma\left|\rho\right]}\,^{\alpha}+\delta_{\left[\nu\right.}^{\mu}T_{\left.\rho\right]}\,^{\sigma\lambda}\mathring{\nabla}_{\alpha}T^{\alpha}\,_{\sigma\lambda}+\delta_{\left[\nu\right.}^{\mu}T^{\sigma\lambda}\,_{\left.\rho\right]}\mathring{\nabla}_{\alpha}T^{\alpha}\,_{\sigma\lambda},
\end{eqnarray}
\begin{eqnarray}
\mathring{\nabla}^{2}T&=&-2\mathring{\nabla}^{\mu}\mathring{\nabla}_{\left[\nu\right.}T_{\left.\rho\right]}-\mathring{\nabla}^{\mu}\mathring{\nabla}_{\sigma}T^{\sigma}\,_{\nu\rho}-2\delta_{\left[\nu\right.}^{\mu}\mathring{\nabla}_{\sigma}\mathring{\nabla}_{\left.\rho\right]}T^{\sigma}+2\delta_{\left[\nu\right.}^{\mu}\mathring{\nabla}_{\sigma}\mathring{\nabla}_{\lambda}T^{\lambda\sigma}\,_{\left.\rho\right]}
\nonumber
\\
&&+2\delta_{\left[\nu\right.}^{\mu}\mathring{\nabla}_{\sigma}\mathring{\nabla}^{\sigma}T_{\left.\rho\right]}.
\end{eqnarray}

\item {\bf{Einstein Equations}}:

In order to make variations with respect to the metric we shall express the different terms in the action \eqref{unstable:PGlag2} as
\begin{eqnarray}
&&T^{2}=f_{1}\,^{\rho\sigma\beta\gamma}\,_{\mu\alpha}T^{\mu}\,_{\rho\sigma}T^{\alpha}\,_{\beta\gamma}=f_{1}\hat{T}^{2},
\nonumber
\\
&&T^{4}=f_{2}\,^{\nu\rho\lambda\gamma\beta_{1}\beta_{2}\beta_{3}\beta_{4}}\,_{\mu\sigma\alpha_{1}\alpha_{2}}T^{\mu}\,_{\nu\rho}T^{\sigma}\,_{\lambda\gamma}T^{\alpha_{1}}\,_{\beta_{1}\beta_{2}}T^{\alpha_{2}}\,_{\beta_{3}\beta_{4}}=f_{2}\hat{T}^{4},
\nonumber
\\
&&\mathring{\nabla}T\mathring{\nabla}T=f_{3}\,^{\mu\rho\sigma\lambda\beta\gamma}\,_{\nu\alpha}\mathring{\nabla}_{\mu}T^{\nu}\,_{\rho\sigma}\mathring{\nabla}_{\lambda}T^{\alpha}\,_{\beta\gamma}=f_{3}\mathring{\nabla}\hat{T}\mathring{\nabla}\hat{T},
\nonumber
\\
&&T^{2}\mathring{\nabla}T=f_{4}\,^{\nu\rho\lambda\alpha\gamma\beta_{2}\beta_{3}}\,_{\mu\sigma\beta_{1}}T^{\mu}\,_{\nu\rho}T^{\sigma}\,_{\lambda\alpha}\mathring{\nabla}_{\gamma}T^{\beta_{1}}\,_{\beta_{2}\beta_{3}}=f_{4}\hat{T}^{2}\mathring{\nabla}\hat{T},
\end{eqnarray}
where
\begin{eqnarray}
f_{1}\,^{\rho\sigma\beta\gamma}\,_{\mu\alpha}&=&\left(a_{1}+\frac{a_{0}}{4}\right)g_{\mu\alpha}g^{\rho\beta}g^{\sigma\gamma}+\left(a_{2}-\frac{a_{0}}{2}\right)\delta_{\mu}^{\gamma}\delta_{\alpha}^{\rho}g^{\sigma\beta}
\nonumber
\\
&&+\left(a_{3}-a_{0}\right)\delta_{\mu}^{\sigma}\delta_{\alpha}^{\gamma}g^{\rho\beta},
\end{eqnarray}
\begin{eqnarray}
f_{2}\,^{\nu\rho\lambda\gamma\beta_{1}\beta_{2}\beta_{3}\beta_{4}}\,_{\mu\sigma\alpha_{1}\alpha_{2}}&=&\frac{b_{1}}{4}\delta_{\alpha_{1}}^{\lambda}\delta_{\alpha_{2}}^{\nu}g_{\mu\sigma}g^{\rho\beta_{3}}g^{\gamma\beta_{1}}g^{\beta_{2}\beta_{4}}
\nonumber
\\
&&+\frac{b_{1}}{4}\delta_{\mu}^{\beta_{3}}\delta_{\sigma}^{\nu}\delta_{\alpha_{1}}^{\gamma}\delta_{\alpha_{2}}^{\beta_{1}}g^{\rho\lambda}g^{\beta_{2}\beta_{4}}
\nonumber
\\
&&+b_{1}\delta_{\mu}^{\rho}\delta_{\sigma}^{\nu}\delta_{\alpha_{1}}^{\lambda}\delta_{\alpha_{2}}^{\beta_{1}}g^{\gamma\beta_{3}}g^{\beta_{2}\beta_{4}}
\nonumber
\\
&&+\frac{b_{1}}{2}\delta_{\mu}^{\rho}\delta_{\sigma}^{\gamma}\delta_{\alpha_{1}}^{\nu}\delta_{\alpha_{2}}^{\lambda}g^{\beta_{1}\beta_{3}}g^{\beta_{2}\beta_{4}},
\end{eqnarray}
\begin{eqnarray}
f_{3}\,^{\mu\rho\sigma\lambda\beta\gamma}\,_{\nu\alpha}&=&-b_{1}\delta_{\nu}^{\sigma}\delta_{\alpha}^{\gamma}g^{\mu\beta}g^{\rho\lambda}+b_{1}\delta_{\nu}^{\sigma}\delta_{\alpha}^{\gamma}g^{\mu\lambda}g^{\rho\beta}+\frac{b_{1}}{2}\delta_{\nu}^{\mu}\delta_{\alpha}^{\lambda}g^{\rho\beta}g^{\sigma\gamma}
\nonumber
\\
&&-2b_{1}\delta_{\nu}^{\sigma}\delta_{\alpha}^{\lambda}g^{\mu\gamma}g^{\rho\beta},
\end{eqnarray}
\begin{eqnarray}
f_{4}\,^{\nu\rho\lambda\alpha\gamma\beta_{2}\beta_{3}}\,_{\mu\sigma\beta_{1}}&=&b_{1}\delta_{\mu}^{\gamma}\delta_{\sigma}^{\nu}\delta_{\beta_{1}}^{\beta_{3}}g^{\rho\lambda}g^{\alpha\beta_{2}}-b_{1}\delta_{\mu}^{\beta_{2}}\delta_{\sigma}^{\nu}\delta_{\beta_{1}}^{\beta_{3}}g^{\rho\lambda}g^{\alpha\gamma}
\nonumber
\\
&&+b_{1}\delta_{\mu}^{\beta_{2}}\delta_{\sigma}^{\nu}\delta_{\beta_{1}}^{\gamma}g^{\rho\lambda}g^{\alpha\beta_{3}}+2b_{1}\delta_{\mu}^{\rho}\delta_{\sigma}^{\nu}\delta_{\beta_{1}}^{\beta_{3}}g^{\lambda\beta_{2}}g^{\alpha\gamma}
\nonumber
\\
&&-b_{1}\delta_{\mu}^{\rho}\delta_{\sigma}^{\nu}\delta_{\beta_{1}}^{\gamma}g^{\lambda\beta_{2}}g^{\alpha\beta_{3}},
\end{eqnarray}
Then, the Einstein field equations will be schematically given by 
\begin{eqnarray}
&&\mathring{G}_{\mu\nu}+\frac{\partial f_{1}}{\partial g^{\mu\nu}}\hat{T}^{2}+\frac{\partial f_{2}}{\partial g^{\mu\nu}}\hat{T}^{4}+\frac{\partial f_{3}}{\partial g^{\mu\nu}}\mathring{\nabla}\hat{T}\mathring{\nabla}\hat{T}+\frac{\partial f_{4}}{\partial g^{\mu\nu}}\hat{T}^{2}\mathring{\nabla}\hat{T}
\nonumber
\\
&&-\frac{1}{2}g_{\mu\nu}\left(T^{2}+T^{4}+\mathring{\nabla}T\mathring{\nabla}T+T^{2}\mathring{\nabla}T\right)=0.
\end{eqnarray}
\end{itemize}

Now, in order to study the Birkhoff and no-hair theorems we shall consider the the most general spherically symmetric four-dimensional metric and torsion components. The metric will have the usual form~\cite{Wald:1984rg}
\begin{equation}
{\rm d}s^{2}=-\psi\left(\tilde{t},\,\rho\right){\rm d}\tilde{t}^{2}+\phi\left(\tilde{t},\,\rho\right){\rm d}\rho^{2}+\tilde{r}^{2}\left(\tilde{t},\,\rho\right)\left({\rm d}\theta^{2}+{\rm sin}^{2}\theta\,{\rm d}\varphi^{2}\right).
\label{metric}
\end{equation}
As widely known, under a suitable choice of coordinates this metric can be rewritten as 
\begin{equation}
{\rm d}s^{2}=-\psi\left(t,\,r\right){\rm d}t^{2}+\phi\left(t,\,r\right){\rm d}r^{2}+r^{2}\left({\rm d}\theta^{2}+{\rm sin}^{2}\theta\,{\rm d}\varphi^{2}\right).
\label{metric2}
\end{equation}

For the torsion components it is necessary to work out the constraints due to imposing form invariance under rotations. This is done using the well-known \emph{Killing equations}, as has been studied in~\cite{Sur:2013aia}. There, the authors obtained the non-zero components of the torsion field for different symmetry assumptions. In particular, the non-zero components for the torsion tensor in the spherically symmetric case are

\begin{equation}
\label{torsion}
\begin{cases}
T_{ttr}=a\left(t,r\right),\,\,\,\,T_{rtr}=b\left(t,r\right),\,\,\,\,T_{\theta_{i}t\theta_{i}}=c\left(t,r\right),\,\,\,\,T_{\theta_{i}r\theta_{i}}=d\left(t,r\right),\\
\,
\\
T_{t\theta_{i}\theta_{j}}=-T_{\theta_{i}t\theta_{j}}=\varepsilon_{ij}f\left(t,r\right),\,\,\,\,T_{r\theta_{i}\theta_{j}}=-T_{\theta_{i}r\theta_{j}}=\varepsilon_{ij}g\left(t,r\right),
\end{cases}
\end{equation}
where we have made the identification $\left\{ \theta_{1},\,\theta_{2}\right\} \equiv \left\{ \theta,\,\varphi\right\}$ and consequently $i,j=1,2$ with $i\neq j$ and $\varepsilon_{ab}$ is the Levi-Civita symbol with $a,b=1,2$. \\
A reader familiarised with the literature of the Birkhoff theorem in PG theories may realise that in all the previous studies, including {\bf{\nameref{P2}}}, there were eight independent components instead of six. This is due to the fact that it went unnoticed that the Killing equations impose that the components of the type $T_{t\theta_{i}\theta_{j}}$ and $T_{r\theta_{i}\theta_{j}}$ are totally antisymmetric. 

As we did in {\bf{\nameref{P2}}} we shall consider two physically relevant situations, in particular weak torsion and asymptotically flatness, where we will study if the Birkhoff and no-hair theorems apply.

\subsection{Weak torsion approximation}

In this subsection we shall deal with first-order perturbations on the torsion and see whether vacuum solutions different from Schwarzschild can be found. Such an approximation is physically motivated by the fact that experimental tests on torsion are compatible with this reasoning, since its effects are negligible when compared to the Riemannian curvature ones \cite{Lammerzahl:1997wk}. This would allow us to neglect the quartic terms in the torsion in the action \eqref{unstable:PGlag2}, and consequently the terms in the field equations that come from varying them.\\

\subsubsection{Determination of $f(t,r)$ and $g(t,r)$}
\label{Subsection-fg}

We will start with the $\left(\varphi,t,r\right)$ Cartan Equation
\begin{equation}
b_{1}f\left(t,r\right)=0,
\end{equation}
which clearly imposes that $f\left(t,r\right)$ must be null. Next we shall focus on the $\left(\varphi,r,\theta\right)$ Cartan Equation, namely
\begin{eqnarray}
&&2b_{1}\cot^{2}\theta\psi\left(t,r\right)^{2}\phi\left(t,r\right)g\left(t,r\right)+\Biggl\{ 2\psi\left(t,r\right)^{2}\left[2b_{1}+\left(2a_{1}+a_{2}\right)r^{2}\phi\left(t,r\right)\right]\Biggr.
\nonumber
\\
&&\Biggl.-b_{1}r\phi\left(t,r\right)\frac{\partial\psi\left(t,r\right)}{\partial t}+b_{1}r\psi\left(t,r\right)\frac{\partial\phi\left(t,r\right)}{\partial t}\Biggr\} g\left(t,r\right)
\nonumber
\\
&&+2b_{1}r\psi\left(t,r\right)\phi\left(t,r\right)\frac{\partial g\left(t,r\right)}{\partial t}=0.
\end{eqnarray}
It is easy to see that since neither the torsion nor the metric functions can depend on the $\theta$ coordinate this means that
\begin{equation}
g\left(t,r\right)=0.
\end{equation}

\subsubsection{Determination of $c(t,r)$ and $d(t,r)$}
\label{Subsection-cd}

Let us now explore the $\left(\theta,t,r\right)$ Cartan Equation
\begin{equation}
\left(\cos\left(2\theta\right)-7\right)c\left(t,r\right)+4r\left(\frac{\partial c\left(t,r\right)}{\partial r}-\frac{\partial d\left(t,r\right)}{\partial t}\right)=0.
\end{equation}
Since $c\left(t,r\right)$ does not depend on $\theta$ we have that 
\begin{equation}
c\left(t,r\right)=0.
\end{equation}
Also, the same equation imposes that $d\left(t,r\right)$ must be a function of $r$ only. We continue by looking at the $\left(t,t,r\right)$ Cartan Equation, which is given by
\begin{eqnarray}
8\left(2a_{1}-a_{2}+a_{3}\right)a\left(t,r\right)-F_{1}\left(t,r,\theta\right)d\left(r\right)=0,
\end{eqnarray}
where $F_1$ is a certain analytic function. Its explicit form is ommited for simplicity. Again, by the same arguments are before we obtain
\begin{equation}
d\left(r\right)=0.
\end{equation}

\subsubsection{Determination of $a(t,r)$ and $b(t,r)$}
\label{Subsection-ab}

At this time we just have two non-null torsion functions left, $a(t,r)$ and $b(t,r)$. To know their value we shall consider the system formed of the $\left(t,t,r\right)$ and $\left(r,t,r\right)$ Cartan Equations
\begin{equation}
\label{system:ab}
\begin{cases}
\begin{array}{c}
\left(2a_{1}-a_{2}+a_{3}\right)a\left(t,r\right)=0,\\
\,\\
\left[4b_{1}+\left(2a_{1}-a_{2}+a_{3}\right)r^{2}\phi\left(t,r\right)\right]b\left(t,r\right)=0,
\end{array} & \,\end{cases}
\end{equation}

There are two possibilities to solve this system
\begin{enumerate}

\item $2a_{1}-a_{2}+a_{3}=0$: This means that the first equation of the system \eqref{system:ab} holds. Also, the second one implies that 
\begin{equation}
b\left(t,r\right)=0.
\end{equation}
Now, we use these results in the $\left(\theta,r,\theta\right)$ Cartan Equation, that reads
\begin{equation}
\left(a_{3}-a_{0}\right)a\left(t,r\right)=0.
\label{eq:aa}
\end{equation}
Again, we have two ways this could be solved, namely:
\begin{itemize}

\item $a_{3}-a_{0}=0$: Then, all the Cartan Equations hold. Moreover, there is no influence of the remaining torsion function $a\left(t,r\right)$ in the Einstein Equations. Therefore the metric solution would be Schwarzschild and the Birkhoff theorem holds.

\item $a_{3}-a_{0}\ne 0$: In this case, Equation \eqref{eq:aa} imposes that $a\left(t,r\right)$. Hence, all the torsion functions are null and the Einstein Equations recover the GR form. Consequently, the Birkhoff theorem applies.

\end{itemize}

\item $2a_{1}-a_{2}+a_{3}\ne 0$: Then, from the first equation in \eqref{system:ab} we have
\begin{equation}
a\left(t,r\right)=0.
\end{equation}
Now,it is clear that the second equation of the system \eqref{system:ab} could be solved by having $\phi\left(t,r\right)=\frac{C}{r^2}$, and imposing a relation between $b_1$ and the $a_i$'s. Nevertheless, this choice is incompatible with the Einstein Field Equations, hence the only solution is $b\left(t,r\right)=0$. Then, since all the torsion functions are null the Birkhoff theorem holds.
\end{enumerate}

In Figure \ref{arbol:1} we present a tree of decision helping to clarify the reasoning developed to obtain $a(t,r)$ and $b(t,r)$.\\

\begin{figure}
\begin{center}

\begin{forest}
   [System of Eqs. \eqref{system:ab},draw={blue,thick},
      [{$2a_{1}-a_{2}+a_{3}=0$},draw,[{$b\left(t,r\right)=0$},draw,
      	[{$a_{3}-a_{0}=0$},draw,[{$a\left(t,r\right)\ne 0$, $b\left(t,r\right)=0$},draw={green,thick}]]
      	[{$a_{3}-a_{0}\ne 0$},draw,[{$a\left(t,r\right)=b\left(t,r\right)=0$},draw={green,thick}]]
        ]
	],
      [{$2a_{1}-a_{2}+a_{3}\ne 0$},draw,[{$a\left(t,r\right)=0$},draw,[Einstein equations[{$a\left(t,r\right)=b\left(t,r\right)=0$},draw={green,thick}]]]]
   ]
\end{forest}
\par\end{center}
\caption{Tree of decision representing the steps we have followed to obtain the values torsion functions $a(t,r)$ and $b(t,r)$. Also, the green colour represents if the Birkhoff theorem holds for that solution.}
\label{arbol:1}
\end{figure}
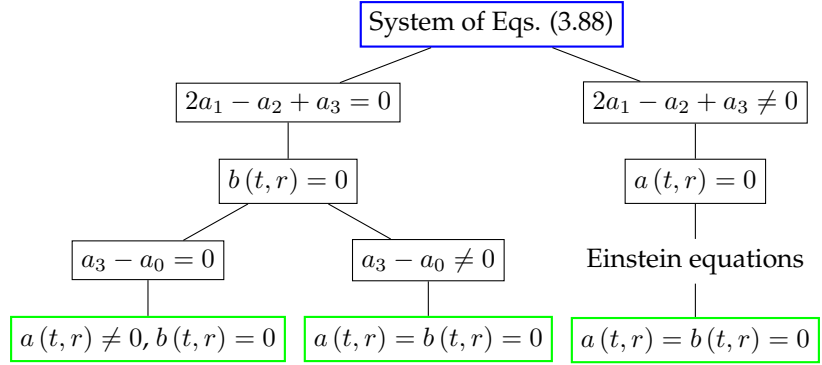

Therefore, in this subsection we have proved that the Birkhoff theorem holds for the considered theory in the weak torsion regime.

\subsection{Asymptotic flatness}

In this subsection we shall consider the assumption of asymptotic flatness and staticity, which is an usual condition when describing exterior spacetimes generated by astrophysical objects, as we have seen when studying the no-hair theorem. The asymptotic flatness condition allows us to impose boundary conditions on both the metric and the torsion functions. It is clear that under this assumption, one solution satisfying trivially both the Einstein and the Cartan Equations and fulfilling the condition above is torsionless Schwarzschild. Below we shall answer to the question if this is indeed the only asymptotically flat solution.

In order to prove this result, let us invoke the Existence and Uniqueness Theorem in the theory of differential equations~\cite{braun1983differential}. First, let us introduce the following definition
\begin{defn}
Let us consider the function $f(r,x)$, with $f:\,\,\mathbb{R}^{n+1}\longrightarrow\mathbb{R}^{n}$, $\left|r-r_{0}\right|\leq a$, $x\in D\subset\mathbb{R}^{n}$. Then, $f(r,x)$ satisfies the Lipschitz condition with respect to $x$ if in $\left[r_{0}-a,\,r_{0}+a\right]\times D$ one has
\begin{equation*}
\left\Vert f\left(r,x_{1}\right)-f\left(r,x_{2}\right)\right\Vert \leq L\left\Vert x_{1}-x_{2}\right\Vert ,
\end{equation*}
with $x_{1},x_{2}\in D$ and $L$ a constant known as the Lipschitz constant.
\end{defn}
That being so, the previous condition shall play an essential role in the next 
\begin{thm}
\label{uniqueness}
Let us consider the initial value problem
\begin{equation*}
\frac{{\rm d}x}{{\rm d}r}=f\left(r,x\right),\,\,\,x\left(r_{0}\right)=x_{0},
\end{equation*}
with $\left|r-r_{0}\right|\leq a$, $x\in D\subset\mathbb{R}^{n}$. $D=\left\{ x\,\,s.t.\,\,\left\Vert r-r_{0}\right\Vert \leq d\right\} $, where $a$ and $d$ are positive constants.\\
Then if the function $f$ satisfies the following conditions
\begin{enumerate}
\item $f\left(r,x\right)$ is continuous in $G=\left[r_{0}-a,\,r_{0}+a\right]\times D$.

\item $f\left(r,x\right)$ is Lipschitz continuous in $x$.
\end{enumerate}

Then the initial value problem has one and only one solution for $\left|r-r_{0}\right|\leq {\rm inf} \left\{a,\frac{d}{M}\right\}$, with
\begin{equation*}
M=\underset{G}{\rm sup}\left\Vert f\right\Vert .
\end{equation*}
\end{thm}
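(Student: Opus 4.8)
The plan is to convert the initial value problem into an equivalent integral equation and to solve that equation by Picard iteration, controlling the iterates through factorial estimates that are powered by the Lipschitz condition. First I would note that, since $f$ is continuous, a function $x(r)$ is a $C^{1}$ solution of $\frac{{\rm d}x}{{\rm d}r}=f(r,x)$ with $x(r_{0})=x_{0}$ if and only if it is a continuous solution of
\begin{equation*}
x(r)=x_{0}+\int_{r_{0}}^{r}f\left(s,x(s)\right)\,{\rm d}s ,
\end{equation*}
the two implications following from the fundamental theorem of calculus and from differentiating a continuous integrand. I would then set $h={\rm inf}\{a,d/M\}$, work on $I=\left[r_{0}-h,\,r_{0}+h\right]$, and define the Picard iterates $\varphi_{0}(r)\equiv x_{0}$ and $\varphi_{n+1}(r)=x_{0}+\int_{r_{0}}^{r}f\left(s,\varphi_{n}(s)\right)\,{\rm d}s$.

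The first key step is to prove by induction on $n$ that each $\varphi_{n}$ is well defined on $I$ and that $\varphi_{n}(r)\in D$ for all $r\in I$; this is precisely where the cutoff $h\leq d/M$ is used, since $\left\Vert\varphi_{n+1}(r)-x_{0}\right\Vert\leq\int_{r_{0}}^{r}\left\Vert f(s,\varphi_{n}(s))\right\Vert\,{\rm d}s\leq M\left|r-r_{0}\right|\leq Mh\leq d$. The second key step is the quantitative bound
\begin{equation*}
\left\Vert\varphi_{n+1}(r)-\varphi_{n}(r)\right\Vert\leq\frac{M\,L^{\,n}\,\left|r-r_{0}\right|^{\,n+1}}{(n+1)!},
\end{equation*}
again obtained by induction, applying the Lipschitz estimate $\left\Vert f(s,\varphi_{n}(s))-f(s,\varphi_{n-1}(s))\right\Vert\leq L\left\Vert\varphi_{n}(s)-\varphi_{n-1}(s)\right\Vert$ under the integral sign. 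Since the right-hand side is the general term of a convergent exponential series, the sequence $(\varphi_{n})$ is uniformly Cauchy on $I$ and hence converges uniformly to a continuous limit $x(r)$, which still lies in the closed set $D$.

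Finally I would pass to the limit in the integral equation: uniform convergence $\varphi_{n}\to x$ together with Lipschitz continuity of $f$ in its second argument justifies exchanging limit and integral, so $x(r)=x_{0}+\int_{r_{0}}^{r}f(s,x(s))\,{\rm d}s$, i.e. $x$ solves the IVP on $\left|r-r_{0}\right|\leq h$. For uniqueness, if $x$ and $y$ are two solutions then $u(r):=\left\Vert x(r)-y(r)\right\Vert$ satisfies $u(r)\leq L\left|\int_{r_{0}}^{r}u(s)\,{\rm d}s\right|$, and Gr\"onwall's inequality — or, equivalently, iterating this bound to get $u(r)\leq\frac{L^{n}\left|r-r_{0}\right|^{n}}{n!}\sup_{I}u\to 0$ — forces $u\equiv 0$. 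The only point that demands genuine care is the domain bookkeeping, namely guaranteeing that the iterates never leave $D$, which is exactly what dictates the restriction $\left|r-r_{0}\right|\leq{\rm inf}\{a,d/M\}$; the convergence and uniqueness estimates themselves are then routine.
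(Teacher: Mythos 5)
Your proof is correct: it is the standard Picard--Lindel\"of argument via the equivalent integral equation, iteration, the factorial estimate from the Lipschitz condition, and Gr\"onwall for uniqueness, with the domain bookkeeping that justifies the cutoff $\left|r-r_{0}\right|\leq{\rm inf}\{a,d/M\}$ handled properly. The thesis does not prove this theorem at all --- it is quoted as a classical result with a citation to the differential-equations literature --- and your argument is precisely the standard proof found in that cited literature, so there is nothing to reconcile.
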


Now, let us apply the Theorem \ref{uniqueness} above to the system formed by the Cartan and Einstein Equations in the asymptotically flat case. Nevertheless, we shall not analyse the field equations directly, but the action from which they are derived. Indeed, by performing an inspection of the action, we find that only two of the six torsion functions contribute to the dynamics, namely $c\left(r\right)$ and $g\left(r\right)$, and only with first derivatives. This means that we can express these two functions in terms of the other four, using the Cartan Equations. Then, using this results in the remaining Cartan Equations we can find the following expressions
\begin{eqnarray}
\label{rel23}
&&c'\left(r\right)={\bf F}_{c}\left(c\left(r\right),\,r\right)\,\,;
\nonumber
\\
&&\,
\\
&&g'\left(r\right)={\bf F}_{g}\left(g\left(r\right),\,r\right)\,\,,
\nonumber
\end{eqnarray}
where ${\bf F}_{c}$ and ${\bf F}_{g}$ are Lipschitz continuous functions, since they are continuously differentiable (resorting to physical criteria).
Then, by use of the uniqueness Theorem \ref{uniqueness}, we can state that there only exists one solution for $c\left(r\right)$ and $g\left(r\right)$, and since we can write the rest of the torsion functions in terms of this one, this means that the whole systemof the Cartan Equations has only one solution (to be determined either by one initial or one boundary condition).

Moreover, since GR is recovered when the torsion is zero, we have that indeed one solution for the Cartan field equations would consist of having all the torsion functions equal to zero, a result which is obviously compatible with the asymptotic flatness assumption. Having null torsion implies that the Einstein Equations  would reduce to those in GR. Therefore, we are led to conclude that the only asymptotically flat and static solution is a torsionless Schwarzschild, and the Birkhoff theorem applies.\\
Moreover, the proof of Birkhoff theorem for these conditions allow us to prove also that the antisymetric Ricci term of the action does not introduce any new hair to spherically symmetric and static BHs.\\

We have therefore seen how having an unstable theory does not prevent us to find stable solutions, and to find situations where it seems like a healthy theory. Nevertheless, the moment that one considers perturbations up to a certain order around the Schwarzschild background, instabilities would appear. This implies that the \emph{Almost} Birkhoff theorem would not hold, since its proof is based on perturbations \cite{Goswami:2011ft}. 

\section{Chapter conclusions and outlook}

Within this chapter we have explored some of the interesting phenomenology of the PG theories of gravity, which is outlined in the following.

First, in section \ref{3.1} we have calculated how the fermionic particles move in spacetimes with torsion, at first order in the WKB approximation. Moreover, we have explicitly shown this non-geodesical behaviour in a particular BH solution.

In the next section we provide a new formulation of the singularity theorems so that they can predict the singularities of fermionic particles. We then prove that if the conditions for the appearance of BH/WHs of arbitrary co-dimension are met, then the fermionic trajectories would be singular, just as the geodesics.

Finally, in section \ref{3.3} we show that the only stable quadratic PG theories that fulfill the Birkhoff theorem are the ones studied by Nieh and Rauch in the 1980s. We also prove that the no-hair theorem applies for the most general stable quadratic Poincar\'e Gauge action, \emph{i.e.} the bi-scalar theory, if asymptotic flatness and constant scalar curvature are assumed. Moreover, we have seen how the Birkhoff and no-hair theorems are not related with the stability, and that indeed one can find that the BHs of GR can be solutions of unstable theories. Nevertheless, when performing perturbations up to certain order, those instabilities will start playing a role.\\

The previous findings can lead to new lines of research, such as
\begin{itemize}

\item The detection of torsion comparing fermionic and bosonic trajectories.

\item Using the new definition of singularity, based on the trajectories not having endpoints in the conformal infinity, to other modified theories of gravity.

\item Study the no-hair theorem in the general stable quadratic PG theory relaxing some assumptions. 

\end{itemize}

\renewcommand{\publ}{}

% \index{Classification!LVQ}

\chapter{Non-local extension of Poincar\'e Gauge gravity}

\label{4}

\PARstart{T}he theory of GR can be modified to incorporate the gauge structure of the Poincar\'e group, provided a torsion field is added, as we saw in Chapter \ref{2}. Nevertheless, both GR and PG gravity suffer from the short distance behaviour at a classical level, which is manifested explicitly in the appearance of BH and cosmological singularities. This is commonly known as the ultraviolet (UV) problem. Our aim in this chapter will be to construct an action which recovers PG theory of gravity in the infrared (IR), while ameliorating the UV singular behaviour of both metric and torsion fields. This extension would introduce the effect of infinite derivatives in the action, which results in a non-local theory.\\

This chapter is mainly based on results presented in {\bf{\nameref{P4}}} and {\bf{\nameref{P7}}}, and shall be divided as follows. In Section \ref{4.1} we shall review the infinite derivatives extensions of GR, and motivate their introduction. In Section \ref{4.2} we will provide the non-local extension of PG gravity and calculate the field equations at the linear limit. Finally, in Section \ref{4.3} we shall provide ghost- and singularity-free solutions of the theory in the linear regime.

\clearpage

\section{Infinite derivative gravity}
\label{4.1}

In String Theory there are several higher-derivative actions that contain infinite derivatives encoded in an exponential operator, \emph{e.g.} open string field theory \cite{Witten:1985cc}, p-adic theory \cite{Brekke:1988dg}, or strings on random lattices \cite{Douglas:1989ve}. Inspired by this kind of theories one can construct an UV extension of GR by introducing infinite derivatives in the Einstein-Hilbert action that contribute at the strong energy regime \cite{Biswas:2014tua}. The fact that this modification is based on infinite derivatives makes that even the sharpest of the singular behaviours, \emph{i.e.}, the delta ``function'', can be ameliorated, and hence one could potentially find singularity-free solutions.

Nevertheless, modifying GR in a consistent way without incurring in pathologies is quite a difficult task, as we have seen in Section \ref{2.3}. The main concern lies in the fact that the inclusion of infinite derivatives in the field equations may lead to the need of a set of infinite initial conditions in order to solve such equations. This would of course be a problem due to the following issues \cite{Barnaby:2010kx}:
\begin{itemize}

\item {\bf{Stability}}: If the equations of motion admit more than two initial data, \emph{i.e.}, the ones admitted in second order differential equations, then the extra degrees of freedom can be interpreted as physical excitations which carry wrong-sign kinetic energy. As we explained in Section \ref{2.3}, the classical theory would be plagued by Ostrogradski instabilities.

\item {\bf{Predictability}}: If the equations of motion require infinitely many initial data then, by a suitable choice of the infinite free parameters of the solution, it can be possible to construct nearly any time dependence over an arbitrarily long interval. Accordingly, the initial value problem would be completely bereft of predictivity.

\end{itemize}

Fortunately, that is not the case, neither of these two aspects are compromised when introducing infinite derivatives, provided that some constraints are fulfilled, as has been proven in several ocassions \cite{Barnaby:2007ve,Barnaby:2010kx,Calcagni:2018lyd}. In the following we shall summarise this fact with a simple example. In particular, let us introduce a scalar field action in Minkowski spacetime involving infinite derivatives:
\begin{equation}
\label{nonlocal:scalar}
S=\int {\rm d}^{4}x\left[\phi F\left(\Box\right)\phi-V\left(\phi\right)\right],
\end{equation}
where $F\left(\Box\right)$ is an entire analytic function of the d'Alembertian $\Box=\eta_{\mu\nu}\partial^{\mu}\partial^{\nu}$, of the form
\begin{equation}
\label{FFunction}
F\left(\Box\right)=\sum_{n=0}^{\infty}f_{n}\left(\frac{\Box}{M_{S}}\right)^{n},
\end{equation}
with $M_{S}$ being the mass defining the scale at which non-localities start to play a role, and the $f_{n}$'s being constants. \\
The field equation derived from \eqref{nonlocal:scalar} is
\begin{equation}
\label{nonlocal:scalarFE}
F\left(\Box\right)\phi=V'\left(\phi\right).
\end{equation} 
Now, using Weiertrass factorisation theorem we can write $F\left(\Box\right)$ as 
\begin{equation}
F\left(\Box\right)=\Gamma\left(\Box\right)\prod_{j=1}^{N}\left(\Box-m_{j}^{2}\right),
\end{equation}
where $\Gamma\left(\Box\right)^{-1}$ does not cointain any pole in the complex plane, and consequently it can be expressed as $\Gamma\left(\Box\right)={\rm e}^{-\gamma\left(\Box\right)}$, $\gamma\left(\Box\right)$ being an entire function, without losing generality. Then, with this decomposition it can be seen that Equation \eqref{nonlocal:scalarFE} describes $N$ physical states with masses $m_j$ \cite{Barnaby:2007ve}. In order to solve this equation we must find a particular solution and the general one for the associated homogeneous equation. 

On the one hand, to obtain the particular solution we expand the scalar field into Fourier modes as
\begin{equation}
\phi\left(t,\vec{x}\right)=\int\frac{{\rm d}^{3}k}{(2\pi)^{3/2}}{\rm e}^{i\vec{k}\cdot\vec{x}}\xi_{\vec{k}}\left(t\right),
\end{equation}
and plug it into Equation \eqref{nonlocal:scalarFE}, that now becomes
\begin{equation}
\label{nonlocal:scalarFourier}
F\left(-\partial_{t}^{2}-k^{2}\right)\xi_{\vec{k}}\left(t\right)=V_{\vec{k}}\left(t\right),
\end{equation}
where $k^2=\vec{k}\cdot\vec{k}$ as usual, and
\begin{equation}
V_{\vec{k}}\left(t\right)=\int\frac{{\rm d}^{3}x}{(2\pi)^{3/2}}{\rm e}^{i\vec{k}\cdot\vec{x}}V'(\phi\left(t,\vec{x}\right)).
\end{equation}
The equations of the form of \eqref{nonlocal:scalarFourier} are very well known in mathematical literature \cite{davis1936theory,carmichael1936linear,carleson1953infinite}, and the fact that they can be solvable without having to specify infinite initial conditions has been known since the 1930s. Nevertheless, this was unnoticed till Barnaby called the attention of theoretical physicists on this subject \cite{Barnaby:2007ve}. Following that reference, one can check that a particular solution of \eqref{nonlocal:scalarFourier} is given by
\begin{equation}
\phi_{\vec{k}}(t)=\frac{1}{2\pi i}\oint_{C}{\rm d}s\,{\rm e}^{st}\frac{\hat{V}_{\vec{k}}(s)}{F\left(-s^{2}-k^{2}\right)},
\end{equation}
where the $\hat{\,}$ means the Laplace transform.

For the solution of the homogeneous equation we can again resort to the existing literature, in particular the mentioned work by Barnaby \cite{Barnaby:2007ve}. The homogeneous part of Equation \eqref{nonlocal:scalarFourier} belongs to a known class of differential equations of the form 
\begin{equation}
f\left(\partial_{t}\right)\phi\left(t\right)=0.
\end{equation}
Then, if we assume that the solution admits a Laplace transform, we can rewrite the previous equation as
\begin{equation}
\frac{1}{2\pi i}\oint_{C}{\rm d}s\,{\rm e}^{st}f\left(s\right)\hat{\phi}\left(s\right)=0,
\label{eq:laplace}
\end{equation}
where $f\left(s\right)$ is the so-called \emph{generatrix}, which, as we have seen, it can be decomposed as
\begin{equation}
f(s)=\gamma(s)\prod_{i=1}^{M}\left(s-s_{i}\right)^{r_{i}},
\label{function:decom}
\end{equation} 
with $\gamma(s)$ different from zero everywhere. Therefore the function $f(s)$ has $M$ zeroes at the points $s=s_{i},$ the $i$ -th zero being of order $r_{i}$. The inverse of this function, $f(s)^{-1}$, is known as the \emph{resolvent generatrix}, which has simple poles at the points $s=s_{i},$ the $i$ -th pole being of order $r_{i}$. \\
Now, in order to solve \eqref{eq:laplace}, we need to ask ourselves which is the most general function $\hat{\phi}$ fulfilling such an equality. Using the Cauchy Integral Theorem \cite{burckel1980introduction}, we know that the equality \eqref{eq:laplace} holds if the integrand of such expression does not have any poles inside the region of integration. Consequently, taking into account \eqref{function:decom}, the solution $\hat{\phi}$ may have simple poles at the points $s=s_{i},$ the $i$ -th pole being of order $r_{i}$ or less. Therefore, the most general way to express $\hat{\phi}$ is
\begin{equation}
\hat{\phi}(s)=\frac{1}{\gamma(s)}\sum_{i=1}^{M}\sum_{j=1}^{r_{i}}\frac{C_{j}^{(i)}}{\left(s-s_{i}\right)^{j}}.
\end{equation}
It is clear that the solution has $N$ arbitrary coefficients $C_{j}^{(i)}$, where
\begin{equation}
N=\sum_{i=1}^{M}r_{i}.
\end{equation}
We can recover the solution $\phi$ in the configuration space by solving the integral of the Laplace transform, namely
\begin{equation}
\phi\left(t\right)=\frac{1}{2\pi i}\oint_{C}{\rm d}s\,{\rm e}^{st}\hat{\phi}\left(s\right)=\sum_{i=1}^{M}P_{i}(t){\rm e}^{s_{i}t},
\end{equation}
where each of the $P_{i}(t)$ are polynomials of order $r_i-1$
\begin{equation}
P_{i}(t)=\sum_{j=1}^{r_{i}} p_{j}^{(i)} t^{j-1}.
\end{equation}
Let us note that the $N$ coefficients $p_{j}^{(i)}$ are arbitrary and will serve to fix $N$ (and hence {\bf{finite}}) initial conditions $\phi^{(n)}(0)$ for $n=0, \cdots, N-1$.

Now, let us apply these results to the homogeneous equation associated to \eqref{nonlocal:scalarFourier}. In this case the generatrix function is given by
\begin{equation}
f(s)=F\left(-s^{2}-k^{2}\right)=\Gamma\left(-s^{2}-k^{2}\right)\prod_{j=1}^{N}\left(s+i\omega_{k}^{(j)}\right)\left(s-i\omega_{k}^{(j)}\right),
\end{equation}
where we have defined
\begin{equation}
\omega_{k}^{(i)}=\sqrt{k^{2}+m_{j}^{2}}.
\end{equation}
Then, by the analysis perfomed previously we know that since this generatrix has $2N$ poles of order one we expect the solutions to contain $2N$ free coefficients for each $k$-mode, two for each physical degree of freedom. Therefore, by choosing a correct generatrix, one can construct an infinite derivative action for a scalar field in Minkowski spacetime in such a way that it only propagates one degree of freedom with positive $m^2$, hence being a stable configuration. This is also possible when considering an arbitrary curved background, as was proven in \cite{Calcagni:2018lyd}. 

Moreover, this particular example allows us to show why we state that the introduction of infinite derivatives in the action makes the theory non-local, which is why we are using any of those two terms to refer to this kind of theories. As a matter of fact, we can rewrite the action \eqref{nonlocal:scalar} for the scalar field as \cite{Tomboulis:2015gfa}
\begin{equation}
S=\int {\rm d}^{4}x{\rm d}^{4}y\,\phi\left(x\right)K\left(x-y\right)\phi\left(y\right)-\int {\rm d}^{4}x\,V\left(\phi\right),
\end{equation}
where
\begin{equation}
K\left(x-y\right)=F\left(\Box\right)\delta^{\left(4\right)}\left(x-y\right).
\end{equation}
The operator $K\left(x-y\right)$ makes the dependence of the field variables at finite distances explicit, which acknowledges for the presence of a non-local nature.\\

With respect to an infinite derivative UV completion of GR, it has been possible to establish that the following action
\begin{equation}
\label{IDGmetric:action}
S=\int d^{4}x\sqrt{-g}\left[\mathring{R}+\mathring{R}F_{1}(\square)\mathring{R}+\mathring{R}_{\mu\nu}F_{2}(\square)\mathring{R}^{\mu\nu}+\mathring{R}_{\mu\nu\lambda\sigma}F_{3}(\square)\mathring{R}^{\mu\nu\lambda\sigma}\right],
\end{equation}
can be made free of extra degrees of freedom around Minkowski spacetime. This is done by expressing the non-local functions as exponentials of an entire function, which does not introduce any new complex poles, nor any new dynamical degrees of freedom \cite{Biswas:2011ar}. When exploring this for at the non-perturbative level one finds that there are 8 degrees of freedom, but it is not clear whether they are of ghostly nature or not. Even if they are stable modes, this signals that it may be a strong coupling issue, which can be interesting to explore in future research but it is beyond the scope of this thesis.

This kind of UV extensions of GR have been explored widely, and are known as infinite derivative theories of gravity (IDG). The most general action has been constructed around Minkowski spacetime~\cite{Biswas:2011ar}, and in de Sitter and anti-de Sitter~\cite{Biswas:2016etb}. The graviton propagator of such theories can be modified to avoid any ghosts around a Minkowski background. Therefore, such theories retain the original 2 dynamical degrees of freedom of GR, \emph{i.e.}, a transverse traceless graviton. Being infinite derivative theories, such an action introduces non-local gravitational interaction and has been argued to improve UV aspects of quantum nature of gravity~\cite{Tomboulis:1997gg,Modesto:2011kw}. As we have seen, despite having infinite derivatives, the Cauchy problem is well defined, hence the solutions are uniquely determined by finite initial conditions \cite{Calcagni:2018lyd}.
%At a quantum level, infinite derivative field theories, with similar propagator, bring softening of UV divergences~\cite{Efimov:1967pjn,Efimov:1968flw,Biswas:2014yia,Tomboulis:2015gfa,Ghoshal:2017egr,Buoninfante:2018mre,Talaganis:2014ida,Talaganis:2016ovm,Boos:2019vcz,Boos:2019zml,Boos:2019fbu} such as shifting the scale of non-locality with large number of external legs interacting non-locally~\cite{Buoninfante:2018gce}.   

At a classical level, it has been shown that such IDG theories can yield a non-singular, static solution at the full non-linear level~\cite{Buoninfante:2018rlq}, can avoid ring singularities in a rotating metric at the linear level~\cite{Buoninfante:2018xif}, and also resolve charged source singularity at the linear level~\cite{Buoninfante:2018stt}. At a dynamical level such theories do not give rise to formation of a trapped surface~\cite{Frolov:2015bta,Frolov:2015bia,Frolov:2015usa}, and possibly even at the level of astrophysical masses there may not possess event horizon~\cite{Koshelev:2017bxd,Buoninfante:2019swn}. Exact solutions for IDG have been found in~\cite{Biswas:2005qr,Li:2015bqa,Buoninfante:2018rlq,Kilicarslan:2018yxd}, including static and time-dependent solutions.

As we have been stating and checking along this thesis, there is not a physically preferred affine structure for gravitational theories. Therefore, it is physically relevant to ask ourselves if it is possible to construct an UV extension of PG theories using the tools of usual metric IDG. This is exactly what we shall do in the following sections.

\section{The inclusion of torsion}
\label{4.2}

Motivated by the multiple studies mentioned above, various extensions of IDG have been made in the context of teleparallel gravity \cite{Koivisto:2018loq} and symmetric teleparallel gravity \cite{Conroy:2017yln}, as well as in what regards the extension of Poincar\'e gauge gravity that is well-behaved at the UV at a classical level ({\bf{\nameref{P4}}} and {\bf{\nameref{P7}}}).

In the standard IDG theories the connection is metric and symmetric, \emph{i.e.} the Levi-Civita one. Therefore, the linear action of IDG is built with the gravitational invariants and derivatives, considering only up to order $\mathcal{O}(h^{2})$, where $h$ is the linear perturbation around the Minkowski metric
\begin{equation}
g_{\mu\nu}=\eta_{\mu\nu}+h_{\mu\nu}.
\end{equation}
After substituting the linear expressions of the curvature tensors (Riemann, Ricci and curvature scalar) we can obtain the linearised action as first was shown in \cite{Biswas:2011ar}. With this in mind, we wish to generalise the expressions of the curvature tensor when we consider a non-symmetric connection. First, we must take into account that the torsion tensor is not geometrically related to the metric, therefore the conditions that are imposed in $h_{\mu\nu}$ are not sufficient to construct the linear action in connection. In order to tackle this issue, we will have to impose that the total connection must be of order $\mathcal{O}(h)$, \emph{i.e.} the same as the Levi-Civita one~\footnote{If this were not the case, we would have two options: either the contribution of the metric is of higher order than the torsion, hence recovering the usual IDG theory ~\cite{Biswas:2011ar}, or the torsion is of higher order than the metric. In the latter case we would have a somewhat similar action of the UV extension of teleparallel gravity~\cite{Koivisto:2018loq}.}. Then, by using the relation between the Levi-Civita, $\mathring{\Gamma}$, and the total connection with torsion and null non-metricity, $\widetilde{\Gamma}$\footnote{Let us note that to the remaining sections we shall be using the tilde $\,\widetilde{\,}\,$ to refer to the total connection, instead of just the plain $\Gamma$.} \eqref{LCWeitzenbock}, we can write
\begin{equation}
\widetilde{\Gamma}^{\rho}_{\,\,\mu\nu}=\mathring{\Gamma}^{\rho}_{\,\,\mu\nu}+K^{\rho}_{\,\,\mu\nu},
\end{equation}
where the contortion tensor $K$ must be of the same order as the metric perturbation. The latter may seem as a strong assumption, nevertheless, as it has been known in the literature, the current constraints on torsion suggest that its influence is very small compared to the purely metric gravitational effects \cite{Lammerzahl:1997wk,Kostelecky:2007kx}. Therefore, considering a higher order than the metric perturbation in the torsion sector would make no sense physically.

Thus, the way to generalise the IDG action will be to consider all the quadratic Lorentz invariant terms that can be constructed with the curvature tensors, the contortion, and infinite derivatives operators, as follows [{\bf{\nameref{P4}}}]
\begin{eqnarray}\label{GEQ}
S&=&\int {\rm d}^{4}x\sqrt{-g}\left[{\widetilde{R}}+\widetilde{R}_{\mu_{1}\nu_{1}\rho_{1}\sigma_{1}}\mathcal{O}_{\mu_{2}\nu_{2}\rho_{2}\sigma_{2}}^{\mu_{1}\nu_{1}\rho_{1}\sigma_{1}}\widetilde{R}^{\mu_{2}\nu_{2}\rho_{2}\sigma_{2}}+\widetilde{R}_{\mu_{1}\nu_{1}\rho_{1}\sigma_{1}}\mathcal{O}_{\mu_{2}\nu_{2}\rho_{2}}^{\mu_{1}\nu_{1}\rho_{1}\sigma_{1}}K^{\mu_{2}\nu_{2}\rho_{2}}\right.
\nonumber
\\
&&+\left.K_{\mu_{1}\nu_{1}\rho_{1}}\mathcal{O}_{\mu_{2}\nu_{2}\rho_{2}}^{\mu_{1}\nu_{1}\rho_{1}}K^{\mu_{2}\nu_{2}\rho_{2}}\right],
\end{eqnarray}
where $\mathcal{O}$ denote the possible differential operators containing covariant derivatives and the Minkowski metric $\eta_{\mu\nu}$, so also the contractions of the Riemann and contortion tensors are considered in the action. Moreover, the tilde $\,\widetilde{\,}\,$ represents the quantities calculated with respect to the total connection $\widetilde{\Gamma}$. We will expand the quadratic part of the previous expression to obtain the general form for the gravitational Lagrangian\footnote{Note that many terms in \eqref{lagrangian} are completely redundant in the linear regime, as we shall prove in the following. As a matter of fact one can see in \eqref{compac1}, \eqref{compac2}, and \eqref{compac3}, that the number of functions to describe the linear regime are significantly less.}
\begin{eqnarray}
\label{lagrangian}
\mathcal{L}_{q}&=&\widetilde{R}\widetilde{F}_{1}\left(\Box\right)\widetilde{R}+\widetilde{R}\widetilde{F}_{2}\left(\Box\right)\partial_{\mu}\partial_{\nu}\widetilde{R}^{\mu\nu}+\widetilde{R}_{\mu\nu}\widetilde{F}_{3}\left(\Box\right)\widetilde{R}^{\left(\mu\nu\right)}+\widetilde{R}_{\mu\nu}\widetilde{F}_{4}\left(\Box\right)\widetilde{R}^{\left[\mu\nu\right]}
\nonumber
\\
&+&\widetilde{R}_{\left(\mu\right.}^{\,\,\,\left.\nu\right)}\widetilde{F}_{5}\left(\Box\right)\partial_{\nu}\partial_{\lambda}\widetilde{R}^{\mu\lambda}+\widetilde{R}_{\left[\mu\right.}^{\,\,\,\left.\nu\right]}\widetilde{F}_{6}\left(\Box\right)\partial_{\nu}\partial_{\lambda}\widetilde{R}^{\mu\lambda}+\widetilde{R}_{\mu}^{\,\,\,\nu}\widetilde{F}_{7}\left(\Box\right)\partial_{\nu}\partial_{\lambda}\widetilde{R}^{\left(\mu\lambda\right)}
\nonumber
\\
&+&\widetilde{R}_{\mu}^{\,\,\,\nu}\widetilde{F}_{8}\left(\Box\right)\partial_{\nu}\partial_{\lambda}\widetilde{R}^{\left[\mu\lambda\right]}+\widetilde{R}^{\lambda\sigma}\widetilde{F}_{9}\left(\Box\right)\partial_{\mu}\partial_{\sigma}\partial_{\nu}\partial_{\lambda}\widetilde{R}^{\mu\nu}+\widetilde{R}_{\left(\mu\lambda\right)}\widetilde{F}_{10}\left(\Box\right)\partial_{\nu}\partial_{\sigma}\widetilde{R}^{\mu\nu\lambda\sigma}
\nonumber
\\
&+&\widetilde{R}_{\left[\mu\lambda\right]}\widetilde{F}_{11}\left(\Box\right)\partial_{\nu}\partial_{\sigma}\widetilde{R}^{\mu\nu\lambda\sigma}+\widetilde{R}_{\mu\lambda}\widetilde{F}_{12}\left(\Box\right)\partial_{\nu}\partial_{\sigma}\widetilde{R}^{\left(\mu\nu\right|\left.\lambda\sigma\right)}
\nonumber
\\
&+&\widetilde{R}_{\mu\lambda}\widetilde{F}_{13}\left(\Box\right)\partial_{\nu}\partial_{\sigma}\widetilde{R}^{\left[\mu\nu\right|\left.\lambda\sigma\right]}+\widetilde{R}_{\mu\nu\lambda\sigma}\widetilde{F}_{14}\left(\Box\right)\widetilde{R}^{\left(\mu\nu\right|\left.\lambda\sigma\right)}+\widetilde{R}_{\mu\nu\lambda\sigma}\widetilde{F}_{15}\left(\Box\right)\widetilde{R}^{\left[\mu\nu\right|\left.\lambda\sigma\right]}
\nonumber
\\
&+&\widetilde{R}_{\left(\rho\mu\right|\left.\nu\lambda\right)}\widetilde{F}_{16}\left(\Box\right)\partial^{\rho}\partial_{\sigma}\widetilde{R}^{\mu\nu\lambda\sigma}+\widetilde{R}_{\left[\rho\mu\right|\left.\nu\lambda\right]}\widetilde{F}_{17}\left(\Box\right)\partial^{\rho}\partial_{\sigma}\widetilde{R}^{\mu\nu\lambda\sigma}
\nonumber
\\
&+&\widetilde{R}_{\rho\mu\nu\lambda}\widetilde{F}_{18}\left(\Box\right)\partial^{\rho}\partial_{\sigma}\widetilde{R}^{\left(\mu\nu\right|\left.\lambda\sigma\right)}+\widetilde{R}_{\rho\mu\nu\lambda}\widetilde{F}_{19}\left(\Box\right)\partial^{\rho}\partial_{\sigma}\widetilde{R}^{\left[\mu\nu\right|\left.\lambda\sigma\right]}
\nonumber
\\
&+&\widetilde{R}_{\left(\mu\nu\right|\left.\rho\sigma\right)}\widetilde{F}_{20}\left(\Box\right)\partial^{\nu}\partial^{\sigma}\partial_{\alpha}\partial_{\beta}\widetilde{R}^{\mu\alpha\rho\beta}+\widetilde{R}_{\left[\mu\nu\right|\left.\rho\sigma\right]}\widetilde{F}_{21}\left(\Box\right)\partial^{\nu}\partial^{\sigma}\partial_{\alpha}\partial_{\beta}\widetilde{R}^{\mu\alpha\rho\beta}
\nonumber
\\
&+&\widetilde{R}_{\mu\nu\rho\sigma}\widetilde{F}_{22}\left(\Box\right)\partial^{\nu}\partial^{\sigma}\partial_{\alpha}\partial_{\beta}\widetilde{R}^{\left(\mu\alpha\right|\left.\rho\beta\right)}+\widetilde{R}_{\mu\nu\rho\sigma}\widetilde{F}_{23}\left(\Box\right)\partial^{\nu}\partial^{\sigma}\partial_{\alpha}\partial_{\beta}\widetilde{R}^{\left[\mu\alpha\right|\left.\rho\beta\right]}
\nonumber
\\
&+&K_{\mu\nu\rho}\widetilde{F}_{24}\left(\Box\right)K^{\mu\nu\rho}+K_{\mu\nu\rho}\widetilde{F}_{25}\left(\Box\right)K^{\mu\rho\nu}+K_{\mu\,\,\rho}^{\,\,\rho}\widetilde{F}_{26}\left(\Box\right)K_{\,\,\,\,\,\sigma}^{\mu\sigma}
\nonumber
\\
&+&K_{\,\,\nu\rho}^{\mu}\widetilde{F}_{27}\left(\Box\right)\partial_{\mu}\partial_{\sigma}K^{\sigma\nu\rho}+K_{\,\,\nu\rho}^{\mu}\widetilde{F}_{28}\left(\Box\right)\partial_{\mu}\partial_{\sigma}K^{\sigma\rho\nu}+K_{\mu\,\,\,\,\,\nu}^{\,\,\rho}\widetilde{F}_{29}\left(\Box\right)\partial_{\rho}\partial_{\sigma}K^{\mu\nu\sigma}
\nonumber
\\
&+&K_{\mu\,\,\,\,\,\nu}^{\,\,\rho}\widetilde{F}_{30}\left(\Box\right)\partial_{\rho}\partial_{\sigma}K^{\mu\sigma\nu}+K_{\,\,\,\,\,\rho}^{\mu\rho}\widetilde{F}_{31}\left(\Box\right)\partial_{\mu}\partial_{\nu}K_{\,\,\,\,\,\sigma}^{\nu\sigma}
\nonumber
\\
&+&K_{\mu}^{\,\,\nu\rho}\widetilde{F}_{32}\left(\Box\right)\partial_{\nu}\partial_{\rho}\partial_{\alpha}\partial_{\sigma}K^{\mu\alpha\sigma}+K_{\,\,\,\lambda\sigma}^{\lambda}\widetilde{F}_{33}\left(\Box\right)\partial_{\rho}\partial_{\nu}K^{\nu\rho\sigma}
\nonumber
\\
&+&\widetilde{R}_{\,\,\nu\rho\sigma}^{\mu}\widetilde{F}_{34}\left(\Box\right)\partial_{\mu}K^{\nu\rho\sigma}+\widetilde{R}_{\mu\nu\,\,\sigma}^{\,\,\,\,\,\,\rho}\widetilde{F}_{35}\left(\Box\right)\partial_{\rho}K^{\mu\nu\sigma}+\widetilde{R}_{\left(\rho\sigma\right)}\widetilde{F}_{36}\left(\Box\right)\partial_{\nu}K^{\nu\rho\sigma}
\nonumber
\\
&+&\widetilde{R}_{\left[\rho\sigma\right]}\widetilde{F}_{37}\left(\Box\right)\partial_{\nu}K^{\nu\rho\sigma}+\widetilde{R}_{\rho\sigma}\widetilde{F}_{38}\left(\Box\right)\partial_{\nu}K^{\rho\nu\sigma}+\widetilde{R}_{\left(\rho\sigma\right)}\widetilde{F}_{39}\left(\Box\right)\partial^{\sigma}K_{\,\,\,\,\,\mu}^{\rho\mu}
\nonumber
\\
&+&\widetilde{R}_{\left[\rho\sigma\right]}\widetilde{F}_{40}\left(\Box\right)\partial^{\sigma}K_{\,\,\,\,\,\mu}^{\rho\mu}+\widetilde{R}\widetilde{F}_{41}\left(\Box\right)\partial_{\rho}K_{\,\,\,\,\,\mu}^{\rho\mu}+\widetilde{R}_{\,\,\alpha\,\,\sigma}^{\mu\,\,\rho}\widetilde{F}_{42}\left(\Box\right)\partial_{\mu}\partial_{\rho}\partial_{\nu}K^{\nu\left(\alpha\sigma\right)}
\nonumber
\\
&+&\widetilde{R}_{\,\,\alpha\,\,\sigma}^{\mu\,\,\rho}\widetilde{F}_{43}\left(\Box\right)\partial_{\mu}\partial_{\rho}\partial_{\nu}K^{\nu\left[\alpha\sigma\right]}+\widetilde{R}_{\,\,\alpha\,\,\sigma}^{\mu\,\,\rho}\widetilde{F}_{44}\left(\Box\right)\partial_{\mu}\partial_{\rho}\partial_{\nu}K^{\alpha\nu\sigma}
\nonumber
\\
&+&\widetilde{R}_{\,\,\left.\sigma\right)}^{\left(\mu\right.}\widetilde{F}_{45}\left(\Box\right)\partial_{\mu}\partial_{\nu}\partial_{\alpha}K^{\sigma\nu\alpha}+\widetilde{R}_{\,\,\left.\sigma\right]}^{\left[\mu\right.}\widetilde{F}_{46}\left(\Box\right)\partial_{\mu}\partial_{\nu}\partial_{\alpha}K^{\sigma\nu\alpha}
\nonumber
\\
&+&\widetilde{R}_{\mu\nu\lambda\sigma}\widetilde{F}_{47}\left(\Box\right)\widetilde{R}^{\mu\lambda\nu\sigma},
\end{eqnarray}
where the $\widetilde{F}_{i}\left(\Box\right)$'s are functions of the d'Alembertian $\Box=\eta_{\mu\nu}\partial^{\mu}\partial^{\nu}$, which have the same form as \eqref{FFunction}, namely
\begin{equation}
\label{infinite}
\widetilde{F}_{i}\left(\Box\right)=\sum_{n=0}^{N}\widetilde{f}_{i,n}\left(\frac{\Box}{M_{S}}\right)^{n},
\end{equation}
where $M_{S}$ holds for the mass defining the scale at which non-localities starts to play a role. Also, in the previous expression $n$ can be a finite (finite higher-order derivatives theories), or infinite (IDG) number, as we will consider from now onwards, since finite derivatives will incur ghosts and other instabilities. In the final Section of this Chapter, we shall show how only considering an infinite number of derivatives in \eqref{infinite} one can avoid the ghosts appearance for the torsion sector, which extends the current results on the metric one \cite{Biswas:2011ar}.

Since one needs to recover the purely metric IDG action when the torsion is zero, there are some constraints in the form of the $\widetilde{F}$ functions. In order to obtain these relations, let us write the Lagrangian of the metric theory around a Minkowski background as presented in \cite{Biswas:2011ar}
\begin{eqnarray}
\label{metricidg}
\mathcal{L}_{\rm IDG}&=&\mathring{R}{F}_{1}\left(\Box\right)\mathring{R}+\mathring{R}{F}_{2}\left(\Box\right)\partial_{\mu}\partial_{\nu}\mathring{R}^{\mu\nu}+\mathring{R}_{\mu\nu}{F}_{3}\left(\Box\right)\mathring{R}^{\mu\nu}+\mathring{R}_{\mu}^{\,\,\,\nu}{F}_{4}\left(\Box\right)\partial_{\nu}\partial_{\lambda}\mathring{R}^{\mu\lambda}
\nonumber
\\
&+&\mathring{R}^{\lambda\sigma}F_{5}\left(\Box\right)\partial_{\mu}\partial_{\sigma}\partial_{\nu}\partial_{\lambda}\mathring{R}^{\mu\nu}+\mathring{R}_{\mu\lambda}F_{6}\left(\Box\right)\partial_{\nu}\partial_{\sigma}\mathring{R}^{\mu\nu\lambda\sigma}+{\mathring{R}}_{\mu\nu\lambda\sigma}{F}_{7}\left(\Box\right){\mathring{R}}^{\mu\nu\lambda\sigma}
\nonumber
\\
&+&{\mathring{R}}_{\rho\mu\nu\lambda}F_{8}\left(\Box\right)\partial^{\rho}\partial_{\sigma}{\mathring{R}}^{\mu\nu\lambda\sigma}+{\mathring{R}}_{\mu\nu\rho\sigma}F_{9}\left(\Box\right)\partial^{\nu}\partial^{\sigma}\partial_{\alpha}\partial_{\beta}{\mathring{R}}^{\mu\alpha\rho\beta},
\end{eqnarray}
and compare it with the Lagrangian in \eqref{lagrangian} in the limit when torsion goes to zero
\begin{eqnarray}
\label{nulltorsion}
\mathcal{L}_{q}\left(K_{\,\,\,\nu\sigma}^{\mu}\rightarrow0\right)&=&\mathring{R}\widetilde{F}_{1}\left(\Box\right)\mathring{R}+\mathring{R}\widetilde{F}_{2}\left(\Box\right)\partial_{\mu}\partial_{\nu}\mathring{R}^{\mu\nu}+\mathring{R}_{\mu\nu}\widetilde{F}_{3}\left(\Box\right)\mathring{R}^{\mu\nu}
\nonumber
\\
&+&\mathring{R}_{\mu}^{\,\,\,\nu}\left(\widetilde{F}_{5}\left(\Box\right)+\widetilde{F}_{7}\left(\Box\right)\right)\partial_{\nu}\partial_{\lambda}\mathring{R}^{\mu\lambda}+\mathring{R}^{\lambda\sigma}\widetilde{F}_{9}\left(\Box\right)\partial_{\mu}\partial_{\sigma}\partial_{\nu}\partial_{\lambda}\mathring{R}^{\mu\nu}
\nonumber
\\
&+&\mathring{R}_{\mu\lambda}\left(\widetilde{F}_{10}\left(\Box\right)+\widetilde{F}_{12}\left(\Box\right)\right)\partial_{\nu}\partial_{\sigma}\mathring{R}^{\mu\nu\lambda\sigma}
\nonumber
\\
&+&{\mathring{R}}_{\mu\nu\lambda\sigma}\left(\widetilde{F}_{14}\left(\Box\right)+\frac{\widetilde{F}_{47}\left(\Box\right)}{2}\right){\mathring{R}}^{\mu\nu\lambda\sigma}
\nonumber
\\
&+&{\mathring{R}}_{\rho\mu\nu\lambda}\left(\widetilde{F}_{16}\left(\Box\right)+\widetilde{F}_{18}\left(\Box\right)\right)\partial^{\rho}\partial_{\sigma}{\mathring{R}}^{\mu\nu\lambda\sigma}
\nonumber
\\
&+&{\mathring{R}}_{\mu\nu\rho\sigma}\left(\widetilde{F}_{20}\left(\Box\right)+\widetilde{F}_{22}\left(\Box\right)\right)\partial^{\nu}\partial^{\sigma}\partial_{\alpha}\partial_{\beta}{\mathring{R}}^{\mu\alpha\rho\beta}.
\end{eqnarray}
Then a straightforward comparison between Eqs.\eqref{metricidg} and \eqref{nulltorsion} makes it clear that the following relations need to hold
\begin{eqnarray}
&&\tilde{F}_{1}\left(\Box\right)=F_{1}\left(\Box\right),\;\tilde{F}_{2}\left(\Box\right)=F_{2}\left(\Box\right),\;\tilde{F}_{3}\left(\Box\right)=F_{3}\left(\Box\right),\;\tilde{F}_{5}\left(\Box\right)+\tilde{F}_{7}\left(\Box\right)=F_{4}\left(\Box\right),
\nonumber
\\
&&\,\nonumber
\\
&&\tilde{F}_{9}\left(\Box\right)=F_{5}\left(\Box\right),\,\tilde{F}_{10}\left(\Box\right)+\tilde{F}_{12}\left(\Box\right)=F_{6}\left(\Box\right),\;\tilde{F}_{14}\left(\Box\right)+\frac{\widetilde{F}_{47}\left(\Box\right)}{2}=F_{7}\left(\Box\right),
\\
&&\tilde{F}_{16}\left(\Box\right)+\tilde{F}_{18}\left(\Box\right)=F_{8}\left(\Box\right),\;\tilde{F}_{20}\left(\Box\right)+\tilde{F}_{22}\left(\Box\right)=F_{9}\left(\Box\right).\nonumber
\end{eqnarray}
In order to check which are the terms that are of order ${\cal O} (h^{2})$ in the Lagrangian \eqref{lagrangian}, and get rid of redundant terms, we still need to substitute the linearized expressions of the curvature tensors, namely
\begin{equation}
\label{riemann}
\tilde{R}_{\mu\nu\rho\lambda}=\partial_{\left[\nu\right.}\partial_{\rho}h_{\left.\lambda\mu\right]}-\partial_{\left[\nu\right.}\partial_{\lambda}h_{\left.\rho\mu\right]}+2\partial_{\left[\nu\right.}K_{\rho\left|\mu\right]\lambda},
\end{equation}
\begin{equation}
\label{ricci}
\tilde{R}_{\mu\nu}=\partial_{\sigma}\partial_{\left(\nu\right.}h_{\left.\mu\right)}^{\,\,\,\sigma}-\frac{1}{2}\left(\partial_{\mu}\partial_{\nu}h+\Box h_{\mu\nu}\right)-\partial_{\sigma}K_{\,\,\,\mu\nu}^{\sigma}+\partial_{\mu}K_{\,\,\,\sigma\nu}^{\sigma},
\end{equation}
\begin{equation}
\label{scalar}
\tilde{R}=\partial_{\mu}\partial_{\nu}h^{\mu\nu}-\Box h-2\partial_{\mu}K_{\,\,\,\,\,\,\,\nu}^{\mu\nu},
\end{equation}
We have computed each term appearing in the Lagrangian \eqref{lagrangian} separately. Explicit calculations can be found in Appendix \ref{ap:3}. Finally, using the expressions obtained and performing a further simplification we obtain the linearised action which can be split in metric, torsion and the mixed terms as follows
\begin{equation}
\label{laaccion}
S=-\int {\rm d}^{4}x\left(\mathcal{L}_{M}+\mathcal{L}_{MT}+\mathcal{L}_{T}\right)=S_{M}+S_{MT}+S_{T},
\end{equation}
where
\begin{eqnarray}
\label{compac1}
\mathcal{L}_{M}&=&\frac{1}{2}h_{\mu\nu}\Box a\left(\Box\right)h^{\mu\nu}+h_{\mu}^{\,\,\alpha}b\left(\Box\right)\partial_{\alpha}\partial_{\sigma}h^{\sigma\mu}+hc\left(\Box\right)\partial_{\mu}\partial_{\nu}h^{\mu\nu}+\frac{1}{2}h\Box d\left(\Box\right)h
\nonumber
\\
&&+h^{\lambda\sigma}\frac{f\left(\Box\right)}{\Box}\partial_{\sigma}\partial_{\lambda}\partial_{\mu}\partial_{\nu}h^{\mu\nu},
\end{eqnarray}
\begin{eqnarray}
\label{compac2}
\mathcal{L}_{MT}&=&h\Box u\left(\Box\right)\partial_{\rho}K_{\,\,\,\,\,\sigma}^{\rho\sigma}+h_{\mu\nu}v_{1}\left(\Box\right)\partial^{\mu}\partial^{\nu}\partial_{\rho}K_{\,\,\,\,\,\sigma}^{\rho\sigma}+h_{\mu\nu}v_{2}\left(\Box\right)\partial^{\nu}\partial_{\sigma}\partial_{\rho}K^{\mu\sigma\rho}
\nonumber
\\
&&+h_{\mu\nu}\Box w\left(\Box\right)\partial_{\rho}K^{\rho\mu\nu},
\end{eqnarray}
\begin{eqnarray}
\label{compac3}
\mathcal{L}_{T}&=&K^{\mu\sigma\lambda}p_{1}\left(\Box\right)K_{\mu\sigma\lambda}+K^{\mu\sigma\lambda}p_{2}\left(\Box\right)K_{\mu\lambda\sigma}+K_{\mu\,\,\rho}^{\,\,\rho}p_{3}\left(\Box\right)K_{\,\,\,\,\,\sigma}^{\mu\sigma}
\nonumber
\\
&+&K_{\,\,\nu\rho}^{\mu}q_{1}\left(\Box\right)\partial_{\mu}\partial_{\sigma}K^{\sigma\nu\rho}+K_{\,\,\nu\rho}^{\mu}q_{2}\left(\Box\right)\partial_{\mu}\partial_{\sigma}K^{\sigma\rho\nu}+K_{\mu\,\,\,\,\,\nu}^{\,\,\rho}q_{3}\left(\Box\right)\partial_{\rho}\partial_{\sigma}K^{\mu\nu\sigma}
\nonumber
\\
&+&K_{\mu\,\,\,\,\,\nu}^{\,\,\rho}q_{4}\left(\Box\right)\partial_{\rho}\partial_{\sigma}K^{\mu\sigma\nu}+K_{\,\,\,\,\,\rho}^{\mu\rho}q_{5}\left(\Box\right)\partial_{\mu}\partial_{\nu}K_{\,\,\,\,\,\sigma}^{\nu\sigma}+K_{\,\,\,\lambda\sigma}^{\lambda}q_{6}\left(\Box\right)\partial_{\mu}\partial_{\alpha}K^{\sigma\mu\alpha}
\nonumber
\\
&+&K_{\mu}^{\,\,\nu\rho}s\left(\Box\right)\partial_{\nu}\partial_{\rho}\partial_{\alpha}\partial_{\sigma}K^{\mu\alpha\sigma}.
\end{eqnarray}
In order to get a deeper insight about how the functions involved in Eqs.\eqref{compac1}, \eqref{compac2} and \eqref{compac3} are related with the $\tilde{F}_{i}\left(\Box\right)$'s in \eqref{lagrangian}, we refer to Appendix \ref{ap:4}.
At this stage, it is interesting to note that $\mathcal{L}_{M}$ in \eqref{compac1} possesses metric terms only and coincides with the Lagrangian of the non-torsion case~\cite{Biswas:2011ar}, as expected. On the other hand, $\mathcal{L}_{MT}$ in \eqref{compac2} contains the mixed terms between metric and torsion, whereas $\mathcal{L}_{T}$ contains only torsion expressions.  

It is also worth calculating the local limit of \eqref{laaccion} by taking $M_{S} \rightarrow \infty$, since it will allow us to know the conditions to be imposed in the non-local functions in order to recover a PG theory in the IR. For the detailed calculations we refer the reader to Appendix \ref{ap:5}. Here we will just summarise that the local limit of the theory is 
\begin{eqnarray}
\mathcal{L}_{{\rm GPG}}&=&\tilde{R}+b_{1}\tilde{R}^{2}+b_{2}\tilde{R}_{\mu\nu\rho\sigma}\tilde{R}^{\mu\nu\rho\sigma}+b_{3}\tilde{R}_{\mu\nu\rho\sigma}\tilde{R}^{\rho\sigma\mu\nu}+2\left(b_{1}-b_{2}-b_{3}\right)\tilde{R}_{\mu\nu\rho\sigma}\tilde{R}^{\mu\rho\nu\sigma}
\nonumber
\\
&&+b_{5}\tilde{R}_{\mu\nu}\tilde{R}^{\mu\nu}-\left(4b_{1}+b_{5}\right)\tilde{R}_{\mu\nu}\tilde{R}^{\nu\mu}+a_{1}K_{\mu\nu\rho}K^{\mu\nu\rho}+a_{2}K_{\mu\nu\rho}K^{\mu\rho\nu}
\nonumber
\\
&&+a_{3}K_{\nu\,\,\,\,\,\mu}^{\,\,\,\mu}K_{\,\,\,\,\,\,\rho}^{\nu\rho}+c_{1}K_{\,\,\nu\rho}^{\mu}\nabla_{\mu}\nabla_{\sigma}K^{\sigma\nu\rho}+c_{2}K_{\,\,\nu\rho}^{\mu}\nabla_{\mu}\nabla_{\sigma}K^{\sigma\rho\nu}
\nonumber
\\
&&+c_{3}K_{\mu\,\,\,\,\,\nu}^{\,\,\rho}\nabla_{\rho}\nabla_{\sigma}K^{\mu\nu\sigma}+c_{4}K_{\mu\,\,\,\,\,\nu}^{\,\,\rho}\nabla_{\rho}\nabla_{\sigma}K^{\mu\sigma\nu},
\end{eqnarray} 
given that the conditions in \eqref{localconditions} are met.

As we saw in Section \ref{2.3}, the fact that the terms of the form $\nabla_{\mu}K_{\,\,\,\nu\rho}^{\mu}\nabla_{\sigma}K^{\sigma\nu\rho}$ are part of the Lagrangian can contribute to make the vector modes present in the theory ghost-free in the IR limit. We shall prove that the two vector modes can be made ghost-free in the proposed non-local extension of PG gravity.

\subsection{Field equations}

Since the connection under consideration is different from the Levi-Civita one, and consequently the metric and the connections are {\it a priori} independent, we will have two set of equations, namely
\begin{itemize}
\item \textbf{Einstein Equations}: Variation of the action \eqref{laaccion}, with respect to the metric:
\begin{equation}
\frac{\delta_{g}S_{M}}{\delta g^{\mu\nu}}+\frac{\delta_{g}S_{MT}}{\delta g^{\mu\nu}}=0.
\end{equation}
\item \textbf{Cartan Equations}: Variation of the action \eqref{laaccion}, with respect to the contortion\footnote{Note that varying with respect to the contortion is equivalent to varying with respect to the torsion, since they are related by a linear expression.}
\begin{equation}
\frac{\delta_{K}S_{MT}}{\delta K_{\,\,\nu\rho}^{\mu}}+\frac{\delta_{K}S_{T}}{\delta K_{\,\,\nu\rho}^{\mu}}=0.
\end{equation}
\end{itemize}
It is interesting to note that $\frac{\delta_{g}S_{M}}{\delta g^{\mu\nu}}$ has already been calculated in \cite{Biswas:2011ar}, although, calculations involving such a term have been performed again as a consistency check. Let us sketch the calculations leading towards the field equations.

\subsubsection{Einstein Equations}

Performing variations with respect to the metric in $S_{M}$, we find
\begin{eqnarray}
\frac{\delta_{g}S_{M}}{\delta g^{\mu\nu}}&=&\Box a\left(\Box\right)h_{\mu\nu}+b\left(\Box\right)\partial_{\sigma}\partial_{\left(\nu\right.}h_{\left.\mu\right)}^{\,\,\,\sigma}+c\left(\Box\right)\left[\partial_{\mu}\partial_{\nu}h+\eta_{\mu\nu}\partial_{\rho}\partial_{\sigma}h^{\rho\sigma}\right]+\eta_{\mu\nu}\Box d\left(\Box\right)h
\nonumber
\\
&+&2\frac{f\left(\Box\right)}{\Box}\partial_{\mu}\partial_{\nu}\partial_{\rho}\partial_{\sigma}h^{\rho\sigma},
\end{eqnarray}
which is compatible with the results in Ref.~\cite{Biswas:2011ar}.
For $S_{MT}$, we have
\begin{eqnarray}
\frac{\delta_{g}S_{MT}}{\delta g^{\mu\nu}}&=&\eta_{\mu\nu}\Box u\left(\Box\right)\partial_{\rho}K_{\,\,\,\,\,\sigma}^{\rho\sigma}+v_{1}\left(\Box\right)\partial_{\mu}\partial_{\nu}\partial_{\rho}K_{\,\,\,\,\,\sigma}^{\rho\sigma}+v_{2}\left(\Box\right)\partial_{\sigma}\partial_{\rho}\partial_{\left(\nu\right.}K_{\left.\mu\right)}^{\,\,\,\sigma\rho}
\nonumber
\\
&+&\Box w\left(\Box\right)\partial_{\rho}K_{\,\,\left(\mu\nu\right)}^{\rho}.
\end{eqnarray}
Therefore, the resulting Einstein's equations are
\begin{eqnarray}
\label{einsteingen}
&&\Box a\left(\Box\right)h_{\mu\nu}+b\left(\Box\right)\partial_{\sigma}\partial_{\left(\nu\right.}h_{\left.\mu\right)}^{\,\,\,\sigma}+c\left(\Box\right)\left[\partial_{\mu}\partial_{\nu}h+\eta_{\mu\nu}\partial_{\rho}\partial_{\sigma}h^{\rho\sigma}\right]+\eta_{\mu\nu}\Box d\left(\Box\right)h
\nonumber
\\
&&+2\frac{f\left(\Box\right)}{\Box}\partial_{\mu}\partial_{\nu}\partial_{\rho}\partial_{\sigma}h^{\rho\sigma}+\eta_{\mu\nu}\Box u\left(\Box\right)\partial_{\rho}K_{\,\,\,\,\,\sigma}^{\rho\sigma}+v_{1}\left(\Box\right)\partial_{\mu}\partial_{\nu}\partial_{\rho}K_{\,\,\,\,\,\sigma}^{\rho\sigma}
\nonumber
\\
&&+v_{2}\left(\Box\right)\partial_{\sigma}\partial_{\rho}\partial_{\left(\nu\right.}K_{\left.\mu\right)}^{\,\,\,\sigma\rho}+\Box w\left(\Box\right)\partial_{\rho}K_{\,\,\left(\mu\nu\right)}^{\rho}=\tau_{\mu\nu},
\end{eqnarray}
where $\tau_{\mu\nu}={\delta S_{matter}}/{\delta g^{\mu\nu}}$ is the usual energy-momentum tensor for matter fields.
At this stage, we can resort to the conservation of the energy-momentum tensor, $\partial_{\mu}\tau^{\mu\nu}=0$, to find the following constraints on the functions involved in \eqref{einsteingen}
\begin{eqnarray}
\label{constraints}
&&a(\Box)+b(\Box)=0,~c(\Box)+d(\Box)=0,~b(\Box)+c(\Box)+f(\Box)=0\,,
\nonumber
\\
&&u(\Box)+v_{1}(\Box)=0\,,~~~~~~v_{2}(\Box)-w(\Box)=0\,.
\end{eqnarray}
We can also prove these constraints by looking at the explicit expression of the functions in Eq.\eqref{constraints} provided in the Appendix \ref{ap:4}. 

  %%%%%%%%%%%%%%%%%

\subsubsection{Cartan Equations}

On the other hand, performing variations with respect to the contortion, we find
\begin{eqnarray}
\label{cartan1}
\frac{\delta S_{MT}}{\delta K_{\,\,\nu\rho}^{\mu}}&=&-\Box u\left(\Box\right)\partial_{\left[\mu\right.}\eta^{\left.\rho\right]\nu}h-v_{1}\left(\Box\right)\partial^{\alpha}\partial^{\beta}\partial_{\left[\mu\right.}\eta^{\left.\rho\right]\nu}h_{\alpha\beta}-v_{2}\left(\Box\right)\partial^{\beta}\partial^{\nu}\partial^{\left[\rho\right.}h_{\left.\mu\right]\beta}
\nonumber
\\
&&-\Box w\left(\Box\right)\partial_{\left[\mu\right.}h^{\left.\rho\right]\nu},
\end{eqnarray}
and
\begin{eqnarray}
\label{cartan2}
\frac{\delta S_{T}}{\delta K_{\,\,\nu\rho}^{\mu}}&=&2p_{1}\left(\Box\right)K_{\mu}^{\,\,\nu\rho}+2p_{2}\left(\Box\right)K_{\left[\mu\right.}^{\,\,\,\,\,\left.\rho\right]\nu}+2p_{3}\left(\Box\right)\eta^{\nu\left[\rho\right.}K_{\left.\mu\right]\,\,\,\,\,\sigma}^{\,\,\,\,\sigma}
\nonumber
\\
&&-2q_{1}\left(\Box\right)\partial_{\sigma}\partial_{\left[\mu\right.}K^{\left.\rho\right]\nu\sigma}+2q_{2}\left(\Box\right)\partial_{\sigma}\partial_{\left[\mu\right.}K^{\sigma\left|\rho\right]\nu}
\nonumber
\\
&&+q_{3}\left(\Box\right)\left(\partial^{\nu}\partial_{\sigma}K_{\left[\mu\right.}^{\,\,\,\left.\rho\right]\sigma}+\partial_{\sigma}\partial^{\left[\rho\right.}K_{\left.\mu\right]\,\,\,\,\,}^{\,\,\,\,\sigma\nu}\right)+2q_{4}\left(\Box\right)\partial^{\nu}\partial_{\sigma}K_{\mu}^{\,\,\,\sigma\rho}
\nonumber
\\
&&+2q_{5}\left(\Box\right)\eta^{\nu\left[\rho\right.}\partial_{\left.\mu\right]}\partial_{\lambda}K_{\,\,\,\,\,\sigma}^{\lambda\sigma}+q_{6}\left(\Box\right)\left(\partial_{\lambda}\partial_{\alpha}\eta_{\left[\mu\right.}^{\nu}K^{\left.\rho\right]\lambda\alpha}-\partial^{\nu}\partial^{\left[\rho\right.}K_{\left.\mu\right]\lambda}^{\,\,\,\,\,\,\lambda}\right)
\nonumber
\\
&&+2s\left(\Box\right)\partial^{\sigma}\partial^{\lambda}\partial^{\nu}\partial^{\left[\rho\right.}K_{\left.\mu\right]\sigma\lambda}.
\end{eqnarray}
This leads us to the Cartan Equations
\begin{eqnarray}
\label{cartangen}
&-&\Box u\left(\Box\right)\partial_{\left[\mu\right.}\eta^{\left.\rho\right]\nu}h-v_{1}\left(\Box\right)\partial^{\alpha}\partial^{\beta}\partial_{\left[\mu\right.}\eta^{\left.\rho\right]\nu}h_{\alpha\beta}-v_{2}\left(\Box\right)\partial^{\beta}\partial^{\nu}\partial^{\left[\rho\right.}h_{\left.\mu\right]\beta}
\nonumber
\\
&-&\Box w\left(\Box\right)\partial_{\left[\mu\right.}h^{\left.\rho\right]\nu}+2p_{1}\left(\Box\right)K_{\mu}^{\,\,\nu\rho}+2p_{2}\left(\Box\right)K_{\left[\mu\right.}^{\,\,\,\,\,\left.\rho\right]\nu}+2p_{3}\left(\Box\right)\eta^{\nu\left[\rho\right.}K_{\left.\mu\right]\,\,\,\,\,\sigma}^{\,\,\,\,\sigma}
\nonumber
\\
&-&2q_{1}\left(\Box\right)\partial_{\sigma}\partial_{\left[\mu\right.}K^{\left.\rho\right]\nu\sigma}+2q_{2}\left(\Box\right)\partial_{\sigma}\partial_{\left[\mu\right.}K^{\sigma\left|\rho\right]\nu}
\nonumber
\\
&+&q_{3}\left(\Box\right)\left(\partial^{\nu}\partial_{\sigma}K_{\left[\mu\right.}^{\,\,\,\left.\rho\right]\sigma}+\partial_{\sigma}\partial^{\left[\rho\right.}K_{\left.\mu\right]\,\,\,\,\,}^{\,\,\,\,\sigma\nu}\right)+2q_{4}\left(\Box\right)\partial^{\nu}\partial_{\sigma}K_{\mu}^{\,\,\,\sigma\rho}
\nonumber
\\
&+&2q_{5}\left(\Box\right)\eta^{\nu\left[\rho\right.}\partial_{\left.\mu\right]}\partial_{\lambda}K_{\,\,\,\,\,\sigma}^{\lambda\sigma}+q_{6}\left(\Box\right)\left(\partial_{\lambda}\partial_{\alpha}\eta_{\left[\mu\right.}^{\nu}K^{\left.\rho\right]\lambda\alpha}-\partial^{\nu}\partial^{\left[\rho\right.}K_{\left.\mu\right]\lambda}^{\,\,\,\,\,\,\lambda}\right)
\nonumber
\\
&+&2s\left(\Box\right)\partial^{\sigma}\partial^{\lambda}\partial^{\nu}\partial^{\left[\rho\right.}K_{\left.\mu\right]\sigma\lambda}=\Sigma_{\mu}^{\,\,\,\nu\rho},
\end{eqnarray}
where $\Sigma_{\mu}^{\,\,\,\nu\rho}={\delta S_{matter}}/{\delta K_{\,\,\,\nu\rho}^{\mu}}$. From these field equations \eqref{einsteingen} and \eqref{cartangen} exact solutions cannot be obtained easily. In order to solve them, in the following we shall decompose the contortion field $K_{\mu\nu\rho}$ into its three irreducible components.

\subsection{Torsion decomposition}

In four dimensions, the torsion field $T_{\mu\nu\rho}$, as well as the contortion field $K_{\mu\nu\rho}$ (since it is also a three rank tensor with two antisymmetric indices), can be decomposed into three irreducible Lorentz invariant terms \cite{Shapiro:2001rz}, as we saw at the end of Section \ref{2.2}. By abusing the language we will denote the two vectors and the tensor of contortion decomposition equal to those of the torsion decomposition.\\
This decomposition turns out to be very useful, thanks to the fact that the three terms in Eq. \eqref{decomposition} propagate different dynamical off-shell degrees of freedom. Hence, it is convenient to study each of them separately, compared to the whole torsion contribution at the same time. Also interaction with matter, more specifically with fermions, is only made via the axial vector\footnote{Note that the axial part of the torsion and the contortion are the same.} \cite{Shapiro:2001rz}. That is why the two remaining components are usually known as {\it inert torsion}. Under this decomposition we will study how the torsion related terms in the linearised Lagrangian in Eq.\eqref{laaccion} change, and how to rederive the corresponding field equations. 
Introducing \eqref{decomposition2} (in terms of the contortion), and the constrains of the functions obtained in \eqref{constraints}, in \eqref{compac2} we find that the mixed term of the Lagrangian becomes
\begin{eqnarray}
\label{mixedecom}
\mathcal{L}_{MT}&=&h\Box\left(u\left(\Box\right)+\frac{1}{3}v_{2}\left(\Box\right)\right)\partial_{\mu}T^{\mu}-h_{\mu\nu}\left(u\left(\Box\right)+\frac{1}{3}v_{2}\left(\Box\right)\right)\partial^{\mu}\partial^{\nu}\partial_{\rho}T^{\rho}
\nonumber
\\
&+&h_{\mu\nu}v_{2}\left(\Box\right)\partial^{\nu}\partial_{\rho}\partial_{\sigma}q^{\mu\rho\sigma}+h_{\mu\nu}\Box v_{2}\left(\Box\right)\partial_{\sigma}q^{\mu\nu\sigma}.
\end{eqnarray}
Now, integrating by parts and using the linearised expression for the Ricci scalar we find
\begin{eqnarray}
\label{mixedecom2}
\mathcal{L}_{MT}&=&-\mathring{R}\left(u\left(\Box\right)+\frac{1}{3}v_{2}\left(\Box\right)\right)\partial_{\mu}T^{\mu}+h_{\mu\nu}v_{2}\left(\Box\right)\partial^{\nu}\partial_{\rho}\partial_{\sigma}q^{\mu\rho\sigma}
\nonumber
\\
&+&h_{\mu\nu}\Box v_{2}\left(\Box\right)\partial_{\sigma}q^{\mu\nu\sigma}
\end{eqnarray}
The first term accounts for a non-minimal coupling of the trace vector with the curvature, which, as seen in Section \ref{2.3}, is known for producing ghostly degrees of freedom. Therefore, for stability reasons we impose $v_{2}\left(\Box\right)=-3u\left(\Box\right)$, finally obtaining
\begin{equation}
\label{mixedecom3}
\mathcal{L}_{MT}=-3h_{\mu\nu}u\left(\Box\right)\partial^{\nu}\partial_{\rho}\partial_{\sigma}q^{\mu\rho\sigma}-3h_{\mu\nu}\Box u\left(\Box\right)\partial_{\sigma}q^{\mu\nu\sigma}.
\end{equation}
In order to obtain the pure torsion part of the Lagrangian we substitute \eqref{decomposition2}, in terms of the contortion, into \eqref{compac3}
\begin{eqnarray}
\label{torsdecom}
\mathcal{L}_{T}&=&\frac{1}{6}S_{\mu}\left(p_{2}\left(\Box\right)-p_{1}\left(\Box\right)\right)S^{\mu}+\frac{1}{9}\partial_{\left[\mu\right.}S_{\left.\nu\right]}\left(q_{1}\left(\Box\right)-q_{2}\left(\Box\right)-q_{3}\left(\Box\right)+q_{4}\left(\Box\right)\right)\partial^{\left[\mu\right.}S^{\left.\nu\right]}
\nonumber
\\
&+&\frac{1}{3}T_{\mu}\left(2p_{1}\left(\Box\right)+p_{2}\left(\Box\right)+3p_{3}\left(\Box\right)+\frac{1}{2}s\left(\Box\right)\Box^{2}\right)T^{\mu}
\nonumber
\\
&-&\frac{2}{9}\partial_{\left[\mu\right.}T_{\left.\nu\right]}\left(q_{1}\left(\Box\right)+q_{3}\left(\Box\right)+2q_{4}\left(\Box\right)-3q_{6}\left(\Box\right)\right)\partial^{\left[\mu\right.}T^{\left.\nu\right]}
\nonumber
\\
&-&\frac{1}{9}\partial_{\mu}T^{\mu}\left(3q_{1}\left(\Box\right)+3q_{2}\left(\Box\right)+9q_{5}\left(\Box\right)-s\left(\Box\right)\Box\right)\partial_{\nu}T^{\nu}+q_{\mu\nu\rho}p_{1}\left(\Box\right)q^{\mu\nu\rho}
\nonumber
\\
&+&q_{\mu\nu\rho}p_{2}\left(\Box\right)q^{\mu\rho\nu}+q_{\,\,\nu\rho}^{\mu}q_{1}\left(\Box\right)\partial_{\mu}\partial_{\sigma}q^{\sigma\nu\rho}+q_{\,\,\nu\rho}^{\mu}q_{2}\left(\Box\right)\partial_{\mu}\partial_{\sigma}q^{\sigma\rho\nu}
\nonumber
\\
&+&q_{\mu\,\,\,\,\,\nu}^{\,\,\rho}q_{3}\left(\Box\right)\partial_{\rho}\partial_{\sigma}q^{\mu\nu\sigma}+q_{\mu\,\,\,\,\,\nu}^{\,\,\rho}q_{4}\left(\Box\right)\partial_{\rho}\partial_{\sigma}q^{\mu\sigma\nu}+q^{\mu\nu\rho}s\left(\Box\right)\partial_{\nu}\partial_{\rho}\partial_{\sigma}\partial_{\lambda}q_{\mu}^{\,\,\,\sigma\lambda}
\nonumber
\\
&+&\frac{1}{3}T_{\mu}\left(2q_{1}\left(\Box\right)+2q_{3}\left(\Box\right)+4q_{4}\left(\Box\right)-3q_{6}\left(\Box\right)+2s\left(\Box\right)\Box\right)\partial_{\nu}\partial_{\rho}q^{\mu\nu\rho}
\nonumber
\\
&+&\frac{1}{2}\varepsilon_{\mu\nu\rho\sigma}q^{\rho\lambda\sigma}q_{3}\left(\Box\right)\partial_{\lambda}\partial^{\nu}S^{\mu}
\end{eqnarray}
Now we can proceed to calculate the field equations under the torsion decomposition. Varying the complete decomposed Lagrangian formed of \eqref{mixedecom3}, \eqref{torsdecom}, and $\mathcal{L}_M$, with respect to the metric we find the Einstein Equations:
\begin{eqnarray}
\label{einsteindecomposed}
&&\Box a\left(\Box\right)h_{\mu\nu}+b\left(\Box\right)\partial_{\sigma}\partial_{\left(\nu\right.}h_{\left.\mu\right)}^{\,\,\,\sigma}+c\left(\Box\right)\left[\partial_{\mu}\partial_{\nu}h+\eta_{\mu\nu}\partial_{\rho}\partial_{\sigma}h^{\rho\sigma}\right]+\eta_{\mu\nu}\Box d\left(\Box\right)h
\nonumber
\\
&&+2\frac{f\left(\Box\right)}{\Box}\partial_{\mu}\partial_{\nu}\partial_{\rho}\partial_{\sigma}h^{\rho\sigma}-3u\left(\Box\right)\partial_{\sigma}\partial_{\rho}\partial_{\left(\nu\right.}q_{\left.\mu\right)}^{\,\,\,\sigma\rho}-3\Box u\left(\Box\right)\partial_{\rho}q_{\,\,\left(\mu\nu\right)}^{\rho}=\tau_{\mu\nu},
\end{eqnarray}
where we can see that the vectorial parts of the torsion tensor do not appear.\\
On the other hand, performing variations with respect to the three different invariants of the torsion we find the corresponding Cartan Equations:
\begin{itemize}
\item Variations with respect to the axial vector $S^{\mu}$
\begin{eqnarray}
\label{cartandecax}
&&\frac{1}{6}\left(p_{2}\left(\Box\right)-p_{1}\left(\Box\right)\right)S_{\mu}
\nonumber
\\
&&+\frac{1}{18}\left(q_{1}\left(\Box\right)-q_{2}\left(\Box\right)-q_{3}\left(\Box\right)+q_{4}\left(\Box\right)\right)\left(\partial_{\mu}\partial_{\nu}S^{\nu}-\Box S_{\mu}\right)
\nonumber
\\
&&+\frac{1}{2}\varepsilon_{\mu\nu\rho\sigma}q_{3}\left(\Box\right)\partial_{\lambda}\partial^{\nu}q^{\rho\lambda\sigma}=\frac{\delta\mathcal{L}_{matter}}{\delta S^{\mu}}.
\end{eqnarray}
\item Variations with respect to the trace vector $T^{\mu}$
\begin{eqnarray}
\label{cartandectra}
&&\frac{1}{3}\left(2p_{1}\left(\Box\right)+p_{2}\left(\Box\right)+3p_{3}\left(\Box\right)+\frac{1}{2}s\left(\Box\right)\Box^{2}\right)T_{\mu}
\nonumber
\\
&&-\frac{1}{9}\left(q_{1}\left(\Box\right)+q_{3}\left(\Box\right)+2q_{4}\left(\Box\right)-3q_{6}\left(\Box\right)\right)\left(\partial_{\mu}\partial_{\nu}T^{\nu}-\Box T_{\mu}\right)
\nonumber
\\
&&+\frac{1}{9}\left(3q_{1}\left(\Box\right)+3q_{2}\left(\Box\right)+9q_{5}\left(\Box\right)-s\left(\Box\right)\Box\right)\partial_{\mu}\partial_{\nu}T^{\nu}
\nonumber
\\
&&+\frac{1}{3}\left(2q_{1}\left(\Box\right)+2q_{3}\left(\Box\right)+4q_{4}\left(\Box\right)-3q_{6}\left(\Box\right)+2s\left(\Box\right)\Box\right)\partial_{\nu}\partial_{\rho}q_{\mu}^{\,\,\nu\rho}
\nonumber
\\
&&=\frac{\delta\mathcal{L}_{matter}}{\delta T^{\mu}}.
\end{eqnarray}
\item Variations with respect to the tensor part $q^{\mu\nu\rho}$
\begin{eqnarray}
&&p_{1}\left(\Box\right)q_{\mu\nu\rho}+p_{2}\left(\Box\right)q_{\left[\mu\rho\right]\nu}+q_{1}\left(\Box\right)\partial_{\left[\mu\right.}\partial_{\sigma}q_{\,\,\nu\left.\rho\right]}^{\sigma}+q_{2}\left(\Box\right)\partial_{\left[\mu\right.}\partial_{\sigma}q_{\,\,\left.\rho\right]\nu}^{\sigma}
\nonumber
\\
&&+q_{3}\left(\Box\right)\partial_{\sigma}\partial_{\left[\rho\right.}q_{\left.\mu\right]\,\,\,\,\,\nu}^{\,\,\,\sigma}+q_{4}\left(\Box\right)\partial_{\nu}\partial_{\sigma}q_{\mu\,\,\,\,\,\rho}^{\,\,\sigma}+s\left(\Box\right)\partial_{\nu}\partial_{\sigma}\partial_{\lambda}\partial_{\left[\rho\right.}q_{\left.\mu\right]}^{\,\,\,\sigma\lambda}
\nonumber
\\
&&+\frac{1}{3}\left(2q_{1}\left(\Box\right)+2q_{3}\left(\Box\right)+4q_{4}\left(\Box\right)-3q_{6}\left(\Box\right)+2s\left(\Box\right)\Box\right)\partial_{\nu}\partial_{\rho}T_{\mu}
\nonumber
\\
&&=\frac{\delta\mathcal{L}_{matter}}{\delta q^{\mu\nu\rho}}.
\end{eqnarray}
\end{itemize}
These decomposed equations will help us to find exact solutions of the theory, as we will see in the following section.

\section{Ghost and singularity free solutions}
\label{4.3}

In this Section we shall find solutions of the proposed UV extension of PG gravity, provided there exists a fermion as a source, and assuming that both axial and trace torsion are different from zero\footnote{The fact that the traceless tensor part of the torsion $q_{\,\,\nu\rho}^{\mu}$ is considered to be negligible is motivated by the fact that in a completely symmetric spacetime this component is identically zero \cite{Sur:2013aia}.}. For the usual IDG theory, solutions for this configuration were presented in \cite{Buoninfante:2018stt}. In order to render our case clearer, we have divided the calculations in the following two subsections. In the first one, we will solve Cartan equations to obtain the torsion tensor, while in the second one we will solve Einstein equations for the metric tensor.

\subsection{Cartan Equations} 
Let us write down the linearised Lagrangian decomposed into the two vector invariants, where the tensor component of the torsion has been set to zero. Thus,
\begin{eqnarray}
\label{vectoraction}
\mathcal{L}&=&\mathcal{L}_{M}+\frac{1}{6}S_{\mu}\left(p_{2}\left(\Box\right)-p_{1}\left(\Box\right)\right)S^{\mu}
\nonumber
\\
&&+\frac{1}{9}\partial_{\left[\mu\right.}S_{\left.\nu\right]}\left(q_{1}\left(\Box\right)-q_{2}\left(\Box\right)-q_{3}\left(\Box\right)+q_{4}\left(\Box\right)\right)\partial^{\left[\mu\right.}S^{\left.\nu\right]}
\nonumber
\\
&&+\frac{1}{3}T_{\mu}\left(2p_{1}\left(\Box\right)+p_{2}\left(\Box\right)+3p_{3}\left(\Box\right)+\frac{1}{2}s\left(\Box\right)\Box^{2}\right)T^{\mu}
\nonumber
\\
&&-\frac{2}{9}\partial_{\left[\mu\right.}T_{\left.\nu\right]}\left(q_{1}\left(\Box\right)+q_{3}\left(\Box\right)+2q_{4}\left(\Box\right)-3q_{6}\left(\Box\right)\right)\partial^{\left[\mu\right.}T^{\left.\nu\right]}
\nonumber
\\
&&-\frac{1}{9}\partial_{\mu}T^{\mu}\left(3q_{1}\left(\Box\right)+3q_{2}\left(\Box\right)+9q_{5}\left(\Box\right)-s\left(\Box\right)\Box\right)\partial_{\nu}T^{\nu},
\end{eqnarray}
where we have taken into account the constraints on the functions in \eqref{constraints} and the stability condition for the trace vector found in the previous section, namely $v_{2}\left(\Box\right)=-3u\left(\Box\right)$. Due to these conditions, there are no mixed terms between metric and torsion, so the Cartan and Einstein Equations would be decoupled. \\
Despite these constraints, the torsion part of the Lagrangian \eqref{vectoraction} is far from being stable, so before finding some solutions we need to explore under which form of the functions the theory does not have any pathologies.\\
By taking a closer look at \eqref{vectoraction} we realise that, as it is usual in metric IDG, we can make the combinations of the non-local functions to be described by an entire function. Such an entire function would not introduce any new poles in the propagators, so that we can use the same stability arguments as in the local theory. This means that
\begin{eqnarray}
&&p_{2}\left(\Box\right)-p_{1}\left(\Box\right)=C_{1}{\rm e}^{-\frac{\Box}{M_{S}^{2}}},
\nonumber
\\
&&q_{1}\left(\Box\right)-q_{2}\left(\Box\right)-q_{3}\left(\Box\right)+q_{4}\left(\Box\right)=C_{2}{\rm e}^{-\frac{\Box}{M_{S}^{2}}},
\nonumber
\\
&&2p_{1}\left(\Box\right)+p_{2}\left(\Box\right)+3p_{3}\left(\Box\right)+\frac{1}{2}s\left(\Box\right)\Box^{2}=C_{3}{\rm e}^{-\frac{\Box}{M_{S}^{2}}},
\\
&&q_{1}\left(\Box\right)+q_{3}\left(\Box\right)+2q_{4}\left(\Box\right)-3q_{6}\left(\Box\right)=C_{4}{\rm e}^{-\frac{\Box}{M_{S}^{2}}},
\nonumber
\\
&&3q_{1}\left(\Box\right)+3q_{2}\left(\Box\right)+9q_{5}\left(\Box\right)-s\left(\Box\right)\Box=C_{5}{\rm e}^{-\frac{\Box}{M_{S}^{2}}}, \nonumber
\end{eqnarray}
where the $C_{i}$ are constants and we have used the exponential as a paradigmatic example of an entire function.\\
This gives us the following Lagrangian
\begin{eqnarray}
\label{vectorentire}
\mathcal{L}&=&\mathcal{L}_{M}+\frac{1}{6}C_{1}\hat{S}_{\mu}\hat{S}^{\mu}+\frac{1}{9}C_{2}\partial_{\left[\mu\right.}\hat{S}_{\left.\nu\right]}\partial^{\left[\mu\right.}\hat{S}^{\left.\nu\right]}+\frac{1}{3}C_{3}\hat{T}_{\mu}\hat{T}^{\mu}-\frac{2}{9}C_{4}\partial_{\left[\mu\right.}\hat{T}_{\left.\nu\right]}\partial^{\left[\mu\right.}\hat{T}^{\left.\nu\right]}
\nonumber
\\
&&-\frac{1}{9}C_{5}\partial_{\mu}\hat{T}^{\mu}\partial_{\nu}\hat{T}^{\nu},
\end{eqnarray}
where $\hat{S}^{\mu}={\rm e}^{-\frac{\Box}{2M_{S}^{2}}}S^{\mu}$ and $\hat{T}^{\mu}={\rm e}^{-\frac{\Box}{2M_{S}^{2}}}T^{\mu}$. From the standard theory of vector fields we know that the last term introduces ghostly degrees of freedom, therefore we need to impose that $C_{5}=0$. Moreover, the kinetic terms of both vectors need to be positive, hence we also have the conditions $C_{2}>0$ and $C_{4}<0$.

At this time we know that our theory is absent of ghosts, and we are ready to find some possible solutions, that we will show can be singularity-free. We will study the solutions of the trace and axial vector separately in the following Subsections. This is indeed possible since parity breaking terms in the action are not considered, so there are no mixed trace-axial terms.

\subsubsection{Axial vector and the ring singularity}
\label{axialsec}
First, we will consider the Cartan Equations for the axial vector \eqref{cartandecax} with a fermionic source term
\begin{eqnarray}
\label{cartananti2}
\tilde{C}_{1}S_{\mu}+\tilde{C}_{2}\left(\partial_{\mu}\partial_{\nu}S^{\nu}-\Box S_{\mu}\right)={\rm e}^{\frac{\Box}{M_{S}^{2}}}B_{\mu},
\end{eqnarray}
where $\tilde{C}_{1}=\frac{1}{6}C_{1}$, $\tilde{C}_{2}=\frac{1}{18}C_{2}$, and $B_{\mu}=\frac{\delta\mathcal{L}_{fermion}}{\delta S^{\mu}}$ accounts for the internal spin of the fermion, which minimally couples to the axial vector \cite{Shapiro:2001rz}. Equation \eqref{cartananti2} describes a non-local Proca field in a Minkowski spacetime. Furthermore, this non-local aspect cannot be hidden by a redefinition of the field since there is a source term $B_{\mu}$.\\

In order to find a solution of Eq.\eqref{cartananti2} for the axial vector, we shall assume the transverse condition $\partial_{\mu}S^{\mu}=0$. Moreover, we need to provide a form of the $A_\mu$ function. Since we are trying to prove that in the UV extension of PG we can also avoid singularities, we will consider the ``most singular'' possible configuration, and see if we are able to ameliorate it. In this case, since fermions have an intrinsic spin, instead of having a Dirac-delta point source, we would need to consider a singular source endowed with angular momentum. Indeed, we would need a rotating singular Dirac-delta ring, where we shall fix the angular momentum to be in $z$ direction. We will use cartesian coordinates, in which the singular source can be expressed as 
\begin{equation}
\begin{cases}
\begin{array}{c}
B^{z}=A\delta\left(z\right)\delta\left(x^{2}+y^{2}-R^{2}\right),\\
\,\\
B^{\mu}=0\quad,\quad\mu=t,x,y,
\end{array} & \,\end{cases}
\label{source}
\end{equation}
where $A$ is a constant. Then, the homogeneous solution of the Equation \eqref{cartananti2} will be the local Proca solution, and will propagate three stable degrees of freedom. Due to the specific source \eqref{source}, the $z$ component of the axial vector will also have an additional non-local term, that will be given by the particular solution of \eqref{cartananti2}. In order to obtain it, we shall substitute this source  \eqref{source} into Equation \eqref{cartananti2}, and taking into account the gauge choice that we mentioned, we find that
\begin{equation}
\label{singular}
\left(\tilde{C}_{1}-\tilde{C}_{2}\Box\right){\rm e}^{-\frac{\Box}{M_{S}^{2}}}S^{z}=A\delta\left(z\right)\delta\left(x^{2}+y^{2}-R^{2}\right),
\end{equation}
where $R$ holds for the so-called \emph{Cartan radius} of a singular rotating ring, where effectively the singularity is located. We shall now calculate the Fourier transform $\mathcal{F}$ of the source, as follows
\begin{equation}
\mathcal{F}\left[\delta\left(z\right)\delta\left(x^{2}+y^{2}-R^{2}\right)\right]=\pi {\rm J}_{0}\left(-R\sqrt{k_{x}^{2}+k_{y}^{2}}\right),
\end{equation} 
where ${\rm J}_{0}$ represents the Bessel function of first kind ($n=0$). Thus, applying the Fourier transform to Eq.\eqref{singular} one obtains
\begin{eqnarray}
\label{calcu}
&&\mathcal{F}\left[\left(\tilde{C}_{1}-\tilde{C}_{2}\Box\right){\rm e}^{-\Box/M_{S}^{2}}S^{z}\left(\overrightarrow{x}\right)\right]=\mathcal{F}\left[A\delta\left(z\right)\delta\left(x^{2}+y^{2}-R^{2}\right)\right]\Rightarrow
\nonumber
\\
&&\left(\tilde{C}_{1}+\tilde{C}_{2}k^{2}\right){\rm e}^{k^{2}/M_{S}^{2}}S^{z}\left(\overrightarrow{k}\right)=\pi A{\rm J}_{0}\left(-R\sqrt{k_{x}^{2}+k_{y}^{2}}\right)\Rightarrow 
\nonumber
\\
&&S^{z}\left(\overrightarrow{k}\right)=\pi A\frac{{\rm e}^{-k^{2}/M_{S}^{2}}}{\tilde{C}_{1}+\tilde{C}_{2}k^{2}}{\rm J}_{0}\left(-R\sqrt{k_{x}^{2}+k_{y}^{2}}\right),
\end{eqnarray}
Then, performing the inverse of the transform of \eqref{calcu} we find that the particular solution of Eq. \eqref{singular} can be expressed as 
\begin{equation}
\label{sol_S}
S^{\mu}=\pi A^{\mu}\int\frac{{\rm d}^{3}k}{\left(2\pi\right)^{3}}\frac{{\rm e}^{-k^{2}/M_{S}^{2}}}{\tilde{C}_{1}+\tilde{C}_{2}k^{2}}\,{\rm J}_{0}\left(-R\sqrt{k_{x}^{2}+k_{y}^{2}}\right)\,{\rm e}^{i\left(k_{x}x+k_{y}y+k_{z}z\right)},
\end{equation}
where ${\rm d}^{3}k={\rm d}k_{x}{\rm d}k_{y}{\rm d}k_{z}$ and $k^{2}=k_{x}^{2}+k_{y}^{2}+k_{z}^{2}$. In order to see how the axial vector behaves at the singularity $r=R$, we can restrict the study of the integral in (\ref{sol_S}) to the $z=0$ plane, since we have assumed that the ring rotation axis lies along the $z$ direction. By using cylindrical coordinates, $k_{x}=\xi\cos\left(\varphi\right)$, $k_{y}=\xi\sin\left(\varphi\right)$, $k_{z}=k_{z}$, we obtain
\begin{eqnarray}
\label{integral}
&&S^{z}\left(r\right)=\pi A\int_{\xi=0}^{\xi=\infty}\int_{\varphi=0}^{\varphi=2\pi}\int_{k_{z}=0}^{k_{z}=\infty}\frac{\xi{\rm d}\varphi{\rm d}\xi{\rm d}k_{z}}{\left(2\pi\right)^{3}}\frac{{\rm e}^{-\left(\xi^{2}+k_{z}^{2}\right)/M_{S}^{2}}}{\tilde{C}_{1}+\tilde{C}_{2}\left(\xi^{2}+k_{z}^{2}\right)}
\nonumber
\\
&&\times{\rm J}_{0}\left(-R\xi\right){\rm e}^{i\xi x\cos\left(\varphi\right)}{\rm e}^{i\xi y\sin\left(\varphi\right)}=\frac{\pi A}{\left(2\pi\right)^{3}}\int_{\xi=0}^{\xi=\infty}\xi{\rm d}\xi{\rm J}_{0}\left(-R\xi\right)
\nonumber
\\
&&\times\left(\int_{k_{z}=0}^{k_{z}=\infty}{\rm d}k_{z}\frac{{\rm e}^{-\left(\xi^{2}+k_{z}^{2}\right)/M_{S}^{2}}}{\tilde{C}_{1}+\tilde{C}_{2}\left(\xi^{2}+k_{z}^{2}\right)}\right)\left(\int_{\varphi=0}^{\varphi=2\pi}{\rm d}\varphi\,{\rm e}^{i\xi x\cos\left(\varphi\right)}{\rm e}^{i\xi y\sin\left(\varphi\right)}\right)
\nonumber
\\
&&=\frac{A}{8\tilde{C}_{2}}{\rm e}^{\frac{\tilde{C}_{1}}{\tilde{C}_{2}M_{S}^{2}}}\int_{0}^{\infty}{\rm d}\xi\sqrt{\frac{\tilde{C}_{2}\xi^{2}}{\tilde{C}_{1}+\tilde{C}_{2}\xi^{2}}}{\rm J}_{0}\left(-R\xi\right){\rm J}_{0}\left(-\xi r\right)
\nonumber
\\
&&\times{\rm Erfc}\left(\sqrt{\frac{\tilde{C}_{1}+\tilde{C}_{2}\xi^{2}}{\tilde{C}_{2}M_{S}^{2}}}\right)\,,
\end{eqnarray}
where $r^2=x^2+y^2$ and ${\rm Erfc}$ is the complementary error function. In order to performed the previous derivations we have further assumed that $\tilde{C}_1$ and $\tilde{C}_2$ are of the same sign, so that the integral in $k_z$ could be solved. \\
Since finding the analytically closed form of \eqref{integral} is not possible, the integral can be solved numerically, as can be seen in Fig. \ref{fig:5}. There one can check that in the case of stable local Poincar\'e Gauge theories of gravity, in the limit $M_{S}\rightarrow \infty$, the singularity at $r=R$ is unavoidable. Nevertheless, we can state that \bf{within the infinite derivative theory of Poincar\'e gravity, the ring singularity can be smeared out}. Therefore, for IDG theories we conclude that the axial torsion is regular everywhere in presence of a Dirac-delta fermionic source with spin. This result is similar to the Kerr-like singularity which is cured in the infinite derivative metric theory of gravity~\cite{Buoninfante:2018xif}. Nevertheless, in this torsion infinite derivative theory of gravity, there is a crucial difference with respect to the purely metric one. In this case, since the Proca field is massive, \emph{i.e.} $\tilde{C}_1\ne 0$, the non-local effects are visible even when we are far away from the source, due to the factor ${\rm exp}\left(\frac{\tilde{C}_{1}}{\tilde{C}_{2}M_{S}^{2}}\right)$. This occurs if the mass of the Proca field, modulated by $\tilde{C}_1$, is of the same order, or higher, than the mass scale of non-locality $M_S$. Then, this effect can be avoided\footnote{The effect of the exponential term in Eq. \eqref{integral} shall be a problem, and would be advisable to avoid, if one wants to use this theory to resolve the singularity, and at the same time wants to obtain the same values as in the local theory when being away from the source.} if the mass of the Proca field is much smaller than the mass scale at which non-locality starts playing a role. 

Finally, since this particular solution that we need to add to the $z$ component of the axial vector is static, it does not contribute to propagate more than the three degrees of freedom of the local Proca theory, therefore \bf{the solution is also ghost-free}.

\begin{figure}
\begin{center}
\includegraphics[width=1\linewidth]{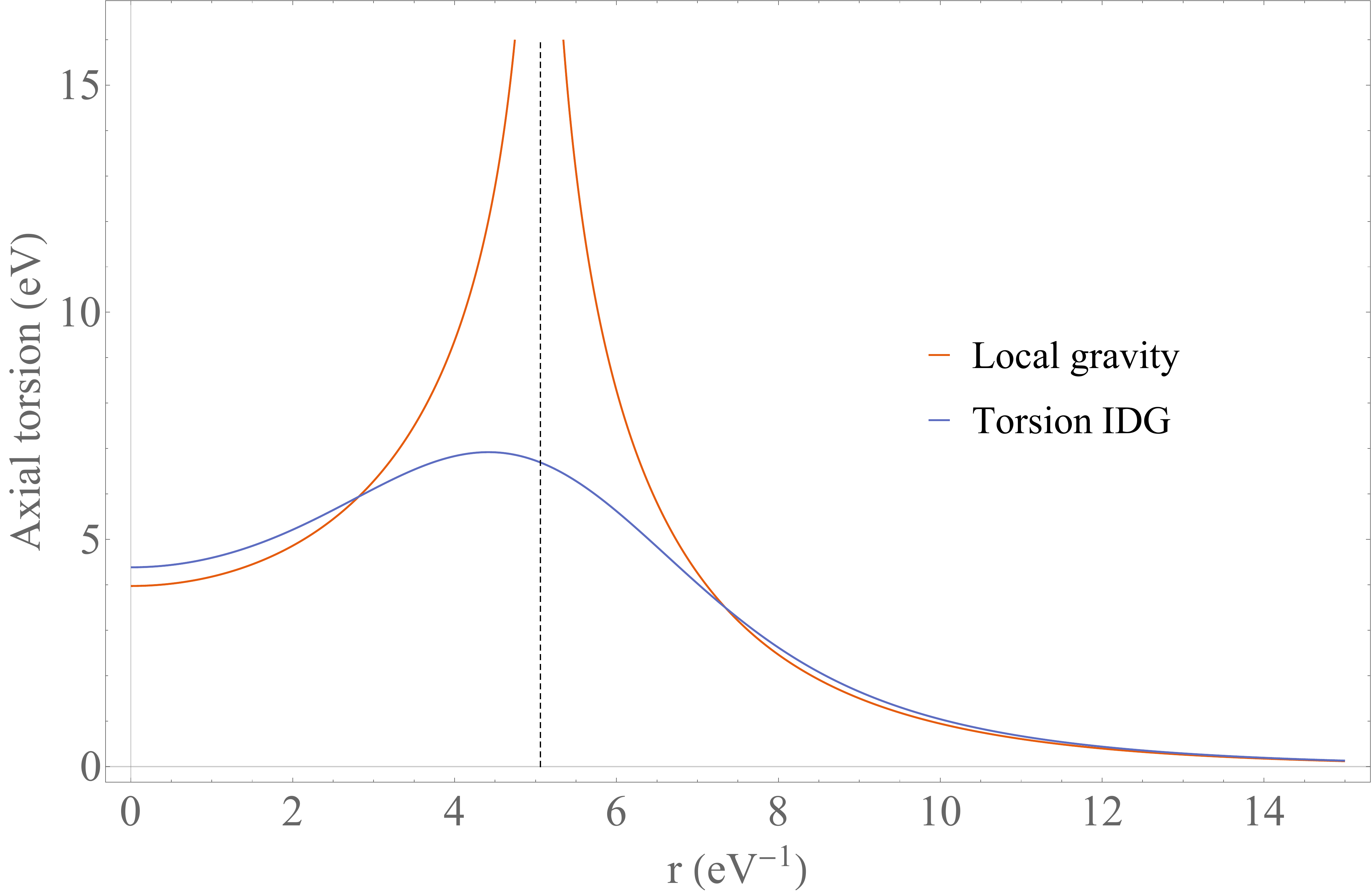}
\par\end{center}
\caption{Results of the numerical computation of \eqref{integral} for the case of local theories of gravity (limit when $M_{S}\rightarrow \infty$) and in the proposed IDG theory with torsion. We have chosen $A=800\,{\rm eV}$, $R=5.06\,{\rm eV}^{-1}$, $M_{S}=1\,{\rm eV}$, $\tilde{C}_1=0.1\,{\rm eV}^{2}$ and $\tilde{C}_2=1$.}
\label{fig:5}
\end{figure}

%%%%%%%%%%%%%%%%%%%%%%%%%%%%%%%%%%%%%%%%%%%%%%%%%%%%%%%

\subsubsection{Trace vector}
\label{tracesec}

Let us now explore the Cartan Equation for the trace vector \eqref{cartandectra}
\begin{eqnarray}
\label{procatrace}
\frac{1}{3}C_{3}T_{\mu}-\frac{1}{9}C_{4}\left(\partial_{\mu}\partial_{\nu}T^{\nu}-\Box T_{\mu}\right)=0.
\end{eqnarray}
We observe that this is just the local Proca Equation for a vector field. Therefore, it will have the same plane-wave solutions propagating three stable degrees of freedom.\\
Moreover, it is important to stress that in this case the kinetic term can have the same sign as the one of the axial vector, something that is not possible for quadratic PG theories, as we saw in section \ref{2.3}.

Now, with all the components for the torsion tensor calculated, we will solve Einstein's equations to obtain the corresponding metric $h_{\mu\nu}$.

\subsection{Einstein Equations solutions}

Let us recall that Einstein Equations for a fermionic source, where the tensor component of the torsion has been set to zero are given by \eqref{einsteindecomposed}:
\begin{eqnarray}
&&\Box a\left(\Box\right)h_{\mu\nu}+b\left(\Box\right)\partial_{\sigma}\partial_{\left(\nu\right.}h_{\left.\mu\right)}^{\,\,\,\sigma}+c\left(\Box\right)\left(\partial_{\mu}\partial_{\nu}h+\eta_{\mu\nu}\partial_{\rho}\partial_{\sigma}h^{\rho\sigma}\right)+\eta_{\mu\nu}\Box d\left(\Box\right)h
\nonumber
\\
&&+2\frac{f\left(\Box\right)}{\Box}\partial_{\mu}\partial_{\nu}\partial_{\rho}\partial_{\sigma}h^{\rho\sigma}=\tau_{\mu\nu},
\end{eqnarray}
where $\tau_{\mu\nu}=\eta_{\sigma\nu}F_{\mu\rho}F^{\sigma\rho}-\frac{1}{4}\eta_{\mu\nu}F_{\sigma\rho}F^{\sigma\rho}$, $F_{\mu\nu}$ being the electromagnetic tensor. It is clear that this equation is the same as in the pure metric case, since the torsion terms do not contribute.\\
Now, if we apply the constraints that we obtained from the energy-momentum conservation, and ghost-free conditions in the metric sector, see Eq.\eqref{constraints}, we are left with the following expression
\begin{eqnarray}
\label{einssimp}
{\rm {e}}^{-\Box/M_{S}^{2}}\left(\Box h_{\mu\nu}+\partial_{\mu}\partial_{\nu}h+\eta_{\mu\nu}\partial_{\rho}\partial_{\sigma}h^{\rho\sigma}-2\partial_{\sigma}\partial_{\left(\nu\right.}h_{\left.\mu\right)}^{\,\,\,\sigma}-\eta_{\mu\nu}\Box h\right)=\tau_{\mu\nu}.
\end{eqnarray}
It is interesting to note that this equation has already been studied in Ref.~\cite{Buoninfante:2018stt}, where a non-singular Reissner-Nordstr\"om solution were obtained for the same choice of the entire function in ghost free IDG, namely
\begin{equation}
{\rm d}s^{2}=-\left(1+2\Phi\left(r\right)\right){\rm d}t^{2}+\left(1-2\Psi\left(r\right)\right)\left({\rm d}r^{2}+r^{2}{\rm d}\Omega^{2}\right),
\end{equation}
where $\Phi\left(r\right)$ and $\Psi\left(r\right)$ take the following form \cite{Buoninfante:2018stt}
\begin{eqnarray}
\label{metricsol1}
&&\Phi\left(r\right)=-\frac{Gm}{r}\text{Erf}\left(\frac{M_{S}r}{2}\right)+\frac{GQ^{2}M_{S}}{2r}\text{F}\left(\frac{M_{S}r}{2}\right),
\\
&&\Psi\left(r\right)=-\frac{Gm}{r}\text{Erf}\left(\frac{M_{S}r}{2}\right)+\frac{GQ^{2}M_{S}}{4r}\text{F}\left(\frac{M_{S}r}{2}\right),
\label{metricsol2}
\end{eqnarray}
in which Erf$(x)$ is the error function and F$(x)$ the Dawson function.\
This solution is non-singular when $r\rightarrow 0$ and recasts a Reissner-Nordstr\"om  when $r\gg M_S^{-1}$.

\section{Chapter conclusions and outlook}

Within this chapter we have proposed a non-local extension of Poincar\'e Gauge gravity. For this purpose, first we have motivated the introduction of non-local terms into the action in order to ameliorate the singular behaviour at large energies. 

Then, in section \ref{4.2} we have constructed an Ultra-Violet extension of Poincar\'e Gauge Gravity and calculate the corresponding field equations. 

Finally, in the last section of this chapter we have found solutions of the theory at the linear level, and proved that they can be made ghost and singularity free by adjusting the theory parameters. Moreover, we find that if the mass of the axial vector mode is of the order of the mass-scale of non-locality, then the non-local effects can be observed macroscopically, which is something that is not possible in metric Infinite Derivative Gravity. \\

Based on the previous findings one could embark into new lines of research, such as the study at the non-linear limit of the proposed non-local theory and the search of new singularity and ghost-free solutions.

\renewcommand{\publ}{}

% \index{Classification!LVQ}

\chapter{Conclusions}

\PARstart{A}long this thesis we have studied some interesting aspects of Poincar\'e Gauge theories of gravity and proposed a non-local ultraviolet extension of them capable of potentially resolving some space-time singularities. Let us review the most important results that we have obtained throughout this work.

In Chapter \ref{2}, firstly we have explained some fundamentals of differential geometry, which are the base of any gravitational theory. We have also seen how the affine structure and the metric of the spacetime are not generally related. Consequently, there is no physical reason to impose a certain affine connection to the gravitational theory. Then we have reviewed the gauge procedure and constructed the quadratic  Lagrangian of Poincar\'e Gauge Gravity by requiring that the gravitational theory must be invariant under local Poincar\'e transformations. Finally, we have studied the stability of the quadratic Poincar\'e Gauge Lagrangian, which in principle propagates two massive scalar fields, two massive vectors fields, and two massive spin-2 fields. There have proven that only the two scalar degrees of freedom (one scalar and one pseudo-scalar) can propagate without introducing pathologies. In this regard, we have provided extensive details on the scalar, pseudo-scalar, and bi-scalar theories. Moreover, to conclude this Chapter we have suggested how to extend the quadratic Poincar\'e Gauge Lagrangian so that the two vector modes can propagate safely.

In Chapter \ref{3}, first we have explored how fermionic particles move in spacetimes endowed with a non-symmetric connection. We showed that the Dirac equation is modified with a coupling involving the totally antisymmetric part of the torsion tensor. Accordingly we have calculated the predicted non-geodesical behaviour at first order in the WKB approximation. Then, we have used this result in a particular black-hole solution of Poincar\'e Gauge gravity, and showed that there can be measurable differences between the trajectories of a fermion and a boson. Motivated by this fact, we have studied the singularity theorems in theories with torsion, to determine whether this non-geodesical behaviour could lead to the avoidance of singularities. Nevertheless, we have proven in Proposition \ref{prop:sin} that this would not possible provided that the conditions for the appearance of black holes are met. In the last section of this chapter, we have found that the only stable quadratic Poincar\'e Gauge theories that fulfill the Birkhoff theorem are the ones studied by Nieh and Rauch in the 1980s. We have also proved that, assuming asymptotic flatness and constant scalar curvature, the no-hair theorem applies for the most general stable quadratic Poincar\'e Gauge action. Moreover, we have seen how both Birkhoff and no-hair theorems are not related with the stability of the gravitational theory under consideration, and that indeed standard black-hole solutions present in General Relativity can also be solutions of unstable theories. Nevertheless, when performing perturbations up to a certain order, those instabilities will start playing a role.

In Chapter \ref{4}, we have motivated the introduction of non-local terms into the action in order to ameliorate the singular behaviour at large energies. Then, we have constructed one possible ultraviolet extension of Poincar\'e Gauge gravity. Finally, in the last section of this chapter we have found solutions of such a theory at the linear level, and proved that they are ghost- and singularity-free. Interestingly, we found that provided the mass of the axial vector mode is of the order of the mass-scale of non-locality, then the non-local effects can be observed macroscopically, which is something that is not possible in metric Infinite Derivative Gravity.

\subsubsection*{Open questions}

As it is customary in Science, we have answered some questions and established new concepts, while at the same time we have opened the box to future lines of research, which are summarised in the following:

\begin{itemize}

\item The construction of cosmological solutions of the bi-scalar model may be a worthwhile topic to explore, since the coupling of the pseudo-scalar with the fermions could lead to curious features. For instance, the effective mass of the neutrinos would change due to the torsion-spin coupling, hence affecting the large-scale structure formation \cite{Brookfield:2005td,Fukugita:1999as}.

\item Using the results in Section \ref{3.1} about the fermion dynamics in theories with torsion, one can explore the consequences of the torsion-spin coupling in the current quantum experiments, or propose new ones, in order to find better constraints for the torsion \cite{Lammerzahl:1997wk,Kostelecky:2007kx}.

\item It will be relevant to study possible black-hole solutions for the bi-scalar stable theory relaxing the assumptions of asymptotic flatness and constant scalar curvature, that we have made to study the no-hair theorem.

\item With respect to the non-local theories, it would be of great interest to elucidate the potential strong-coupling problem, mentioned in Section \ref{4.1}.

\item Finally, the study of the ultraviolet extension of Poincar\'e Gauge Gravity at the full non-linear regime, may bring us new solutions like regular black-holes or bouncing universes, that could be physically relevant to describe the current measures.

\end{itemize}

As always, it will be exciting to see where the future investigations would lead us to.

% \begin{subappendices}
%  \include{appendix1}
% \end{subappendices} 

\renewcommand{\publ}{}

% \index{Classification!LVQ}

\appendix

\chapter{Acceleration components for an electron}

\label{ap:1}
Here we present the components of the acceleration of an explicitly. Such components have been calculated following the WKB approximation, in a Reissner-Nordstr\"om solution, as discussed in Subsection~\ref{3.1.2}.

\begin{eqnarray}
a^{t}&=&-\frac{\kappa\hbar}{2m_{esp}r^{2}\left(\frac{\kappa-2mr+r^{2}}{r^{2}}\right)^{3/2}}\left\{ \sqrt{\frac{\kappa-2mr+r^{2}}{r^{2}}}\sin(\alpha)\cos(\beta)r'(s) \right.
\nonumber
\\
&-&\theta'(s)\left[\sin(\alpha)\sin(\beta)\left(r-m\right)+\kappa r\cos(\alpha)\right]
\nonumber
\\
&+&\Biggl.\sin(\theta)\varphi'(s)\left[\cos(\alpha)\left(m-r\right)+\kappa r\sin(\alpha)\sin(\beta)\right]\Biggr\}
\end{eqnarray}

\begin{eqnarray}
a^{r}&=&-\frac{\hbar}{2m_{esp}r^{4}\left(\kappa-2mr+r^{2}\right)}\left\{ r\sqrt{\frac{\kappa-2mr+r^{2}}{r^{2}}}\left[\theta'(s)\left(\cos(\alpha)\left(2m^{2}r^{2}\right.\right.\right.\right.
\nonumber
\\
&-&\left.\left.mr^{3}-3m\kappa r+\kappa^{2}-\kappa^{2}r^{4}+\kappa r^{2}\right)+\kappa r^{3}\sin(\alpha)\sin(\beta)(m-r)\right)
\nonumber
\\
&+&+\sin(\theta)\varphi'(s)\left(\sin(\alpha)\sin(\beta)\left(-2m^{2}r^{2}+mr^{3}+3m\kappa r-\kappa^{2}+\kappa^{2}r^{4}-\kappa r^{2}\right)\right.
\nonumber
\\
&+&\Biggl.\left.\left.\kappa r^{3}\cos(\alpha)(m-r)\right)\right]+\kappa\sin(\alpha)\cos(\beta)\left(\kappa-2mr+r^{2}\right)^{2}t'(s)\Biggr\},
\end{eqnarray}

\begin{eqnarray}
a^{\theta}&=&-\frac{\hbar\sin(\theta)}{4m_{esp}r^{7}\left(\frac{\kappa-2mr+r^{2}}{r^{2}}\right)^{3/2}}\Biggl\{ -2\csc(\theta)r'(s)\left[\cos(\alpha)\left(2m^{2}r^{2}-mr^{3}-3m\kappa r\right.\right.\Biggr.
\nonumber
\\
&+&\left.\left.+\kappa^{2}-\kappa^{2}r^{4}+\kappa r^{2}\right)+\kappa r^{3}\sin(\alpha)\sin(\beta)(m-r)\right]
\nonumber
\\
&-&2r\left(-\kappa+2mr-r^{2}\right)\left[\sin(\alpha)\cos(\beta)(2mr-\kappa)\sqrt{\frac{\kappa-2mr+r^{2}}{r^{2}}}\varphi'(s)\right.
\nonumber
\\
&-&\Biggl.\Biggl.\kappa\csc(\theta)t'(s)\left(\sin(\alpha)\sin(\beta)(r-m)+\kappa r\cos(\alpha)\right)\Biggr]\Biggr\},
\end{eqnarray}

\begin{eqnarray}
a^{\varphi}&=&-\frac{\hbar\csc(\theta)}{4m_{esp}r^{7}\left(\frac{\kappa-2mr+r^{2}}{r^{2}}\right)^{3/2}}\Biggl\{ 2r'(s)\left[\sin(\alpha)\sin(\beta)\left(2m^{2}r^{2}-mr^{3}\right.\right.\Biggr.
\nonumber
\\
&-&\left.\left.3m\kappa r+\kappa^{2}-\kappa^{2}r^{4}+\kappa r^{2}\right)-\kappa r^{3}\cos(\alpha)(m-r)\right]
\nonumber
\\
&+&2r\left(\kappa-2mr+r^{2}\right)\left[\sin(\alpha)\cos(\beta)(\kappa-2mr)\sqrt{\frac{\kappa-2mr+r^{2}}{r^{2}}}\theta'(s)\right.
\nonumber
\\
&+&\Biggl.\Biggl.\kappa t'(s)\left(\cos(\alpha)(m-r)+\kappa r\sin(\alpha)\sin(\beta)\right)\Biggr]\Biggr\}
\end{eqnarray}

\section{Acceleration at low $\kappa$}
\label{ap:2}

Here we display the acceleration components at first order of the dimensionless parameter $\xi=\kappa/m^{2}$, as indicated in the Subsection~\ref{3.1.2}.

\begin{eqnarray}
a^{t}=-\frac{\xi m^{2}\hbar}{2\left(m_{esp}r(r-2m)\sqrt{1-\frac{2m}{r}}\right)}\left[\sin(\alpha)\cos(\beta)\sqrt{1-\frac{2m}{r}}r'(s)\right.
\nonumber
\\
+\Biggl.\left(m-r\right)\left(\sin(\alpha)\sin(\beta)\theta'(s)+\cos(\alpha)\sin(\theta)\varphi'(s)\right)\Biggr]+\mathcal{O}\left(\xi^{2}\right),
\end{eqnarray}

\begin{eqnarray}
a^{r}&=&\frac{m\hbar\sqrt{1-\frac{2m}{r}}}{2m_{esp}r^{2}}\left(\cos(\alpha)\theta'(s)-\sin(\alpha)\sin(\beta)\sin(\theta)\varphi'(s)\right)
\nonumber
\\
&-&\frac{\xi m^{2}\hbar}{4\left(m_{esp}r^{4}\sqrt{1-\frac{2m}{r}}\right)}\Biggl[\theta'(s)\left(2r^{2}\sin(\alpha)\sin(\beta)(m-r)+\cos(\alpha)(2r-5m)\right)\Biggr.
\nonumber
\\
&+&\sin(\theta)\varphi'(s)\left(2r^{2}\cos(\alpha)(m-r)+\sin(\alpha)\sin(\beta)(5m-2r)\right)
\nonumber
\\
&+&\left.2r\sin(\alpha)\cos(\beta)\sqrt{1-\frac{2m}{r}}(r-2m)t'(s)\right]+\mathcal{O}\left(\xi^{2}\right),
\end{eqnarray}

\begin{eqnarray}
a^{\theta}&=&-\frac{m\hbar}{2m_{esp}r^{4}}\left(\frac{\cos(\alpha)r'(s)}{\sqrt{1-\frac{2m}{r}}}+2r\sin(\alpha)\cos(\beta)\sin(\theta)\varphi'(s)\right)
\nonumber
\\
&+&\frac{m^{2}\hbar\xi}{4m_{esp}r^{5}(r-2m)\sqrt{1-\frac{2m}{r}}}\Biggl[r'(s)\left(2r^{2}\sin(\alpha)\sin(\beta)(m-r)+\cos(\alpha)(2r-3m)\right)\Biggr.
\nonumber
\\
&+&\left.r\sin(\alpha)(r-2m)\left(2\cos(\beta)\sin(\theta)\sqrt{1-\frac{2m}{r}}\varphi'(s)-2\sin(\beta)(m-r)t'(s)\right)\right]
\nonumber
\\
&+&\mathcal{O}\left(\xi^{2}\right),
\end{eqnarray}

\begin{eqnarray}
a^{\varphi}&=&\frac{m\hbar\sin(\alpha)\csc(\theta)}{2m_{esp}r^{4}}\left(\frac{\sin(\beta)r'(s)}{\sqrt{1-\frac{2m}{r}}}+2r\cos(\beta)\theta'(s)\right)
\nonumber
\\
&+&\frac{m^{2}\hbar\xi\csc(\theta)}{4m_{esp}r^{5}\sqrt{1-\frac{2m}{r}}(r-2m)}\Biggl[r'(s)\left(2r^{2}\cos(\alpha)(m-r)+\sin(\alpha)\sin(\beta)(3m-2r)\right)\Biggr.
\nonumber
\\
&+&\left.r(r-2m)\left(-2\sin(\alpha)\cos(\beta)\sqrt{1-\frac{2m}{r}}\theta'(s)-2\cos(\alpha)(m-r)t'(s)\right)\right]
\nonumber
\\
&+&\mathcal{O}\left(\xi^{2}\right).
\end{eqnarray}

\chapter{Components of the infinite derivative action}

\label{ap:3}

In this Appendix we give the different terms that appear in the linearised action \eqref{lagrangian}.

\begin{eqnarray}
\tilde{R}\tilde{F}_{1}\left(\Box\right)\tilde{R}&=&\tilde{F}_{1}\left(\Box\right)\left[h\Box^{2}h+h^{\rho\sigma}\partial_{\rho}\partial_{\sigma}\partial_{\mu}\partial_{\nu}h^{\mu\nu}-2h\Box\partial_{\mu}\partial_{\nu}h^{\mu\nu}-4h^{\mu\nu}\partial_{\mu}\partial_{\nu}\partial_{\rho}K_{\,\,\,\,\,\sigma}^{\rho\sigma}\right.
\nonumber
\\
&+&\left. 4h\Box\partial_{\rho}K_{\,\,\,\,\,\sigma}^{\rho\sigma}-4K_{\,\,\,\,\,\sigma}^{\rho\sigma}\partial_{\rho}\partial_{\mu}K_{\,\,\,\,\,\nu}^{\mu\nu}\right],
\end{eqnarray}

\begin{eqnarray}
\tilde{R}\tilde{F}_{2}\left(\Box\right)\partial_{\mu}\partial_{\nu}\tilde{R}^{\mu\nu}&=&\tilde{F}_{2}\left(\Box\right)\left[\frac{1}{2}h^{\rho\sigma}\Box\partial_{\rho}\partial_{\sigma}\partial_{\mu}\partial_{\nu}h^{\mu\nu}-h\Box^{2}\partial_{\mu}\partial_{\nu}h^{\mu\nu}+\frac{1}{2}h\Box^{3}h\right.
\nonumber
\\
&-&\left.h^{\mu\nu}\Box\partial_{\mu}\partial_{\nu}\partial_{\rho}K_{\,\,\,\,\,\sigma}^{\rho\sigma}-2K_{\,\,\,\,\,\sigma}^{\rho\sigma}\Box\partial_{\rho}\partial_{\nu}K_{\,\,\,\,\,\lambda}^{\nu\lambda}\right],
\end{eqnarray}

\begin{eqnarray}
&&\tilde{R}_{\mu\nu}\tilde{F}_{3}\left(\Box\right)\tilde{R}^{\left(\mu\nu\right)}=\tilde{F}_{3}\left(\Box\right)\left[\frac{1}{4}h\Box^{2}h+\frac{1}{4}h_{\mu\nu}\Box^{2}h^{\mu\nu}-\frac{1}{2}h_{\mu}^{\sigma}\partial_{\sigma}\partial_{\nu}h^{\mu\nu}-\frac{1}{2}h\Box\partial_{\mu}\partial_{\nu}h^{\mu\nu}\right.
\nonumber
\\
&&+\frac{1}{2}h^{\mu\nu}\partial_{\sigma}\partial_{\mu}\partial_{\nu}\partial_{\rho}h^{\rho\sigma}-\frac{1}{2}h_{\mu}^{\sigma}\partial_{\sigma}\partial_{\nu}\partial_{\rho}K^{\rho\nu\mu}-\frac{1}{2}h^{\nu\sigma}\partial_{\sigma}\partial_{\nu}\partial_{\mu}K_{\,\,\,\,\,\rho}^{\mu\rho}-\frac{1}{2}h_{\mu}^{\sigma}\partial_{\sigma}\Box K_{\,\,\,\,\,\rho}^{\mu\rho}
\nonumber
\\
&&+\frac{1}{2}h_{\mu\nu}\Box\partial_{\rho}K^{\rho\mu\nu}-K_{\,\,\mu\nu}^{\rho}\partial_{\rho}\partial_{\sigma}K^{\sigma\left(\mu\nu\right)}-K_{\,\,\mu\nu}^{\rho}\partial_{\rho}\partial^{\mu}K_{\,\,\,\,\,\lambda}^{\nu\lambda}-\frac{1}{2}K_{\,\,\,\,\,\lambda}^{\nu\lambda}\Box K_{\nu\,\,\,\rho}^{\,\,\rho}
\nonumber
\\
&&-\left.\frac{1}{2}K_{\,\,\,\,\,\rho}^{\mu\rho}\partial_{\mu}\partial_{\nu}K_{\,\,\,\,\,\lambda}^{\nu\lambda}\right],
\end{eqnarray}

\begin{eqnarray}
&&\tilde{R}_{\mu\nu}\tilde{F}_{4}\left(\Box\right)\tilde{R}^{\left[\mu\nu\right]}=\tilde{F}_{4}\left(\Box\right)\left[-K_{\,\,\mu\nu}^{\rho}\partial_{\rho}\partial_{\sigma}K^{\sigma\left[\mu\nu\right]}-K_{\,\,\mu\nu}^{\rho}\partial_{\rho}\partial^{\mu}K_{\,\,\,\,\,\lambda}^{\nu\lambda}-\frac{1}{2}K_{\,\,\,\,\,\lambda}^{\nu\lambda}\Box K_{\nu\,\,\,\rho}^{\,\,\rho}\right.
\nonumber
\\
&&+\left.\frac{1}{2}K_{\,\,\,\,\,\rho}^{\mu\rho}\partial_{\mu}\partial_{\nu}K_{\,\,\,\,\,\lambda}^{\nu\lambda}\right],
\end{eqnarray}

\begin{eqnarray}
&&\tilde{R}_{\left(\mu\right.}^{\,\,\,\left.\nu\right)}\tilde{F}_{5}\left(\Box\right)\partial_{\nu}\partial_{\lambda}\tilde{R}^{\mu\lambda}=\tilde{F}_{5}\left(\Box\right)\left[\frac{1}{4}h\Box^{3}h-\frac{1}{2}h\Box^{2}\partial_{\mu}\partial_{\nu}h^{\mu\nu}+\frac{1}{4}h^{\lambda\sigma}\Box\partial_{\sigma}\partial_{\lambda}\partial_{\mu}\partial_{\nu}h^{\mu\nu}\right.
\nonumber
\\
&&-\Bigl.h^{\nu\sigma}\Box\partial_{\sigma}\partial_{\nu}\partial_{\mu}K_{\,\,\,\,\,\rho}^{\mu\rho}+h\Box^{2}\partial_{\nu}K_{\,\,\,\,\,\lambda}^{\nu\lambda}-K_{\,\,\,\,\,\rho}^{\mu\rho}\Box\partial_{\mu}\partial_{\nu}K_{\,\,\,\,\,\lambda}^{\nu\lambda}\Bigr],
\end{eqnarray}

\begin{equation}
\tilde{R}_{\left[\mu\right.}^{\,\,\,\left.\nu\right]}\tilde{F}_{6}\left(\Box\right)\partial_{\nu}\partial_{\lambda}\tilde{R}^{\mu\lambda}=0,
\end{equation}

\begin{eqnarray}
&&\tilde{R}_{\mu}^{\,\,\,\nu}\tilde{F}_{7}\left(\Box\right)\partial_{\nu}\partial_{\lambda}\tilde{R}^{\left(\mu\lambda\right)}=\tilde{F}_{7}\left(\Box\right)\left[\frac{1}{4}h\Box^{3}h-\frac{1}{2}h\Box^{2}\partial_{\mu}\partial_{\nu}h^{\mu\nu}+\frac{1}{4}h^{\lambda\sigma}\Box\partial_{\sigma}\partial_{\lambda}\partial_{\mu}\partial_{\nu}h^{\mu\nu}\right.
\nonumber
\\
&&-\Bigl.h^{\nu\sigma}\Box\partial_{\sigma}\partial_{\nu}\partial_{\mu}K_{\,\,\,\,\,\rho}^{\mu\rho}+h\Box^{2}\partial_{\nu}K_{\,\,\,\,\,\lambda}^{\nu\lambda}-K_{\,\,\,\,\,\rho}^{\mu\rho}\Box\partial_{\mu}\partial_{\nu}K_{\,\,\,\,\,\lambda}^{\nu\lambda}\Bigr],
\end{eqnarray}

\begin{equation}
\tilde{R}_{\mu}^{\,\,\,\nu}\tilde{F}_{8}\left(\Box\right)\partial_{\nu}\partial_{\lambda}\tilde{R}^{\left[\mu\lambda\right]}=0,
\end{equation}

\begin{eqnarray}
&&\tilde{R}^{\lambda\sigma}\tilde{F}_{9}\left(\Box\right)\partial_{\mu}\partial_{\sigma}\partial_{\nu}\partial_{\lambda}\tilde{R}^{\mu\nu}=\tilde{F}_{9}\left(\Box\right)\left[\frac{1}{4}h\Box^{4}h-\frac{1}{2}h\Box^{3}\partial_{\mu}\partial_{\nu}h^{\mu\nu}\right.
\nonumber
\\
&&+\frac{1}{4}h^{\lambda\sigma}\Box^{2}\partial_{\sigma}\partial_{\lambda}\partial_{\mu}\partial_{\nu}h^{\mu\nu}-h^{\nu\sigma}\Box^{2}\partial_{\sigma}\partial_{\nu}\partial_{\mu}K_{\,\,\,\,\,\rho}^{\mu\rho}+h\Box^{3}\partial_{\nu}K_{\,\,\,\,\,\lambda}^{\nu\lambda}
\nonumber
\\
&&-\left.K_{\,\,\,\,\,\rho}^{\mu\rho}\Box^{2}\partial_{\mu}\partial_{\nu}K_{\,\,\,\,\,\lambda}^{\nu\lambda}\right],
\end{eqnarray}

\begin{eqnarray}
&&\tilde{R}_{\left(\mu\lambda\right)}\tilde{F}_{10}\left(\Box\right)\partial_{\nu}\partial_{\sigma}\tilde{R}^{\mu\nu\lambda\sigma}=\tilde{F}_{10}\left(\Box\right)\left[\frac{1}{4}h_{\mu\lambda}\Box^{3}h^{\mu\lambda}-\frac{1}{2}h_{\mu}^{\,\,\alpha}\Box^{2}\partial_{\alpha}\partial_{\sigma}h^{\sigma\mu}\right.
\nonumber
\\
&&+\frac{1}{4}h^{\lambda\sigma}\Box\partial_{\sigma}\partial_{\lambda}\partial_{\mu}\partial_{\nu}h^{\mu\nu}-h_{\mu\sigma}\Box^{2}\partial_{\lambda}K^{\mu\sigma\lambda}+h_{\mu}^{\,\,\alpha}\Box\partial_{\alpha}\partial_{\lambda}\partial_{\sigma}K^{\mu\sigma\lambda}
\nonumber
\\
&&\left.+K_{\alpha\left(\mu\lambda\right)}\Box\partial^{\alpha}\partial_{\sigma}K^{\lambda\mu\sigma}-\frac{1}{2}K_{\alpha\mu\lambda}\partial^{\alpha}\partial^{\mu}\partial_{\sigma}\partial_{\nu}K^{\lambda\nu\sigma}\right],
\end{eqnarray}

\begin{eqnarray}
&&\tilde{R}_{\left[\mu\lambda\right]}\tilde{F}_{11}\left(\Box\right)\partial_{\nu}\partial_{\sigma}\tilde{R}^{\mu\nu\lambda\sigma}=\tilde{F}_{11}\left(\Box\right)\Bigl[K_{\alpha\left[\mu\lambda\right]}\Box\partial^{\alpha}\partial_{\sigma}K^{\lambda\mu\sigma}\Bigr.
\nonumber
\\
&&-\left.\frac{1}{2}K_{\alpha\mu\lambda}\partial^{\alpha}\partial^{\mu}\partial_{\sigma}\partial_{\nu}K^{\lambda\nu\sigma}\right],
\end{eqnarray}

\begin{eqnarray}
&&\tilde{R}_{\mu\lambda}\tilde{F}_{12}\left(\Box\right)\partial_{\nu}\partial_{\sigma}\left(\tilde{R}^{\mu\nu\lambda\sigma}+\tilde{R}^{\lambda\sigma\mu\nu}\right)=\tilde{F}_{12}\left(\Box\right)\left[\frac{1}{2}h_{\mu\lambda}\Box^{3}h^{\mu\lambda}-h_{\mu}^{\,\,\alpha}\Box^{2}\partial_{\alpha}\partial_{\sigma}h^{\sigma\mu}\right.
\nonumber
\\
&&+\frac{1}{2}h^{\lambda\sigma}\Box\partial_{\sigma}\partial_{\lambda}\partial_{\mu}\partial_{\nu}h^{\mu\nu}+2h_{\mu}^{\,\,\alpha}\Box\partial_{\alpha}\partial_{\lambda}\partial_{\sigma}K^{\mu\sigma\lambda}-2h_{\mu\sigma}\Box^{2}\partial_{\lambda}K^{\mu\sigma\lambda}
\nonumber
\\
&&\Bigl.+2K_{\alpha\left(\mu\lambda\right)}\Box\partial^{\alpha}\partial_{\sigma}K^{\lambda\mu\sigma}-K_{\alpha\mu\lambda}\partial^{\alpha}\partial^{\mu}\partial_{\sigma}\partial_{\nu}K^{\lambda\nu\sigma}\Bigr],
\end{eqnarray}

\begin{eqnarray}
&&\tilde{R}_{\mu\lambda}\tilde{F}_{13}\left(\Box\right)\partial_{\nu}\partial_{\sigma}\left(\tilde{R}^{\mu\nu\lambda\sigma}-\tilde{R}^{\lambda\sigma\mu\nu}\right)=\tilde{F}_{13}\left(\Box\right)\left[2K_{\alpha\left[\mu\lambda\right]}\Box\partial^{\alpha}\partial_{\sigma}K^{\lambda\mu\sigma}\right.
\nonumber
\\
&&-\left.K_{\alpha\mu\lambda}\partial^{\alpha}\partial^{\mu}\partial_{\sigma}\partial_{\nu}K^{\lambda\nu\sigma}\right],
\end{eqnarray}

\begin{eqnarray}
&&\tilde{R}_{\mu\nu\lambda\sigma}\tilde{F}_{14}\left(\Box\right)\left(\tilde{R}^{\mu\nu\lambda\sigma}+\tilde{R}^{\lambda\sigma\mu\nu}\right)=\tilde{F}_{14}\left(\Box\right)\left[2h_{\mu\lambda}\Box^{2}h^{\mu\lambda}+2h^{\lambda\sigma}\partial_{\sigma}\partial_{\lambda}\partial_{\mu}\partial_{\nu}h^{\mu\nu}\right.
\nonumber
\\
&&-4h_{\mu}^{\,\,\alpha}\Box\partial_{\alpha}\partial_{\sigma}h^{\sigma\mu}+8h_{\sigma\mu}\Box\partial_{\nu}K^{\nu\mu\sigma}+8h_{\sigma\mu}\partial_{\nu}\partial_{\lambda}\partial^{\mu}K^{\sigma\lambda\nu}-2K^{\mu\sigma\lambda}\Box K_{\mu\sigma\lambda}
\nonumber
\\
&&\left.-4K^{\nu\sigma\lambda}\partial_{\nu}\partial^{\mu}K_{\mu\lambda\sigma}+2K^{\lambda\nu\mu}\partial_{\nu}\partial^{\sigma}K_{\lambda\sigma\mu}\right],
\end{eqnarray}

\begin{eqnarray}
&&\tilde{R}_{\mu\nu\lambda\sigma}\tilde{F}_{15}\left(\Box\right)\left(\tilde{R}^{\mu\nu\lambda\sigma}-\tilde{R}^{\lambda\sigma\mu\nu}\right)=\tilde{F}_{15}\left(\Box\right)\left[-2K^{\mu\sigma\lambda}\Box K_{\mu\sigma\lambda}+4K^{\nu\sigma\lambda}\partial_{\nu}\partial^{\mu}K_{\mu\lambda\sigma}\right.
\nonumber
\\
&&\left.+2K^{\lambda\nu\mu}\partial_{\nu}\partial^{\sigma}K_{\lambda\sigma\mu}\right],
\end{eqnarray}

\begin{eqnarray}
&&\left(\tilde{R}_{\rho\mu\nu\lambda}+\tilde{R}_{\nu\lambda\rho\mu}\right)\tilde{F}_{16}\left(\Box\right)\partial^{\rho}\partial_{\sigma}\tilde{R}^{\mu\nu\lambda\sigma}=\tilde{F}_{16}\left(\Box\right)\left[\frac{1}{2}h_{\mu\lambda}\Box^{3}h^{\mu\lambda}-h_{\mu}^{\,\,\alpha}\Box^{2}\partial_{\alpha}\partial_{\sigma}h^{\sigma\mu}\right.
\nonumber
\\
&&+\frac{1}{2}h^{\lambda\sigma}\Box\partial_{\sigma}\partial_{\lambda}\partial_{\mu}\partial_{\nu}h^{\mu\nu}+2h_{\sigma\mu}\Box^{2}\partial_{\nu}K^{\nu\mu\sigma}+2h_{\sigma\mu}\Box\partial_{\nu}\partial_{\lambda}\partial^{\mu}K^{\sigma\lambda\nu}
\nonumber
\\
&&\left.+2K_{\alpha\left(\mu\lambda\right)}\Box\partial^{\alpha}\partial_{\sigma}K^{\lambda\mu\sigma}-K_{\alpha\mu\lambda}\partial^{\alpha}\partial^{\mu}\partial_{\sigma}\partial_{\nu}K^{\lambda\nu\sigma}\right],
\end{eqnarray}

\begin{eqnarray}
&&\left(\tilde{R}_{\rho\mu\nu\lambda}-\tilde{R}_{\nu\lambda\rho\mu}\right)\tilde{F}_{17}\left(\Box\right)\partial^{\rho}\partial_{\sigma}\tilde{R}^{\mu\nu\lambda\sigma}=\tilde{F}_{17}\left(\Box\right)\left[-2K^{\mu\sigma\lambda}\Box\partial^{\rho}\partial_{\sigma}K_{\lambda\mu\rho}\right.
\nonumber
\\
&&-\left.2K^{\nu\sigma\lambda}\partial^{\mu}\partial^{\rho}\partial_{\sigma}\partial_{\lambda}K_{\nu\mu\rho}\right],
\end{eqnarray}

\begin{eqnarray}
&&\tilde{R}_{\rho\mu\nu\lambda}\tilde{F}_{18}\left(\Box\right)\partial^{\rho}\partial_{\sigma}\left(\tilde{R}^{\mu\nu\lambda\sigma}+\tilde{R}^{\lambda\sigma\mu\nu}\right)=\tilde{F}_{18}\left(\Box\right)\left[\frac{1}{2}h_{\mu\lambda}\Box^{3}h^{\mu\lambda}-h_{\mu}^{\,\,\alpha}\Box^{2}\partial_{\alpha}\partial_{\sigma}h^{\sigma\mu}\right.
\nonumber
\\
&&+\frac{1}{2}h^{\lambda\sigma}\Box\partial_{\sigma}\partial_{\lambda}\partial_{\mu}\partial_{\nu}h^{\mu\nu}+2h_{\sigma\mu}\Box^{2}\partial_{\nu}K^{\nu\mu\sigma}+2h_{\sigma\mu}\Box\partial_{\nu}\partial_{\lambda}\partial^{\mu}K^{\sigma\lambda\nu}
\nonumber
\\
&&\Bigl.+2K_{\alpha\mu\lambda}\Box\partial^{\alpha}\partial_{\sigma}K^{\lambda\mu\sigma}-2K_{\left[\nu\mu\right]\lambda}\Box\partial_{\sigma}\partial^{\lambda}K^{\mu\sigma\nu}+K^{\mu\sigma\lambda}\Box^{2}K_{\sigma\lambda\mu}\Bigr],
\end{eqnarray}

\begin{eqnarray}
&&\tilde{R}_{\rho\mu\nu\lambda}\tilde{F}_{19}\left(\Box\right)\partial^{\rho}\partial_{\sigma}\left(\tilde{R}^{\mu\nu\lambda\sigma}-\tilde{R}^{\lambda\sigma\mu\nu}\right)=\tilde{F}_{19}\left(\Box\right)\left[-2K_{\alpha\mu\lambda}\partial^{\alpha}\partial^{\mu}\partial_{\sigma}\partial_{\nu}K^{\lambda\nu\sigma}\right.
\nonumber
\\
&&\left.+2K_{\left[\nu\mu\right]\lambda}\Box\partial_{\sigma}\partial^{\lambda}K^{\mu\sigma\nu}+K_{\alpha\mu\lambda}\Box\partial^{\alpha}\partial_{\sigma}K^{\lambda\mu\sigma}-K^{\mu\sigma\lambda}\Box^{2}K_{\sigma\lambda\mu}\right],
\end{eqnarray}

\begin{eqnarray}
&&\left(\tilde{R}_{\mu\nu\rho\sigma}+\tilde{R}_{\rho\sigma\mu\nu}\right)\tilde{F}_{20}\left(\Box\right)\partial^{\nu}\partial^{\sigma}\partial_{\alpha}\partial_{\beta}\tilde{R}^{\mu\alpha\rho\beta}=\tilde{F}_{20}\left(\Box\right)\left[\frac{1}{2}h_{\mu\lambda}\Box^{4}h^{\mu\lambda}\right.
\nonumber
\\
&&-h_{\mu}^{\,\,\alpha}\Box^{3}\partial_{\alpha}\partial_{\sigma}h^{\sigma\mu}+\frac{1}{2}h^{\lambda\sigma}\Box^{2}\partial_{\sigma}\partial_{\lambda}\partial_{\mu}\partial_{\nu}h^{\mu\nu}+2h_{\mu}^{\,\,\alpha}\Box^{2}\partial_{\alpha}\partial_{\lambda}\partial_{\sigma}K^{\mu\sigma\lambda}
\nonumber
\\
&&\left.-2h_{\mu\lambda}\Box^{3}\partial_{\sigma}K^{\mu\lambda\sigma}+2K_{\alpha\left(\mu\lambda\right)}\Box\partial^{\alpha}\partial_{\sigma}K^{\lambda\mu\sigma}-K_{\alpha\mu\lambda}\partial^{\alpha}\partial^{\mu}\partial_{\sigma}\partial_{\nu}K^{\lambda\nu\sigma}\right],
\end{eqnarray}

\begin{eqnarray}
&&\left(\tilde{R}_{\mu\nu\rho\sigma}-\tilde{R}_{\rho\sigma\mu\nu}\right)\tilde{F}_{21}\left(\Box\right)\partial^{\nu}\partial^{\sigma}\partial_{\alpha}\partial_{\beta}\tilde{R}^{\mu\alpha\rho\beta}=\tilde{F}_{21}\left(\Box\right)\left[2K_{\alpha\left[\mu\lambda\right]}\Box\partial^{\alpha}\partial_{\sigma}K^{\mu\lambda\sigma}\right.
\nonumber
\\
&&-\left.K_{\alpha\mu\lambda}\partial^{\alpha}\partial^{\mu}\partial_{\sigma}\partial_{\nu}K^{\lambda\nu\sigma}\right],
\end{eqnarray}

\begin{eqnarray}
&&\tilde{R}_{\mu\nu\rho\sigma}\tilde{F}_{22}\left(\Box\right)\partial^{\nu}\partial^{\sigma}\partial_{\alpha}\partial_{\beta}\left(\tilde{R}^{\mu\alpha\rho\beta}+\tilde{R}^{\rho\beta\mu\alpha}\right)=\tilde{F}_{22}\left(\Box\right)\left[\frac{1}{2}h_{\mu\lambda}\Box^{4}h^{\mu\lambda}\right.
\nonumber
\\
&&-h_{\mu}^{\,\,\alpha}\Box^{3}\partial_{\alpha}\partial_{\sigma}h^{\sigma\mu}+\frac{1}{2}h^{\lambda\sigma}\Box^{2}\partial_{\sigma}\partial_{\lambda}\partial_{\mu}\partial_{\nu}h^{\mu\nu}+2h_{\mu}^{\,\,\alpha}\Box^{2}\partial_{\alpha}\partial_{\lambda}\partial_{\sigma}K^{\mu\sigma\lambda}
\nonumber
\\
&&\left.-2h_{\mu\lambda}\Box^{3}\partial_{\sigma}K^{\mu\lambda\sigma}+2K_{\alpha\left(\mu\lambda\right)}\Box\partial^{\alpha}\partial_{\sigma}K^{\lambda\mu\sigma}-K_{\alpha\mu\lambda}\partial^{\alpha}\partial^{\mu}\partial_{\sigma}\partial_{\nu}K^{\lambda\nu\sigma}\right],
\end{eqnarray}

\begin{eqnarray}
&&\tilde{R}_{\mu\nu\rho\sigma}\tilde{F}_{23}\left(\Box\right)\partial^{\nu}\partial^{\sigma}\partial_{\alpha}\partial_{\beta}\left(\tilde{R}^{\mu\alpha\rho\beta}-\tilde{R}^{\rho\beta\mu\alpha}\right)=\tilde{F}_{23}\left(\Box\right)\left[-2K_{\alpha\left[\mu\lambda\right]}\Box\partial^{\alpha}\partial_{\sigma}K^{\mu\lambda\sigma}\right.
\nonumber
\\
&&\left.-K_{\alpha\mu\lambda}\partial^{\alpha}\partial^{\mu}\partial_{\sigma}\partial_{\nu}K^{\lambda\nu\sigma}\right],
\end{eqnarray}

\begin{eqnarray}
&&\tilde{R}_{\mu\nu\rho\sigma}\tilde{F}_{34}\left(\Box\right)\partial^{\mu}K^{\nu\rho\sigma}=\tilde{F}_{34}\left(\Box\right)\left[-\frac{1}{2}h_{\sigma\mu}\partial_{\nu}\partial_{\lambda}\partial^{\mu}K^{\sigma\lambda\nu}-\frac{1}{2}h_{\sigma\mu}\Box\partial_{\nu}K^{\nu\mu\sigma}\right.
\nonumber
\\
&&+\biggl.K_{\nu\mu\lambda}\partial_{\sigma}\partial^{\lambda}K^{\mu\sigma\nu}-K^{\mu\sigma\lambda}\Box K_{\sigma\mu\lambda}\biggr],
\end{eqnarray}

\begin{eqnarray}
&&\tilde{R}_{\mu\nu\rho\sigma}\tilde{F}_{35}\left(\Box\right)\partial^{\rho}K^{\mu\nu\sigma}=\tilde{F}_{35}\left(\Box\right)\left[-\frac{1}{2}h_{\sigma\mu}\partial_{\nu}\partial_{\lambda}\partial^{\mu}K^{\sigma\lambda\nu}-\frac{1}{2}h_{\sigma\mu}\Box\partial_{\nu}K^{\nu\mu\sigma}\right.
\nonumber
\\
&&\biggl.-K_{\nu\mu\lambda}\partial_{\sigma}\partial^{\lambda}K^{\mu\sigma\nu}+K_{\lambda\mu\nu}\partial_{\sigma}\partial^{\lambda}K^{\sigma\mu\nu}\biggr],
\end{eqnarray}

\begin{eqnarray}
&&\tilde{R}_{\left(\rho\sigma\right)}\tilde{F}_{36}\left(\Box\right)\partial_{\nu}K^{\mu\nu\sigma}=\tilde{F}_{36}\left(\Box\right)\left[-\frac{1}{2}h_{\sigma\mu}\partial_{\nu}\partial_{\lambda}\partial^{\mu}K^{\sigma\lambda\nu}-\frac{1}{2}h_{\sigma\mu}\Box\partial_{\nu}K^{\nu\mu\sigma}\right.
\nonumber
\\
&&\left.+K_{\lambda\left(\mu\nu\right)}\partial_{\sigma}\partial^{\lambda}K^{\sigma\mu\nu}-\frac{1}{2}K_{\,\,\,\lambda\sigma}^{\lambda}\partial_{\rho}\partial_{\nu}K^{\nu\rho\sigma}\right],
\end{eqnarray}

\begin{eqnarray}
\tilde{R}_{\left[\rho\sigma\right]}\tilde{F}_{37}\left(\Box\right)\partial_{\nu}K^{\nu\rho\sigma}=\tilde{F}_{37}\left(\Box\right)\left[K_{\lambda\left[\mu\nu\right]}\partial_{\sigma}\partial^{\lambda}K^{\sigma\mu\nu}-\frac{1}{2}K_{\,\,\,\lambda\sigma}^{\lambda}\partial_{\rho}\partial_{\nu}K^{\nu\rho\sigma}\right],
\end{eqnarray}

\begin{eqnarray}
\tilde{R}_{\rho\sigma}\tilde{F}_{38}\left(\Box\right)\partial_{\nu}K^{\rho\nu\sigma}=\tilde{F}_{38}\left(\Box\right)\left[-K_{\nu\mu\lambda}\partial_{\sigma}\partial^{\lambda}K^{\mu\sigma\nu}-K_{\,\,\,\lambda\sigma}^{\lambda}\partial_{\rho}\partial_{\nu}K^{\nu\rho\sigma}\right],
\end{eqnarray}

\begin{eqnarray}
&&\tilde{R}_{\left(\rho\sigma\right)}\tilde{F}_{39}\left(\Box\right)\partial^{\sigma}K_{\,\,\,\,\,\,\mu}^{\rho\mu}=\tilde{F}_{39}\left(\Box\right)\left[\frac{1}{2}h_{\sigma\lambda}\partial^{\sigma}\partial^{\lambda}\partial_{\rho}K_{\,\,\,\,\,\,\mu}^{\rho\mu}-\frac{1}{2}h\Box\partial_{\rho}K_{\,\,\,\,\,\,\mu}^{\rho\mu}\right.
\nonumber
\\
&&\left.+\frac{1}{2}K_{\nu\mu\rho}\partial^{\nu}\partial^{\mu}K_{\,\,\,\,\,\,\mu}^{\rho\mu}-\frac{1}{2}K_{\,\,\,\lambda\sigma}^{\lambda}\partial^{\sigma}\partial_{\rho}K_{\,\,\,\,\,\,\mu}^{\rho\mu}-\frac{1}{2}K_{\,\,\,\lambda\rho}^{\lambda}\Box K_{\,\,\,\,\,\,\mu}^{\rho\mu}\right],
\end{eqnarray}

\begin{eqnarray}
&&\tilde{R}_{\left[\rho\sigma\right]}\tilde{F}_{40}\left(\Box\right)\partial^{\sigma}K_{\,\,\,\,\,\,\mu}^{\rho\mu}=\tilde{F}_{40}\left(\Box\right)\left[-\frac{1}{2}K_{\nu\mu\rho}\partial^{\nu}\partial^{\mu}K_{\,\,\,\,\,\,\mu}^{\rho\mu}-\frac{1}{2}K_{\,\,\,\lambda\sigma}^{\lambda}\partial^{\sigma}\partial_{\rho}K_{\,\,\,\,\,\,\mu}^{\rho\mu}\right.
\nonumber
\\
&&\left.+\frac{1}{2}K_{\,\,\,\lambda\rho}^{\lambda}\Box K_{\,\,\,\,\,\,\mu}^{\rho\mu}\right],
\end{eqnarray}

\begin{eqnarray}
&&\tilde{R}\tilde{F}_{41}\left(\Box\right)\partial_{\rho}K_{\,\,\,\,\,\,\mu}^{\rho\mu}=\tilde{F}_{41}\left(\Box\right)\left[h_{\sigma\lambda}\partial^{\sigma}\partial^{\lambda}\partial_{\rho}K_{\,\,\,\,\,\,\mu}^{\rho\mu}-h\Box\partial_{\rho}K_{\,\,\,\,\,\,\mu}^{\rho\mu}\right.
\nonumber
\\
&&\left.-2K_{\,\,\,\lambda\sigma}^{\lambda}\partial^{\sigma}\partial_{\rho}K_{\,\,\,\,\,\,\mu}^{\rho\mu}\right],
\end{eqnarray}

\begin{eqnarray}
&&\tilde{R}_{\mu\alpha\rho\sigma}\tilde{F}_{42}\left(\Box\right)\partial^{\mu}\partial^{\rho}\partial_{\nu}K^{\nu\left(\alpha\sigma\right)}=\tilde{F}_{42}\left(\Box\right)\left[-\frac{1}{2}h_{\sigma\mu}\Box\partial_{\nu}\partial_{\lambda}\partial^{\mu}K^{\sigma\lambda\nu}\right.
\nonumber
\\
&&\left.-\frac{1}{2}h_{\sigma\mu}\Box^{2}\partial_{\nu}K^{\nu\mu\sigma}-\frac{1}{2}K_{\nu\mu\lambda}\partial_{\alpha}\partial_{\rho}\partial^{\nu}\partial^{\mu}K^{\alpha\rho\lambda}+K_{\lambda\left(\mu\nu\right)}\Box\partial_{\sigma}\partial^{\lambda}K^{\sigma\mu\nu}\right],
\end{eqnarray}

\begin{eqnarray}
&&\tilde{R}_{\mu\alpha\rho\sigma}\tilde{F}_{43}\left(\Box\right)\partial^{\mu}\partial^{\rho}\partial_{\nu}K^{\nu\left[\alpha\sigma\right]}=\tilde{F}_{43}\left(\Box\right)\left[-\frac{1}{2}K_{\nu\mu\lambda}\partial_{\alpha}\partial_{\rho}\partial^{\nu}\partial^{\mu}K^{\alpha\rho\lambda}\right.
\nonumber
\\
&&\biggl.+K_{\lambda\left[\mu\nu\right]}\Box\partial_{\sigma}\partial^{\lambda}K^{\sigma\mu\nu}\biggr],
\end{eqnarray}

\begin{eqnarray}
&&\tilde{R}_{\mu\alpha\rho\sigma}\tilde{F}_{44}\left(\Box\right)\partial^{\mu}\partial^{\rho}\partial_{\nu}K^{\alpha\nu\sigma}=\tilde{F}_{44}\left(\Box\right)\left[-K_{\nu\mu\lambda}\partial_{\alpha}\partial_{\rho}\partial^{\nu}\partial^{\mu}K^{\alpha\rho\lambda}\right.
\nonumber
\\
&&\left.+K_{\lambda\mu\nu}\Box\partial_{\sigma}\partial^{\lambda}K^{\mu\sigma\nu}\right],
\end{eqnarray}

\begin{eqnarray}
&&\tilde{R}_{\left(\rho\sigma\right)}\tilde{F}_{45}\left(\Box\right)\partial^{\rho}\partial_{\nu}\partial_{\alpha}K^{\sigma\nu\alpha}=\tilde{F}_{45}\left(\Box\right)\left[-\frac{1}{2}K_{\nu\mu\lambda}\partial_{\alpha}\partial_{\rho}\partial^{\nu}\partial^{\mu}K^{\alpha\rho\lambda}\right.
\nonumber
\\
&&\left.-\frac{1}{2}K_{\,\,\,\lambda\sigma}^{\lambda}\Box\partial_{\mu}\partial_{\alpha}K^{\sigma\mu\alpha}\right],
\end{eqnarray}

\begin{eqnarray}
&&\tilde{R}_{\left(\rho\sigma\right)}\tilde{F}_{46}\left(\Box\right)\partial^{\rho}\partial_{\nu}\partial_{\alpha}K^{\sigma\nu\alpha}=\tilde{F}_{46}\left(\Box\right)\left[-\frac{1}{2}K_{\nu\mu\lambda}\partial_{\alpha}\partial_{\rho}\partial^{\nu}\partial^{\mu}K^{\alpha\rho\lambda}\right.
\nonumber
\\
&&\left.-\frac{1}{2}K_{\,\,\,\lambda\sigma}^{\lambda}\Box\partial_{\mu}\partial_{\alpha}K^{\sigma\mu\alpha}\right],
\end{eqnarray}

\begin{eqnarray}
\tilde{R}_{\mu\nu\lambda\sigma}\tilde{F}_{47}\left(\Box\right)\widetilde{R}^{\mu\lambda\nu\sigma}&=&\tilde{F}_{47}\left(\Box\right)\left[h_{\mu\lambda}\Box^{2}h^{\mu\lambda}+h^{\lambda\sigma}\partial_{\sigma}\partial_{\lambda}\partial_{\mu}\partial_{\nu}h^{\mu\nu}-2h_{\mu}^{\,\,\alpha}\Box\partial_{\alpha}\partial_{\sigma}h^{\sigma\mu}\right.
\nonumber
\\
&+&4h_{\sigma\mu}\Box\partial_{\nu}K^{\nu\mu\sigma}+4h_{\sigma\mu}\partial_{\nu}\partial_{\lambda}\partial^{\mu}K^{\sigma\lambda\nu}-K^{\mu\sigma\lambda}\Box K_{\mu\lambda\sigma}
\nonumber
\\
&-&\left.K^{\nu\lambda\sigma}\partial_{\nu}\partial^{\mu}K_{\mu\lambda\sigma}+2K^{\lambda\nu\mu}\partial_{\nu}\partial^{\sigma}K_{\lambda\mu\sigma}\right].
\end{eqnarray}

\chapter{Functions of the linearised action}
\label{ap:4}

In this Appendix one can find the explicit form of the functions that compose the linearised action \eqref{laaccion}.

\begin{eqnarray}
&&a\left(\Box\right)=1-\frac{1}{2}\tilde{F}_{3}\left(\Box\right)\Box-\frac{1}{2}\tilde{F}_{10}\left(\Box\right)\Box^{2}-\frac{1}{2}\tilde{F}_{12}\left(\Box\right)\Box^{2}-2\tilde{F}_{14}\left(\Box\right)\Box
\nonumber
\\
&&-\frac{1}{2}\tilde{F}_{16}\left(\Box\right)\Box^{2}-\frac{1}{2}\tilde{F}_{18}\left(\Box\right)\Box^{2}-\frac{1}{2}\tilde{F}_{20}\left(\Box\right)\Box^{3}-\frac{1}{2}\tilde{F}_{22}\left(\Box\right)\Box^{3}
\nonumber
\\
&&-\tilde{F}_{47}\left(\Box\right)\Box,
\end{eqnarray}

\begin{eqnarray}
&&b\left(\Box\right)=-1+\frac{1}{2}\tilde{F}_{3}\left(\Box\right)\Box+\frac{1}{2}\tilde{F}_{10}\left(\Box\right)\Box^{2}+\frac{1}{2}\tilde{F}_{12}\left(\Box\right)\Box^{2}+2\tilde{F}_{14}\left(\Box\right)\Box
\nonumber
\\
&&+\frac{1}{2}\tilde{F}_{16}\left(\Box\right)\Box^{2}+\frac{1}{2}\tilde{F}_{18}\left(\Box\right)\Box^{2}+\frac{1}{2}\tilde{F}_{20}\left(\Box\right)\Box^{3}+\frac{1}{2}\tilde{F}_{22}\left(\Box\right)\Box^{3}
\nonumber
\\
&&+\tilde{F}_{47}\left(\Box\right)\Box,
\end{eqnarray}

\begin{eqnarray}
&&c\left(\Box\right)=1+2\tilde{F}_{1}\left(\Box\right)\Box+\tilde{F}_{2}\left(\Box\right)\Box^{2}+\frac{1}{2}\tilde{F}_{3}\left(\Box\right)\Box+\frac{1}{2}\tilde{F}_{5}\left(\Box\right)\Box^{2}+\frac{1}{2}\tilde{F}_{7}\left(\Box\right)\Box^{2}
\nonumber
\\
&&+\frac{1}{2}\tilde{F}_{9}\left(\Box\right)\Box^{3},
\end{eqnarray}

\begin{eqnarray}
&&d\left(\Box\right)=-1-2\tilde{F}_{1}\left(\Box\right)\Box-\tilde{F}_{2}\left(\Box\right)\Box^{2}-\frac{1}{2}\tilde{F}_{3}\left(\Box\right)\Box-\frac{1}{2}\tilde{F}_{5}\left(\Box\right)\Box^{2}-\frac{1}{2}\tilde{F}_{7}\left(\Box\right)\Box^{2}
\nonumber
\\
&&-\frac{1}{2}\tilde{F}_{9}\left(\Box\right)\Box^{3},
\end{eqnarray}

\begin{eqnarray}
&&f\left(\Box\right)=-\tilde{F}_{1}\left(\Box\right)\Box-\frac{1}{2}\tilde{F}_{2}\left(\Box\right)\Box^{2}-\frac{1}{2}\tilde{F}_{3}\left(\Box\right)\Box-\frac{1}{4}\tilde{F}_{5}\left(\Box\right)\Box^{2}-\frac{1}{4}\tilde{F}_{7}\left(\Box\right)\Box^{2}
\nonumber
\\
&&-\frac{1}{4}\tilde{F}_{9}\left(\Box\right)\Box^{3}-\frac{1}{4}\tilde{F}_{10}\left(\Box\right)\Box^{2}-\frac{1}{4}\tilde{F}_{12}\left(\Box\right)\Box^{2}-\tilde{F}_{14}\left(\Box\right)\Box-\frac{1}{4}\tilde{F}_{16}\left(\Box\right)\Box^{2}
\nonumber
\\
&&-\frac{1}{4}\tilde{F}_{18}\left(\Box\right)\Box^{2}-\frac{1}{4}\tilde{F}_{20}\left(\Box\right)\Box^{3}-\frac{1}{4}\tilde{F}_{22}\left(\Box\right)\Box^{3}-\frac{1}{2}\tilde{F}_{47}\left(\Box\right)\Box,
\end{eqnarray}

\begin{equation}
u\left(\Box\right)=-4\tilde{F}_{1}\left(\Box\right)-\tilde{F}_{5}\left(\Box\right)\Box-\tilde{F}_{7}\left(\Box\right)\Box-\tilde{F}_{9}\left(\Box\right)\Box^{2}+\frac{1}{2}\tilde{F}_{39}\left(\Box\right)+\tilde{F}_{41}\left(\Box\right),
\end{equation}

\begin{equation}
v_{1}\left(\Box\right)=4\tilde{F}_{1}\left(\Box\right)+\tilde{F}_{5}\left(\Box\right)\Box+\tilde{F}_{7}\left(\Box\right)\Box+\tilde{F}_{9}\left(\Box\right)\Box^{2}-\frac{1}{2}\tilde{F}_{39}\left(\Box\right)-\tilde{F}_{41}\left(\Box\right),
\end{equation}

\begin{eqnarray}
&&v_{2}\left(\Box\right)=-\frac{1}{2}\tilde{F}_{3}\left(\Box\right)-\tilde{F}_{10}\left(\Box\right)\Box-\tilde{F}_{12}\left(\Box\right)\Box+\tilde{F}_{9}\left(\Box\right)\Box^{2}-4\tilde{F}_{14}\left(\Box\right)
\nonumber
\\
&&-\tilde{F}_{16}\left(\Box\right)\Box-\tilde{F}_{18}\left(\Box\right)\Box-\tilde{F}_{20}\left(\Box\right)\Box^{2}-\tilde{F}_{22}\left(\Box\right)\Box^{2}+\frac{1}{2}\tilde{F}_{34}\left(\Box\right)+\frac{1}{2}\tilde{F}_{35}\left(\Box\right)
\nonumber
\\
&&+\frac{1}{2}\tilde{F}_{36}\left(\Box\right)+\frac{1}{2}\tilde{F}_{42}\left(\Box\right)-2\tilde{F}_{47}\left(\Box\right),
\end{eqnarray}

\begin{eqnarray}
&&w\left(\Box\right)=-\frac{1}{2}\tilde{F}_{3}\left(\Box\right)-\tilde{F}_{10}\left(\Box\right)\Box-\tilde{F}_{12}\left(\Box\right)\Box+\tilde{F}_{9}\left(\Box\right)\Box^{2}-4\tilde{F}_{14}\left(\Box\right)
\nonumber
\\
&&-\tilde{F}_{16}\left(\Box\right)\Box-\tilde{F}_{18}\left(\Box\right)\Box-\tilde{F}_{20}\left(\Box\right)\Box^{2}-\tilde{F}_{22}\left(\Box\right)\Box^{2}+\frac{1}{2}\tilde{F}_{34}\left(\Box\right)+\frac{1}{2}\tilde{F}_{35}\left(\Box\right)
\nonumber
\\
&&+\frac{1}{2}\tilde{F}_{36}\left(\Box\right)+\frac{1}{2}\tilde{F}_{42}\left(\Box\right)-2\tilde{F}_{47}\left(\Box\right),
\end{eqnarray}

\begin{eqnarray}
&&q_{1}\left(\Box\right)=\frac{1}{2}\tilde{F}_{3}\left(\Box\right)+\frac{1}{2}\tilde{F}_{4}\left(\Box\right)+\frac{1}{2}\tilde{F}_{10}\left(\Box\right)\Box+\frac{1}{2}\tilde{F}_{11}\left(\Box\right)\Box+\frac{1}{2}\tilde{F}_{12}\left(\Box\right)\Box
\nonumber
\\
&&+\frac{1}{2}\tilde{F}_{13}\left(\Box\right)\Box+\frac{1}{2}\tilde{F}_{16}\left(\Box\right)\Box+\frac{1}{2}\tilde{F}_{18}\left(\Box\right)\Box+\frac{1}{2}\tilde{F}_{19}\left(\Box\right)\Box+\frac{1}{2}\tilde{F}_{20}\left(\Box\right)\Box
\nonumber
\\
&&+\frac{1}{2}\tilde{F}_{21}\left(\Box\right)\Box+\frac{1}{2}\tilde{F}_{22}\left(\Box\right)\Box+\frac{1}{2}\tilde{F}_{23}\left(\Box\right)\Box+\tilde{F}_{27}\left(\Box\right)-\frac{1}{2}\tilde{F}_{36}\left(\Box\right)-\frac{1}{2}\tilde{F}_{37}\left(\Box\right)
\nonumber
\\
&&-\frac{1}{2}\tilde{F}_{42}\left(\Box\right)\Box-\frac{1}{2}\tilde{F}_{43}\left(\Box\right)\Box-\tilde{F}_{47}\left(\Box\right),
\end{eqnarray}

\begin{eqnarray}
&&q_{2}\left(\Box\right)=\frac{1}{2}\tilde{F}_{3}\left(\Box\right)-\frac{1}{2}\tilde{F}_{4}\left(\Box\right)+\frac{1}{2}\tilde{F}_{10}\left(\Box\right)\Box-\frac{1}{2}\tilde{F}_{11}\left(\Box\right)\Box+\frac{1}{2}\tilde{F}_{12}\left(\Box\right)\Box
\nonumber
\\
&&-\frac{1}{2}\tilde{F}_{13}\left(\Box\right)\Box+2\tilde{F}_{14}\left(\Box\right)-2\tilde{F}_{15}\left(\Box\right)+\frac{1}{2}\tilde{F}_{16}\left(\Box\right)\Box+\frac{1}{2}\tilde{F}_{20}\left(\Box\right)\Box-\frac{1}{2}\tilde{F}_{21}\left(\Box\right)\Box
\nonumber
\\
&&+\frac{1}{2}\tilde{F}_{22}\left(\Box\right)\Box-\frac{1}{2}\tilde{F}_{23}\left(\Box\right)\Box+\tilde{F}_{28}\left(\Box\right)-\frac{1}{2}\tilde{F}_{36}\left(\Box\right)+\frac{1}{2}\tilde{F}_{37}\left(\Box\right)-\frac{1}{2}\tilde{F}_{42}\left(\Box\right)\Box
\nonumber
\\
&&+\frac{1}{2}\tilde{F}_{43}\left(\Box\right)\Box,
\end{eqnarray}

\begin{eqnarray}
&&q_{3}\left(\Box\right)=-\tilde{F}_{17}\left(\Box\right)\Box-\tilde{F}_{18}\left(\Box\right)\Box+\tilde{F}_{19}\left(\Box\right)\Box+\tilde{F}_{29}\left(\Box\right)+\tilde{F}_{34}\left(\Box\right)-\tilde{F}_{35}\left(\Box\right)
\nonumber
\\
&&-\tilde{F}_{38}\left(\Box\right)-\tilde{F}_{44}\left(\Box\right)\Box+2\tilde{F}_{47}\left(\Box\right),
\end{eqnarray}

\begin{equation}
q_{4}\left(\Box\right)=-\tilde{F}_{14}\left(\Box\right)-\tilde{F}_{15}\left(\Box\right)+\tilde{F}_{30}\left(\Box\right),
\end{equation}

\begin{eqnarray}
&&q_{5}\left(\Box\right)=4\tilde{F}_{1}\left(\Box\right)+2\tilde{F}_{2}\left(\Box\right)\Box+\frac{1}{2}\tilde{F}_{3}\left(\Box\right)-\frac{1}{2}\tilde{F}_{4}\left(\Box\right)+\tilde{F}_{5}\left(\Box\right)\Box+\tilde{F}_{7}\left(\Box\right)\Box
\nonumber
\\
&&+\tilde{F}_{9}\left(\Box\right)\Box^{2}+\tilde{F}_{31}\left(\Box\right)-\frac{1}{2}\tilde{F}_{39}\left(\Box\right)-\frac{1}{2}\tilde{F}_{40}\left(\Box\right)-2\tilde{F}_{41}\left(\Box\right),
\end{eqnarray}

\begin{eqnarray}
&&q_{6}\left(\Box\right)=\tilde{F}_{3}\left(\Box\right)+\tilde{F}_{4}\left(\Box\right)+\tilde{F}_{32}\left(\Box\right)+\frac{1}{2}\tilde{F}_{36}\left(\Box\right)+\frac{1}{2}\tilde{F}_{37}\left(\Box\right)-\tilde{F}_{38}\left(\Box\right)
\nonumber
\\
&&-\frac{1}{2}\tilde{F}_{39}\left(\Box\right)+\frac{1}{2}\tilde{F}_{40}\left(\Box\right)+\frac{1}{2}\tilde{F}_{45}\left(\Box\right)\Box+\frac{1}{2}\tilde{F}_{46}\left(\Box\right)\Box,
\end{eqnarray}

\begin{equation}
p_{1}\left(\Box\right)=\tilde{F}_{14}\left(\Box\right)\Box+\tilde{F}_{15}\left(\Box\right)\Box+\tilde{F}_{24}\left(\Box\right),
\end{equation}

\begin{equation}
p_{2}\left(\Box\right)=\frac{1}{2}\tilde{F}_{18}\left(\Box\right)\Box^{2}-\frac{1}{2}\tilde{F}_{19}\left(\Box\right)\Box^{2}+\tilde{F}_{25}\left(\Box\right)+\tilde{F}_{34}\left(\Box\right)-\tilde{F}_{47}\left(\Box\right)\Box,
\end{equation}

\begin{equation}
p_{3}\left(\Box\right)=\frac{1}{2}\tilde{F}_{3}\left(\Box\right)\Box+\frac{1}{2}\tilde{F}_{4}\left(\Box\right)\Box+\tilde{F}_{26}\left(\Box\right)-\frac{1}{2}\tilde{F}_{39}\left(\Box\right)\Box+\frac{1}{2}\tilde{F}_{40}\left(\Box\right)\Box,
\end{equation}

\begin{eqnarray}
&&s\left(\Box\right)=-\frac{1}{2}\tilde{F}_{10}\left(\Box\right)-\frac{1}{2}\tilde{F}_{11}\left(\Box\right)-\frac{1}{2}\tilde{F}_{12}\left(\Box\right)-\frac{1}{2}\tilde{F}_{13}\left(\Box\right)-\frac{1}{2}\tilde{F}_{16}\left(\Box\right)
\nonumber
\\
&&+\tilde{F}_{17}\left(\Box\right)-\frac{1}{2}\tilde{F}_{20}\left(\Box\right)-\frac{1}{2}\tilde{F}_{21}\left(\Box\right)-\frac{1}{2}\tilde{F}_{22}\left(\Box\right)-\frac{1}{2}\tilde{F}_{23}\left(\Box\right)+\tilde{F}_{33}\left(\Box\right)
\nonumber
\\
&&+\frac{1}{2}\tilde{F}_{42}\left(\Box\right)-\frac{1}{2}\tilde{F}_{43}\left(\Box\right)-\frac{1}{2}\tilde{F}_{44}\left(\Box\right)-\frac{1}{2}\tilde{F}_{45}\left(\Box\right)-\frac{1}{2}\tilde{F}_{46}\left(\Box\right).
\end{eqnarray}

\chapter{Poincar\'e Gauge gravity as the local limit}
\label{ap:5}

Here we will give more insight on how Poincar\'e Gauge Gravity can be recast as the local limit of our theory \eqref{laaccion}.

Poincar\'e Gauge Gravity is constructed by gauging the Poincar\'e group, that is formed of the homogeneous Lorentz group $SO(3,1)$ together with the spacetime translations. The field strength of the latter is the torsion field, while the Riemann curvature is associated to the homogeneous part \cite{Gauge}.
Inspired by Yang-Mills theories, the usual Lagrangian of this theory is built using quadratic terms in the field strengths, such as\footnote{Please note that we have used the contorsion tensor instead of the torsion one without losing any generality, since they are related by a linear expression.}
\begin{eqnarray}
\label{9lag}
&&\mathcal{L}_{\rm PG}=\tilde{R}+b_{1}\tilde{R}^{2}+b_{2}\tilde{R}_{\mu\nu\rho\sigma}\tilde{R}^{\mu\nu\rho\sigma}+b_{3}\tilde{R}_{\mu\nu\rho\sigma}\tilde{R}^{\rho\sigma\mu\nu}+b_{4}\tilde{R}_{\mu\nu\rho\sigma}\tilde{R}^{\mu\rho\nu\sigma}
\nonumber
\\
&&+b_{5}\tilde{R}_{\mu\nu}\tilde{R}^{\mu\nu}+b_{6}\tilde{R}_{\mu\nu}\tilde{R}^{\nu\mu}+a_{1}K_{\mu\nu\rho}K^{\mu\nu\rho}+a_{2}K_{\mu\nu\rho}K^{\mu\rho\nu}
\nonumber
\\
&&+a_{3}K_{\nu\,\,\,\,\,\mu}^{\,\,\,\mu}K_{\,\,\,\,\,\,\rho}^{\nu\rho},
\end{eqnarray}
which is usually known as the nine-parameter Lagrangian. Since in the torsion-free limit we want to recover the results of usual IDG, the local limit at zero torsion must be GR. This fact imposes the following constraints in the Lagrangian \eqref{9lag}
\begin{eqnarray}
\label{Bonnet}
b_{6}=-4b_{1}-b_{5}\;,\;b_{4}=2\left(b_{1}-b_{2}-b_{3}\right),
\end{eqnarray}
where we have used the topological character of the Gauss-Bonnet term.\\
From \eqref{9lag}, and taking into account \eqref{Bonnet}, one can calculate the linearised Lagrangian just by substituing the expressions of the curvature tensors (\ref{riemann},\ref{ricci},\ref{scalar}), obtaining
\begin{eqnarray}
\label{PGlinear}
&&\mathcal{L}_{{\rm PG}}^{{\rm linear}}=\frac{1}{2}h_{\mu\nu}\Box h^{\mu\nu}-h_{\mu}^{\,\,\alpha}\partial_{\alpha}\partial_{\sigma}h^{\sigma\mu}+h\partial_{\mu}\partial_{\nu}h^{\mu\nu}-\frac{1}{2}h\Box h-4b_{1}h\Box\partial_{\rho}K_{\,\,\,\,\,\sigma}^{\rho\sigma}
\nonumber
\\
&&+4b_{1}h_{\mu\nu}\partial^{\mu}\partial^{\nu}\partial_{\rho}K_{\,\,\,\,\,\sigma}^{\rho\sigma}-\left(6b_{1}+b_{5}\right)h_{\mu\nu}\partial^{\nu}\partial_{\sigma}\partial_{\rho}K^{\mu\sigma\rho}-\left(6b_{1}+b_{5}\right)h_{\mu\nu}\Box\partial_{\rho}K^{\rho\mu\nu}
\nonumber
\\
&&+K^{\mu\sigma\lambda}\left(a_{1}+2b_{2}\Box\right)K_{\mu\sigma\lambda}+K^{\mu\sigma\lambda}\left(a_{2}-2\left(b_{1}-b_{2}-b_{3}\right)\Box\right)K_{\mu\lambda\sigma}
\nonumber
\\
&&+K_{\mu\,\,\rho}^{\,\,\rho}\left(a_{3}+b_{5}\Box\right)K_{\,\,\,\,\,\sigma}^{\mu\sigma}+\left(b_{5}-2b_{1}+2b_{2}+2b_{3}\right)K_{\,\,\nu\rho}^{\mu}\partial_{\mu}\partial_{\sigma}K^{\sigma\nu\rho}
\nonumber
\\
&&+\left(-4b_{1}-b_{5}+4b_{3}\right)K_{\,\,\nu\rho}^{\mu}\partial_{\mu}\partial_{\sigma}K^{\sigma\rho\nu}+4\left(b_{1}-b_{2}-b_{3}\right)K_{\mu\,\,\,\,\,\nu}^{\,\,\rho}\partial_{\rho}\partial_{\sigma}K^{\mu\nu\sigma}
\nonumber
\\
&&-2b_{2}K_{\mu\,\,\,\,\,\nu}^{\,\,\rho}\partial_{\rho}\partial_{\sigma}K^{\mu\sigma\nu}+\left(4b_{1}+b_{3}\right)K_{\,\,\,\,\,\rho}^{\mu\rho}\partial_{\mu}\partial_{\nu}K_{\,\,\,\,\,\sigma}^{\nu\sigma}+2b_{5}K_{\,\,\,\lambda\sigma}^{\lambda}\partial_{\mu}\partial_{\alpha}K^{\sigma\mu\alpha}.
\end{eqnarray}
At the same time, the local limit of our theory can be expressed as follows, where the constraints \eqref{constraints} have been applied
\begin{eqnarray}
\label{linearlocal}
&&\mathcal{L}\left(M_{S}\rightarrow\infty\right)=\frac{1}{2}a\left(0\right)h_{\mu\nu}\Box h^{\mu\nu}-a\left(0\right)h_{\mu}^{\,\,\alpha}\partial_{\alpha}\partial_{\sigma}h^{\sigma\mu}+c\left(0\right)h\partial_{\mu}\partial_{\nu}h^{\mu\nu}
\nonumber
\\
&&-\frac{1}{2}c\left(0\right)h\Box h+\frac{a\left(0\right)-c\left(0\right)}{\Box}h^{\lambda\sigma}\partial_{\sigma}\partial_{\lambda}\partial_{\mu}\partial_{\nu}h^{\mu\nu}+u\left(0\right)h\Box\partial_{\rho}K_{\,\,\,\,\,\sigma}^{\rho\sigma}
\nonumber
\\
&&-u\left(0\right)h_{\mu\nu}\partial^{\mu}\partial^{\nu}\partial_{\rho}K_{\,\,\,\,\,\sigma}^{\rho\sigma}+v_{2}\left(0\right)h_{\mu\nu}\partial^{\nu}\partial_{\sigma}\partial_{\rho}K^{\mu\sigma\rho}+v_{2}\left(0\right)h_{\mu\nu}\Box\partial_{\rho}K^{\rho\mu\nu}
\nonumber
\\
&&+p_{1}\left(0\right)K^{\mu\sigma\lambda}K_{\mu\sigma\lambda}+p_{2}\left(0\right)K^{\mu\sigma\lambda}K_{\mu\lambda\sigma}+p_{3}\left(0\right)K_{\mu\,\,\rho}^{\,\,\rho}K_{\,\,\,\,\,\sigma}^{\mu\sigma}+q_{1}\left(0\right)K_{\,\,\nu\rho}^{\mu}\partial_{\mu}\partial_{\sigma}K^{\sigma\nu\rho}
\nonumber
\\
&&+q_{2}\left(0\right)K_{\,\,\nu\rho}^{\mu}\partial_{\mu}\partial_{\sigma}K^{\sigma\rho\nu}+q_{3}\left(0\right)K_{\mu\,\,\,\,\,\nu}^{\,\,\rho}\partial_{\rho}\partial_{\sigma}K^{\mu\nu\sigma}+q_{4}\left(0\right)K_{\mu\,\,\,\,\,\nu}^{\,\,\rho}\partial_{\rho}\partial_{\sigma}K^{\mu\sigma\nu}
\nonumber
\\
&&+q_{5}\left(0\right)K_{\,\,\,\,\,\rho}^{\mu\rho}\partial_{\mu}\partial_{\nu}K_{\,\,\,\,\,\sigma}^{\nu\sigma}+q_{6}\left(0\right)K_{\,\,\,\lambda\sigma}^{\lambda}\partial_{\mu}\partial_{\alpha}K^{\sigma\mu\alpha}+s\left(0\right)K_{\mu}^{\,\,\nu\rho}\partial_{\nu}\partial_{\rho}\partial_{\alpha}\partial_{\sigma}K^{\mu\alpha\sigma}.
\end{eqnarray} 
It is straightforward to realise that we have more free parameters in \eqref{linearlocal} than in \eqref{PGlinear}, which means that if we want \eqref{PGlinear} as the local limit, we will need to impose more constraints in the parameters present in \eqref{linearlocal}. The question now is if there exists a PG theory that can be recast as the local limit of our theory without compromising the independence of the parameters. The answer is affirmative, as can be seen in the following Lagrangian
\begin{eqnarray}
\label{genPG}
&&\mathcal{L}_{{\rm GPG}}=\tilde{R}+b_{1}\tilde{R}^{2}+b_{2}\tilde{R}_{\mu\nu\rho\sigma}\tilde{R}^{\mu\nu\rho\sigma}+b_{3}\tilde{R}_{\mu\nu\rho\sigma}\tilde{R}^{\rho\sigma\mu\nu}
\nonumber
\\
&&+2\left(b_{1}-b_{2}-b_{3}\right)\tilde{R}_{\mu\nu\rho\sigma}\tilde{R}^{\mu\rho\nu\sigma}+b_{5}\tilde{R}_{\mu\nu}\tilde{R}^{\mu\nu}-\left(4b_{1}+b_{5}\right)\tilde{R}_{\mu\nu}\tilde{R}^{\nu\mu}+a_{1}K_{\mu\nu\rho}K^{\mu\nu\rho}
\nonumber
\\
&&+a_{2}K_{\mu\nu\rho}K^{\mu\rho\nu}+a_{3}K_{\nu\,\,\,\,\,\mu}^{\,\,\,\mu}K_{\,\,\,\,\,\,\rho}^{\nu\rho}+c_{1}K_{\,\,\nu\rho}^{\mu}\nabla_{\mu}\nabla_{\sigma}K^{\sigma\nu\rho}+c_{2}K_{\,\,\nu\rho}^{\mu}\nabla_{\mu}\nabla_{\sigma}K^{\sigma\rho\nu}
\nonumber
\\
&&+c_{3}K_{\mu\,\,\,\,\,\nu}^{\,\,\rho}\nabla_{\rho}\nabla_{\sigma}K^{\mu\nu\sigma}+c_{4}K_{\mu\,\,\,\,\,\nu}^{\,\,\rho}\nabla_{\rho}\nabla_{\sigma}K^{\mu\sigma\nu},
\end{eqnarray}
which is Poincar\'e Gauge invariant and local. Its corresponding linearised expression is
\begin{eqnarray}
\label{genPGlin}
&&\mathcal{L}_{{\rm GPG}}^{{\rm linear}}=\frac{1}{2}h_{\mu\nu}\Box h^{\mu\nu}-h_{\mu}^{\,\,\alpha}\partial_{\alpha}\partial_{\sigma}h^{\sigma\mu}+h\partial_{\mu}\partial_{\nu}h^{\mu\nu}-\frac{1}{2}h\Box h-4b_{1}h\Box\partial_{\rho}K_{\,\,\,\,\,\sigma}^{\rho\sigma}
\nonumber
\\
&&+4b_{1}h_{\mu\nu}\partial^{\mu}\partial^{\nu}\partial_{\rho}K_{\,\,\,\,\,\sigma}^{\rho\sigma}-\left(6b_{1}+b_{5}\right)h_{\mu\nu}\partial^{\nu}\partial_{\sigma}\partial_{\rho}K^{\mu\sigma\rho}-\left(6b_{1}+b_{5}\right)h_{\mu\nu}\Box\partial_{\rho}K^{\rho\mu\nu}
\nonumber
\\
&&+K^{\mu\sigma\lambda}\left(a_{1}+2b_{2}\Box\right)K_{\mu\sigma\lambda}+K^{\mu\sigma\lambda}\left(a_{2}-2\left(b_{1}-b_{2}-b_{3}\right)\Box\right)K_{\mu\lambda\sigma}
\nonumber
\\
&&+K_{\mu\,\,\rho}^{\,\,\rho}\left(a_{3}+b_{5}\Box\right)K_{\,\,\,\,\,\sigma}^{\mu\sigma}+\left(b_{5}-2b_{1}+2b_{2}+2b_{3}+c_{1}\right)K_{\,\,\nu\rho}^{\mu}\partial_{\mu}\partial_{\sigma}K^{\sigma\nu\rho}
\nonumber
\\
&&+\left(-4b_{1}-b_{5}+4b_{3}+c_{2}\right)K_{\,\,\nu\rho}^{\mu}\partial_{\mu}\partial_{\sigma}K^{\sigma\rho\nu}+\left(4b_{1}-4b_{2}-4b_{3}+c_{3}\right)K_{\mu\,\,\,\,\,\nu}^{\,\,\rho}\partial_{\rho}\partial_{\sigma}K^{\mu\nu\sigma}
\nonumber
\\
&&-\left(2b_{2}-c_{4}\right)K_{\mu\,\,\,\,\,\nu}^{\,\,\rho}\partial_{\rho}\partial_{\sigma}K^{\mu\sigma\nu}+\left(4b_{1}+b_{3}\right)K_{\,\,\,\,\,\rho}^{\mu\rho}\partial_{\mu}\partial_{\nu}K_{\,\,\,\,\,\sigma}^{\nu\sigma}
\nonumber
\\
&&+2b_{5}K_{\,\,\,\lambda\sigma}^{\lambda}\partial_{\mu}\partial_{\alpha}K^{\sigma\mu\alpha}.
\end {eqnarray}
Therefore, one finds the following relations for the local limit of the functions involved in the linearised action \eqref{linearlocal} and the parameters in \eqref{genPGlin}
\begin{eqnarray}
\label{localconditions}
&&a\left(0\right)=1,\;c\left(0\right)=1,\;u\left(0\right)=-4b_{1},\;v_{2}\left(0\right)=-4\left(6b_{1}+b_{5}\right),\;p_{1}\left(0\right)=a_{1}+2b_{2}\Box,
\nonumber
\\
&&p_{2}\left(0\right)=a_{2}-2\left(b_{1}-b_{2}-b_{3}\right)\Box,\;p_{3}\left(0\right)=a_{3}+b_{5}\Box,
\nonumber
\\
&&q_{1}\left(0\right)=b_{5}-2b_{1}+2b_{2}+2b_{3}+c_{1},\;q_{2}\left(0\right)=-4b_{1}-b_{5}+4b_{3}+c_{2},
\nonumber
\\
&&q_{3}\left(0\right)=4b_{1}-4b_{2}-4b_{3}+c_{3},\;q_{4}\left(0\right)=-2b_{2}+c_{4},\;q_{5}\left(0\right)=4b_{1}+b_{3},
\nonumber
\\
&&q_{6}\left(0\right)=2b_{5},\;s\left(0\right)=0.
\end{eqnarray}
It can be observed that these limits do not impose new relations between the functions. \\
Hence, we have proved that if the previous limits \eqref{localconditions} apply, the local limit of our theory is a local PG theory, specifically the one described by the Lagrangian \eqref{genPG}.

\addcontentsline{toc}{chapter}{Bibliography}
%\chapter*{Bibliography}
%%%  options for the bibliography style 
%\renewcommand{\bibname}{Bibliography}
%\bibliographystyle{plain} 
%\bibliographystyle{newagsm} 
%\bibliographystyle{mykluwer} 
%\bibliographystyle{IEEE} 
\bibliographystyle{JHEP}
\bibliography{thesis_bibliography}%{,journals,techreports,submitted,conferences}

\end{document}